\theoremstyle{definition} \newtheorem{definition}{Definition}[section]
\theoremstyle{plain} \newtheorem{theorem}[definition]{Theorem}
\theoremstyle{plain} \newtheorem{assumption}[definition]{Assumption}
\theoremstyle{plain} \newtheorem{proposition}[definition]{Proposition}
\theoremstyle{plain} \newtheorem{lemma}[definition]{Lemma}
\theoremstyle{plain} \newtheorem{corollary}[definition]{Corollary}
\theoremstyle{plain} \newtheorem{remark}[definition]{Remark}
\theoremstyle{definition} 
\theoremstyle{plain} 
\numberwithin{equation}{section}
\newcommand{\N}{\mathbb{N}}
\newcommand{\R}{\mathbb{R}}
\newcommand{\C}{\mathbb{C}}
\newcommand{\one}{\mathbbm{1}}
\newcommand{\cA}{\mathcal{A}}
\newcommand{\cB}{\mathcal{B}}
\newcommand{\rmu}{\rho_{\mu}}
\newcommand{\KH}{\widetilde{K}_H}
\newcommand{\lille}{\delta}
\newcommand{\abs}[1]{\lvert {#1} \rvert }
\DeclareMathOperator{\Ran}{\mathrm{Ran}}
\DeclareMathOperator{\supp}{\mathrm{supp}}
\DeclareMathOperator{\Spec}{\mathrm{Spec}}
\newcommand{\cM}{{M}} 
\newcommand{\Mvalue}{30}
\renewcommand{\epsilon}{\varepsilon}
\renewcommand{\phi}{\varphi}
\newcommand{\place}[2]{%
	\ifthenelse{\isempty{#2}}%
	{({\bf p}: ${#1}$):\quad}
	{({\bf p}: ${#1}$, {\bf l}: ${#2}$):\quad}
}
\begin{document}

\title{The energy of dilute Bose gases}
\author[1]{S\o ren Fournais}
\author[2]{Jan Philip Solovej}
\affil[1]{\small{Department of Mathematics, Aarhus University\\ Ny Munkegade 118\\ DK-8000 Aarhus C\\ Denmark}}
\affil[2]{\small{Department of Mathematics\\ University of Copenhagen\\ Universitetsparken 5\\ DK-2100 Copenhagen \O\\Denmark }}
\maketitle

\begin{abstract}
For a dilute system of non-relativistic bosons interacting through a
positive $L^1$ potential $v$ with scattering length $a$ we prove
that the ground state energy density satisfies the bound $e(\rho) \geq
4\pi a \rho^2 (1+ \frac{128}{15\sqrt{\pi}} \sqrt{\rho a^3} +o(\sqrt{\rho a^3}\,))$, thereby proving the Lee-Huang-Yang formula for
the energy density. 
\end{abstract}

\tableofcontents

\section{Introduction}

Our goal in this paper is to solve the long standing conjecture in
mathematical physics to rigorously establish the Lee-Huang-Yang (LHY)
formula for the second correction to the thermodynamic (infinite
volume) ground state energy per volume of a translation invariant Bose
gas in the dilute limit.  The formula (i.e., \eqref{eq:LHY} below with
an equality) is one of the most fundamental results in quantum
many-body theory. It appeared for the first time as equation (25) in
the seminal 1957 publication \cite{LHY}. The striking feature of the
formula is that the first two terms of the asymptotics of the ground
state energy in the dilute limit depend on the interaction potential
only through a single parameter, {\it the scattering length}. Fairly
recently the LHY formula was tested experimentally as reported in
\cite{Navon2010}. Here the coefficient $\frac{128}{15\sqrt{\pi}}=4.81$
was measured to be $4.4(5)$.

The derivation in \cite{LHY} relies on the pseudo-potential method and
offers deep insight into the problem, but nevertheless lacks in
mathematical rigor.  An alternative, but still non-rigorous, argument
was proposed in \cite{Lsa}.  We establish the LHY formula rigorously
for a large family of two-body potentials (see
Assumption~\ref{assump:v} below), which however does not include the
hard core potential.

The importance of the scattering length in understanding the energy
and excitation spectrum for interacting many-body gases had already
been observed in the celebrated 1947 paper of Bogolubov \cite{BogKiev}
where he introduced the Bogolubov approximation and laid the
foundation for the theory of superfluidity. In this paper Bogolubov
studies the excitation spectrum of a Bose gas and finds that it
depends on the integral of the potential, not the scattering length.
In a famous footnote Bogolubov thanks Landau for making
the important remark that this must be wrong and that the correct
answer must be to replace the integral of the potential by the
scattering length. To establish this rigorously has been a major
challenge ever since. The first major rigorous advance was achieved by
Dyson in \cite{dyson} where the leading order asymptotics for the
ground state energy was established as an upper bound, but where the
lower bound was off by a factor.  The correct leading order
asymptotics was finally established by Lieb and Yngvason in \cite{LY}
for all positive interaction potentials with finite scattering length
including the hard core potential. This result was extended to the
Gross-Pitaevskii limit in the case of trapped gases in
\cite{LSY}. These leading order results are reviewed in the monograph
\cite{LSSY} which also contains a non-rigorous derivation of the LHY
formula using the Bogolubov approximation.  To the best of our
knowledge the first works to rigorously establish the validity of the
Bogolubov approximation for a many-body problem were \cite{LS,LS2,Su}
which studied the one- and two-component charged Bose gases and
established a conjecture of Dyson.  Several ideas from \cite{LS} are
important also in the present work.

The first work to show an upper bound to the LHY order was \cite{ESY}
by Erd\H{o}s, Schlein, and Yau.  This paper makes a very interesting
observation about the Bogolubov approximation.  The usual approach to
the Bogolubov approximation is to approximate the Hamiltonian of the
system by what is referred to as a quadratic Hamiltonian. As mentioned
above this leads to a wrong approximation for the ground state energy
where it will be expressed in terms of the integral of the potential
rather than the scattering length. Quadratic Hamiltonians have ground
states that are quasi-free (or Gaussian) states. In \cite{ESY} it is
observed that if we do not approximate the Hamiltonian by a quadratic
Hamiltonian, but instead restrict the evaluation of the full
Hamiltonian to quasi-free states then miraculously the scattering
length appears in the leading order term, but to LHY order the answer
is still wrong. The work in \cite{ESY} emphasizes that it may often be
more fruitful to focus on classes of states rather than to approximate
the Hamiltonian. This approach was further pursued in the papers
\cite{BogFuncI,BogFuncII} where the positive temperature situation was
analyzed for the Hamiltonian restricted to quasi-free states.  The
leading order correction to the positive temperature free energy for
the full many-body problem in the dilute limit was established in
\cite{St,Yt}.

For gases confined to a region in the Gross-Pitaevskii regime there is
a formula for the second order correction to the ground state energy
similar to the LHY formula. This has recently been established in an
impressive series of papers by Boccato, Brennecke, Cenatiempo, and
Schlein \cite{BBCS0,BBCS,BBCS2}. This however does not imply the
formula in the original thermodynamic infinite volume setting
discussed here.  Our proof follows a very different strategy than the
one applied in the confined case.  

In the confined or trapped case it is also possible to analyze the
excitation spectrum of the gas, which is particularly important for
understanding superfluidity.  The excitation spectrum is also studied
in the papers by Boccato et.\ al. The first result in this direction
is, however, due to Seiringer \cite{Se} and was also analyzed in
\cite{DN,GrS,LNSS,NS}. Getting the excitation spectrum in the
thermodynamic case seems much more difficult.

The LHY formula in the translation invariant thermodynamic setting was
finally rigorously established as an upper bound in the work \cite{YY}
by Yau and Yin, where they consider smooth fastly decaying interaction
potentials.  It is this work that we complement by establishing the
lower bound in \eqref{eq:LHY}, in fact, for a much, larger class of
interaction potentials. Thus the LHY formula has been proved for all
compactly supported
potentials satisfying the assumptions in \cite{YY}. We shall not
discuss the upper bound further in this paper.  In Bogolubov theory,
the particles not in the condensate constitute pairs of opposite
momentum. An important insight, confirmed by the contributions of
\cite{YY} and the present work, is that in order to get the correct
energy to LHY order, one has to go beyond these simple pairs and also
consider `soft pairs’. This means that not only pairs of particles of
exactly opposite momentum contribute. Also pairs of particles with
nonzero total momentum - although the individual momenta are much
larger than the sum - are important for the energy to this precision.
 
The LHY formula had previously been established as a lower bound in
the restricted case where the interaction potential is allowed to
become softer as the gas becomes more dilute. This was first achieved
in \cite{GS}. In this case, however, the potential still has a range
much larger than the inter-particle spacing, which is why the paper
has ``high density'' in the title. Allowing the potential to have
range shorter than the inter-particle spacing, but still required to
be soft, was recently achieved in \cite{BS}. The softness condition
was removed in \cite{BFS}, but only to get the ground state energy to
the correct LHY order, not with the correct asymptotics. Several of
the methods developed in \cite{BS} and \cite{BFS} are crucial to this
work.  

There has been a large literature also on the dynamics of
interacting Bose gases, but we will not review that here.

We now turn to describing the problem in details.  We consider $N$
bosons in $3$ dimensions described by the Hamiltonian
\begin{align}
\label{eq:Hamiltonian}
{\mathcal H}_N= {\mathcal H}_N(v)=\sum_{i=1}^N-\Delta_i+\sum_{\leq i<j\leq N}v(x_i-x_j).
\end{align}

We will allow interactions described by the following assumptions.

\begin{assumption}[Potentials]\label{assump:v}
The potential $v\neq 0$ is non-negative and spherically symmetric,
i.e. $v(x) = v(|x|)\geq 0$, and of class $L^1({\mathbb R}^3)$ with
compact support. We fix $R>0$ such that $\supp v \subset B(0,R)$.
\end{assumption}

We are interested in the thermodynamic limit of the ground state
energy density as a function of the particle density $\rho$.
\begin{align}\label{eq:e_rho}
e(\rho, v)=\lim_{L\to\infty\atop N/L\to \rho}L^{-3}\inf_{\Psi\in
  C^\infty_0([0,L]^N)\setminus\{0\}}\frac{\langle\Psi,{\mathcal
    H}_N(v) \Psi\rangle}{\|\Psi\|^2}.
\end{align}
We will omit the dependence on $v$ from the notation and just write
$e(\rho)$, when the potential is clear from the context.  Here the
inner product $\langle\cdot,\cdot\rangle$ and the corresponding norm
$\|\cdot\|$ are in the Hilbert space $L^2(\Omega^N)$, where we have
denoted $\Omega=[0,L]^3$. If we talk about bosons the infimum above
should be over all symmetric function in $C^\infty_0(\Omega^N)$. It is
however a well-known fact that the infimum over all functions is
actually the same as if constrained to symmetric functions. When we
restrict to functions with compact support in $\Omega$ we are
effectively using Dirichlet boundary conditions, but it is not
difficult to see that the thermodynamic energy is independent of the
boundary condition used.

The main result of this work is to establish the celebrated
Lee-Huang-Yang formula that gives a two-term asymptotic formula for
$e(\rho)$ in the dilute limit. We express the diluteness in terms of
the scattering length $a$ of the potential $v$. The definition of the
scattering length and its basic properties will be given in
Section~\ref{scattering}.

\begin{theorem}[The Lee-Huang-Yang Formula]\label{thm:LHY} If $v$ satisfies
Assumption~\ref{assump:v} then in the limit $\rho a^3\to 0$,
\begin{equation}\label{eq:LHY}
  e(\rho)\geq 4\pi\rho^2 a\left(1 +\frac{128}{15\sqrt{\pi}} \sqrt{\rho a^3}-{\mathcal C}(\rho a^3)^{1/2+\eta}\right).
\end{equation}
where $\eta>0$ and ${\mathcal C}$ depend on ${\mathcal R}=\int v/(8\pi
a)$ and $R/a$ as given explicitly in Theorem~\ref{thm:LHY-Box}
below. We have not attempted to optimize this dependence. It follows
from Theorem~\ref{thm:LHY-Box} that ${\mathcal R}$ and $R/a$ can be
allowed to grow as a negative power of $\rho a^3$.
\end{theorem}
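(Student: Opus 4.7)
The plan is to reduce the thermodynamic estimate to a localized energy estimate on boxes of an intermediate length scale $\ell$ chosen between $a$ and the interparticle distance $\rho^{-1/3}$, and then to analyze the Bose gas inside such a box via a carefully renormalized Bogolubov analysis. Concretely, I would first use a Dyson--Lieb--Yngvason style localization, covering $[0,L]^3$ by cubes of side $\ell$ and exploiting $v\geq 0$ together with an $N$-particle Neumann kinetic localization, to obtain a bound of the form $e(\rho)\geq \ell^{-3}\inf_n E_\ell(n)$ up to boundary errors that vanish as $L\to\infty$. This is the reduction to the ``LHY in a box'' statement of Theorem~\ref{thm:LHY-Box}, so the remainder of the proof is to establish that in-box bound.

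In a single box I would pass to second quantization on $L^2(\Omega)$, split the condensate mode $p=0$ from the excitations, and first replace the bare potential $v$ by a renormalized two-body interaction $g$ with $\int g\approx 8\pi a$. Following the approach of \cite{BS,BFS}, this is implemented by conjugating the many-body wave function by a Jastrow-type unitary built from the zero-energy scattering solution $\omega$ truncated at an intermediate length, and then commuting the resulting operator through the kinetic energy. The scattering equation for $\omega$ is precisely what converts $\int v$ to $8\pi a$ in the dominant quadratic part, producing the leading term $4\pi a\rho^2$. After this transformation the remaining non-trivial off-diagonal quadratic form in the excitation modes is an operator whose Bogolubov diagonalization gives exactly the $\tfrac{128}{15\sqrt\pi}(\rho a)^{5/2}$ correction through the well-known integral $\int (p^2+2-2\sqrt{p^2(p^2+2)})\,dp$.

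To perform the Bogolubov step rigorously I would apply a unitary Bogolubov transformation $T$ on the Fock space of excitations to diagonalize the quadratic part, using as input an a priori bound on the number of excited particles and on their kinetic energy of the kind obtained in \cite{BFS}. The contribution that reaches LHY precision comes not only from pairs of exactly opposite momentum but also from the ``soft pairs'' with small nonzero total momentum; one therefore has to split the momentum sums into a hard and soft regime and treat the soft regime by a second, momentum-dependent renormalization so that the effective coupling in each shell is still $8\pi a$ to the required accuracy. The cubic terms $a_0^*a_p^*a_q^{\phantom{*}}a_{p+q}^{\phantom{*}}$ are then controlled by Cauchy--Schwarz against the quadratic form and the a priori bounds, and the purely quartic non-condensate terms by positivity combined with the same bounds.

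The main obstacle, in my estimation, is precisely the control of the non-quadratic pieces: the cubic and quartic excitation terms, the error from replacing $a_0^*a_0^{\phantom{*}}$ by a c-number of order $N$, and the soft-pair renormalization, all of which must be shown to contribute $o((\rho a)^{5/2}\rho)$ per unit volume \emph{uniformly} for the whole class of potentials in Assumption~\ref{assump:v}, i.e.\ \emph{without} the softness assumption used in \cite{BS}. The balance is delicate because the scattering length renormalization only gains after several commutator steps, and each commutator must be reabsorbed into the kinetic energy. Assembling these estimates and optimizing over the localization length $\ell$, the scattering cutoff, the Bogolubov momentum cutoff, and the soft-pair threshold is what produces the explicit error exponent $\eta>0$ and the dependence of $\mathcal C$ on $\mathcal R=\int v/(8\pi a)$ and $R/a$ stated in Theorem~\ref{thm:LHY-Box}.
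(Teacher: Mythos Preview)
Your sketch describes a plausible strategy, but it diverges from the paper at nearly every structural step, and in particular it misses the specific mechanism the paper uses to pass from $\int v$ to $8\pi a$. First, the paper's proof of Theorem~\ref{thm:LHY} proper is a one-line grand canonical reduction: one adds a chemical potential $8\pi a\rho_\mu$, proves a lower bound on the resulting background energy $e_0(\rho_\mu)$ (Theorem~\ref{thm:LHY-Background}), and recovers \eqref{eq:LHY} by setting $\rho_\mu=\rho$. You bypass this and localize the canonical problem directly via $e(\rho)\geq \ell^{-3}\inf_n E_\ell(n)$; that is not wrong in principle, but the grand canonical form is what later makes the $1Q$-terms and the condensate-density fluctuations tractable.

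More importantly, your renormalization mechanism is different. You propose conjugating by a Jastrow-type unitary built from the scattering solution and commuting it through the kinetic energy --- this is the Boccato--Brennecke--Cenatiempo--Schlein route, and the paper explicitly does \emph{not} follow it. Instead the paper inserts $\one=P+Q$ in the two-body potential, expands into sixteen terms, and observes (Lemma~\ref{lm:potsplit}) that one can complete a square to isolate a manifestly \emph{positive} $4Q$-term $\mathcal{Q}_4^{\rm ren}$; the completion simultaneously replaces $v$ by $g=v(1-\omega)$ in the remaining $0$--$3Q$ terms, which is how $8\pi a$ appears with no unitary conjugation at all. The positive $\mathcal{Q}_4^{\rm ren}$ is then dropped for a lower bound and also used as a reservoir to absorb the $3Q$-terms by Cauchy--Schwarz. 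Your proposal also omits two devices the paper relies on essentially: a \emph{double} localization (large boxes of scale $K_\ell(\rho_\mu a)^{-1/2}$ for the Bogolubov calculation, and much smaller boxes inside them to extract a priori bounds on $n$ and $n_+$, Appendix~\ref{SmallBoxes}), and the localization-of-large-matrices technique of \cite{LS} to restrict to $n_+\leq\mathcal{M}$ before doing $c$-number substitution and Bogolubov diagonalization. Without these, the a priori control you invoke ``of the kind obtained in \cite{BFS}'' is not sharp enough --- \cite{BFS} reaches LHY order but not the LHY constant --- and the cubic errors would not be $o((\rho a)^{5/2})$ uniformly over the class in Assumption~\ref{assump:v}.
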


As reviewed above an upper bound consistent with the Lee-Huang-Yang
Formula was given in \cite{YY} under more restrictive assumptions on
the potential (see also \cite{AA}).  Combined with
Theorem~\ref{thm:LHY} the second term of the energy asymptotics of the
dilute Bose gas has therefore been established.  It remains an
interesting open problem to give upper bounds consistent with
\eqref{eq:LHY} under less restrictive assumptions on the potential
than in \cite{YY,AA}. It remains, in particular, an open problem to
obtain upper and lower bounds for the hard core potential.

\paragraph{Acknowledgements.}
SF was partially supported by a Sapere Aude grant from the
Independent Research Fund Denmark, Grant number DFF--4181-00221. 
JPS was partially supported by the Villum Centre of Excellence for the
Mathematics of Quantum Theory (QMATH) and the ERC Advanced grant
321029. Part
of this work was carried out while both authors visited the
Mittag-Leffler Institute.

\section{Facts about the scattering solution}\label{scattering}
In this short section we establish notation and recall results
concerning the scattering length and associated quantities. 

We suppose
that $v$ satisfies Assumption~\ref{assump:v} and refer to Appendix C
of \cite{LSSY} for details and a more general treatment.  The
scattering equation reads
\begin{align}\label{eq:Scattering2}
  (-\Delta + \frac{1}{2} v(x) )(1-\omega(x)) =0,\qquad \text{ with } \omega \rightarrow 0, \text{ as } |x| \rightarrow \infty.
\end{align}
The radial solution $\omega$ to this equation satisfies that there exists a
constant $a >0 $ such that $\omega(x) = a/|x|$ for $x$ outside
$\supp\, v$. This constant $a$ is the {\it scattering length} of the
potential $v$ and we will refer to $\omega$ as the {\it scattering
  solution}. Furthermore, $\omega$ is radially symmetric and
non-increasing with
\begin{align}
	0\leq \omega(x)\leq 1.\label{omegabounds}
\end{align}
We introduce the function
\begin{align}\label{eq:gdef}
g := v(1-\omega).
\end{align}
The scattering equation can be reformulated as
\begin{align}
\label{eq:Scattering3}
-\Delta \omega = \frac{1}{2} g.
\end{align}
From this we deduce, using the divergence theorem, that
\begin{align}
a = (8\pi)^{-1} \int g,
\end{align}
and that the Fourier transform satisfies
\begin{align}\label{es:scatteringFourier}
\widehat{\omega}(k) = \frac{\hat{g}(k)}{2 k^2}.
\end{align}

\section{Grand canonical reformulation of the problem}\label{sec:Fock}

To prove Theorem~\ref{thm:LHY} we will reformulate the problem grand canonically on Fock space. 
Consider, for given $\rmu >0$, the following operator ${\mathcal
  H}_{\rmu}$ on the symmetric Fock space ${\mathcal F}_{\rm
  s}(L^2(\Omega))$. The operator ${\mathcal H}_{\rmu}$ commutes with
particle number and satisfies, with ${\mathcal H}_{\rmu,N}$ denoting
the restriction of ${\mathcal H}_{\rmu}$ to the $N$-particle subspace
of ${\mathcal F}_{\rm s}(L^2(\Omega))$,
\begin{align}\label{eq:BackgroundH}
{\mathcal H}_{\rmu,N} &= {\mathcal H}_N- 8\pi a \rmu N=\sum_{i=1}^N -\Delta_i 
+ \sum_{i<j} v(x_i-x_j)
- 8\pi a \rmu N\\
&=\sum_{i=1}^N \left( -\Delta_i - \rmu \int_{{\mathbb R}^3} g(x_i-y)\,dy \right)
+ \sum_{i<j} v(x_i-x_j).
 \nonumber 
\end{align}
Notice that the new term in ${\mathcal H}_{\rmu,N}$ plays the role of a chemical potential justifying the notation.

Define the corresponding ground state energy density,
\begin{align}
e_0(\rmu):= \lim_{|\Omega| \rightarrow \infty} |\Omega|^{-1}
\inf_{\Psi \in {\mathcal F_{\rm s}}\setminus \{0\}} \frac{\langle
  \Psi, {\mathcal H}_{\rmu} \Psi \rangle}{\| \Psi \|^2}.
\end{align}

\noindent We formulate the following result, which will be a
consequence of Theorems~\ref{thm:CompareBoxEnergy} and
\ref{thm:LHY-Box} below.

\begin{theorem}\label{thm:LHY-Background}
Suppose that $v$ satisfies Assumption~\ref{assump:v}.  Then the
thermodynamic ground state energy density of ${\mathcal H}_{\rmu}$
satisfies for $\rmu a^3\rightarrow 0$ that
\begin{align}
  e_0(\rmu) \geq -4\pi \rmu^2 a \left(1 - \frac{128}{15\sqrt{\pi}} (\rmu a^3)^{1/2}+{\mathcal C}(\rmu a^3)^{1/2+\eta}\right),
\end{align}
where $\eta>0$ and ${\mathcal C}$ depend on ${\mathcal R}=\int v/(8\pi a)$ and $R/a$ as given explicitly in 
Theorem~\ref{thm:LHY-Box}.
\end{theorem}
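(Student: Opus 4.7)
The plan is to deduce Theorem~\ref{thm:LHY-Background} by partitioning the infinite-volume grand canonical problem into boxes of a well-chosen intermediate side length $\ell$, and then combining the two technical results cited in the statement: Theorem~\ref{thm:CompareBoxEnergy} to compare infinite-volume and box energies, and Theorem~\ref{thm:LHY-Box} for the box LHY lower bound itself.

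First I would choose $\ell$ with $a \ll \ell \ll (\rmu a)^{-1/2}$, i.e.\ a suitable negative power of $\rmu a^3$ times $a$, large enough that each sub-cube carries many particles yet small enough that the hypotheses of Theorem~\ref{thm:LHY-Box} are satisfied. Partition a large outer cube $\Omega$ into disjoint sub-cubes of side $\ell$. The crucial advantage of the grand canonical formulation is that particle number is free, so a Fock-space state on $\Omega$ can be localized to a tensor product over these sub-cubes with no combinatorial obstruction. Theorem~\ref{thm:CompareBoxEnergy} is designed to turn this localization into a bound of the schematic form
\begin{equation*}
e_0(\rmu) \;\geq\; \ell^{-3}\,e_{\mathrm{box}}(\rmu,\ell)\;-\;(\text{boundary error}),
\end{equation*}
where $e_{\mathrm{box}}(\rmu,\ell)$ is the ground state energy of $\mathcal H_\rmu$ restricted to $\mathcal F_{\mathrm s}(L^2([0,\ell]^3))$ with appropriate boundary conditions, and the error originates from particles interacting across sub-cube boundaries and from kinetic energy costs of the cutoff.

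Next, on each sub-cube Theorem~\ref{thm:LHY-Box} would yield
\begin{equation*}
e_{\mathrm{box}}(\rmu,\ell) \;\geq\; -4\pi\rmu^2 a\,\ell^3\Bigl(1 - \tfrac{128}{15\sqrt{\pi}}(\rmu a^3)^{1/2} + \mathcal C\,(\rmu a^3)^{1/2+\eta}\Bigr),
\end{equation*}
with $\eta>0$ and $\mathcal C$ depending on $\mathcal R$ and $R/a$. Dividing by $\ell^3$, passing to the limit $|\Omega|\to\infty$, and absorbing the boundary error into the $(\rmu a^3)^{1/2+\eta}$ remainder gives the claim.

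The substantive analytic content is of course packaged into Theorem~\ref{thm:LHY-Box}, whose proof must implement the full Bogolubov-type diagonalization together with the scattering renormalization and the soft-pair contribution emphasized in the introduction. At the level of the present reduction, the only delicate point is to verify that a single $\ell=\ell(\rmu,a)$ simultaneously satisfies the applicability conditions of Theorem~\ref{thm:LHY-Box} and makes the boundary error in Theorem~\ref{thm:CompareBoxEnergy} of order $\rmu^2 a\,(\rmu a^3)^{1/2+\eta}$ or smaller for some positive $\eta$; this reduces to a bookkeeping check that the allowed ranges of the exponent of $\rmu a^3$ in the two box theorems overlap, and fixes the effective $\eta$ appearing in the final statement.
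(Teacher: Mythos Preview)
Your overall plan---combine Theorem~\ref{thm:CompareBoxEnergy} and Theorem~\ref{thm:LHY-Box}---is exactly how the paper proceeds; Theorem~\ref{thm:LHY-Background} is stated there as an immediate consequence of these two results, with no further argument. But two details of your elaboration are wrong, and one of them matters.

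First, the scale of $\ell$: you propose $a \ll \ell \ll (\rmu a)^{-1/2}$, whereas the paper takes $\ell = K_\ell(\rmu a)^{-1/2}$ with $K_\ell \gg 1$ (see~\eqref{eq:def_ell}), i.e.\ boxes \emph{longer} than the healing length. This is not cosmetic: the Bogolubov diagonalization inside Theorem~\ref{thm:LHY-Box} needs the box to resolve momentum scales around $\sqrt{\rmu a}$, and boxes shorter than the healing length would not produce the LHY constant $\tfrac{128}{15\sqrt\pi}$ at all. Your instinct ``choose $\ell$ so that LHY-Box applies'' is correct, but your concrete guess is on the wrong side. Second, there is no boundary error to absorb. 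Theorem~\ref{thm:CompareBoxEnergy} delivers the clean inequality $e_0(\rmu)\geq e_\Lambda(\rmu)$ with no remainder: the localization is a \emph{sliding} one (Lemmas~\ref{lem:LocKinEn} and~\ref{lem:LocPotential}), expressing the full Hamiltonian as an integral over unitarily equivalent translated box Hamiltonians, so the per-volume comparison is lossless in the thermodynamic limit. There is no disjoint partition, no discarded cross-box interaction, and hence no optimization of $\ell$ against a localization error---the only constraints on $\ell$ come from inside Theorem~\ref{thm:LHY-Box}.
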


\begin{proof}[Proof of Theorem~\ref{thm:LHY}]
It is easy to deduce Theorem~\ref{thm:LHY} from
Theorem~\ref{thm:LHY-Background}.  By inserting the ground state of
${\mathcal H}_N$ as a trial state in ${\mathcal H}_{\rmu}$ one gets in
the thermodynamic limit for all $\rho,\rho_\mu >0$
\begin{align}\label{eq:CompareGC}
e(\rho\,) \geq e_0(\rmu) + 8\pi a \rho \rmu \geq 8\pi a \rho \rmu
-4\pi \rmu^2 a \left(1 - \frac{128}{15\sqrt{\pi}} (\rmu a^3)^{1/2} 
+{\mathcal C}(\rmu a^3)^{1/2+\eta}\right) ,
\end{align}
where we have used the lower bound from Theorem~\ref{thm:LHY-Background}.
If we therefore choose $\rmu$ to be equal to $\rho$ we arrive at the LHY formula \eqref{eq:LHY}.
\end{proof}

\section{Strategy of the proof of Theorem~\ref{thm:LHY-Background} and the various parameters}
\label{sec:params}

As already mentioned in the introduction the important parameters 
given in the problem are 
$$
a, \int v, R
$$
All estimates will in the end depend on these. The most important combination is
the diluteness parameter
$$
\rmu a^3.
$$
The proof introduces a series of additional parameters. There is an integer 
$$
M\in\N
$$ 
which determines the regularity of the localization function
defined in Appendix~\ref{sec:chiproperties}. It will be chosen
eplicitly below. The remaining parameters will be chosen to depend on
$\rmu a^3$ and ${\mathcal R}=\int v/(8\pi a)$.  There are
dimensionless parameters
$
0< s,d,\varepsilon_T, 
$
that will be chosen small, and there are dimensionless parameters 
$
1< K_\ell, K_{\mathcal M},\KH, K_B
$ 
that will be chosen large.
The power in the error term will depend
on the choice of these 7 parameters in terms of $\rmu a^3$ and ${\mathcal R}=\int v/(8\pi a)$. 

Let us describe how these parameters enter into the proof and
list all the conditions that they must satisfy. Finally we will
make choices to show that these conditions can all be satisfied. 

The proof will use a double localization approach. First we localize
into boxes of length scale
\begin{equation}\label{eq:def_ell}
  \ell=:K_\ell(\rmu a)^{-1/2}
\end{equation}
I.e., boxes that are long on the scale $(\rmu a)^{-1/2}$ which turn
out to be the relevant length scale for the Bogolubov calculation and which is often referred to as the {\it healing length} in the litterature.
The
length of the box is chosen much longer to get the Bogolubov
calculation correct. The kinetic energy localization will be done in
such a way that constant functions in the box have zero kinetic energy
and such that there is a gap above the zero energy. This gap will
allow us to get a priori control on the number of excited particles,
i.e., those not in the condensate represented by the constant.
However, to get this apriori control we need an a priori lower bound
on the energy, which is correct to an order which is almost as in
LHY. This is achieved by localizing even further to small boxes of
length scale
\begin{equation}\label{eq:delldef}
  d\ell=dK_\ell (\rmu a)^{-1/2}\ll (\rmu a)^{-1/2}
\end{equation}
which gives us our first condition that $dK_\ell\ll 1$. 
Here and below $f\ll g$ is used in the precise meaning that $(f/g)\leq (\rmu
a^3)^{\varepsilon}$ for some positive $\varepsilon$ and likewise for $f \gg g$. 
In these small
boxes we have a much larger energy gap than in the large boxes and
this allows us to absorb errors that we cannot estimate in the larger
boxes. 

The localization of the potential energy is performed by a simple
sliding technique described in Lemma~\ref{lem:LocPotential}.  An
important step in controlling the energy in both the small and large
boxes is to split the potential energy in terms of writing $\one=P+Q$
where $P$ is the projection onto constant functions. The potential
energy can then be written as a sum of 16 terms that contain 0--4 $Q$'s. One of
the main ideas in this paper is to complete an appropriate square
containing the $4Q$ term in Lemma~\ref{lm:potsplit}. This will leave renormalized terms with 0-3
$Q$s, where the potential has essentially been replaced by the function
$g=v(1-\omega)$ from \eqref{eq:gdef}.

The analysis of the small boxes is performed in
Appendix~\ref{SmallBoxes}.  The parameters $\varepsilon_T, d, s$ appear in
the kinetic energy localization formulas of
Section~\ref{sec:lockinpot} and they must satisfy the conditions
\begin{align}
  d^{-5}s^{M+1}\ll&\, 1,\label{con:d5s}\\
  (dK_\ell)^2\ll \varepsilon_TK_\ell^{-2}\ll&\,\varepsilon_T \ll sdK_\ell,\label{con:eTdK}\\
  sK_\ell\gg&\, 1,\label{con:sKell} \\ 
  sdK_\ell \gg&\, K_B^{-1}\label{con:sdKellKB}.
\end{align}
Throughout the paper there will also be logarithmic factors. They are ignored 
here as they are always accomodated by the conditions given.  
Condition \eqref{con:d5s} is needed to prove the kinetic
energy localization into the small boxes (see
\eqref{eq:BSlidingassp}). It relies on a result from \cite{BS}. The
first condition in \eqref{con:eTdK} is needed to have a sufficently
large gap in the small boxes, but in fact, this would only require
$(dK_\ell)^2\ll \varepsilon_T$.  The need for the stronger condition
will be explained below. The condition $dK_\ell\ll 1$ noted above is
contained in \eqref{con:eTdK}.  The last condition in
\eqref{con:eTdK} is required to finally get the correct LHY constant
when the appropriate integral is estimated in Section~\ref{sec:Precise}. The
condition \eqref{con:sKell} is also needed to control the same
integral, in fact, this condition implies that the localized kinetic
energy (see \eqref{eq:T'u}) in the large boxes is essentially the
original kinetic energy at the relevant Bogolubov scales.  Finally,
\eqref{con:sdKellKB} introduces the parameter $K_B$ to control that
the small boxes are not too small.  This is required, in order, to get
a good lower bound on the the energy in the small boxes in
Appendix~\ref{SmallBoxes} (see Theorem~\ref{thm:smallbox}) and hence
for the a priori bound on the energy in the large boxes and
consequently on the number of particles and excited particles in the
large boxes (see Theorem~\ref{thm:aprioribounds}). The parameter $K_B$
has to satisfy the additional conditions that
\begin{align}
  K_B\ll&\,(\rmu a^3)^{-1/6}\label{con:KB},\\
  K_B^3K_\ell^2\ll&\, (\rmu a^3)^{-1/4}\label{con:KBKell}.
\end{align}
Here \eqref{con:KB} is a very weak condition implying that the a
priori lower bound on the energy in Theorem~\ref{thm:aprioriHLambda}
is at least better than the leading order term. 
The condition \eqref{con:KBKell} ensures that the a priori bounds on the particle number
and expected number of excited particles are both correct to leading order (see \eqref{eq:apriorinn+}). 

Having established the a priori bound on the energy and the number of
excited particles we will be able, using the technique of localizing
large matrices from \cite{LS}, to restrict the analysis to the subspace
where the number of excited particles is bounded by a parameter
\begin{equation}\label{eq:DefM}
 {\mathcal M}=:K_{\mathcal M}(\rmu a^3)^{-1/4}.
\end{equation}
It must satisfy
\begin{align}
K_{\mathcal M}^{-2}\int v/a\ll&\, 1,\label{con:KMRKB}\\
K_B^3K_\ell^5\ll&\, {\mathcal M}=K_{\mathcal M}(\rmu a^3)^{-1/4},\label{con:KBellKM}\\
K_{\mathcal M}K_\ell^{-3}\ll&\, (\rmu a^3)^{-1/4}.\label{con:M<<n}
\end{align}
Condition \eqref{con:KMRKB} is needed to control the error in the energy when restricting 
to the situation with a bounded number of excited particles. 
The condition \eqref{con:KBellKM} says that the upper bound $\mathcal M$ on the number of excited particles must be much 
bigger than the {\it expected} number of these particles, which 
in Theorem~\ref{thm:aprioribounds} is shown to be not much worse than $K_B^3K_\ell^2\rmu\ell^3(\rmu a^3)^{1/2}\sim 
K_B^3K_\ell^5$. The condition \eqref{con:M<<n} is a very weak condition that ensures  
\begin{equation}\label{eq:M<<n}
  {\mathcal M}\ll \rmu\ell^3,
\end{equation}
i.e., that the bound on the number of excited particles is much less than the total number of particles. 

When we treat the potential energy a major difficulty will be the
terms with 3 $Q$'s terms. These terms are
responsible for the ``soft'' pairs that we discussed in the
introduction.  The main contributions from these terms come when one
excited particle has low momentum and the other two have high
momenta. This requires introducing an upper cutoff for low momenta,
which we choose to be $K_L(\rmu a)^{1/2}$ and a lower cutoff for high
momenta which we choose to be (see Section~\ref{sec:3Q}) 
\begin{equation}\label{eq:KHtilde-cutoff}
  \KH^{-1}(\rmu a^3)^{5/12}a^{-1}.
\end{equation}
The relevance of the power $5/12$ is technical and will appear in the proof of 
Lemma~~\ref{lem:Q3-splitting2}. For convenience we also introduce the parameter $K_H=\KH(\rmu a^3)^{-5/12}$.

We will not choose $K_L$ as a new parameter, but take
\begin{align}\label{eq:KL_d}
K_L=:(K_\ell d^2)^{-1}\gg K_\ell, 
\end{align}
where the estimate follows from \eqref{con:eTdK}.

We get the additional conditions
\begin{align}
  K_{\mathcal M}K_\ell^4\ll&\, \KH^{3}\label{con:KMKellKH}\\
  (K_\ell K_L)^{1-M}=d^{2M-2}\ll&\,(\rmu a^3)^{1/2}\label{con:KellKLd} \\ 
  K_L\KH=(K_\ell d^2)^{-1}\KH\ll&\, (\rmu a^3)^{-1/12}.\label{con:KLKH}
\end{align}
The condition \eqref{con:KLKH} ensures that the high momenta are
disjoint from the low momenta.  The condition \eqref{con:KellKLd}
will be ensured by choosing the integer $M$ that appears in the
explicit localization function large enough. The condition is needed to
control errors that occur because of the localization function. This error will also appear in the final 
error on the lower bound on the energy (see \eqref{eq:Q3Cancels}) 
The condition \eqref{con:KMKellKH} is needed to control the error
(see\eqref{eq:Q3-1-reduce}) in cutting off the 3Q terms in momentum by
absorbing it into the energy gap. It is here that the powers in the
choice \eqref{eq:KHtilde-cutoff} become important. This step is performed after we have 
introduced second quantization in Section~\ref{sec:Second}

After introducing second quantization it turns out to be useful to do
$c$-number substitution in the spirit of \cite{LSYc}. After $c$-number
substitution, where the annihilation operator for the constant
functions is replaced by a number $z$ we need to control that the
parameter $\rho_z=|z|^2/\ell^3$, which represents the density in the
$c$-number substituted condensate, is sufficiently close to $\rmu$. This is
done in Section~\ref{sec:rough}. It will require the additional conditions
\begin{align}
  K_\ell^8(\rmu \ell^3)^{-1}{\mathcal M}K_L^6K_\ell^6=&\,K_{\mathcal M}K_\ell^5 d^{-12}(\rmu a^3)^{1/4}\ll1,\label{con:rough1}\\
  K_\ell^8(\rmu \ell^3)^{-1}{\mathcal M}^3K_L^6\KH^4(\rmu a^3)^{4/3}
  =&\,K_{\mathcal M}^3\KH^4K_\ell^{-1}d^{-12}(\rmu a^3)^{13/12}\ll1\label{con:rough2}.
\end{align}
These conditions 
ensure that $\delta_1$ defined in Lemma~\ref{lem:Roughbounds} is small enough to satisfy
\eqref{eq:NewConditionNew}. That \eqref{eq:NewConditionNew} is, indeed, satisfied then follows from \eqref{con:sdKellKB}
and \eqref{con:KBKell}.

Finally, we are then left with (see \eqref{eq:SimplifyKz})
\begin{itemize}
\item Terms with no $Q$'s that can be explicitly calculated
\item A quadratic Hamiltonian ${\mathcal K}^{\rm Bog}$ including also some linear terms (corresponding to $1Q$ terms)
\item The $3Q$ terms that are left after the momentum cut-offs and additional quadratic and linear terms. 
\end{itemize}
The quadratic Hamiltonian is  treated using
the simplified Bogolubov method in Appendix~\ref{sec:simplebog}. This
together with the no-$Q$ terms will give the correct energy up to the LHY correction 
and a positive quadratic operator (the diagonalized Bogolubov Hamiltonian), see \eqref{eq:BogHamDiag}. 
This requires, however, the condition
\begin{equation}\label{con:boghamerror}
  K_{\mathcal M}K_\ell^{-3/2}(K_\ell\KH(\rmu a^3)^{1/12})^{M-5}\ll(\rmu a^3)^{1/2}.
\end{equation}
Note that the term taken to the power $M$ here is small by \eqref{con:KLKH} and the estimate in \eqref{eq:KL_d}.

The positive quadratic operator together with
the remaining $3Q$ and other terms not treated by Bogolubov's method can be shown by a very detailed
calculation to be bounded below by a term of lower order than
LHY. This last calculation, done in Subsection~\ref{subsec:Q3}, requires the 
conditions (see Theorem~\ref{thm:Control3Q})
\begin{align}
  K_L^3K_{\mathcal M}=(K_\ell d^2)^{-3}K_{\mathcal M}\ll&\, (\rmu a^3)^{-1/4},\label{con:KellKM}\\
  K_\ell^2\KH^4 d^{-6} \ll&\, (\rmu a^3)^{-1/3},\label{con:KellKH} \\
  K_{\mathcal M}K_\ell^{-3}\KH^{-2}\ll&\,(\rmu a^3)^{-1/12},\label{con:Q3error1}\\
  K_{\mathcal M}K_\ell^{-3}d^{-12}(K_\ell^{-2}\KH^2(\rmu a^3)^{1/6})^{M-1}\ll&\, (\rmu a^3)^{3/4}.\label{con:Q3error3}
\end{align}
The conditions \eqref{con:Q3error1} and \eqref{con:Q3error3} are
needed in order for the errors in Theorem~\ref{thm:Control3Q} to be of
lower order than LHY. There are two additional error terms in
\eqref{eq:Q3Cancels} one is, however, already controlled by condition
\eqref{con:KellKLd} and the last term is small.  The condition
\eqref{con:eTdK} above which was not really needed until now will also
be needed in Subsection~\ref{subsec:Q3}.

If we choose to let all the parameters depend on a small parameter $X\ll1$ in the following way 
\begin{equation}\label{eq:Xparameter}
s=X,\ d=X^6, \ \varepsilon_T=X^{23/4}, \ K_\ell=X^{-3/2},\ K_B=X^{-6},\
K_{\mathcal M}=X^{-1},\ \KH=X^{-8/3},
\end{equation}
then all the conditions \eqref{con:eTdK}--\eqref{con:sdKellKB},
\eqref{con:KMKellKH} will be satisfied. 
In order to satisfy 
\eqref{con:KB}, \eqref{con:KBKell}, \eqref{con:KBellKM}, \eqref{con:M<<n},
\eqref{con:KLKH}--\eqref{con:rough2}, \eqref{con:KellKM}--\eqref{con:Q3error1}
of which the most restrictive is \eqref{con:rough1},
we can choose
\begin{equation}\label{eq:Xchoice}
X=(\rmu a^3)^{1/323}.
\end{equation}
We can choose the integer $M=\Mvalue$ to ensure that \eqref{con:d5s}, \eqref{con:KellKLd},\eqref{con:boghamerror},
and \eqref{con:Q3error3} hold. Finally, 
\eqref{con:KMRKB} holds if 
\begin{equation}\label{eq:intvbound}
\int v/a\ll(\rmu a^3)^{-2/323}.
\end{equation}

To get all the arguments to work we need the assumptions
\begin{equation}\label{eq:AssumptionK_R}
R\leq K_B^{1/2}(\rmu a^3)^{1/4}(\rmu a)^{-1/2},\quad R/\ell \ll (\rmu a^3)^{1/4},\quad
R/a\ll (\rmu a^3)^{-1/4}.
\end{equation}
The third assumption (which could be improved slightly) is the most
restrictive and is used in \eqref{eq:BogHamDiag}. The first assumption is
used in Appendix~\ref{SmallBoxes} and the second assumption says that
the range of the potential should be sufficiently much smaller than
the size of the large boxes.

\section{Localization}\label{sec:Localization}
\subsection{Setup and notation}
The main part of the analysis will be carried out on a box $\Lambda=[-\ell/2,\ell/2]^3$ of size 
$\ell$ given in \eqref{eq:def_ell}
In this section we will carry out
the localization to the box $\Lambda$. The main result is given at the
end of the section as Theorem~\ref{thm:CompareBoxEnergy} which states
that for a lower bound it suffices to consider a `box energy',
i.e. the ground state energy of a Hamiltonian localized to a box of
size $\ell$.  For convenience, in Theorem~\ref{thm:LHY-Box} we state
the bound on the box energy that will suffice in order to prove
Theorem~\ref{thm:LHY-Background}.

It will be important to make an explicit choice of a localization
function $\chi\in C_0^{\cM-1}(\R^3)$, for $\cM\in\N$ with support in
$[-1/2,1/2]^3$. It is given in Appendix~\ref{sec:chiproperties}. The
function will not be smooth but it will be important in the analysis
that we choose $M\in \N$ finite but sufficiently large. The explicit choice 
$M=\Mvalue$ was explained in the previous section. 
We choose
$\chi$ such that
\begin{align}\label{eq:chinormalization}
0 \leq \chi, \qquad \int \chi^2 = 1.
\end{align}
We will also use the notation 
\begin{align}
\chi_{\Lambda}(x) := \chi(x/\ell). 
\end{align}

For given $u \in {\mathbb R}^3$, we define 
\begin{align}
\chi_u(x) = \chi(\frac{x}{\ell}-u)=\chi_\Lambda(x-u\ell).
\end{align}
Notice that $\chi_u$ localizes to the box $\Lambda(u) := \ell u + [-\ell/2,\ell/2]^3$.

We will also need the sharp localization function $\theta_u$ to the box $\Lambda(u)$, i.e.
\begin{align}\label{eq:Theta}
\theta_u := \one_{\Lambda(u)}.
\end{align}

Define $P_u, Q_u$ to be the orthogonal projections in $L^2({\mathbb R}^3)$ defined by
\begin{align}\label{def: projections}
P_u \varphi := \ell^{-3} \langle \theta_u, \varphi\rangle \theta_u, \qquad  Q_u \varphi:= \theta_u \varphi - \ell^{-3} \langle \theta_u, \varphi \rangle \theta_u.
\end{align}
In the case $u=0$ we will use the notations
\begin{equation}
  P_{u=0}=P_\Lambda=P\qquad Q_{u=0}=Q_\Lambda=Q
\end{equation}
Define furthermore 
\begin{align}\label{eq:3.5}
W(x) := \frac{v(x)}{\chi*\chi(x/\ell)}.
\end{align}
That $W$ is well-defined for sufficiently small values of $\rmu$, uses that $v$ has finite range. 
Manifestly $W$ depends on $\ell$ and thus $\rho_\mu$, but we will not reflect this in our notation.

Define the localized potentials
\begin{align}\label{eq:w_u}
w_u(x,y) := \chi_u(x) W(x-y) \chi_u(y), \qquad w(x,y) := w_{u=0}(x,y).
\end{align}
Notice the translation invariance,
\begin{align}\label{eq:transInv}
w_{u+\tau}(x,y) = w_u(x-\ell \tau,y-\ell \tau).
\end{align}
For some estimates it is convenient to invoke the scattering solution
and thus we introduce the notation, which again is well-defined for
$\rho_\mu a^3$ sufficiently small,
\begin{align}\label{eq:defW_12}
W_1(x) &:= W(x) (1-\omega(x)) = \frac{g(x)}{\chi*\chi(x/\ell)},\qquad w_1(x,y):=w(x,y)(1-\omega(x-y)), \nonumber \\
W_2(x) &:= W(x) (1-\omega^2(x)) = \frac{g(x)+g\omega(x)}{\chi*\chi(x/\ell)},\qquad w_2(x,y):=w(x,y)(1-\omega^2(x-y)).
\end{align}
If we add a subscript $u$ we mean as above the translated versions $w_{1,u}(x,y)=w_1(x-\ell u,y-\ell u)$. 
For $\rmu a^3$ sufficiently small a simple change of variables yields, for all $u\in{\mathbb R}^3$, the identities
\begin{align}\label{eq:DefU}
\frac{1}{2} \ell^{-6} \iint_{{\mathbb R}^3\times{\mathbb R}^3} \chi(\frac{x}{\ell})\chi(\frac{y}{\ell}) W_1(x-y)\, dx\,dy
&= \frac{1}{2} \ell^{-6} \iint_{{\mathbb R}^3\times{\mathbb R}^3} w_{1}(x,y)\,dx\,dy \nonumber \\
&=\frac{1}{2} \ell^{-3} \int g = 4 \pi a\ell^{-3}
\end{align}
and likewise
\begin{align}\label{eq:w2int}
\frac{1}{2} \ell^{-6} \iint_{{\mathbb R}^3\times{\mathbb R}^3} w_2(x,y)\,dx\,dy 
&=\frac{1}{2}\ell^{-3} \int g(1+\omega).
\end{align}
The following simple lemma will often be useful.
\begin{lemma}
\begin{align}\label{eq:W1-g}
{g}(x) \leq {W}_1(x)  \leq {g}(x) (1+C \frac{R^2}{\ell^2}) .
\end{align}
\end{lemma}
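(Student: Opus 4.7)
The plan is to reduce both bounds to pointwise estimates on $\chi\ast\chi$ and then translate them, via the definition $W_1(x)=g(x)/(\chi\ast\chi)(x/\ell)$, into the desired inequalities on $W_1$. Since $g$ is supported in $\supp v\subset B(0,R)$, it suffices to control $\chi\ast\chi(y)$ for $|y|\le R/\ell$; the statement is vacuous outside this region.

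For the lower bound $g\le W_1$, I would show $\chi\ast\chi(y)\le 1$ for every $y\in\R^3$. This is immediate from Cauchy--Schwarz:
\[
\chi\ast\chi(y)=\int\chi(y-z)\chi(z)\,dz \le \|\chi(y-\cdot)\|_{L^2}\|\chi\|_{L^2}=\|\chi\|_{L^2}^2=1,
\]
using the normalization \eqref{eq:chinormalization}. Dividing $g(x)\ge 0$ by a quantity that is $\le 1$ (and nonzero where $g\ne 0$, which is guaranteed by the very hypothesis that $W$ is well-defined for the range of $\rmu$ considered) gives $W_1(x)\ge g(x)$.

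For the upper bound, I would establish a quantitative quadratic lower bound $\chi\ast\chi(y)\ge 1-C|y|^2$ for $|y|\le R/\ell$. By the construction in Appendix~\ref{sec:chiproperties}, $\chi$ is spherically symmetric and has enough regularity that $\chi\ast\chi$ is $C^2$ near the origin; since Cauchy--Schwarz shows $\chi\ast\chi$ attains its global maximum $1$ at $0$, its gradient vanishes there, so a second order Taylor expansion around $0$ gives
\[
\chi\ast\chi(y)\ge 1-C|y|^2,\qquad |y|\text{ small},
\]
with $C$ depending only on the bounded second derivatives of $\chi\ast\chi$. Inserting $y=x/\ell$ with $|x|\le R$ and using the elementary estimate $1/(1-t)\le 1+2t$ for $0\le t\le 1/2$ yields
\[
W_1(x)=\frac{g(x)}{\chi\ast\chi(x/\ell)}\le g(x)\bigl(1+C\tfrac{R^2}{\ell^2}\bigr),
\]
valid in the regime where $R/\ell$ is small enough (which is ensured for $\rmu a^3$ small by \eqref{eq:AssumptionK_R}).

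The only nontrivial input is the Taylor bound on $\chi\ast\chi$ near $0$, which is the main point requiring the properties of the explicit localization function supplied by Appendix~\ref{sec:chiproperties} — in particular spherical symmetry (so that $\nabla(\chi\ast\chi)(0)=0$ automatically) and $C^2$ regularity of $\chi\ast\chi$ at the origin. Everything else is a one-line application of Cauchy--Schwarz and of the identity defining $W_1$.
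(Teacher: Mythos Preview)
Your proof is correct and follows essentially the same line as the paper: both reduce to the fact that $\chi\ast\chi$ attains its maximum $1$ at the origin and is $C^2$ there, so a second-order Taylor expansion gives the quadratic bound. One small correction: the localization function $\chi$ from Appendix~\ref{sec:chiproperties} is \emph{not} spherically symmetric (it is a product $\prod_j\zeta(x_j)^M$), but this does not matter since your other argument---that $\nabla(\chi\ast\chi)(0)=0$ because $0$ is a global maximum---already suffices (alternatively, $\chi$ is even, which also forces the gradient of the autoconvolution to vanish at the origin).
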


\begin{proof}
The proof is an easy estimate of the convolution, noting that its maximum is attained at the origin.
\end{proof}

\begin{lemma}\label{lem:Convolution}
Suppose that $f \in L^1({\mathbb R}^3)$ satisfies $\supp f \subset B(0,R)$ and $f(-x)=f(x)$.
Then
\begin{align}
\left| f*\chi_{\Lambda}(x) - \chi_{\Lambda}(x) \int f \right| \leq \max_{i,j}\| \partial_i\partial_j \chi \|_{\infty} 
\left(\frac{R}{\ell}\right)^2 \int |f|.
\end{align}
\end{lemma}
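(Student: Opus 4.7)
The plan is to combine a second-order Taylor expansion of $\chi_\Lambda$ with the parity assumption on $f$. First I would rewrite
\[
f*\chi_\Lambda(x) - \chi_\Lambda(x)\int f = \int f(y)\bigl[\chi_\Lambda(x-y) - \chi_\Lambda(x)\bigr]\,dy,
\]
by using $\chi_\Lambda(x)\int f = \int f(y)\chi_\Lambda(x)\,dy$. Since $\chi\in C_0^{M-1}$ with $M$ chosen in the paper to be large (so in particular $\chi\in C^2$), Taylor's theorem with integral remainder applied at the point $x$ gives
\[
\chi_\Lambda(x-y) - \chi_\Lambda(x) = -\nabla\chi_\Lambda(x)\cdot y + \sum_{i,j=1}^{3} y_iy_j \int_0^1 (1-t)(\partial_i\partial_j\chi_\Lambda)(x-ty)\,dt,
\]
where $\partial_i\partial_j\chi_\Lambda(z)=\ell^{-2}(\partial_i\partial_j\chi)(z/\ell)$.

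The main observation is that the linear term integrates to zero against $f$: for each coordinate $i$, the function $y\mapsto y_if(y)$ is odd in $y$ because $f$ is assumed even, so $\int y_if(y)\,dy=0$. Consequently only the quadratic remainder survives. Bounding this remainder pointwise by
\[
\Big|\sum_{i,j}y_iy_j\int_0^1(1-t)(\partial_i\partial_j\chi_\Lambda)(x-ty)\,dt\Big| \leq \tfrac{1}{2}\ell^{-2}\max_{i,j}\|\partial_i\partial_j\chi\|_\infty\sum_{i,j}|y_iy_j|,
\]
using $\sum_{i,j}|y_iy_j|\lesssim |y|^2\leq R^2$ on $\supp f\subset B(0,R)$, and integrating against $|f|$ yields the stated inequality. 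The small combinatorial prefactors ($1/2$ from the Taylor integral, dimensional factors from $\sum_{i,j}|y_iy_j|\leq 3|y|^2$) are absorbed into the $\max_{i,j}$ constant; if a perfectly clean constant is desired one can instead expand the symmetrized difference $\tfrac{1}{2}[\chi_\Lambda(x-y)+\chi_\Lambda(x+y)]-\chi_\Lambda(x)$, which is equal to the original integrand after symmetrization by the evenness of $f$.

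I do not anticipate any substantive obstacle. The argument is a one-shot Taylor-with-remainder computation combined with the parity of $f$; the only point requiring attention is that $\chi$ be at least twice differentiable, which is guaranteed by the choice of $M$ (indeed $M=\Mvalue$ in the paper), and the bookkeeping of constants needed to reproduce the form stated in the lemma.
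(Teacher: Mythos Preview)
Your proposal is correct and follows exactly the approach the paper indicates: the integral representation $f*\chi_\Lambda(x)-\chi_\Lambda(x)\int f=\int f(y)[\chi_\Lambda(x-y)-\chi_\Lambda(x)]\,dy$ combined with a second-order Taylor expansion, where the evenness of $f$ kills the linear term. Your caveat about the combinatorial prefactor is apt (a naive bound gives a harmless factor $3/2$ in front), but since the lemma is only ever applied with a generic constant $C$ this is immaterial.
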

\begin{proof}
The proof is an easy application of a Taylor expansion and the integral representation
$$
f*\chi_{\Lambda}(x) - \chi_{\Lambda}(x) \int f = \int f(y) [\chi_{\Lambda}(x-y) - \chi_{\Lambda}(x)] \,dy.
$$
\end{proof}

\begin{lemma}
Suppose that $R/\ell \leq 1$.
For some universal constant $C>0$ we have
\begin{align}\label{eq:I2-integral}
\left|(2\pi)^{-3}  \int \frac{\widehat{W}_1(k)^2}{2k^2} \,dk -
\widehat{g\omega}(0) \right| \leq  C(R/\ell)^2 \widehat{g\omega}(0).
\end{align}
We also get
\begin{align}\label{eq:I2-integral-2}
\int \frac{(\widehat{W}_1(k) - \widehat{g}(k))^2}{2k^2} \,dk \leq C \frac{R^4}{\ell^4} \widehat{g\omega}(0).
\end{align}
\end{lemma}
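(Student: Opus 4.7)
The key observation is that in three dimensions $1/(2k^2)$ is the Fourier transform of the Newton kernel $u(x):=(8\pi|x|)^{-1}$, and that the scattering equation $-\Delta\omega=g/2$ together with the decay at infinity is equivalent to the convolution identity $\omega = g*u$. In particular $\int g\,\omega\,dx = \widehat{g\omega}(0)$, which will supply the main term.

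Set $h := W_1 - g$. The previous lemma combined with $g\ge 0$ gives the pointwise bound $0\le h(x)\le C(R/\ell)^2\,g(x)$, and by non-negativity of $u$ we get the corresponding bound $h*u \le C(R/\ell)^2\,\omega$. Since $v$ and $\chi$ are symmetric about the origin all functions in sight are even and real, so Parseval identifies
\begin{equation*}
(2\pi)^{-3}\!\int \frac{\widehat{W}_1(k)^2}{2k^2}\,dk \;=\; \int W_1(x)\,(W_1*u)(x)\,dx \;=\; \int (g+h)(\omega + h*u)\,dx.
\end{equation*}
Expanding, the leading term is $\int g\,\omega = \widehat{g\omega}(0)$, and the three cross terms $\int g(h*u)$, $\int h\,\omega$, $\int h(h*u)$ are each bounded by $C(R/\ell)^2\widehat{g\omega}(0)$ after substituting the pointwise bounds on $h$ and $h*u$ above. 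This is \eqref{eq:I2-integral}.

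For \eqref{eq:I2-integral-2} the same Parseval step, applied now to $\widehat{h} = \widehat{W}_1-\widehat{g}$, gives
\begin{equation*}
(2\pi)^{-3}\!\int \frac{(\widehat{W}_1(k)-\widehat{g}(k))^2}{2k^2}\,dk \;=\; \int h(x)(h*u)(x)\,dx \;\le\; C(R/\ell)^4\,\widehat{g\omega}(0),
\end{equation*}
which yields the stated bound (the factor $(2\pi)^3$ missing on the left-hand side of \eqref{eq:I2-integral-2} is absorbed into $C$).

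I expect no real obstacle here: the argument is essentially bookkeeping once one notices that $1/(2k^2)$ is the Newton kernel so that Parseval reduces everything to convolutions against the positive kernel $u$, where the pointwise bound \eqref{eq:W1-g} applies directly. The only care required is in the Fourier conventions and the bookkeeping of factors of $(2\pi)^3$, which are pinned down by \eqref{es:scatteringFourier}.
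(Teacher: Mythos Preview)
Your proof is correct and follows essentially the same route as the paper: both pass to physical space via Parseval (the paper writes the result as the double integral $\iint (W_1-g)(x)(W_1+g)(y)/|x-y|\,dx\,dy$ rather than in convolution notation) and then apply the pointwise bound \eqref{eq:W1-g} together with the identity $\omega=g*u$ to extract the factors of $(R/\ell)^2$. The only cosmetic difference is that the paper factors $\widehat{W}_1^2-\widehat{g}^2=(\widehat{W}_1-\widehat{g})(\widehat{W}_1+\widehat{g})$ before passing to physical space, whereas you expand $(g+h)(\omega+h*u)$ afterwards; the estimates are the same.
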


\begin{proof}
Recall that $\widehat{\omega}(k)=\frac{\widehat{g}(k)}{2k^2}$ by \eqref{es:scatteringFourier}. 
Using the Fourier transformation and \eqref{eq:W1-g} we get
\begin{align}
\left|(2\pi)^{-3} \int \frac{\widehat{W}_1^2(k)- \widehat{g}^2(k)}{2k^2} \, dk\right|
&= C   \iint \frac{(W_1-g)(x) (W_1+g)(y)}{|x-y|}\,dx\,dy  \nonumber \\
&\leq 3 C \frac{R^2}{\ell^2}  \iint \frac{g(x) g(y)}{|x-y|}\,dx\,dy  \nonumber \\
&=  C' \frac{R^2}{\ell^2} \widehat{g\omega}(0).\label{eq: HLS}
\end{align}
This finishes the proof of \eqref{eq:I2-integral}.
The proof of \eqref{eq:I2-integral-2} follows from a similar calculation and is omitted.
\end{proof}

\subsection{Localization of the kinetic and potential energies}\label{sec:lockinpot}
We will use a sliding localization technique developed in the paper \cite{BS}
where we estimate the kinetic energy $-\Delta$ in $\R^3$ below by an
integral over kinetic energy operators in the boxes $\Lambda(u)$. The
following theorem is essentially Lemma~3.7 in \cite{BS}.
\begin{lemma}[Kinetic energy localizaton]\label{lem:LocKinEn} Let $-\Delta_u^{\mathcal N}$ 
denote the Neumann Laplacian in $\Lambda(u)$. 
If the regularity of $\chi$ has $M\geq 5$ (e.g., for our choice $\Mvalue$) and the positive parameters
$\varepsilon_T,d,s,b$ are smaller than some universal constant then for all $\ell>0$ we have
\begin{equation}
  \int_{\R^3} {\mathcal T}_u  du\leq -\Delta
\end{equation}
where 
\begin{align}
\label{eq:DefT}
      {\mathcal T}_u &:=
      \frac12 \varepsilon_T (d \ell)^{-2} \frac{-\Delta_u^{\mathcal N}}{-\Delta_u^{\mathcal N}+(d\ell)^{-2}} 
      + b \ell^{-2} Q_u + b\varepsilon_T (d\ell)^{-2} Q_u\one_{(d^{-2}\ell^{-1},\infty)}(\sqrt{-\Delta})Q_u+{\mathcal T}'_u
\end{align}
with
\begin{align}\label{eq:T'u}
  {\mathcal T}'_u :=
  Q_u \chi_u \Big\{
  (1-\varepsilon_T) \Big[ \sqrt{-\Delta} - \frac{1}{2}(s\ell)^{-1} \Big]_{+}^2
  + \varepsilon_T \Big[ \sqrt{-\Delta} - \frac{1}{2}(d s\ell)^{-1} \Big]_{+}^2
  \Big\} \chi_u Q_u.
\end{align}
\end{lemma}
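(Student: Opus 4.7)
The plan is to follow the sliding localization strategy developed in Lemma~3.7 of \cite{BS}, adapting it to the present localization function $\chi$ from Appendix~\ref{sec:chiproperties}. The starting point is the continuous IMS-type identity: since by a change of variables $\int_{\R^3}\chi_u(x)^2\,du = \int\chi^2 = 1$, one has, as quadratic forms on $H^1(\R^3)$,
\begin{align*}
-\Delta \;=\; \int_{\R^3}\chi_u(-\Delta)\chi_u\,du + \int_{\R^3}\abs{\nabla\chi_u}^2\,du \;\geq\; \int_{\R^3}\chi_u(-\Delta)\chi_u\,du,
\end{align*}
the last integral being a non-negative multiplication operator. The task therefore reduces to establishing $\chi_u(-\Delta)\chi_u \geq \mathcal{T}_u$ as quadratic forms for each $u$; by translation invariance it suffices to treat $u=0$.

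Because $\chi_u$ is supported in the interior of $\Lambda(u)$ and vanishes on its boundary, the quadratic form $\chi_u(-\Delta)\chi_u$ on $L^2(\R^3)$ coincides with that of $\chi_u(-\Delta_u^{\mathcal{N}})\chi_u$ on $L^2(\Lambda(u))$. Splitting $\chi_u\psi = P_u(\chi_u\psi) + Q_u(\chi_u\psi)$, the $P_u$ piece is a multiple of the constant $\theta_u$ and carries zero Neumann kinetic energy, so the kinetic energy is concentrated on $Q_u\chi_u\psi$ up to commutator terms of size $O(\ell^{-2})$ which will be absorbed into the $b\ell^{-2}Q_u$ reserve.

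For the momentum structure of $\mathcal{T}'_u$ I would use the elementary pointwise bound
\begin{align*}
k^2 \;\geq\; (1-\varepsilon_T)\bigl[\abs{k} - \tfrac{1}{2}(s\ell)^{-1}\bigr]_+^2 + \varepsilon_T\bigl[\abs{k} - \tfrac{1}{2}(ds\ell)^{-1}\bigr]_+^2,
\end{align*}
valid for all $k$ since $k^2 \geq [\abs{k}-\mu]_+^2$ for any $\mu\geq 0$ and any convex combination of such bounds remains a lower bound. Conjugating by $\chi_u Q_u$ yields the $\mathcal{T}'_u$ contribution. The low-momentum piece of the Neumann kinetic energy that is discarded by the shifted cutoffs is recovered by the capped Neumann Laplacian term $\tfrac{1}{2}\varepsilon_T(d\ell)^{-2}\frac{-\Delta_u^{\mathcal{N}}}{-\Delta_u^{\mathcal{N}}+(d\ell)^{-2}}$, which matches the scale $(d\ell)^{-2}$; the $b\ell^{-2}Q_u$ and high-frequency $b\varepsilon_T(d\ell)^{-2}Q_u\one_{(d^{-2}\ell^{-1},\infty)}(\sqrt{-\Delta})Q_u$ terms serve as spectral-gap reserves into which the remaining error terms are absorbed.

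The main obstacle is the control of commutators between the finite-regularity localization $\chi_u \in C^{M-1}$ and the nonlocal operators $[\sqrt{-\Delta}-\mu]_+^2$ and $\one_{(d^{-2}\ell^{-1},\infty)}(\sqrt{-\Delta})$; these commutators generate errors bounded by combinations of $\ell^{-j}\|\partial^j\chi\|_\infty$ for $j\leq M$ together with inverse powers of $s,d,\varepsilon_T$ coming from the cutoff scales. The requirement $M\geq 5$ together with the smallness of $\varepsilon_T, d, s, b$ relative to universal constants ensures these errors are dominated by the $Q_u$ terms of $\mathcal{T}_u$ and can be absorbed, yielding $\chi_u(-\Delta)\chi_u \geq \mathcal{T}_u$ and, upon integration in $u$, the conclusion of the lemma. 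The full bookkeeping is exactly that of the proof of Lemma~3.7 in \cite{BS}.
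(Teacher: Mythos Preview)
Your approach has a sign error at the outset and, more fundamentally, proceeds by a pointwise-in-$u$ comparison that cannot deliver the new term in this lemma.

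First the sign: the IMS identity reads
\[
\int_{\R^3}\chi_u(-\Delta)\chi_u\,du \;=\; -\Delta \;+\; \int_{\R^3}|\nabla\chi_u|^2\,du,
\]
so $-\Delta \leq \int\chi_u(-\Delta)\chi_u\,du$, the reverse of what you wrote. Hence even if you could establish $\chi_u(-\Delta)\chi_u\geq\mathcal T_u$ for each $u$, integrating would only give $\int\mathcal T_u\,du\leq -\Delta + \int|\nabla\chi_u|^2\,du$, with a localization error of the wrong sign.

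Second, and more importantly, the operator $\mathcal T_u$ contains terms built from the \emph{sharp} projection $Q_u$ (onto the orthogonal complement of constants in the full box $\Lambda(u)$) and the Neumann Laplacian on $\Lambda(u)$; these are not of the form $\chi_u(\cdot)\chi_u$, and there is no pointwise inequality $\chi_u(-\Delta)\chi_u\geq b\ell^{-2}Q_u$ or $\chi_u(-\Delta)\chi_u\geq b\varepsilon_T(d\ell)^{-2}Q_u\one_{(d^{-2}\ell^{-1},\infty)}(\sqrt{-\Delta})Q_u$ to invoke. The paper's proof is \emph{not} pointwise in $u$: it cites Lemma~3.7 of \cite{BS}, which already gives the inequality with the full coefficient $\varepsilon_T(d\ell)^{-2}$ on the capped Neumann term in place of the two terms $\tfrac12\varepsilon_T(d\ell)^{-2}\frac{-\Delta_u^{\mathcal N}}{-\Delta_u^{\mathcal N}+(d\ell)^{-2}} + b\varepsilon_T(d\ell)^{-2}Q_u\one_{(d^{-2}\ell^{-1},\infty)}(\sqrt{-\Delta})Q_u$. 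The only new work here is to show that half of the former dominates the latter after integration in $u$, namely
\[
bd^{-2}\int_{\R^3}Q_u\one_{(d^{-2},\infty)}(\sqrt{-\Delta})Q_u\,du \;\leq\; \tfrac12 d^{-2}\frac{-\Delta}{-\Delta+d^{-2}}.
\]
This is done by computing the left-hand side as an explicit Fourier multiplier $H(-i\nabla)$ via Lemma~3.3 of \cite{BS} (with $\chi_u=\theta_u$) and then showing $H(p)\leq Cb\min\{|p|^2,d^{-2}\}$ by direct estimates on $\widehat\theta$. Your proposal does not address this step, and the claim that ``the full bookkeeping is exactly that of the proof of Lemma~3.7 in \cite{BS}'' misses precisely the content that is new in the present lemma.
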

\begin{proof} 
In Lemma~3.7 in \cite{BS} we have the same inequality except that the terms above
$$
 \frac12 \varepsilon_T (d \ell)^{-2} \frac{-\Delta_u^{\mathcal N}}{-\Delta_u^{\mathcal N}+(d\ell)^{-2}} 
      + b\varepsilon_T (d\ell)^{-2} Q_u\one_{(d^{-2}\ell^{-1},\infty)}(\sqrt{-\Delta})Q_u.
$$
are replaced by the term  
$\varepsilon_T (d \ell)^{-2} \frac{-\Delta_u^{\mathcal N}}{-\Delta_u^{\mathcal N}+(d\ell)^{-2}}$.

Using scaling it is clear that we may assume $\ell=1$.
The proof in Lemma~3.7 in \cite{BS} relies on the inequality (see (44) in \cite{BS})
$$
d^{-2}\int_{\R^3} \frac{-\Delta_u^{\mathcal N}}{-\Delta_u^{\mathcal N}+d^{-2}} du\leq 
d^{-2}\frac{-\Delta}{-\Delta+d^{-2}}.
$$
The lemma above will follow in the same way if we can also prove that 
\begin{equation}\label{eq:Q1Qslide}
  bd^{-2} \int_{\R^3}Q_u\one_{(d^{-2},\infty)}(\sqrt{-\Delta})Q_u du\leq \frac12 d^{-2}\frac{-\Delta}{-\Delta+d^{-2}}.
\end{equation}
Using Lemma~3.3 in \cite{BS} (with $\chi_u=\theta_u=\one_{\Lambda(u)}$ and ${\mathcal K}(p)=bd^{-2}\one_{(d^{-2},\infty)}$)
we can explicitly calculate the operator on the left in \eqref{eq:Q1Qslide} to be 
$H(-i\nabla)$ where 
\begin{align}
  H(p)=&(2\pi)^{-3}b d^{-2} \int_{|q|>d^{-2}}(\widehat\theta(p)\widehat\theta(q)-\widehat\theta(q-p))^2 dq\nonumber\\
  \leq&(2\pi)^{-3} 2b d^{-2} (\widehat\theta(p)-1)^2
  \int_{|q|> d^{-2}}\widehat\theta(q)^2du+(2\pi)^{-3} 2bd^{-2}\int_{|q|>d^{-2}} (\widehat\theta(q-p)-\widehat\theta(q))^2 dq.
\end{align}
We clearly have $H(0)=0$ and $0\leq H(p)\leq Cb d^{-2}$. It is easy to see that 
$$
\theta(q)^2\leq C\frac1{(1+q_1^2)(1+q_2^2)(1+q_3^2)},
\qquad (\widehat\theta(q-p)-\widehat\theta(q))^2 \leq C\frac{|p|^2}{(1+q_1^2)(1+q_2^2)(1+q_3^2)}
$$ 
It then follows that $H(p)\leq Cb\min\{|p|^2,d^{-2}\}$. Hence
\eqref{eq:Q1Qslide} holds if $b$ is smaller than a universal constant.
\end{proof}
\begin{remark} The kinetic operator in \eqref{eq:DefT} looks complicated. This is partly because 
we need to localize it even further into smaller boxes in order to get
a priori estimates (see Appendix~\ref{SmallBoxes}).  The first term in
\eqref{eq:DefT} will give us a Neumann gap in the small boxes. The
second term in \eqref{eq:DefT} is a Neumann gap in the large
boxes. The third term in \eqref{eq:DefT} will control errors coming
from excited particles with very large monmenta (see Lemma~\ref{lem:Q3-splitting1} and the estimate \eqref{eq:EstimateQ3Tilde1}  in Lemma~\ref{lem:EstimatesOnQ3-js}). Finally the
term ${\mathcal T}'_u$ is the main kinetic energy term in the large
boxes. 
\end{remark}

The localization of the potential energy is much simpler and relies on the identity in the following lemma
which is a straightforward computation similar to Proposition~3.1 in \cite{BS}.
\begin{lemma}[Potential energy localization] \label{lem:LocPotential}
For points $x_1,\ldots,x_N\in\R^3$ we have with the definitions of
$w_{1,u}$ and $w_u$ in \eqref{eq:w_u} and \eqref{eq:defW_12} that
\begin{align}
  -\rmu \sum_{i=1}^N \int & g(x_i-y)\,dy + \sum_{i<j} v(x_i-x_j)  \nonumber \\
  &=
  \int_{\R^3} \Big[-\rmu \sum_{i=1}^N \int w_{1,u}(x_i,y)\,dy + \sum_{i<j} w_u(x_i,x_j)\Big] du,
\end{align}
\end{lemma}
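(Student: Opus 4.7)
The plan is to verify the identity term by term: the $u$-integral of each localized contribution reproduces the original contribution. Everything reduces to a single auxiliary identity, namely
\[
\int_{\R^3} \chi_u(a)\,\chi_u(b)\,du \;=\; (\chi*\chi)\!\left(\tfrac{a-b}{\ell}\right),\qquad a,b\in\R^3,
\]
which I would establish by unfolding the definition $\chi_u(x)=\chi(x/\ell-u)$, making the change of variables $w=b/\ell-u$ (so that $a/\ell-u=(a-b)/\ell+w$), and using evenness of $\chi$ (built into the explicit construction in Appendix~\ref{sec:chiproperties}), so that $\chi(-w)=\chi(w)$ and the resulting integral is precisely $(\chi*\chi)$ evaluated at $(a-b)/\ell$.

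Granting this identity, the two-body term follows immediately by Fubini together with the definitions \eqref{eq:w_u} and \eqref{eq:3.5}:
\[
\int_{\R^3} w_u(x_i,x_j)\,du \;=\; W(x_i-x_j)\int_{\R^3}\chi_u(x_i)\chi_u(x_j)\,du \;=\; W(x_i-x_j)\,(\chi*\chi)\!\left(\tfrac{x_i-x_j}{\ell}\right) \;=\; v(x_i-x_j),
\]
and summation over $i<j$ reproduces the two-body term. The one-body term is entirely analogous: interchanging the $u$- and $y$-integrations one gets, via $w_{1,u}(x,y)=\chi_u(x)W_1(x-y)\chi_u(y)$,
\[
\int_{\R^3}\!\!\int_{\R^3} w_{1,u}(x_i,y)\,dy\,du \;=\; \int_{\R^3} W_1(x_i-y)\,(\chi*\chi)\!\left(\tfrac{x_i-y}{\ell}\right)\,dy \;=\; \int_{\R^3} g(x_i-y)\,dy,
\]
by the same auxiliary identity and the defining relation $W_1=g/(\chi*\chi)(\cdot/\ell)$ from \eqref{eq:defW_12}. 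Multiplying by $-\rmu$, summing over $i$, and combining with the two-body part yields the stated formula.

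There is essentially no obstacle in this proof beyond ensuring that the denominators $(\chi*\chi)(x/\ell)$ appearing in the definitions of $W$ and $W_1$ are bounded away from zero on $\supp v\subset B(0,R)$; this is guaranteed by the running regime $R/\ell\ll 1$ (consistent with the assumption $R/\ell\ll (\rmu a^3)^{1/4}$ from Section~\ref{sec:params}), since on arguments $|x|\le R$ the point $x/\ell$ lies near the origin and $(\chi*\chi)(0)=\int\chi^2=1$, so $W,W_1$ are well-defined and the Fubini interchanges above are legitimate for $\rmu a^3$ small.
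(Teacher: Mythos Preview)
Your proof is correct and is exactly the straightforward computation the paper alludes to (the paper itself omits the details, referring to it as a computation analogous to \cite{BS}). The only content is the convolution identity $\int_{\R^3}\chi_u(a)\chi_u(b)\,du=(\chi*\chi)((a-b)/\ell)$, which you verify correctly via the substitution and evenness of $\chi$, and from which both the one- and two-body identities follow by the definitions of $W$ and $W_1$.
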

\subsection{The localized Hamiltonian}\label{subsec:LocHam}

The localized Hamiltonian ${\mathcal H}_{\Lambda,u}$ will be an
operator on the symmetric Fock space over $L^2({\mathbb R}^3)$
preserving particle number. Its action on the $N$-particle sector is
as
\begin{align}\label{eq:Def_HB}
({\mathcal H}_{\Lambda,u}(\rmu))_{N} :=
 \sum_{i=1}^N \mathcal{T}_u^{(i)} -
\rmu \sum_{i=1}^N \int w_{1,u}(x_i,y)\,dy + \sum_{i<j} w_u(x_i,x_j),
\end{align}
where the kinetic energy operator was given in \eqref{eq:DefT} above.
We abbreviate
\begin{align}\label{eq:Def_HB2}
\mathcal{T}:= \mathcal{T}_{u=0},\qquad {\mathcal H}_{\Lambda}(\rmu):= {\mathcal H}_{\Lambda,u=0}(\rmu).
\end{align}
We will also write
$$
\chi_{\Lambda} := \chi_{u=0} = \chi(\,\cdot\,/\ell).
$$
Define the ground state energy and energy density in the box, by
\begin{align}
E_{\Lambda}(\rmu) &:= \inf \Spec {\mathcal H}_{\Lambda}(\rmu), \\
e_{\Lambda}(\rmu) &:= \ell^{-3} \inf \Spec {\mathcal H}_{\Lambda}(\rmu) = \ell^{-3} E_{\Lambda}(\rmu).
\end{align}
With these conventions, we find

\begin{theorem}\label{thm:CompareBoxEnergy}
We have
\begin{align}
e_0(\rmu) \geq e_{\Lambda}(\rmu).
\end{align}
\end{theorem}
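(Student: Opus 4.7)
The plan is a direct sliding argument built on the two localization lemmas just proved. Combining the kinetic bound $\int_{\R^3}\mathcal{T}_u\,du\leq -\Delta$ of Lemma~\ref{lem:LocKinEn} with the pointwise identity of Lemma~\ref{lem:LocPotential} gives, on each $N$-particle sector, the operator inequality $\mathcal{H}_\rmu\geq \int_{\R^3}\mathcal{H}_{\Lambda,u}(\rmu)\,du$. Applied in expectation to any $\Psi\in \mathcal{F}_{\rm s}(L^2(\Omega))\setminus\{0\}$ (extended by zero to $\mathcal{F}_{\rm s}(L^2(\R^3))$; this preserves the Dirichlet kinetic energy and the thermodynamic limit is independent of boundary conditions), this reads
\[
\langle\Psi,\mathcal{H}_\rmu\Psi\rangle \;\geq\; \int_{\R^3} \langle\Psi,\mathcal{H}_{\Lambda,u}(\rmu)\Psi\rangle\,du.
\]

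Next I would exploit two properties of the integrand. First, reading off \eqref{eq:Def_HB} and \eqref{eq:DefT} one verifies term by term that every operator appearing in $\mathcal{H}_{\Lambda,u}$---the multiplier $\chi_u$, the projection $Q_u$, the Neumann-Laplacian functional calculus, and the localized pair interactions $w_u,w_{1,u}$---annihilates any one-particle function supported outside $\Lambda(u)$. Consequently $\mathcal{H}_{\Lambda,u}\Psi=0$ whenever $\Lambda(u)\cap\Omega=\emptyset$, so the $u$-integration is effectively confined to
\[
S_\Omega := \{u\in\R^3 : \Lambda(u)\cap\Omega\neq\emptyset\}, \qquad |S_\Omega| = \ell^{-3}|\Omega|\,\bigl(1+O(\ell/L)\bigr).
\]
Second, by the same vanishing property, $\mathcal{H}_{\Lambda,u}$ factorizes as $h_u\otimes I$ under the natural decomposition $\mathcal{F}_{\rm s}(L^2(\R^3))\simeq \mathcal{F}_{\rm s}(L^2(\Lambda(u)))\otimes \mathcal{F}_{\rm s}(L^2(\Lambda(u)^c))$, where $h_u$ is a translate of $\mathcal{H}_\Lambda$; hence $\inf\Spec(h_u)=E_\Lambda(\rmu)$, and this quantity is $\leq 0$ since the vacuum trial state gives zero. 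Thus $\mathcal{H}_{\Lambda,u}\geq E_\Lambda(\rmu)I$ as an operator on the full Fock space, so $\langle\Psi,\mathcal{H}_{\Lambda,u}\Psi\rangle\geq E_\Lambda(\rmu)\|\Psi\|^2$ for every $u$.

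Combining the two observations yields
\[
\langle\Psi,\mathcal{H}_\rmu\Psi\rangle \;\geq\; \int_{S_\Omega}\langle\Psi,\mathcal{H}_{\Lambda,u}\Psi\rangle\,du \;\geq\; E_\Lambda(\rmu)\,\|\Psi\|^2\,|S_\Omega| \;=\; \ell^{-3}|\Omega|\,E_\Lambda(\rmu)\,\|\Psi\|^2\bigl(1+O(\ell/L)\bigr).
\]
Dividing by $|\Omega|\|\Psi\|^2$, taking the infimum over $\Psi$, and letting $L\to\infty$ with $\ell$ fixed, the boundary correction drops out and we obtain $e_0(\rmu)\geq \ell^{-3}E_\Lambda(\rmu)=e_\Lambda(\rmu)$, as required. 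The only step requiring genuine work is the term-by-term verification that $\mathcal{H}_{\Lambda,u}$ truly annihilates states supported in $\Lambda(u)^c$---for the projection this is immediate from the explicit formula $Q_u\varphi=\theta_u\varphi-\ell^{-3}\langle\theta_u,\varphi\rangle\theta_u$, and the kinetic and potential pieces are handled analogously using the support of $\chi_u$ and $\theta_u$. Everything else is a standard sliding and thermodynamic-limit manipulation.
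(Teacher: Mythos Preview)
Your proposal is correct and follows essentially the same sliding argument as the paper: combine Lemmas~\ref{lem:LocKinEn} and~\ref{lem:LocPotential} to get $\mathcal{H}_{\rmu}\geq\int\mathcal{H}_{\Lambda,u}\,du$, restrict the integration to those $u$ with $\Lambda(u)\cap\Omega\neq\emptyset$, use the unitary equivalence of the $\mathcal{H}_{\Lambda,u}$ to bound each contribution below by $E_\Lambda(\rmu)$, and take the thermodynamic limit. The paper records the same steps more tersely (writing the relevant $u$-set as $\ell^{-1}(\Omega+B(0,\ell/2))$ and invoking translation invariance directly), while you spell out the factorization $\mathcal{H}_{\Lambda,u}=h_u\otimes I$ and the observation $E_\Lambda(\rmu)\leq 0$ explicitly; these are exactly the details that make the paper's one-line estimate rigorous.
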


\begin{proof}
The proof of this statement follows from the fact that 
$({\mathcal H}_{\Lambda,u}(\rmu))_{N}$ and $({\mathcal H}_{\Lambda,u'}(\rmu))_{N}$ are unitarily equivalent by \eqref{eq:transInv}.
Therefore, using Lemma~\ref{lem:LocKinEn} and Lemma~\ref{lem:LocPotential}  we find that
\begin{align}
{\mathcal H}_{\rmu,N}(\rmu) \geq
\int_{\ell^{-1}(\Omega + B(0,\ell/2))} ({\mathcal H}_{\Lambda,u}(\rmu))_{N} \,du  \geq \ell^{-3} | \Omega + B(0,\ell/2)| E_{\Lambda}(\rmu).
\end{align}
Now the desired result follows upon using that $|\Omega+B(0,\ell/2)|/|\Omega| \rightarrow 1$ in the thermodynamic limit.
\end{proof}

It is clear, using Theorem~\ref{thm:CompareBoxEnergy}, that Theorem~\ref{thm:LHY-Background} is a consequence of the following theorem on the box Hamiltonian.
Therefore, the remainder of the paper will be dedicated to the proof of Theorem~\ref{thm:LHY-Box} below.

\begin{theorem}\label{thm:LHY-Box}
Suppose that $v$ satisfies Assumption~\ref{assump:v}, \eqref{eq:intvbound}, and the third assumption in \eqref{eq:AssumptionK_R}.
Then with $K_\ell$, $M$, ${\mathcal R}$, and $X$ as given in Section~\ref{sec:params}
we have in the limit $\rmu a^3 \rightarrow 0$,
\begin{align}\label{eq:EnergyBoxRes1}
  e_\Lambda(\rmu) \geq -4\pi \rmu^2 a +  4\pi \rmu^2 a  \frac{128}{15\sqrt{\pi}}(\rmu a^3)^{\frac{1}{2}}
  -C\rmu^2 a (\rmu a^3)^{1/2} \left( X^2 {\mathcal R}+\frac{R^2}{a^2} (\rmu a^3)^{\frac{1}{2}} +
X^{\frac{1}{5}}\right).
\end{align}
\end{theorem}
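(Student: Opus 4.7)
The plan is to follow the roadmap laid out in Section~\ref{sec:params}, combining all the reductions already prepared into a single lower bound on $E_\Lambda(\rmu)$. First I would perform the \emph{double localization}: use Lemma~\ref{lem:LocKinEn} and Lemma~\ref{lem:LocPotential} to pass to the box $\Lambda$, then further sublocalize into boxes of side length $d\ell$ using the first piece of $\mathcal{T}_u$ (the Neumann gap of order $(d\ell)^{-2}$). Inserting the small-box bound proved in Appendix~\ref{SmallBoxes} (Theorem~\ref{thm:smallbox}) yields the a priori energy bound of Theorem~\ref{thm:aprioriHLambda} and, by controlling $Q_u$-modes with the gap term $b\ell^{-2}Q_u$ in \eqref{eq:DefT}, the a priori particle-number and excitation-number bounds of Theorem~\ref{thm:aprioribounds}. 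Here conditions \eqref{con:d5s}--\eqref{con:sdKellKB} and \eqref{con:KB}--\eqref{con:KBKell} enter to ensure that the a priori bounds are compatible with leading order.

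Next, having these a priori bounds, I would apply the localization-of-large-matrices technique from \cite{LS} to restrict the analysis to the subspace with at most ${\mathcal M}=K_{\mathcal M}(\rmu a^3)^{-1/4}$ excitations (conditions \eqref{con:KMRKB}--\eqref{con:M<<n}). Writing $\one = P + Q$ in the localized potential term and expanding, the potential splits into $16$ pieces according to the number of $Q$'s. The key algebraic manoeuvre is the completion of the square in Lemma~\ref{lm:potsplit}, which absorbs the $4Q$ term together with pieces of the $0Q$, $1Q$, $2Q$ contributions into a positive square, leaving the remaining renormalized terms with the potential $v$ effectively replaced by $g = v(1-\omega)$. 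One then passes to second quantization and splits the resulting $3Q$ contribution with the momentum cutoffs $K_L(\rmu a)^{1/2}$ (low) and $\KH^{-1}(\rmu a^3)^{5/12}a^{-1}$ (high), absorbing the off-cutoff part into the third piece of $\mathcal{T}_u$ via \eqref{con:KLKH}, \eqref{con:KellKLd}, and \eqref{con:KMKellKH}.

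After this reduction I would perform the $c$-number substitution of \cite{LSYc}, replacing the condensate annihilation operator by $z$ with $|z|^2 = \rho_z\ell^3$, and invoke the control of $\rho_z - \rmu$ from Section~\ref{sec:rough} (Lemma~\ref{lem:Roughbounds}, whose smallness needs \eqref{con:rough1}--\eqref{con:rough2}). What remains is organized as in \eqref{eq:SimplifyKz}: an explicit $c$-number part, a quadratic Bogolubov Hamiltonian ${\mathcal K}^{\rm Bog}$ (including the $1Q$ linear terms), and a remainder consisting of the truncated $3Q$ and auxiliary linear/quadratic terms. Applying the simplified Bogolubov diagonalization of Appendix~\ref{sec:simplebog} to ${\mathcal K}^{\rm Bog}$ produces, together with the explicit constants and the integrals computed via \eqref{eq:I2-integral} and Section~\ref{sec:Precise}, precisely the constants $-4\pi\rmu^2 a$ and $+4\pi\rmu^2 a \cdot \frac{128}{15\sqrt{\pi}}(\rmu a^3)^{1/2}$, plus a \emph{positive} diagonalized quadratic operator and an error controlled by \eqref{con:boghamerror}. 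The final step, carried out in Subsection~\ref{subsec:Q3}, is to bound the leftover $3Q$ terms by a small multiple of this positive quadratic operator plus a remainder of size $o(\rmu^2 a (\rmu a^3)^{1/2})$, using conditions \eqref{con:KellKM}--\eqref{con:Q3error3}. Substituting the parameter choice \eqref{eq:Xparameter}--\eqref{eq:Xchoice} then collects all error terms into the single explicit bound \eqref{eq:EnergyBoxRes1}.

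The principal obstacle is the treatment of the $3Q$ ``soft-pair'' terms in Subsection~\ref{subsec:Q3}: they are responsible for contributions of the very order LHY and cannot be dropped, yet they are not quadratic and so escape Bogolubov. Controlling them requires the careful splitting at the cutoff $\KH^{-1}(\rmu a^3)^{5/12}a^{-1}$ (whose $5/12$ exponent is dictated by this step), exploiting the apriori excitation bound $\mathcal{M}$, and delicately absorbing the surviving piece into the diagonalized positive quadratic Bogolubov operator. A secondary difficulty is that the Bogolubov diagonalization in the box is not exact — the Hamiltonian ${\mathcal K}^{\rm Bog}$ involves the localized potential $W_1$ rather than $g$ — so one must track the discrepancies via \eqref{eq:W1-g} and \eqref{eq:I2-integral-2}, and verify that $R/a$ may grow only as the negative power of $\rmu a^3$ allowed by \eqref{eq:AssumptionK_R} and \eqref{eq:intvbound}.
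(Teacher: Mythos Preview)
Your proposal is correct and follows essentially the same route as the paper's proof (Section~12), which assembles Lemma~\ref{lem:LocMatrices}, Proposition~\ref{prop:Hamilton2ndQuant}, Theorem~\ref{thm:Kafz}, the case split via Proposition~\ref{prop:FarAway} versus \eqref{eq:Close}, then Theorem~\ref{thm:BogHamDiag} with Lemma~\ref{lem:BogIntegral} and Theorem~\ref{thm:Control3Q}. One small inaccuracy: the passage to the box $\Lambda$ via Lemmas~\ref{lem:LocKinEn}--\ref{lem:LocPotential} is already accomplished in Theorem~\ref{thm:CompareBoxEnergy} and is not part of proving Theorem~\ref{thm:LHY-Box} itself; you begin directly with a trial state for ${\mathcal H}_\Lambda(\rmu)$ (and note that if no state satisfies \eqref{eq:aprioriPsi} there is nothing to prove).
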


\subsection{Potential energy splitting}\label{sec:potsplit}
Using that $P+Q = \one_{\Lambda}$ we 
will in Lemma~\ref{lm:potsplit} below arrive at a very useful decomposition of the potential.

Define the (commuting) operators
\begin{align}
n_0 = \sum_{i=1}^N P_i,\qquad n_{+} = \sum_{i=1}^N Q_i,\qquad n=\sum_{i=1}^N\one_{\Lambda,i}=n_0+n_+ .
\end{align}
We furthermore define
\begin{align}\label{eq:Densities}
\rho_{+} := n_{+} \ell^{-3}, \qquad \rho_0:= n_{0} \ell^{-3}.
\end{align}
A crucial idea in this paper is to write the potetial energy in the
form given in the next lemma, where the important observation is to
identify the positive term ${\mathcal Q}_4^{\rm ren}$ which we will
ignore in our lower bound.
\begin{lemma}[Potential energy decomposition]\label{lm:potsplit}
We have
\begin{equation} \label{eq:potsplit}
-\rmu \sum_{i=1}^N \int w_1(x_i,y)\,dy+
\frac{1}{2} \sum_{i\neq j}  w(x_i, x_j)
= {\mathcal Q}_0^{\rm ren}+{\mathcal Q}_1^{\rm ren}
+{\mathcal Q}_2^{\rm ren}+{\mathcal Q}_3^{\rm ren} + {\mathcal Q}_4^{\rm ren},
\end{equation}
where
\begin{align}
{\mathcal Q}_4^{\rm ren}:=&\,
\frac{1}{2} \sum_{i\neq j} \Big[ Q_i Q_j + (P_i P_j + P_i Q_j + Q_i P_j)\omega(x_i-x_j) \Big] w(x_i,x_j) \nonumber \\
&\,\qquad \qquad \times
\Big[ Q_j Q_i + \omega(x_i-x_j) (P_j P_i + P_j Q_i + Q_j P_i)\Big],\label{eq:DefQ4}\\
{\mathcal Q}_3^{\rm ren}:=&\,
\sum_{i\neq j} P_i Q_j w_1(x_i,x_j) Q_j Q_i + h.c. \label{eq:DefQ3} \\
{\mathcal Q}_2^{\rm ren}:=&\, \sum_{i\neq j} P_i Q_j w_2(x_i,x_j) P_j Q_i
+ \sum_{i\neq j} P_i Q_j w_2(x_i,x_j) Q_j P_i \nonumber \\&\,- \rmu \sum_{i=1}^N Q_i \int w_1(x_i,y)\,dy Q_i 
+ \frac{1}{2}\sum_{i\neq j} (P_i P_j w_1(x_i,x_j) Q_j Q_i + h.c.),\label{eq:DefQ2}\\
{\mathcal Q}_1^{\rm ren}:=&\, \sum_{i,j}P_jQ_iw_2(x_i,x_j)P_iP_j-\rmu
  \sum_{i} Q_i \int w_1(x_i,y)\,dy P_i +h.c.
\label{eq:DefQ1} \\
{\mathcal Q}_0^{\rm ren}:=&\,
\frac{1}{2} \sum_{i\neq j} P_i P_j w_2(x_i,x_j) P_j P_i - \rmu \sum_i P_i \int w_1(x_i,y)\,dy P_i\label{eq:DefQ0}
\end{align}
\end{lemma}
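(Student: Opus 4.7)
The identity to be proved is purely algebraic; the scattering length and the parameters of Section~\ref{sec:params} play no role. My approach is to expand the two-body potential by inserting $\one = P + Q$ on each side of every $w(x_i,x_j)$, and then to observe that the resulting sixteen summands reorganise into the manifestly positive ``completed square'' $\mathcal{Q}_4^{\rm ren}$ plus correction terms which, after using the pointwise identities $w\omega = w - w_1$ and $\omega\,w\,\omega = w - w_2$ following from \eqref{eq:defW_12}, are exactly the $\mathcal{Q}_0^{\rm ren}, \mathcal{Q}_1^{\rm ren}, \mathcal{Q}_2^{\rm ren}, \mathcal{Q}_3^{\rm ren}$ pieces once the chemical-potential term has been treated analogously.

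More concretely, I would first rewrite
\[
\mathcal{Q}_4^{\rm ren} = \tfrac{1}{2}\sum_{i\neq j} L_{ij}\,w(x_i,x_j)\,L_{ij}^{*},\qquad L_{ij} := Q_iQ_j + (P_iP_j + P_iQ_j + Q_iP_j)\,\omega(x_i-x_j),
\]
which makes positivity transparent. Expanding the product $L_{ij}w L_{ij}^{*}$ splits the sixteen summands into four blocks according to whether each side is of type $QQ$ or one of $PP,PQ,QP$. Since $\omega(x-y)$ and $w(x,y)$ are commuting multiplication operators on $L^2(\R^6)$, the blocks simplify via $w\omega = w - w_1$ and $\omega w\omega = \omega^2 w = w - w_2$; thus the pure $QQ$-block stays as $QQ\,w\,QQ$, the two mixed blocks become $QQ(w-w_1)(PP+PQ+QP)$ and its adjoint, and the pure $(PP+PQ+QP)$-block becomes $(PP+PQ+QP)(w-w_2)(PP+PQ+QP)$.

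Next, I would expand $\tfrac12\sum_{i\neq j}w(x_i,x_j) = \tfrac12\sum_{i\neq j}(P_i+Q_i)(P_j+Q_j)\,w\,(P_j+Q_j)(P_i+Q_i)$ along the same four blocks. Subtracting, the $QQ\,w\,QQ$ pieces cancel, the mixed blocks leave $\tfrac12\sum QQ\,w_1(PP+PQ+QP)+\mathrm{h.c.}$, and the pure non-$QQ$ block leaves $\tfrac12\sum(PP+PQ+QP)\,w_2\,(PP+PQ+QP)$. Sorting these by the number of $Q$'s: the $PP\,w_1\,QQ + \mathrm{h.c.}$ piece contributes to $\mathcal{Q}_2^{\rm ren}$, the $QQ\,w_1(PQ+QP)+\mathrm{h.c.}$ piece is $\mathcal{Q}_3^{\rm ren}$, the $PP\,w_2\,PP$ piece is the two-body part of $\mathcal{Q}_0^{\rm ren}$, the $PP\,w_2(PQ+QP)+\mathrm{h.c.}$ piece is the two-body part of $\mathcal{Q}_1^{\rm ren}$, and the $(PQ+QP)\,w_2\,(PQ+QP)$ piece supplies the remaining two-body terms of $\mathcal{Q}_2^{\rm ren}$. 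The one-body term $-\rmu\sum_i\int w_{1}(x_i,y)\,dy$ is then split by inserting $\one_i=P_i+Q_i$ on both sides of the integral operator, which yields the missing $-\rmu P\!\int\!w_1\,P$ contribution to $\mathcal{Q}_0^{\rm ren}$, the $-\rmu(Q\!\int\!w_1\,P+\mathrm{h.c.})$ contribution to $\mathcal{Q}_1^{\rm ren}$, and the $-\rmu Q\!\int\!w_1\,Q$ contribution to $\mathcal{Q}_2^{\rm ren}$. Adding everything reproduces \eqref{eq:potsplit}.

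The only real bookkeeping nuisance is the $3Q$ and off-diagonal $2Q$ sectors: to match $\tfrac12\sum QQ\,w_1(PQ+QP)+\mathrm{h.c.}$ with $\mathcal{Q}_3^{\rm ren}$ as written in \eqref{eq:DefQ3}, one must exploit the symmetry $w_1(x_i,x_j)=w_1(x_j,x_i)$ and the fact that operators acting on distinct particles commute (so e.g.\ $Q_iP_j = P_jQ_i$) to relabel $i\leftrightarrow j$ and show that the four distinct summands arising from the expansion collapse in pairs onto $\sum P_iQ_j\,w_1\,Q_jQ_i$ and its adjoint. Apart from this, the proof is a direct algebraic calculation with no estimates required.
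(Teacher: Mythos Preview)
Your proposal is correct and follows essentially the same route as the paper's proof, which is very terse: insert $P_i+Q_i=\one_{\Lambda,i}$ on both sides of $w(x_i,x_j)$, expand into sixteen terms, and reorganise them around the positive square $\mathcal{Q}_4^{\rm ren}$ using $w\omega=w-w_1$ and $w\omega^2=w-w_2$. You have simply filled in the bookkeeping that the paper leaves to the reader; the reference there to \eqref{eq:DefU} and \eqref{eq:w2int} is not actually needed for the bare algebraic identity \eqref{eq:potsplit}.
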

\begin{proof}
The identity \eqref{eq:potsplit} follows using simple algebra and the
identitites \eqref{eq:DefU} and \eqref{eq:w2int}.  We simply write
$P_i+Q_i = 1_{\Lambda,i}$ for all $i$.  Inserting this identity in
both $i$ and $j$ on both sides of $w(x_i,x_j)$ and expanding yields
$16$ terms, which we have organized in a positive ${\mathcal Q}_4$
term and terms depending on the number of $Q$'s occuring.
\end{proof}
It will be useful to rewrite and estimate these terms as in the following lemma. 
\begin{lemma}\label{lem:QDecompSimpler}
If $v$ and hence $W_1$ are non-negative we have
\begin{align}
  {\mathcal Q}_0^{\rm ren}=&\,
  \frac{n_0(n_0-1)}{2|\Lambda|^2} \iint w_2(x,y)\,dx dy - \rmu \frac{n_0}{|\Lambda|}  \iint w_1(x,y)\,dx dy \nonumber\\
  =&\,\frac{n_0(n_0-1)}{2|\Lambda|} \Big(\widehat{g}(0) + \widehat{g\omega}(0)\Big)
  - \rmu n_0  \widehat{g}(0),\label{eq:Q0n0}
\end{align}
\begin{align}  
  {\mathcal Q}_1^{\rm ren}= &\, (n_0|\Lambda|^{-1} -\rmu) \sum_{i} Q_i \chi_{\Lambda}(x_i) W_1*\chi_{\Lambda}(x_i)  P_i + h.c. 
  \nonumber\\ 
  &\,+ n_0|\Lambda|^{-1}  \sum_{i} Q_i \chi_{\Lambda}(x_i)  (W_1\omega)*\chi_{\Lambda}(x_i) P_i + h.c.
  \label{eq:Q1n0}
\end{align}
and
\begin{align}    
      {\mathcal Q}_2^{\rm ren}\geq &\, 
      \sum_{i\neq j} P_i Q_j w_2(x_i,x_j) P_j Q_i
      + \frac{1}{2}\sum_{i\neq j} (P_i P_j w_1(x_i,x_j) Q_j Q_i + h.c.) \nonumber \\
      &\, + \Big((\rho_0 -\rmu) \widehat{W_1}(0) + \rho_0 \widehat{W_1 \omega}(0)\Big)\sum_{i} Q_i \chi_{\Lambda}(x_i)^2 Q_i 
      -C(\rmu+\rho_0)a (R/\ell)^2 n_+\, .\label{eq:Q2n0}
\end{align}
\end{lemma}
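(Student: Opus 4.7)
The three statements are essentially algebraic identities or near-identities; my plan is to derive them by repeatedly applying two elementary moves and, for the $\mathcal{Q}_2^{\rm ren}$ bound, one analytic input.

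\textbf{The two algebraic moves.} First, for any multiplication operator $\alpha$ on the one-particle space,
\[
P_i\,\alpha(x_i)\,P_i \;=\; \ell^{-3}\Bigl(\int_\Lambda \alpha\Bigr)\,P_i,
\qquad\text{and hence}\qquad
\sum_{j\neq i} P_j\,w_2(x_i,x_j)\,P_j \;=\; \ell^{-3}\chi_\Lambda(x_i)(W_2{*}\chi_\Lambda)(x_i)\,(n_0-P_i),
\]
using $\int w_2(x_i,y)\,dy = \chi_\Lambda(x_i)(W_2{*}\chi_\Lambda)(x_i)$ and $\sum_{j\neq i}P_j = n_0-P_i$. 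Second, $n_0$ commutes with each $P_i$ and $Q_i$ (it counts a commuting family of one-particle projections), and since $P_iQ_i=0$ one has the two useful identities $(n_0-P_i)P_i=(n_0-1)P_i$ and $Q_i\chi_\Lambda(x_i)(W_2{*}\chi_\Lambda)(x_i)P_i\,(n_0-1)=n_0\,Q_i\chi_\Lambda(x_i)(W_2{*}\chi_\Lambda)(x_i)P_i$ (the latter from the fact that $Q_i(\cdot)P_i$ decreases $n_0$ by exactly $1$). The analogous identities hold with the roles of $i$ and $j$, and for $w_1$ in place of $w_2$.

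\textbf{Derivations of \eqref{eq:Q0n0} and \eqref{eq:Q1n0}.} For $\mathcal{Q}_0^{\rm ren}$, since every operator involved is sandwiched between $PP$ and $PP$, an iterated use of the first move yields $\frac{1}{2}\sum_{i\neq j}P_iP_j\,w_2\,P_jP_i = \frac{1}{2}\ell^{-6}(\iint w_2)\,n_0(n_0-1)$ and $-\rmu\sum_i P_i\int w_1(x_i,y)dy\,P_i = -\rmu\,\ell^{-3}(\iint w_1)\,n_0|\Lambda|^{-1}\cdot|\Lambda|$; the identities \eqref{eq:DefU}, \eqref{eq:w2int} then convert $\iint w_1$ and $\iint w_2$ into $\ell^3\widehat g(0)$ and $\ell^3(\widehat g(0)+\widehat{g\omega}(0))$, giving \eqref{eq:Q0n0}. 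For $\mathcal{Q}_1^{\rm ren}$, apply the first move to the $j$-sum in $\sum_{i\neq j}P_jQ_iw_2(x_i,x_j)P_iP_j$, then $(n_0-P_i)P_i=(n_0-1)P_i$, and finally the ``shift'' identity to pull $n_0$ to the left, producing $n_0\ell^{-3}\sum_i Q_i\chi_\Lambda(W_2{*}\chi_\Lambda)P_i$. Splitting $W_2=W_1+W_1\omega$ (which follows from $1-\omega^2=(1-\omega)(1+\omega)$) and collecting with the $-\rmu\sum_i Q_i\int w_1(x_i,y)dy\,P_i$ term gives exactly \eqref{eq:Q1n0}.

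\textbf{Derivation of \eqref{eq:Q2n0}.} The first sum $\sum_{i\neq j}P_iQ_jw_2P_jQ_i$ and the $\frac{1}{2}\sum_{i\neq j}(P_iP_jw_1Q_jQ_i+\mathrm{h.c.})$ are carried along unchanged. For the remaining two contributions, an entirely analogous manipulation yields
\[
\sum_{i\neq j}P_iQ_jw_2Q_jP_i \;=\; \rho_0\sum_j Q_j\,\chi_\Lambda(x_j)(W_2{*}\chi_\Lambda)(x_j)\,Q_j,
\]
where we use $Q_jP_j=0$ and that $[n_0,Q_j\chi_\Lambda(x_j)(W_2{*}\chi_\Lambda)(x_j)Q_j]=0$ (the inner $P_j$-terms annihilate). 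Now invoke Lemma~\ref{lem:Convolution}, which gives the pointwise bound $|(W_i{*}\chi_\Lambda)(x)-\chi_\Lambda(x)\widehat{W_i}(0)|\le C(R/\ell)^2\widehat{W_i}(0)$ for $i=1,2$. Applying this as a lower bound against the non-negative factor $\rho_0$ and as an upper bound against $-\rmu$ (i.e.\ with the opposite sign on the error), and using $Q_j\chi_\Lambda Q_j \le \|\chi\|_\infty Q_j$ so that $\sum_j Q_j\chi_\Lambda Q_j \le C n_+$, together with $\widehat{W_i}(0)\le C a$ from \eqref{eq:W1-g} and $\widehat g(0)=8\pi a$, $\widehat{g\omega}(0)\le \widehat g(0)$, produces
\[
\rho_0\sum_j Q_j\chi_\Lambda(W_2{*}\chi_\Lambda)Q_j - \rmu\sum_j Q_j\chi_\Lambda(W_1{*}\chi_\Lambda)Q_j
\;\ge\; \bigl(\rho_0\widehat{W_2}(0)-\rmu\widehat{W_1}(0)\bigr)\sum_j Q_j\chi_\Lambda^2 Q_j \;-\; C(\rmu+\rho_0)a(R/\ell)^2 n_+,
\]
and the prefactor simplifies via $\widehat{W_2}(0)=\widehat{W_1}(0)+\widehat{W_1\omega}(0)$ to $(\rho_0-\rmu)\widehat{W_1}(0)+\rho_0\widehat{W_1\omega}(0)$, yielding \eqref{eq:Q2n0}.

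\textbf{Main obstacle.} The purely algebraic part is lengthy but routine; the real care is in the orderings. Specifically, one must check that $n_0$ can be freely commuted through structures of the form $Q_j(\,\cdot\,)Q_j$ (so that it behaves as a numerical factor $\rho_0\ell^3$ on the relevant sectors) and handle the ``shift-by-one'' identity needed to match the $n_0|\Lambda|^{-1}$ prefactor in \eqref{eq:Q1n0}. For \eqref{eq:Q2n0}, the delicate point is orienting the convolution bound \emph{correctly} so that the error enters with a sign consistent with a lower bound both against $+\rho_0$ and against $-\rmu$; this is what produces the symmetric error $C(\rmu+\rho_0)a(R/\ell)^2 n_+$ rather than something sign-dependent.
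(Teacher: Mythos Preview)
Your proposal is correct and follows essentially the same approach as the paper: reduce the $P$-sandwiched sums to scalars via the projection identity, use the shift/commutation of $n_0$ with $Q_i(\cdot)P_i$ and $Q_j(\cdot)Q_j$, split $W_2=W_1+W_1\omega$, and then invoke Lemma~\ref{lem:Convolution} to replace the convolutions by $\chi_\Lambda^2\,\widehat{W_i}(0)$ up to an $(R/\ell)^2$ error absorbed by $n_+$. Your only cosmetic difference is that you keep the $\mathcal Q_2$ diagonal term in the form $\rho_0\widehat{W_2}(0)-\rmu\widehat{W_1}(0)$ before simplifying, whereas the paper splits $W_2$ first to get $(\rho_0-\rmu)\widehat{W_1}(0)+\rho_0\widehat{W_1\omega}(0)$ directly; either way the sign handling of the convolution error is the same.
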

\begin{proof} The rewriting of ${\mathcal Q}_0$ is straightforward.
The rewriting of ${\mathcal Q}_1^{\rm ren}$ follows from
\begin{align}
  {\mathcal Q}_1^{\rm ren} = &\,
  \Big((n_0|\Lambda|^{-1} -\rmu) \sum_{i} Q_i \int w_1(x_i,y)\,dy P_i + h.c. \Big) \nonumber\\&\,+
  \Big(n_0|\Lambda|^{-1}  \sum_{i} Q_i \int w_1(x_i,y) \omega(x_i-y) \,dy P_i + h.c. \Big). \nonumber
\end{align}
We carry out the similar calculation on the part of the
$2Q$-term where $P$ acts in the same variable on both sides of the
potential,
\begin{align}
{\mathcal Q}_2^{\rm ren,1} &\,=\sum_{i\neq j} P_i Q_j w_2(x_i,x_j) P_j Q_i
+ \frac{1}{2}\sum_{i\neq j} (P_i P_j w_1(x_i,x_j) Q_j Q_i + h.c.) \nonumber \\
&\,\quad + (\rho_0 -\rmu) \sum_{i} Q_i \chi_{\Lambda}(x_i) W_1*\chi_{\Lambda}(x_i)  Q_i + 
\rho_0  \sum_{i} Q_i \chi_{\Lambda}(x_i)  (W_1\omega)*\chi_{\Lambda}(x_i) Q_i . \nonumber
\end{align}
At this point we invoke Lemma~\ref{lem:Convolution} to get, for example,
\begin{align}
\sum_{i} Q_i \chi_{\Lambda}(x_i) W_1*\chi_{\Lambda}(x_i)  Q_i
\geq &\,
\Bigl(\int W_1\Bigr) \sum_{i} Q_i \chi_{\Lambda}(x_i)^2 Q_i\nonumber\\
&\, - \max_{i,j}\| \partial_i\partial_j \chi \|_{\infty} (R/\ell)^2 \Bigl(\int W_1\Bigr) \| \chi \|_{\infty} n_{+}.
\end{align}
\end{proof}
The decomposition in Lemma~\ref{lm:potsplit} easily implies a simple lower bound on the potential energy.
\begin{lemma}[Simple bound on the potential energy]\label{lm:potapp}
  For all $x_1,\ldots,x_N\in \R^3$ we have if the $2$-body pontential $v\geq0$ the following bound on the potential energy
  \begin{align} 
    -\rmu \sum_{i=1}^N \int w_1(x_i,y)\,dy+
    \frac{1}{2} \sum_{i\neq j}  w(x_i, x_j)
    \geq&\, -C (n^2\ell^{-3}+\rmu^2\ell^{3})a+ \frac12{\mathcal Q}_4^{\rm ren} \label{eq:Q4aprioribound}.
  \end{align}
  Moreover, we also have the bounds 
  \begin{align}
    \pm{\mathcal Q}_1^{\rm ren}\leq C(n^2\ell^{-3}+\rmu^2\ell^3)a\label{eq:d11apriori}\\
    \pm\Bigl(\sum_{i\ne j}Q_jQ_iw_1(x_i,x_j)P_iP_j+h.c.\Bigr)
    \leq Cn^2\ell^{-3} a+\frac14 {\mathcal Q}_4^{\rm ren} \label{eq:d2apriori}\\
    \pm\Bigl(\sum_{i,j}P_jQ'_iw_1(x_i,x_j)Q_iQ_j+h.c.\Bigr)\leq Cn^2\ell^{-3} a+\frac14 {\mathcal Q}_4^{\rm ren}, 
    \label{eq:d13apriori}
  \end{align}
    for any (not necessarily selfadjoint) operator $Q'$ on $L^2(\R^3)$ with $QQ'=Q'$ and $\|Q'\|\leq1$.  
\end{lemma}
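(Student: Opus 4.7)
The plan is to apply Lemma~\ref{lm:potsplit} to decompose the left-hand side of \eqref{eq:Q4aprioribound} as $\sum_{k=0}^{4}\mathcal{Q}_k^{\rm ren}$, keep $\tfrac12\mathcal{Q}_4^{\rm ren}$ on the right, and lower-bound each of $\mathcal{Q}_0^{\rm ren},\ldots,\mathcal{Q}_3^{\rm ren}$ either by a scalar error $-C(n^2\ell^{-3}+\rmu^2\ell^3)a$ or by a controlled fraction of $\mathcal{Q}_4^{\rm ren}$. The key structural observation is that $v\ge 0$ implies $w\ge 0$, so $\mathcal{Q}_4^{\rm ren}=\tfrac12\sum_{i\ne j}A_{ij}^{*}\,w\,A_{ij}\ge 0$ summand by summand, with $A_{ij}=Q_jQ_i+\omega(P_jP_i+P_jQ_i+Q_jP_i)$. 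This $\mathcal{Q}_4^{\rm ren}$ is the reservoir against which the $P$-$Q$ cross pieces will be estimated using the operator Cauchy--Schwarz inequality $\pm(X^{*}wY+Y^{*}wX)\le \delta^{-1}X^{*}wX+\delta Y^{*}wY$, valid whenever $w\ge 0$.

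First I would handle $\mathcal{Q}_0^{\rm ren}$ via the closed form \eqref{eq:Q0n0}: combining $\widehat g(0)=8\pi a$, $0\le \widehat{g\omega}(0)\le \widehat g(0)$, and the AM--GM bound $\rmu n_0\le \tfrac12(\rmu^2\ell^3+n_0^2/\ell^3)$ yields $\mathcal{Q}_0^{\rm ren}\ge -C(n^2\ell^{-3}+\rmu^2\ell^3)a$. For $\mathcal{Q}_1^{\rm ren}$ I would use \eqref{eq:Q1n0}, which expresses it as a sum of terms $\lambda Q_if(x_i)P_i+h.c.$ with $\lambda\in\{\rho_0-\rmu,\,\rho_0\}$ and multiplicative function $f$ satisfying $\|f\|_\infty\le C\|\chi\|_\infty^{2}\int W_1\le Ca$; a Cauchy--Schwarz with $Y=P_i$ and $\delta$ chosen to balance $n_0$ against $\rmu\ell^3$ gives $\pm\mathcal{Q}_1^{\rm ren}\le C(n^2\ell^{-3}+\rmu^2\ell^3)a$, which is simultaneously \eqref{eq:d11apriori}. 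For $\mathcal{Q}_2^{\rm ren}$ I would invoke \eqref{eq:Q2n0}: the piece $\sum P_iQ_jw_2P_jQ_i$ is already nonnegative; the diagonal coefficient $(\rho_0-\rmu)\widehat{W_1}(0)+\rho_0\widehat{W_1\omega}(0)$ is of size $O((\rho_0+\rmu)a)$, which against $\sum Q_i\chi_\Lambda^{2}Q_i\le n_+$ is bounded below by $-C(n^2\ell^{-3}+\rmu^2\ell^3)a$; and the $(R/\ell)^2$ remainder obeys the same scalar bound. What remains from $\mathcal{Q}_2^{\rm ren}$ is the cross term $\tfrac12\sum(P_iP_jw_1Q_jQ_i+h.c.)$, i.e.\ the left-hand side of \eqref{eq:d2apriori}.

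The heart of the proof is handling the three off-diagonal quadratic forms in $\mathcal{Q}_3^{\rm ren}$, in \eqref{eq:d2apriori}, and in \eqref{eq:d13apriori}. For each I would apply Cauchy--Schwarz with $Y=Q_iQ_j$ and $X\in\{P_iP_j,\,P_iQ_j,\,P_jQ'_i\}$ respectively, using $w_1\le w$ (since $0\le\omega\le 1$). The $X^{*}w_1X$ pieces collapse to scalar bounds via $P_iw_1(x_i,x_j)P_i=\ell^{-3}\chi_\Lambda(x_j)(\chi_\Lambda\!*\!W_1)(x_j)P_i\le C\ell^{-3}a\,P_i$; summed over $i\ne j$ this produces a bound of size $Cn^2\ell^{-3}a$. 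The main obstacle is the $Y^{*}wY=\sum Q_iQ_jwQ_jQ_i$ contribution, which must be shown to be bounded by a constant multiple of $\mathcal{Q}_4^{\rm ren}$; I would do this by writing $A_{ij}^{*}wA_{ij}=Q_iQ_jwQ_jQ_i+(Q_iQ_jw\omega B_{ij}+h.c.)+B_{ij}\omega^{2}wB_{ij}$ with $B_{ij}=P_iP_j+P_iQ_j+Q_iP_j$, applying a further Cauchy--Schwarz to the cross term, and observing that $B_{ij}=\one-Q_iQ_j$ is a projection (on the box) so $\sum B_{ij}\omega^{2}wB_{ij}$ admits a direct scalar bound via $\int \omega^{2}v$. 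For the variant \eqref{eq:d13apriori}, the hypothesis $QQ'=Q'$ together with $\|Q'\|\le 1$ makes the $i=j$ diagonal terms vanish (since $P_iQ_iQ'_i=P_iQ'_i=0$) and provides the needed domination $X^{*}wX\le P_jwP_j$. Choosing the Cauchy--Schwarz parameter $\delta$ small enough across all applications keeps the combined $\mathcal{Q}_4^{\rm ren}$-contribution below $\tfrac12\mathcal{Q}_4^{\rm ren}$ in \eqref{eq:Q4aprioribound} and below $\tfrac14\mathcal{Q}_4^{\rm ren}$ in \eqref{eq:d2apriori}, \eqref{eq:d13apriori}, at the possible cost of letting the constant $C$ depend on $\mathcal{R}=\int v/(8\pi a)$.
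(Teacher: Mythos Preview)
Your overall plan—decompose via Lemma~\ref{lm:potsplit} and control each $\mathcal{Q}_k^{\rm ren}$ by a scalar error or a fraction of $\mathcal{Q}_4^{\rm ren}$—matches the paper. The execution of the core step, however, differs in a way that costs you the sharp constant.

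For \eqref{eq:d2apriori} and \eqref{eq:d13apriori} you pair $X\in\{P_iP_j,P_jQ'_i\}$ against $Y=Q_iQ_j$ in Cauchy--Schwarz, then separately argue $\sum Q_iQ_jwQ_jQ_i\le C\mathcal{Q}_4^{\rm ren}+\text{(error)}$ by expanding $A_{ij}^{*}wA_{ij}$. That error is $\sum B_{ij}\omega^{2}wB_{ij}$, and since each term still contains a $P$ you can indeed reduce it to a scalar, but the relevant integral is $\int W\omega^{2}$, which is of order $\int v$, not $a$. You flag this as a possible $\mathcal{R}$-dependence, yet the lemma is used downstream (Theorem~\ref{thm:aprioribounds}, Corollary~\ref{cl:advQ3sch}, Lemma~\ref{lem:Q3-splitting1}) with a constant independent of $\mathcal{R}$; the paper stresses that $\int v$ enters only through \eqref{eq:dvapriori}. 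The paper avoids the detour entirely: it first rewrites (identity \eqref{eq:3Qto4Q})
\[
\sum P_iQ'_j\,w_1\,Q_jQ_i=\sum P_iQ'_j\,w_1\,A_{ij}-\sum P_iQ'_j\,(w_1\omega)\,(P_jP_i+P_jQ_i+Q_jP_i),
\]
applies Cauchy--Schwarz in the first sum with $Y=A_{ij}$ (so $Y^{*}wY$ is \emph{exactly} a summand of $\mathcal{Q}_4^{\rm ren}$, no remainder), and bounds the second sum by \eqref{eq:Q'app1}--\eqref{eq:Q'app} with $w_1\omega^{k}$, whose integral is $\le Ca$. Your route is salvageable with no $\mathcal{R}$-cost if you do not pass from $w_1$ to $w$ in the $Y^{*}Y$ term: write $Q_iQ_j=A_{ij}-\omega B_{ij}$ and expand $Q_iQ_jw_1Q_jQ_i$ directly, so the error becomes $B_{ij}w_1\omega^{2}B_{ij}$ with $\int W_1\omega^{2}\le\int W_1\le Ca$.

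One smaller gap: the exchange piece $\sum_{i\ne j}P_iQ_jw_2P_jQ_i$ is \emph{not} manifestly nonnegative (in second quantization it is $n_0(2\pi)^{-3}\!\int\widehat{W_2}(k)a_{-k}^{\dagger}a_{-k}\,dk$, and $\widehat{W_2}$ need not have a sign). It is, however, controlled by a single Cauchy--Schwarz against the positive diagonal $\sum P_iQ_jw_2Q_jP_i\le Cn_0n_+\ell^{-3}a$, exactly as in \eqref{eq:Q'app2}, so the fix is immediate.
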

\begin{proof}
Since $0\leq \int W_1\leq Ca$ we have
\begin{equation}\label{eq:app1bd}
  0\leq \rmu \sum_{i=1}^N \int w_1(x_i,y)\,dy\leq Ca\|\chi_\Lambda\|_\infty^2 \rmu n\leq Ca\|\chi_\Lambda\|_\infty^2 (\rmu^2\ell^3+ 
  n^2\ell^{-3}).
\end{equation}
The off-diagonal terms in the one-body potential can be estimated
using a Cauchy-Schwarz inequality relying on the positivity of $w_1$
\begin{align}\label{eq:app1bdod}
  \pm\rmu \Bigl(\sum_{i=1}^N P_i\int w_1(x_i,y)\,dyQ_i+
  h.c.\Bigr)\leq &\, \rmu \sum_{i=1}^N P_i\int
  w_1(x_i,y)\,dyP_i\nonumber\\&\,+\rmu \sum_{i=1}^N Q_i\int
  w_1(x_i,y)\,dyQ_i\nonumber\\ \leq &\, Ca(1+\|\chi_\Lambda\|_\infty^2
  )\rmu n.
\end{align}
We also have
$$
0\leq \sum_{i,j}P_iQ_jw_{1}(x_i,x_j)P_iQ_j=n_0|\Lambda|^{-1}\sum_j Q_j\chi_\Lambda(x_j)W_1*\chi_\Lambda(x_j)Q_j
\leq Cn_0 n_+\ell^{-3}a\|\chi_\Lambda\|_\infty^2
$$
or more generally using again Cauchy-Schwarz inequalities we have for all $k=0,1,\ldots$
\begin{align}
  \pm
  \Bigl(\sum_{i,j}P_iQ'_j(w_{1}\omega^k)(x_i,x_j)P_iQ_j+h.c.\Bigr)\leq &\,
   Cn_0\ell^{-3}a\|\chi_\Lambda\|_\infty^2
  \Bigl(\varepsilon n_++\varepsilon^{-1}\sum_iQ_i'Q'^*_i\Bigr),\label{eq:Q'app1}\\ \pm
  \Bigl(\sum_{i,j}P_iQ'_j(w_{1}\omega^k)(x_i,x_j)P_jQ_i+h.c.\Bigr)\leq
  &\, Cn_0\ell^{-3}a\|\chi_\Lambda\|_\infty^2
  \Bigl(\varepsilon n_++\varepsilon ^{-1}\sum_iQ_i'Q'^*_i\Bigr),\label{eq:Q'app2}\\
  \pm\Bigl(\sum_{i,j}P_jQ'_i(w_1\omega^k)(x_i,x_j)P_iP_j+h.c.\Bigr)\leq &\,
  \sum_{i,j}P_jQ'_i(w_1\omega^k)(x_i,x_j)Q'^*_iP_j\nonumber\\
  &\,+\sum_{i,j}P_jP_i(w_1\omega^k)(x_i,x_j)P_iP_j
  \nonumber\\ \leq &\,C
  n_0a\ell^{-3}\Bigl(\|\chi_\Lambda\|_\infty^2\sum_iQ_i'Q'^*_i+n_0\Bigr),\label{eq:Q'app}
\end{align}
for all $\varepsilon>0$, where we have abbreviated $(w_1\omega^k)(x_1,x_2)=w_1(x_1,x_2)\omega(x_1-x_2)^k$.
In this proof we will choose $\varepsilon=1$ and use $\sum_iQ_i'Q'^*_i\leq n_+\leq n$. 
The freedom to choose $\varepsilon\ne1$ will be used in the proof of 
Corollary~\ref{cl:advQ3sch} below. 
The estimates in \eqref{eq:Q'app1}-\eqref{eq:Q'app} prove \eqref{eq:d11apriori} 
if we recall that $w_2=w_1(1+\omega)$ and choose $Q'=Q$.  

To prove \eqref{eq:d13apriori}
we rewrite the terms in ${\mathcal Q}_3^{\rm ren}$ as follows
\begin{align}
  \sum_{i,j}P_iQ'_jw_1(x_i,x_j)Q_jQ_i=&\, \sum_{i,j} \Big(P_i
  Q'_j w_1(x_i,x_j) \Big[ Q_j Q_i + \omega(x_i-x_j) (P_j P_i + P_j Q_i
    + Q_j P_i)\Big]  \Big)\nonumber \\ &\,- \sum_{i,j}
  \Big(P_i Q'_j w_1(x_i,x_j) \omega(x_i-x_j) (P_j P_i +
  P_j Q_i + Q_j P_i)\Big) \label{eq:3Qto4Q}
\end{align}
and likewise for the Hermitian conjugate terms. Thus applying a Cauchy-Schwarz inequality 
and the estimates \eqref{eq:Q'app1}-\eqref{eq:Q'app} we arrive at
\begin{align} 
  \pm \Bigl(\sum_{i,j}P_iQ'_jw_1(x_i,x_j)Q_jQ_i+h.c.\Bigr) \leq &\, \frac{1}{2}
      {\mathcal Q}_4^{\rm ren} +C\sum_{i\neq j} P_i Q'_j
      w_1(x_i,x_j)(1-\omega(x_i-x_j))Q'^*_j P_i \nonumber\\&\,+Cn^2a\ell^{-3}\nonumber
\end{align}
which implies \eqref{eq:d13apriori}. The estimate \eqref{eq:d2apriori} follows in the same way. 
Finally, the estimate \eqref{eq:Q4aprioribound} follows from \eqref{eq:d2apriori}, \eqref{eq:d13apriori}, and 
\eqref{eq:app1bd}-\eqref{eq:Q'app} with $Q'=Q$.
\end{proof}
In our more detailed analysis of the ${\mathcal Q}_3$ terms in Section~\ref{sec:3Q} we will need
the following more refined version of the estimate in \eqref{eq:d13apriori}. 
\begin{corollary}\label{cl:advQ3sch}
With the same notation as in Lemma~\ref{lm:potapp} we have for all $0<\varepsilon<1$ 
\begin{align}
  \sum_{i,j}\Bigl(P_jQ'_iw_1(x_i,x_j)Q_iQ_j+P_jQ'_iw_1(x_i,x_j)\omega(x_i-x_j)P_iP_j\Bigr) + h.c. \geq \nonumber \\
  -Cn_0\ell^{-3} a\Bigl(\varepsilon^{-1}\sum_iQ'_iQ'^*_i+\varepsilon n_+\Bigr)- \frac14 {\mathcal Q}_4^{\rm ren}.\label{eq:pq'pp} 
\end{align}
\end{corollary}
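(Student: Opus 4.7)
The plan is to mirror the proof of \eqref{eq:d13apriori} in Lemma~\ref{lm:potapp}, but to exploit the extra $\omega P_iP_j$ summand now included on the left so as to reconstruct a complete copy of the factor $A_{ij}^*:=Q_iQ_j+\omega(x_i-x_j)(P_iP_j+P_iQ_j+Q_iP_j)$ appearing in ${\mathcal Q}_4^{\rm ren}=\frac{1}{2}\sum_{i\ne j}A_{ij}wA_{ij}^*$. The key algebraic identity is
$$Q_iQ_j+\omega(x_i-x_j)P_iP_j=A_{ij}^*-\omega(x_i-x_j)\bigl(P_iQ_j+Q_iP_j\bigr).$$
I note that the diagonal $i=j$ terms on the left of \eqref{eq:pq'pp} vanish, since $QQ'=Q'$ gives $P_iQ'_i=0$, so one may freely work with $\sum_{i\ne j}$.

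Substituting the identity decomposes the left-hand side of \eqref{eq:pq'pp} as $(\mathrm{I})-(\mathrm{II})$, where $(\mathrm{I}):=\sum_{i\ne j}P_jQ'_iw_1A_{ij}^*+h.c.$ and $(\mathrm{II}):=\sum_{i\ne j}P_jQ'_iw_1\omega(P_iQ_j+Q_iP_j)+h.c.$ For $(\mathrm{I})$ I would apply the operator Cauchy--Schwarz inequality $\pm(Mw_1N+N^*w_1M^*)\leq\alpha^{-1}Mw_1M^*+\alpha N^*w_1N$ with $M=P_jQ'_i$, $N=A_{ij}^*$ and $\alpha=\frac{1}{8}$. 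Using $w_1\leq w$, the $\alpha N^*w_1N$ piece sums to at most $\frac{1}{8}\cdot 2{\mathcal Q}_4^{\rm ren}=\frac{1}{4}{\mathcal Q}_4^{\rm ren}$, exactly the share allowed by \eqref{eq:pq'pp}, while the dual piece $8\sum_{i\ne j}P_jQ'_iw_1Q'^*_iP_j$ is controlled by a direct operator calculation as in the argument leading to \eqref{eq:Q'app} by $Cn_0\ell^{-3}a\sum_iQ'_iQ'^*_i$, and this fixed multiple is absorbed into the $\varepsilon^{-1}$-term on the right of \eqref{eq:pq'pp} using $\varepsilon<1$.

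For $(\mathrm{II})$ the two off-diagonal operators $P_jQ'_iw_1\omega P_iQ_j$ and $P_jQ'_iw_1\omega Q_iP_j$ are estimated using the $\varepsilon$-Cauchy--Schwarz inequalities \eqref{eq:Q'app1} and \eqref{eq:Q'app2} (with $k=1$, after relabeling indices and using $w_1\omega\leq w_1$), which directly yield $\pm(\mathrm{II})\leq Cn_0\ell^{-3}a(\varepsilon n_++\varepsilon^{-1}\sum_iQ'_iQ'^*_i)$. Combining the bounds on $(\mathrm{I})$ and $(\mathrm{II})$ produces \eqref{eq:pq'pp}. The main subtlety is in the balancing of the Cauchy--Schwarz step for $(\mathrm{I})$: the parameter $\alpha=\frac{1}{8}$ is forced by the requirement that the $4Q$-piece be no larger than $\frac{1}{4}{\mathcal Q}_4^{\rm ren}$, which in turn forces the $Mw_1M^*$ contribution to appear with a fixed large constant that must be absorbed into the $\varepsilon^{-1}$-term via $\varepsilon<1$; the rest of the proof is straightforward algebraic/index bookkeeping, since the index ordering $P_jQ'_i$ used here (as opposed to $P_iQ'_j$ in \eqref{eq:d13apriori}) is compatible with the identity above by the $i\leftrightarrow j$ symmetry of $w(x_i,x_j)$ and $\omega(x_i-x_j)$.
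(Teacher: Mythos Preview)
Your proposal is correct and follows essentially the same approach as the paper's proof. The paper's argument is summarized as ``use the identity \eqref{eq:3Qto4Q} and perform the same Cauchy--Schwarz as above, but the term with three $P$ operators now appears on the left and we do not have to estimate it using \eqref{eq:Q'app}; we use \eqref{eq:Q'app1} and \eqref{eq:Q'app2} with $0<\varepsilon<1$,'' which is exactly your decomposition into $(\mathrm{I})-(\mathrm{II})$, the Cauchy--Schwarz on $(\mathrm{I})$ against $\tfrac14{\mathcal Q}_4^{\rm ren}$, and the $\varepsilon$-weighted estimates \eqref{eq:Q'app1}--\eqref{eq:Q'app2} on $(\mathrm{II})$.
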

\begin{proof}
We again use the identity \eqref{eq:3Qto4Q} and perform the same Cauchy-Schwarz as above, 
but the term with three $P$ operators now appear on the left and we do not have to estimate it using \eqref{eq:Q'app}. 
We, however, use \eqref{eq:Q'app1} and \eqref{eq:Q'app2} with $0<\varepsilon<1$.
\end{proof}

\section{A priori bounds on particle number and excited particles}

In the section we will give some important a priori bounds on the
particle number $n$, the number of excited particles $n_+$ as well as
on some of the potential energy terms. The bounds on $n$ and $n_+$
essentially say that for states with sufficiently low energy $n$ is
close to what one would expect, i.e., $\rmu\ell^3$ and the expectation
of $n_+$ is smaller with a factor which is not much worse than the
relative LHY error. These bounds are difficult to prove and are given
in \eqref{eq:apriorinn+} below. The proof is in
Appendix~\ref{SmallBoxes}. They rely on a very detailed analysis of a
further localization into smaller boxes.

\begin{theorem}[A priori bounds]\label{thm:aprioribounds}
Assume that the conditions \eqref{con:d5s}, \eqref{con:eTdK}, \eqref{con:sdKellKB}, 
\eqref{con:KB}, and \eqref{eq:AssumptionK_R} on
$K_B$, $R$, $\varepsilon_T$, $s$, and $d$ are satisfied and that $\rmu a^3$ is small enough. 
Then there is a universal constant $C>0$
such that if $\Psi \in {\mathcal
  F}_s(L^2(\Lambda))$ is an $n$-particle normalized state in the
bosonic Fock space over $L^2(\Lambda)$ satisfying
\begin{align}\label{eq:aprioriPsi}
  \langle \Psi, {\mathcal H}_{\Lambda}(\rmu) \Psi \rangle \leq
  - 4\pi \rmu^2 a \ell^3(1-J(\rmu a^3)^{\frac{1}{2}})
\end{align}
for a $0<J\leq K_B^3$ (the freedom to take $J<K_B^3$ will be used Lemma~\ref{lem:LocMatrices}) then 
\begin{equation}\label{eq:apriorinn+}
|n\ell^{-3}-\rmu|\leq \rmu CK_B^{3/2}K_\ell(\rmu a^3)^{1/4},\quad
\text{and}
\quad
\langle\Psi,n_+\Psi\rangle\leq C \rmu\ell^3K_B^3K_\ell^2(\rmu a^3)^{1/2}.
\end{equation}
Moreover, we also have
\begin{equation}
  0\leq \langle \Psi, {\mathcal Q}_4^{\rm ren}\Psi\rangle\leq C\rmu^2 a \ell^3,\label{eq:Q4apriori}
\end{equation}
and 
\begin{align}
  \Bigl|\langle \Psi,\rmu\sum_{i=1}^N(P_i\int
  w_1(x_i,y)dyQ_i+h.c.)\Psi\rangle\Bigr|+&
  \Bigl|\langle \Psi,\sum_{i\ne j}(Q_jP_iw(x_i,x_j)P_iP_j+h.c.)\Psi\rangle\Bigr|\nonumber\\
  +\Bigl|\langle \Psi,\sum_{i,j}(P_jQ_iw(x_i,x_j)Q_iQ_j+h.c.)\Psi\rangle\Bigr|+&
  \Bigl|\langle \Psi,\sum_{i\ne
    j}(Q_jQ_iw(x_i,x_j)P_iP_j+h.c.)\Psi\rangle\Bigr| 
  \nonumber\\ &\leq
  C\rmu^2\ell^{3} \int v.
  \label{eq:dvapriori}
\end{align}
Note that the expressions on the left of
\eqref{eq:dvapriori} above contain $w$
instead of $w_1$ which appeared in
\eqref{eq:d11apriori}--\eqref{eq:d13apriori}. We will need the
estimates with $w$ instead of $w_1$ in the next section and this will
be the only place where estimates containing $\int v$ will appear.
\end{theorem}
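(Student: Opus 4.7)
The plan is to combine the energy upper bound \eqref{eq:aprioriPsi} with a matching \emph{lower} bound on $\mathcal{H}_\Lambda(\rmu)$ that exposes $(n-\rmu\ell^3)^2$, $n_+$, and $\mathcal{Q}_4^{\rm ren}$ as nonnegative, controllable quantities. The naive use of Lemma~\ref{lm:potsplit} only recovers the leading constant $-4\pi\rmu^2 a\ell^3$ if we already know $n\approx\rmu\ell^3$, so we must close the circle by an inner localization into sub-cubes of side $d\ell$. This is exactly the purpose of the first term of $\mathcal{T}$ in \eqref{eq:DefT}: a sliding argument analogous to (but one scale smaller than) Lemma~\ref{lem:LocKinEn} distributes the operator $\tfrac12\varepsilon_T(d\ell)^{-2}(-\Delta^{\mathcal{N}})/(-\Delta^{\mathcal{N}}+(d\ell)^{-2})$ over such sub-cubes as Neumann Laplacians, and on each sub-cube a Lieb--Yngvason-type lower bound (Theorem~\ref{thm:smallbox}, to be proved in Appendix~\ref{SmallBoxes}) applies.

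\textbf{Assembling the small-box bounds.} Summing the sub-cube lower bounds, using $\sum_{\Lambda'}\tfrac{a}{|\Lambda'|}(n_{\Lambda'}-\rmu|\Lambda'|)^2\geq \tfrac{a}{\ell^3}(n-\rmu\ell^3)^2$ by Cauchy--Schwarz over the $d^{-3}$ sub-cubes, and keeping the large-box Neumann gap $b\ell^{-2}Q$ already present in $\mathcal{T}$ produces
\begin{equation*}
\mathcal{H}_\Lambda(\rmu)\ \geq\ -4\pi\rmu^2 a\,\ell^3\ +\ c\,\tfrac{a}{\ell^3}\bigl(n-\rmu\ell^3\bigr)^2\ +\ \tfrac{b}{\ell^2}\,n_+\ +\ \tfrac12\mathcal{Q}_4^{\rm ren}\ -\ \mathcal{E},
\end{equation*}
with $c,b>0$ universal and $\mathcal{E}$ an error of strictly lower order than $\rmu^2 a\ell^3 J(\rmu a^3)^{1/2}$ under the parameter conditions \eqref{con:d5s}--\eqref{con:sdKellKB} and \eqref{eq:AssumptionK_R}. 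Comparing with \eqref{eq:aprioriPsi} and using $J\leq K_B^3$, $\ell=K_\ell(\rmu a)^{-1/2}$, $\rmu\ell^3=K_\ell^3(\rmu a^3)^{-1/2}$, the three nonnegative terms on the left are bounded one at a time: the first gives $|n\ell^{-3}-\rmu|\leq \rmu CK_B^{3/2}K_\ell(\rmu a^3)^{1/4}$, the second gives $\langle n_+\rangle\leq C\rmu\ell^3 K_B^3 K_\ell^2(\rmu a^3)^{1/2}$, and the third gives $\langle\mathcal{Q}_4^{\rm ren}\rangle\leq C\rmu^2 a\ell^3$, with positivity of $\mathcal{Q}_4^{\rm ren}$ manifest from \eqref{eq:DefQ4}.

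\textbf{The $\int v$ bounds and main obstacle.} Each of the four off-diagonal expectations in \eqref{eq:dvapriori} is handled by rerunning the Cauchy--Schwarz manipulations of Lemma~\ref{lm:potapp} and Corollary~\ref{cl:advQ3sch} with $w$ in place of $w_1$: since $w=w_1+w\omega$ and $\int w\omega\leq C\int v$, the same proofs produce $\int v$ in place of $a$ and bounds of the form $C(n^2\ell^{-3}+\rmu^2\ell^3)\int v+\tfrac14\mathcal{Q}_4^{\rm ren}$, which, fed with the already-established $n\leq 2\rmu\ell^3$ and $\langle\mathcal{Q}_4^{\rm ren}\rangle\leq C\rmu^2 a\ell^3$, give the stated $C\rmu^2\ell^3\int v$. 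The genuine difficulty of the entire theorem, however, is Theorem~\ref{thm:smallbox}: a sharp lower bound on a single small cube that retains the completed-square form $\tfrac{a}{|\Lambda'|}(n_{\Lambda'}-\rmu|\Lambda'|)^2$ and has quantitative error control in $d,s,\varepsilon_T$ and $K_B$. This is where $R\leq K_B^{1/2}(\rmu a^3)^{1/4}(\rmu a)^{-1/2}$ from \eqref{eq:AssumptionK_R} (making $R$ smaller than a sub-cube), $(dK_\ell)^2\ll\varepsilon_T$ from \eqref{con:eTdK} (ensuring a usable Neumann gap), and $sdK_\ell\gg K_B^{-1}$ from \eqref{con:sdKellKB} (keeping the sub-cubes from becoming too small) all enter in a tightly coordinated way, and the whole argument lives in Appendix~\ref{SmallBoxes}.
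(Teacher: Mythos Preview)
Your overall architecture is right—the real work is the small-box lower bound (Theorem~\ref{thm:smallbox}) and everything else follows from comparing it to \eqref{eq:aprioriPsi}—but the single inequality you write down cannot hold as stated. The sliding localization to sub-cubes (cf.\ \eqref{eq:BSsliding2}) consumes the \emph{entire} potential energy $\sum_{i<j}w(x_i,x_j)$ into the small-box Hamiltonians $\mathcal{H}_{B(u)}$; nothing remains from which to extract $\tfrac12\mathcal{Q}_4^{\rm ren}$ on the large box. You are double-counting the interaction. The paper proceeds sequentially: first the small-box analysis alone (Theorem~\ref{thm:aprioriHLambda}) yields \eqref{eq:apriorinn+}, in particular $n\le C\rmu\ell^3$; \emph{then}, with $n$ controlled, one goes back to the large box, applies \eqref{eq:Q4aprioribound} of Lemma~\ref{lm:potapp} (which uses the potential once more, but now as a fresh inequality, not as part of the small-box decomposition), and combines with \eqref{eq:aprioriPsi} to get $\langle\mathcal{Q}_4^{\rm ren}\rangle\le C\rmu^2 a\ell^3$.

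A second, more technical point: your step $\sum_{\Lambda'}\tfrac{a}{|\Lambda'|}(n_{\Lambda'}-\rmu|\Lambda'|)^2\ge \tfrac{a}{\ell^3}(n-\rmu\ell^3)^2$ presumes disjoint sub-cubes and a pure quadratic. Neither holds. The sub-cubes slide (continuous $u$), and Theorem~\ref{thm:smallbox} delivers not $(n(u)|B(u)|^{-1}-\rmu)^2$ but $F\bigl(n(u)|B(u)|^{-1}\bigr)$ with $F(t)=c(t-\rmu)^2/(1+t\rmu^{-1})$; the denominator is essential because a priori $n$ could be enormous. The paper passes from the $u$-integral of $F$ to control of $|n\ell^{-3}-\rmu|$ by Jensen's inequality for the convex $F$, then relates the weighted average $\sum_i U(x_i)$ (with $U$ built from the $w_{1,B(u)}$) to $n_0\ell^{-3}$ via a $P_\Lambda/Q_\Lambda$ splitting and the already-obtained $\langle n_+\rangle$ bound. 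Your direct Cauchy--Schwarz over disjoint boxes is not available.

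Your treatment of \eqref{eq:dvapriori} is essentially correct and matches the paper: rerun the Cauchy--Schwarz arguments of Lemma~\ref{lm:potapp} with $w$ in place of $w_1$ (so $\int W$ replaces $a$, costing $\int v$), and for the $2Q$- and $3Q$-type terms use the rewriting trick \eqref{eq:3Qto4Q} to borrow against the now-bounded $\langle\mathcal{Q}_4^{\rm ren}\rangle$.
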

\begin{proof}
As explained, the bounds \eqref{eq:apriorinn+} are proved in Theorem~\ref{thm:aprioriHLambda}. 
Due to our assumptions they, in particular, imply that $n\leq C\rmu\ell^3$. 

This a priori bound on $n$, the positivity of the kinetic energy ${\mathcal T}$, and the bound in 
\eqref{eq:Q4aprioribound} immediately imply 
$$
\langle \Psi, {\mathcal H}_{\Lambda}(\rmu) \Psi \rangle \geq -C\rmu^2 a \ell^3 
+\frac12\langle \Psi, {\mathcal Q}_4^{\rm ren}\Psi\rangle
$$
which by the assumption on $\Psi$ gives the bound \eqref{eq:Q4apriori}. 

The bounds on the first two terms in \eqref{eq:dvapriori} follow exactly as the proofs of
\eqref{eq:Q'app1}-\eqref{eq:Q'app} for $k=0$ and with $w_1$ replaced
by $w$ such that $a$ has to be replaced by $\int v\geq 8\pi a$ in the bounds.
The bounds on the last two terms in \eqref{eq:dvapriori} follow the same lines as the proof of 
\eqref{eq:d2apriori} and \eqref{eq:d13apriori}. We sketch it for the last term in \eqref{eq:dvapriori}. We rewrite 
\begin{align}
\sum_{i\ne j}P_iP_jw(x_i,x_j)Q_iQ_j=&\, 
\sum_{i\ne j}P_iP_jw(x_i,x_j)(Q_iQ_j+\omega(x_i-x_j)(P_iP_j+Q_iP_j+P_iQ_j))\nonumber \\
&-\sum_{i\ne j}P_iP_jw(x_i,x_j)\omega(x_i-x_j)(P_iP_j+Q_iP_j+P_iQ_j)
\end{align}
and likewise for the Hermitian conjugate. If we recall that $0\leq \omega\leq $ the last sum is estimated as in the case of 
\eqref{eq:Q'app1}-\eqref{eq:Q'app} again with $a$ replaced by $\int v$ . 
The first term above together with its complex conjugate
is after a Cauchy-Schwarz  controlled by a similar term and $Q_4^{\rm ren}$. I.e., we get
$$
\langle\Psi,\Bigl(\sum_{i\ne j}P_iP_jw(x_i,x_j)Q_iQ_j+h.c.\Bigr)\Psi\rangle\leq C\rmu^2\ell^3\int v 
+C\langle\Psi,Q_4^{\rm ren}\Psi\rangle,
$$ 
which by the bound \eqref{eq:Q4apriori} implies what we want. 
\end{proof}

\section{Localization of the number of excited particles $n_+$}\label{sec:LocMatrices}

As in \cite{BS} we shall use the following theorem from \cite{LS} to restrict the number of excited particles.
\begin{theorem}[Localization of large matrices]~\label{thm:Localizing large matrices} Suppose that 
  ${\mathcal A}$ is an $(N+1)\times (N+1)$ 
  Hermitean matrix and let ${\mathcal A}^{(k)}$, with $k=0,1,\dots ,N$, denote the
  matrix consisting of the $k^{\mathrm{th}}$ supra- and infra-diagonal
  of ${\mathcal A}$. Let $\psi\in \C^{N+1}$ be a normalized vector and set
  $d_k=\langle\psi,{\mathcal A}^{(k)}\psi\rangle$ and $\lambda=\langle\psi,{\mathcal A}
    \psi\rangle=\sum_{ k=0}^{N}d_k$ ($\psi$ need not be an eigenvector of
  ${\mathcal A}$). Choose some positive integer ${\mathcal M}'\leq N+1$. Then, with
  ${\mathcal M}'$ fixed, there is some $n'\in [0,N+1-{\mathcal M}']$ and some normalized
  vector $\phi\in \C^{N+1}$ with the property that $\phi_j=0$ unless
  $n'+1\leq j\leq n'+{\mathcal M}'$ (i.e., $\phi$ has localization length ${\mathcal M}'$) and such that
  \begin{align}
    \langle\phi,{\mathcal A}\phi\rangle\leq \lambda
    +\frac{C}{{\mathcal M}'^2}\sum_{k=1}^{{\mathcal M}'-1}k^2\abs{d_k}+C\sum_{k={\mathcal M}'}^{N}\abs{d_k},\label{eq:Localizing
      of large matrices}
  \end{align}
  where $C>0$ is a universal constant. (Note that the first sum starts at $k=1$.)
\end{theorem}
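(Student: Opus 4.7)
The plan is to construct $\phi$ by multiplying $\psi$ coordinatewise with a suitable cutoff sequence of length $\mathcal{M}'$, and then use a translation-averaging/pigeonhole argument to locate a shift where the quadratic form stays close to $\lambda$. Concretely, I would pick an auxiliary function $f:\Z\to[0,\infty)$ supported in $\{1,2,\ldots,\mathcal{M}'\}$ that is ``smooth on scale $\mathcal{M}'$'' in the sense that $|f(j)-f(j+m)|\leq C|m|/\mathcal{M}'$ for every integer $m$; the canonical choice is $f(j)=\sin(\pi j/(\mathcal{M}'+1))\mathbf{1}_{\{1\leq j\leq \mathcal{M}'\}}$. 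For each integer shift $s$ set $\phi^{(s)}_j := f(j-s)\psi_j$, which automatically has support contained in an interval of length $\mathcal{M}'$. Restricting $s$ to a range of integers (of cardinality at most $N+1$) for which $[s+1,s+\mathcal{M}']\subset\{0,1,\ldots,N\}\cup\{\text{permitted overflow}\}$ will eventually produce a valid $n'\in[0,N+1-\mathcal{M}']$.

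Next I would compute the sums of the quadratic form and the squared norm of $\phi^{(s)}$ over $s$. By interchanging sums,
\begin{equation*}
\sum_s \langle\phi^{(s)},\mathcal{A}\phi^{(s)}\rangle = \sum_{j,k}\overline{\psi_j}A_{jk}\psi_k\,g(k-j),\qquad \sum_s\|\phi^{(s)}\|^2=g(0),
\end{equation*}
where $g(m):=\sum_s f(s)f(s+m)$ satisfies $g(-m)=g(m)$ and $g(0)=\sum_s f(s)^2\sim \mathcal{M}'$. Grouping by the diagonal index $k-j$ gives
\begin{equation*}
\sum_s \langle\phi^{(s)},\mathcal{A}\phi^{(s)}\rangle = g(0)\lambda - \sum_{m=1}^{N}\bigl(g(0)-g(m)\bigr)d_m,
\end{equation*}
so the entire error is controlled by $g(0)-g(m)=\tfrac12\sum_s(f(s)-f(s+m))^2\geq 0$. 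For $m\geq \mathcal{M}'$ the supports of $f(\cdot)$ and $f(\cdot+m)$ are disjoint and $g(m)=0$, hence $g(0)-g(m)=g(0)$. For $1\leq m<\mathcal{M}'$ the smoothness of $f$ yields $(f(s)-f(s+m))^2\leq C m^2/\mathcal{M}'^2$, and since the sum in $s$ involves only $O(\mathcal{M}')$ nonvanishing terms, $g(0)-g(m)\leq C m^2/\mathcal{M}'$.

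Dividing these bounds by $g(0)\sim \mathcal{M}'$ gives
\begin{equation*}
\frac{\sum_s\langle\phi^{(s)},\mathcal{A}\phi^{(s)}\rangle}{\sum_s\|\phi^{(s)}\|^2} \leq \lambda + \frac{C}{\mathcal{M}'^2}\sum_{m=1}^{\mathcal{M}'-1} m^2|d_m| + C\sum_{m=\mathcal{M}'}^{N}|d_m|,
\end{equation*}
and by the elementary pigeonhole inequality ``$\sum_s a_s\leq L\sum_s b_s$ with $b_s\geq 0$ implies $a_s\leq L b_s$ for some $s$'', there exists a shift $s_\star$ achieving the same bound for the individual Rayleigh quotient $\langle\phi^{(s_\star)},\mathcal{A}\phi^{(s_\star)}\rangle/\|\phi^{(s_\star)}\|^2$. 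Normalizing $\phi:=\phi^{(s_\star)}/\|\phi^{(s_\star)}\|$ and setting $n':=s_\star$ completes the proof. The only real technical issue is the choice of $f$: it must simultaneously be supported in an interval of length $\mathcal{M}'$, have $g(0)$ comparable to $\mathcal{M}'$, and satisfy the discrete Lipschitz bound with constant $O(1/\mathcal{M}')$ so that the off-diagonal contributions from $m<\mathcal{M}'$ enter with weight $m^2/\mathcal{M}'^2$; the sine choice above (or any discretization of a smooth bump) does this, and checking that the $s$-range can be taken so the resulting $n'$ lies in $[0,N+1-\mathcal{M}']$ is a matter of bookkeeping since $\phi^{(s)}$ trivially has no support outside $\{0,\ldots,N\}$ when $s$ is in the admissible window.
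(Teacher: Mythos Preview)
Your argument is correct and is essentially the standard proof of this result. Note, however, that the paper does not actually prove Theorem~\ref{thm:Localizing large matrices}; it merely quotes it from \cite{LS} and then applies it in Lemma~\ref{lem:LocMatrices}. Your cutoff-plus-averaging construction with $f(j)=\sin(\pi j/(\mathcal{M}'+1))\mathbf{1}_{\{1\le j\le \mathcal{M}'\}}$, the autocorrelation identity $\sum_s\langle\phi^{(s)},\mathcal{A}\phi^{(s)}\rangle=g(0)\lambda-\sum_{m\ge 1}(g(0)-g(m))d_m$, the estimate $g(0)-g(m)=\tfrac12\sum_s(f(s)-f(s+m))^2\le Cm^2/\mathcal{M}'$ for $m<\mathcal{M}'$, and the pigeonhole selection of $s_\star$ are exactly the ingredients of the original Lieb--Solovej proof. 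The only points you leave slightly informal---that the sum over $s$ may be taken over all of $\Z$ since $\psi$ vanishes outside $\{1,\dots,N+1\}$, and that an optimal $s_\star$ with $\|\phi^{(s_\star)}\|>0$ exists and yields an admissible $n'\in[0,N+1-\mathcal{M}']$ (truncating to $n'=0$ or $n'=N+1-\mathcal{M}'$ if $s_\star$ falls outside that window, which only shrinks the support)---are routine to fill in.
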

This will allow us to prove the following result. We emphasize that
this is the only place in this paper where an estimate depends
explicitly on $\int v$ and not just on $a$.
\begin{lemma}[Restriction on $n_+$]\label{lem:LocMatrices}
Let ${\mathcal M}$ be as defined in \eqref{eq:DefM} and satisfying \eqref{con:KMRKB} and \eqref{con:KBellKM}.
Assume, moreover, that $\rmu a^3$ is small enough.
There is then a universal $C>0$ such that
if there is a normalized $n$-particle $\Psi\in{\mathcal F}_s(L^2(\Lambda))$ 
satisfying \eqref{eq:aprioriPsi} under the assumptions in
Theorem~\ref{thm:aprioribounds} with $J=\frac12 K_B^3$ then
there is also a normalized $n$-particle wave function $\widetilde{\Psi}\in
{\mathcal F}_{\rm s}(L^2(\Lambda))$ with the property that
\begin{align}\label{eq:LocalizedNPlus}
  \widetilde{\Psi} = 1_{[0,{\mathcal M}]}(n_{+}) \widetilde{\Psi},
\end{align}
i.e., only values
of $n_+$ smaller than ${\mathcal M}$ appear in $\widetilde{\Psi}$, and
such that
\begin{equation}\label{eq:EnergyBound}
\langle\widetilde\Psi,{\mathcal H}_\Lambda(\rmu)\widetilde\Psi\rangle\leq
\langle\Psi,{\mathcal H}_\Lambda(\rmu)\Psi\rangle
+CK_{\mathcal M}^{-2} \rmu^2\ell^3 (\rmu a^3)^{1/2}\int v.
\end{equation}
\end{lemma}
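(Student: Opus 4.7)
My plan is to apply Theorem~\ref{thm:Localizing large matrices} to a matrix representation of ${\mathcal H}_\Lambda(\rmu)$ in the eigenbasis of $n_+$. I decompose $\Psi = \sum_{m=0}^n \Psi_m$ with $\Psi_m := \one_{\{n_+=m\}}\Psi$, set $c_m := \|\Psi_m\|$, and consider the symmetric $(n+1)\times(n+1)$ matrix ${\mathcal A}_{m,m'} := \langle \Psi_m, {\mathcal H}_\Lambda(\rmu)\Psi_{m'}\rangle/(c_m c_{m'})$ (extended arbitrarily when $c_m c_{m'}=0$, which does not affect the argument below). With $\psi=(c_m)_{m=0}^{n}$ one has $\langle \psi,{\mathcal A}\psi\rangle = \langle \Psi, {\mathcal H}_\Lambda(\rmu)\Psi\rangle =: \lambda$, and the $k$-th off-diagonal ${\mathcal A}^{(k)}$ corresponds exactly to those parts of ${\mathcal H}_\Lambda(\rmu)$ that change $n_+$ by $\pm k$. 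The kinetic energy ${\mathcal T}$ preserves $n_+$ (each piece is a function of the Neumann Laplacian, which is block-diagonal in $P_u/Q_u$, or is sandwiched between $Q_u$'s), and the two-body interaction changes $n_+$ by at most $2$; hence $d_k=0$ for $k\geq 3$.

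Reading off Lemma~\ref{lm:potsplit}, the $k=\pm 1$ pieces come from ${\mathcal Q}_1^{\rm ren}$, ${\mathcal Q}_3^{\rm ren}$ and the $\omega$-linear cross terms of ${\mathcal Q}_4^{\rm ren}$, while the $k=\pm 2$ piece is $\frac{1}{2}\sum_{i\neq j}P_iP_jw_1Q_jQ_i+h.c.$ from ${\mathcal Q}_2^{\rm ren}$ combined with $\frac{1}{2}\sum P_iP_j(\omega w)Q_jQ_i+h.c.$ from ${\mathcal Q}_4^{\rm ren}$, which together yield the $w$-version $\frac{1}{2}\sum(P_iP_jwQ_jQ_i+h.c.)$ since $w_1+\omega w=w$. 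All these terms are controlled by the a priori estimate \eqref{eq:dvapriori}, giving $|d_1|+4|d_2|\leq C\rmu^2\ell^3\int v$. Applying Theorem~\ref{thm:Localizing large matrices} with ${\mathcal M}'={\mathcal M}$ produces $n'\in[0,n+1-{\mathcal M}]$ and a unit vector $\phi$ supported on $[n'+1, n'+{\mathcal M}]$ with
\begin{equation*}
\langle \phi,{\mathcal A}\phi\rangle \;\leq\; \lambda + \frac{C}{{\mathcal M}^2}(|d_1|+4|d_2|) \;\leq\; \lambda + CK_{\mathcal M}^{-2}(\rmu a^3)^{1/2}\rmu^2\ell^3\int v,
\end{equation*}
using ${\mathcal M}^2=K_{\mathcal M}^2(\rmu a^3)^{-1/2}$. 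Setting $\widetilde\Psi := \sum_m (\phi_m/c_m)\Psi_m$ gives a normalized $n$-particle state with $\langle \widetilde\Psi, {\mathcal H}_\Lambda(\rmu)\widetilde\Psi\rangle = \langle \phi,{\mathcal A}\phi\rangle$, supported on $n_+\in[n', n'+{\mathcal M}-1]$.

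The main obstacle is to force $n'=0$, so that $\widetilde\Psi=\one_{[0,{\mathcal M}]}(n_+)\widetilde\Psi$ as required. For this I use the a priori bound \eqref{eq:apriorinn+}, which gives $\langle \Psi, n_+\Psi\rangle \leq CK_B^3K_\ell^2\rmu\ell^3(\rmu a^3)^{1/2}=CK_B^3K_\ell^5(\rmu a^3)^{1/4}$; by condition \eqref{con:KBellKM} this is $\ll{\mathcal M}$. Hence Markov's inequality places almost all of the mass of $(c_m^2)$ on $\{m\leq{\mathcal M}/2\}$. The plan is then to apply the localization theorem only to the restricted block $m\in[0,2{\mathcal M}]$ (the complement contributes negligibly to $\lambda$ by Cauchy--Schwarz on the off-diagonals together with the same a priori bound), producing an interval inside $[0,2{\mathcal M}]$; this interval is then shifted to start at $0$ at a cost absorbed into the same error, using once more that $|d_1|$ and $|d_2|$ are concentrated where the weights are. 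This yields $\widetilde\Psi$ with the bound \eqref{eq:EnergyBound}.
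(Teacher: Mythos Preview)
Your setup and the identification of $d_1,d_2$ are correct and coincide with the paper's proof, including the appeal to \eqref{eq:dvapriori}. The gap is in your last paragraph, where you try to force the localization interval to start at $0$. The localization theorem gives no information on \emph{where} the length-${\mathcal M}'$ interval sits, and your ``restrict to the block $[0,2{\mathcal M}]$ and then shift'' plan does not close: even on the restricted block the theorem may return $\phi$ supported on $[{\mathcal M}+1,2{\mathcal M}]$, and there is no way to ``shift'' $\phi$ at controlled cost because ${\mathcal A}$ is not translation invariant. Your Markov argument concerns the mass distribution of $\psi$, not of the output $\phi$, so it does not constrain $n'$.

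The paper's fix is short and replaces that paragraph. Take ${\mathcal M}'=\lfloor{\mathcal M}/2\rfloor$ instead of ${\mathcal M}$, build $\widetilde\Psi$ exactly as you do, and then observe that $\widetilde\Psi$ \emph{itself} satisfies \eqref{eq:aprioriPsi}: the extra error $CK_{\mathcal M}^{-2}\rmu^2\ell^3(\rmu a^3)^{1/2}\int v$ is $\ll\rmu^2a\ell^3(\rmu a^3)^{1/2}$ by \eqref{con:KMRKB}, so the hypothesis of Theorem~\ref{thm:aprioribounds} holds with some $J\leq K_B^3$ (this is precisely why the lemma assumes $J=\tfrac12K_B^3$ for $\Psi$, leaving room to absorb the error). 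Applying \eqref{eq:apriorinn+} to $\widetilde\Psi$ gives $\langle\widetilde\Psi,n_+\widetilde\Psi\rangle\leq CK_B^3K_\ell^5$. Since $\widetilde\Psi$ is supported on an interval of length ${\mathcal M}/2$ in $n_+$, that interval is contained in $[0,\,CK_B^3K_\ell^5+{\mathcal M}/2]\subseteq[0,{\mathcal M}]$ by \eqref{con:KBellKM}.

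A smaller point: the case $c_m=0$ is not harmless. If the $\phi$ produced by the theorem has a nonzero component where $c_m=0$, your formula $\widetilde\Psi=\sum_m(\phi_m/c_m)\Psi_m$ is undefined. The paper sets those components of $\phi$ to zero to get $\widetilde\phi$ with $\|\widetilde\phi\|\leq1$, notes $\langle\widetilde\phi,{\mathcal A}\widetilde\phi\rangle=\langle\phi,{\mathcal A}\phi\rangle<0$ (negativity again uses $J=\tfrac12K_B^3$ and \eqref{con:KMRKB}), so $\widetilde\phi\neq0$, and then normalizes; since the energy is negative and $\|\widetilde\phi\|^{-2}\geq1$, normalization only helps.
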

\begin{proof}
We may assume from \eqref{con:KBellKM} that ${\mathcal M}\geq 5$ and that 
${\mathcal M}\leq n$ since otherwise there is nothing to prove.
 
We shall apply Theorem~\ref{thm:Localizing large matrices} on localization of large matrices to the 
$(n+1)\times(n+1)$-matrix with elements
$$
{\mathcal A}_{i,j}=\|\one_{n_+=i}\Psi\|^{-1}\|\one_{n_+=j}\Psi\|^{-1}
\langle \one_{n_+=i}\Psi, H_\Lambda(\rmu)\one_{n_+=j}\Psi\rangle.
$$
(If any of the norms are zero we set the element to zero.) 
Then we get a normalized vector $\psi=(\|\one_{n_+=0}\Psi\|,\ldots,\|\one_{n_+=n}\Psi\|)$ in $\C^{n+1}$ and 
$$
\langle\psi,{\mathcal A} \psi\rangle= \langle \Psi, H_\Lambda(\rmu)\Psi\rangle. 
$$
Moreover, for the matrix ${\mathcal A}$, using the notation of Theorem~\ref{thm:Localizing large matrices}, 
only the ${\mathcal A}^{(k)}$ with $k=0,1,2$ are non-vanishing. In fact, we have
\begin{align}
  d_1=\langle \psi,{\mathcal A}^{(1)} \psi \rangle=\Bigl\langle\Psi,\Bigl(-\rmu\sum_{i=1}^N(P_i\int
  w_1(x_i,y)dyQ_i+h.c.)+\sum_{i\ne j}(Q_jP_iw(x_i,x_j)P_iP_j+h.c.)\nonumber\\
  +\sum_{i\ne j}(P_iQ_jw(x_i,x_j)Q_jQ_i+h.c.)\Bigr)\Psi\Bigr\rangle\nonumber
\end{align}
and 
$$
d_2=\langle \psi,{\mathcal A}^{(2)} \psi \rangle=
\Bigl\langle\Psi,\Bigl(\sum_{i\ne j}(P_iP_jw(x_i,x_j)Q_jQ_i+h.c.)\Bigr)\Psi\Bigr\rangle.
$$ 
It thus follows from \eqref{eq:dvapriori} that $|d_1|,|d_2|\leq \rmu^2\ell^3\int v$. 

The theorem on localization of large matrices tells us that if we choose ${\mathcal M}'$ equal to the integer part of 
${\mathcal M}/2$ we can find a normalized $\phi\in \C^{n+1}$ with localization length ${\mathcal M}'$
such that 
\begin{align}
  \langle \phi, {\mathcal A}\phi\rangle\leq&\, \langle\psi,{\mathcal A} \psi\rangle+C{\mathcal M}'^{-2}(|d_1|+|d_2|))
  \nonumber\\ 
  \leq&\,\langle \Psi, H_\Lambda(\rmu)\Psi\rangle+CK_{\mathcal M}^{-2}\rmu^2\ell^3 (\rmu a^3)^{1/2} \int v
\end{align}
Let $\widetilde\phi\in\C^{n+1}$ be given by
$\widetilde\phi_i=\phi_i$ if $\|\one_{n_+=i}\Psi\|\ne 0$ and
$\widetilde\phi_i=0$ if $\|\one_{n_+=i}\Psi\|=0$.  Then
$\|\widetilde\phi\|\leq 1$. 
We then have 
\begin{align}
  \langle \widetilde \phi, {\mathcal A}\widetilde\phi\rangle=\langle \phi, {\mathcal A}\phi\rangle\leq
  \langle \Psi, H_\Lambda(\rmu)\Psi\rangle+CK_{\mathcal M}^{-2}\rmu^2\ell^3 (\rmu a^3)^{1/2} \int v< 0.
\end{align}
where the negativity follows from $J=\frac12 K_B^3$, \eqref{con:KB}, and \eqref{con:KMRKB}.
In particular,
$\widetilde\phi\ne0$. Define
$$
\widetilde \Psi=\|\widetilde\phi\|^{-1}\sum_{i=0}^n\widetilde\phi_i\|\one_{n_+=i}\Psi\|^{-1}\one_{n_+=i}\Psi.
$$
Then $\widetilde \Psi$ is normalized and satisfies 
$$
\langle\widetilde\Psi,H_\Lambda(\rmu)\widetilde\Psi\rangle=
\|\widetilde\phi\|^{-2}\langle \widetilde \phi, {\mathcal A}\widetilde\phi\rangle\leq 
\langle \widetilde \phi, {\mathcal A}\widetilde\phi\rangle
$$ since the term on the right is negative and
$\|\widetilde\phi\|^{-2}\geq1$. This proves that $\widetilde\Psi$
satisfies \eqref{eq:EnergyBound}. It remains to prove that
$\widetilde\Psi$ satisfies \eqref{eq:LocalizedNPlus}.  We know from
the construction that the possible values of $n_+$ that occur in
$\widetilde \Psi$ lie in an interval of length ${\mathcal M}'$. We
need to prove that this interval lies close to zero.  This follows
from the estimate \eqref{eq:EnergyBound}, $J=\frac12K_B^3$, and \eqref{con:KMRKB}.
which imply that we may use the a priori
bound \eqref{eq:apriorinn+} on the expectation value of $n_+$ in
$\widetilde \Psi$. The consequence is that the interval of $n_+$ values in
$\widetilde \Psi$ must be contained in
$$
[0, {\mathcal M}'+C\rmu\ell^3 K_B^3K_\ell^2(\rmu a^3)^{1/2}]=[0, {\mathcal M}'+CK_B^3K_\ell^5]\subseteq [0,{\mathcal M}]
$$
by \eqref{con:KBellKM}.

\end{proof}

\section{Localization of the $3Q$-term}
\label{sec:3Q}
In this section we will absorb an unimportant part of the $3Q$ term in the positive $4Q$ term. 

We first define the `low' and `high' momentum regions as follows.
\begin{align}
P_L := \{ |p| \leq K_L \sqrt{\rmu a} \},\qquad P_H := \{ |p| \geq \KH^{-1} (\rmu a^3)^{\frac{5}{12}}a^{-1}\} = \{ |p| \geq K_H^{-1} a^{-1}\}
\end{align}
where $K_L, \KH$ were defined in Section~\ref{sec:params}.
The somewhat peculiar definition of $P_H$ is convenient for later estimates (see proof of Lemma~\ref{lem:Q3-splitting2}).
We will always assume that \eqref{con:KLKH} is satisfied.
This assures that $P_L$ and $P_H$ are disjoint.

We will define the low momentum localization operator $Q_L$ as follows.
Let $f \in C^{\infty}({\mathbb R})$ be a monotone non-increasing function satisfying that $f(s) = 1 $ for $s\leq 1$ and $f(s) = 0$ for $s\geq 2$.
We further define
\begin{align}
f_L(s) := f\big(\frac{s}{K_L \sqrt{\rmu a}}\big).
\end{align}
I.e.  $f_L$ is a smooth localization to the low momenta $P_L$. With this notation, we define
\begin{align}
Q_L := Q f_L(\sqrt{-\Delta}), \qquad \overline{Q_L} := Q (1- f_L(\sqrt{-\Delta})).
\end{align}
Notice that $Q_L$ is not self-adjoint.

We will choose $K_L$ such that $K_L \sqrt{\rmu a} = d^{-2} \ell^{-1}$---this is equivalent to \eqref{eq:KL_d}---where $d$ is from the definition of the `small boxes'.

We define
\begin{align}
n_{+}^{H} := \sum Q \one_{(d^{-2} \ell^{-1}, \infty)}(\sqrt{-\Delta}) Q.
\end{align}
With this definition and the choice of $K_L$ above, we have
\begin{align}\label{eq:QLs_giveNplusH}
\sum \overline{Q}_L (\overline{Q}_L)^{*} \leq n_{+}^{H}.
\end{align}

\begin{lemma}\label{lem:Q3-splitting1}
Define
\begin{align}
\widetilde{Q}_3^{(1)} := 
\sum_{i\neq j} (P_i Q_{L,j} w_1(x_i,x_j) Q_j Q_i + h.c.) ,
\end{align}
We assume \eqref{con:eTdK}, \eqref{con:KellKLd} and \eqref{eq:apriorinn+}.
With the notation from \eqref{eq:DefQ4}, \eqref{eq:DefQ3}, we get,
\begin{align}\label{eq:ReduceQ3}
{\mathcal Q}_3^{\rm ren} + \frac{1}{4} {\mathcal Q}_4^{\rm ren} +\frac{b}{100}  \left( \ell^{-2} n_{+} + \varepsilon_T (d\ell)^{-2} n_{+}^H \right) 
\geq \widetilde{Q}^{(1)}_3 
 - C \rmu^2 a \ell^3   \left( (K_{\ell} K_L)^{1-M} + \frac{R^2}{\ell^2} \right) .
\end{align}
\end{lemma}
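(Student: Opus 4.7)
The plan is to split the middle $Q_j$ factor in ${\mathcal Q}_3^{\rm ren}$ as $Q_j=Q_{L,j}+\overline{Q}_{L,j}$ on the left of $w_1$: the $Q_{L,j}$ piece reproduces $\widetilde{Q}_3^{(1)}$ exactly, leaving the high-momentum residual
\[
\widetilde{R}:=\sum_{i\ne j}\bigl(P_i\overline{Q}_{L,j}w_1(x_i,x_j)Q_jQ_i+h.c.\bigr),
\]
which I must bound from below by $-\tfrac14{\mathcal Q}_4^{\rm ren}$, the two gap terms, and the stated constant error.

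First I would apply Corollary~\ref{cl:advQ3sch} with $Q':=\overline{Q}_L$, noting that $QQ'=Q'$ since $\overline{Q}_L=Q(1-f_L)$ and $\|Q'\|\le 1$. After relabeling $i\leftrightarrow j$ and using the symmetries $w_1(x,y)=w_1(y,x)$ and $\omega(-x)=\omega(x)$, the corollary yields
\[
\widetilde{R}+T_{PP}+\tfrac14{\mathcal Q}_4^{\rm ren}\ \ge\ -Cn_0\ell^{-3}a\bigl(\varepsilon^{-1}n_+^{\overline{Q}_L}+\varepsilon n_+\bigr),
\]
where $T_{PP}:=\sum_{i\ne j}(P_i\overline{Q}_{L,j}(w_1\omega)(x_i,x_j)P_jP_i+h.c.)$ is the $PP$-companion term appearing in that corollary. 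Invoking $n_+^{\overline{Q}_L}\le n_+^H$ from \eqref{eq:QLs_giveNplusH} and the a priori bound $n_0\le C\rmu\ell^3$ from \eqref{eq:apriorinn+}, I choose $\varepsilon\sim K_\ell^{-2}$ so that the $\varepsilon n_+$ and $\varepsilon^{-1}n_+^H$ pieces are absorbed respectively into $\tfrac{b}{200}\ell^{-2}n_+$ and $\tfrac{b}{200}\varepsilon_T(d\ell)^{-2}n_+^H$; compatibility $(dK_\ell)^2/\varepsilon_T\lesssim\varepsilon\lesssim K_\ell^{-2}$ holds by \eqref{con:eTdK}.

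The heart of the proof is then an \emph{upper} bound on $T_{PP}$. Integrating out the $P$-projections in second-quantized form (using $\int u^2\phi\,dx_i=\ell^{-3}F(x_j)$ with $F:=\chi_\Lambda\cdot((W_1\omega)*\chi_\Lambda)$) gives
\[
T_{PP}\ =\ \ell^{-9/2}\bigl(a^*(\overline{Q}_L F)\,a_0^*a_0a_0+h.c.\bigr),
\]
where I use that $Q u=0$ to see $\overline{Q}_L F\perp u$, so $a^*(\overline{Q}_L F)$ and $a_0^*$ commute. Cauchy--Schwarz $XD+D^*X^*\le\lambda^2 XX^*+\lambda^{-2}D^*D$, combined with $a^*(g)a(g)\le\|g\|^2 n_+$ for $g\in Q(L^2)$ and $(a_0^*a_0a_0)^*(a_0^*a_0a_0)=n_0(n_0-1)^2\le n^3$, yields $T_{PP}\le \mu_1\|\overline{Q}_L F\|^2 n_+ + \mu_2 n^3$ for any $\mu_1\mu_2=\ell^{-9}$. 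Setting $\mu_1\|\overline{Q}_L F\|^2=\tfrac{b}{200}\ell^{-2}$ and using $n\le C\rmu\ell^3$ leaves a residual of order $C\|\overline{Q}_L F\|^2\rmu^3\ell^2/b$.

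To finish, I would control $\|\overline{Q}_L F\|^2$ by decomposing $F=c_0\chi_\Lambda^2+\chi_\Lambda E$ via Lemma~\ref{lem:Convolution}, with $c_0=\widehat{W_1\omega}(0)\lesssim\mathcal{R}a$ and $|E|\le C(R/\ell)^2 c_0$. The Fourier tail coming from $\chi\in C_0^{M-1}$ (Appendix~\ref{sec:chiproperties}) gives $\|\overline{Q}_L\chi_\Lambda^2\|^2\le C\ell^3(K_\ell K_L)^{5-2M}$, while the pointwise bound gives $\|\chi_\Lambda E\|^2\le C\ell^3(R/\ell)^4 c_0^2$. Using $\rmu a\ell^2=K_\ell^2$, the residual becomes $\le C\mathcal{R}^2\rmu^2 a\ell^3 K_\ell^2\bigl((K_\ell K_L)^{5-2M}+(R/\ell)^4\bigr)$. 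The convolution part is absorbed into $\rmu^2 a\ell^3 R^2/\ell^2$ because $\mathcal{R}^2 R^2\rmu a\ll 1$ under \eqref{eq:AssumptionK_R} and \eqref{eq:intvbound}; the smoothness part is absorbed into $\rmu^2 a\ell^3(K_\ell K_L)^{1-M}$ because the enormous factor $(K_\ell K_L)^{M-4}$ easily swallows the prefactor $\mathcal{R}^2 K_\ell^2$ once $M=\Mvalue$ is invoked together with \eqref{con:KellKLd}. The main obstacle is this precise bookkeeping of Fourier-tail exponents, which is exactly why the large regularity $M=\Mvalue$ of $\chi$ is essential.
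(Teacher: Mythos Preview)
Your proof is correct and follows the paper's overall strategy: split $Q=Q_L+\overline{Q}_L$, invoke Corollary~\ref{cl:advQ3sch} with $Q'=\overline{Q}_L$ and $\varepsilon\sim K_\ell^{-2}$ (absorbing the $n_+$ and $n_+^H$ pieces into the gaps via \eqref{con:eTdK}), and then bound the residual $PPP$--term $T_{PP}$ from above.

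The difference lies entirely in how $T_{PP}$ is handled. The paper stays in first quantization: after the convolution replacement $w_1\omega\to I\chi_\Lambda^2$ (which yields the $R^2/\ell^2$ error directly as $Can^2\ell^{-3}(R/\ell)^2$), it factors
\[
\overline{Q}_L\chi_\Lambda^2 P=\overline{Q}_L(\ell^{-2}-\Delta)^{-\widetilde M}\bigl[(\ell^{-2}-\Delta)^{\widetilde M}\chi_\Lambda^2\bigr]P,\qquad M-1\le 2\widetilde M\le M,
\]
and applies an operator Cauchy--Schwarz. This produces exactly $\varepsilon_2 P+\varepsilon_2^{-1}(K_\ell K_L)^{-2\widetilde M}\overline{Q}_L\overline{Q}_L^{*}$; choosing $\varepsilon_2=(K_\ell K_L)^{-2\widetilde M}$ gives the constant error $\rmu^2 a\ell^3(K_\ell K_L)^{1-M}$ on the nose and an $n_+^H$--term absorbed in the $\varepsilon_T(d\ell)^{-2}n_+^H$ gap. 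Only the hypotheses listed in the lemma are used.

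Your route through second quantization and the $L^2$--norm $\|\overline{Q}_LF\|$ exploits the same Fourier decay of $\chi^2$, but the bookkeeping is looser: you obtain exponents $(K_\ell K_L)^{5-2M}$ and $(R/\ell)^4$, each carrying an extra factor $K_\ell^2=\rmu a\ell^2$, and must then invoke $M=\Mvalue$ and the standing assumption $R\lesssim(\rmu a)^{-1/2}$ from \eqref{eq:AssumptionK_R} to squeeze these into the stated error. (Incidentally, $c_0=\int W_1\omega\le Ca$ already, so the $\mathcal R$--factors you carry are unnecessary.) Both arguments are valid; the paper's resolvent trick is simply sharper and self--contained within the lemma's hypotheses.
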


\begin{proof}
Using Corollary~\ref{cl:advQ3sch}, with $Q' = \overline{Q}_{L,}$ and $\varepsilon= c K_{\ell}^{-2}$ for some sufficiently small constant $c$, as well as \eqref{eq:QLs_giveNplusH}
we find 
\begin{align}
\frac{1}{4} {\mathcal Q}_4^{\rm ren} + \frac{b}{100 \ell^2} n_{+}+ 
{\mathcal Q}_3^{\rm ren}
- \widetilde{Q}^{(1)}_3 
&\geq
  \sum_{i,j}\Bigl(P_j \overline{Q}_{L,i} w_1(x_i,x_j)\omega(x_i-x_j)P_iP_j+ h.c. \Bigr) \nonumber \\
  &\quad   -C \ell^{-2} K_{\ell}^4 n_{+}^H. \label{eq:pq'pp-sf} 
\end{align}
Using \eqref{con:eTdK} it is clear that the $n_{+}^H$ term is dominated by half of the positive $n_{+}^H$ term from \eqref{eq:ReduceQ3}.

To estimate the remaining terms in \eqref{eq:pq'pp-sf} we start by using the estimate on the convolution from Lemma~\ref{lem:Convolution} to get 
\begin{align}
&- \sum_{i\neq j} \left( P_i \overline{Q}_{L,j}  w_1(x_i,x_j) \omega(x_i-x_j) P_j P_i + h.c. \right) \nonumber \\
&\geq - I \ell^{-3} \Big( n_0 \sum_j \overline{Q}_{L,j} \chi_{\Lambda}^2(x_j) P_j + h.c.\Big)
- C a n^2 \ell^{-3} \frac{R^2}{\ell^2},
\end{align}
where $I := \int W_1(y) \omega(y) \leq C a$.

To complete the proof we write, with $M-1 \leq 2 \widetilde{M} \leq M$ 
\begin{align}
\overline{Q}_{L} \chi_{\Lambda}^2 P + h.c.
=
\overline{Q}_{L} (\ell^{-2} - \Delta)^{-\widetilde{M}} \left[(\ell^{-2} - \Delta)^{\widetilde{M}}\chi_{\Lambda}^2\right]  P + h.c.
\end{align}
and notice that
\begin{align}
|(\ell^{-2} - \Delta)^{\widetilde{M}}\chi_{\Lambda}^2| \leq C \ell^{-2\widetilde{M}}.
\end{align}
Therefore,
\begin{align}
\overline{Q}_{L} \chi_{\Lambda}^2 P + h.c. &\leq
\varepsilon_2^ P + \varepsilon_2^{-1} \ell^{2\widetilde{M}} \overline{Q}_{L} (\ell^{-2} - \Delta)^{-2\widetilde{M}}
(\overline{Q}_{L})^* \nonumber \\
&\leq \varepsilon_2 P + \varepsilon_2^{-1} (K_{\ell} K_{L})^{-2\widetilde{M}} \overline{Q}_{L}(\overline{Q}_{L})^*.
\end{align}
Choosing $\varepsilon_2 = (K_{\ell} K_L)^{-2\widetilde{M}}$ and using again \eqref{con:eTdK} we get \eqref{eq:ReduceQ3}
upon summing this estimate in the particle indices
and absorbing the $n_{+}^H$ term as before.
\end{proof}

\section{Second quantized operators}
\label{sec:Second}
\subsection{Creation/annihilation operators}
We will use $a, a^{\dagger}$ to denote the standard bosonic annihilation/creation operators on the bosonic Fock space ${\mathcal F}_s(L^2(\Lambda))$.

We define $a_0$ as the annihilation operator associated to the condensate function for the box $\Lambda$, i.e. $a_0 = \ell^{-3/2} a(\theta)$, where we recall that $\theta$ defined in \eqref{eq:Theta} is the characteristic function of the box. In more detail, for $\Psi \in \otimes_{s}^N L^2(\Lambda)$ we have
$$
(a_0 \Psi)(x_2, \ldots, x_N) := \frac{\sqrt{N}}{\ell^{3/2}} \int_{\Lambda}  \Psi(y,x_2,\ldots, x_N) \,dy
$$
Therefore,
\begin{align}\label{eq:Appa0}
\langle \Psi ,n_0\Psi\rangle=\langle \Psi\,|\, a_0^{\dagger} a_0 \Psi \rangle 
=
\frac{N}{\ell^3} \int \left| \int_{\Lambda} \Psi(y,x_2,\ldots, x_N) \,dy\right|^2 dx_2\cdots dx_N. 
\end{align}

Due to the localization function $\chi_{\Lambda}$ it is convenient to work with the localized annihilation/creation operators $a_k, a_k^{\dagger}$ defined in \eqref{eq:ak-akdagger} below. However, we will also need the non-localized versions $\widetilde{a}_k, \widetilde{a}_k^{\dagger}$. Since these are more standard, we give their definition first.

For $k \in {\mathbb R}^3\setminus\{0\}$ we let 
\begin{align}
\widetilde{a}_k := \ell^{-3/2} a(Q(e^{ikx} \theta)), \qquad \widetilde{a}_k^{\dagger} := \ell^{-3/2} a^{\dagger}(Q(e^{ikx} \theta))
\end{align}
Clearly, for $k,k' \in {\mathbb R}^3\setminus\{0\}$,
\begin{align}\label{eq:ak-akdagger}
[\widetilde{a}_k, \widetilde{a}_{k'}] = 0, \qquad [\widetilde{a}_k, \widetilde{a}_{k'}^{\dagger}] = \ell^{-3} \langle e^{ikx} \theta, Q e^{ik'x} \theta \rangle.
\end{align}
We also define, for $k \in {\mathbb R}^3\setminus \{0\}$,
\begin{align}
a_k := \ell^{-3/2} a( Q(e^{ikx} \chi_\Lambda))\qquad \textrm{and}\qquad a_k^{\dagger} := \ell^{-3/2} a( Q(e^{ikx} \chi_\Lambda))^{*}.
\end{align}
Then, for all $k,k' \in {\mathbb R}^3\setminus \{ 0 \}$,
\begin{align}
[ a_k, a_{k'} ] = 0, 
\end{align}
and  
\begin{align}\label{eq:Commutator_general}
[a_k, a_{k'}^{\dagger}] &= \ell^{-3}
\langle Q(e^{ikx} \chi_\Lambda), Q (e^{ik'x} \chi_\Lambda) \rangle 
= \widehat{\chi^2}((k-k')\ell) -  \widehat{\chi}(k \ell) \overline{\widehat{\chi}(k'\ell)}.
\end{align}
In particular,
\begin{align}\label{eq:Commutator}
[a_k, a_{k}^{\dagger}] \leq  1.
\end{align}
Furthermore, we introduce the Fourier multiplier corresponding to the localized kinetic energy (after the separation of the constant term), i.e.
\begin{align}\label{def: tau}
\tau(k) := (1-\varepsilon_T) \Big[ |k| - \frac{1}{2}(s\ell)^{-1} \Big]_{+}^2
+ \varepsilon_T \Big[ |k| - \frac{1}{2}(d s\ell)^{-1} \Big]_{+}^2.
\end{align}
We can express the different parts of the Hamiltonian ${\mathcal H}_{\Lambda}(\rmu)$ in second quantized formalism. We give this as the following Lemma~\ref{lem:2ndQuand}. The proof is a standard calculation and will be omitted.

\begin{lemma}\label{lem:2ndQuand}
We have the following expressions for the operators in second quantized formalism (with $\mathcal{T}'$ the part of the kinetic energy operator defined in \eqref{eq:T'u})
\begin{align}
n_0 = a_0^{\dagger} a_0,&\qquad n_0^2 = (a_0^{\dagger} a_0)^2 = (a_0^{\dagger})^2 a_0^2 - a_0^{\dagger} a_0,  \nonumber \\
n_{+} &= (2\pi)^{-3} \ell^3 \int \widetilde{a}_k^{\dagger} \widetilde{a}_k \,dk\nonumber \\
\sum_{j=1}^N  \mathcal{T}'_{j} &= \big( (2\pi)^{-3} \ell^3 \int_{k \in {\mathbb R}^3} \tau(k) a_k^{\dagger} a_k\big)_N,\nonumber \\
\sum_{i \neq j}P_i P_j w_1(x_i,x_j)  Q_j Q_i &=
(2\pi)^{-3}\int \widehat{W}_1(k) a_0^{\dagger} a_0^{\dagger} a_{k}a_{-k} \,dk,\nonumber \\
\sum_{j \neq s} P_j Q_s  w_2(x_i,x_j)  P_s Q_j
&= (2\pi)^{-3} \int \widehat{W}_2(k) a_{-k}^{\dagger} a_0^{\dagger} a_0 a_{-k}  \,dk,\nonumber \\
 \sum_i Q_i f(x_i) \chi_{\Lambda}(x_i) P_i & =
 (2\pi)^{-3} \int \widehat{f}(k) a_{k}^{\dagger}  a_0\,dk, \nonumber \\
 \sum_{i\neq j} P_i Q_{L,j} w_1(x_i,x_j) Q_j Q_i &= \ell^3 (2\pi)^{-6} \iint
f_L(s)
\widehat{W}_1(k) a_0^{\dagger} \widetilde{a}_s^{\dagger}  a_{s-k} a_k
\,dk\,ds.
\end{align}
\end{lemma}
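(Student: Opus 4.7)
The plan is to apply the standard second quantization dictionary: a one-body operator $A$ with integral kernel $A(x,y)$ lifts to $\int A(x,y)\,a_x^\dagger a_y\,dx\,dy$, while multiplication by a two-body kernel $K(x,y)$ lifts, on the $N$-particle sector summed over ordered pairs $i\ne j$, to $\int K(x,y)\,a_x^\dagger a_y^\dagger a_y a_x\,dx\,dy$. Each identity in the lemma will then follow by identifying the kernel of the first-quantized operator, inserting the Fourier representation $W_j(x-y)=(2\pi)^{-3}\int \widehat W_j(k)e^{ik(x-y)}\,dk$ to decouple the two positions, and recognizing the remaining integrals against $e^{ikx}\chi_\Lambda(x)$ or $e^{ikx}\theta(x)$ as the smeared operators $a_k,a_k^\dagger$ or $\widetilde a_k,\widetilde a_k^\dagger$.

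The identity $n_0=a_0^\dagger a_0$ is the definition, and the normal-ordered form of $n_0^2$ follows from $[a_0,a_0^\dagger]=\ell^{-3}\langle\theta,\theta\rangle=1$. The formula for $n_+$ reduces to the resolution of the identity $(2\pi)^{-3}\int|Qe^{ik\cdot}\theta\rangle\langle Qe^{ik\cdot}\theta|\,dk=Q\theta Q=Q$, which uses $Q\theta=Q$ and $(2\pi)^{-3}\int e^{ik(x-y)}\,dk=\delta(x-y)$; taking $d\Gamma$ of both sides then produces the stated integral. The kinetic-energy identity is obtained the same way, with an extra factor of $\tau(k)$ inserted before integration, since $\tau(-i\nabla)$ is by construction the Fourier multiplier defining $\mathcal T'$ sandwiched between $Q\chi_\Lambda$. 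For the potential terms, the Fourier expansion factorizes $w_j(x_1,x_2)$ into a product over the two variables; each slot sandwiched between a $P$ and the potential collapses into an $a_0^{(\dagger)}$, since $P=\ell^{-3}|\theta\rangle\langle\theta|$, while each slot bracketed by a $Q$ produces an $a_k^{(\dagger)}$, since $a_k^\dagger$ is built from $Q(e^{ikx}\chi_\Lambda)$. The one-body expression for $\sum_iQ_if(x_i)\chi_\Lambda(x_i)P_i$ is a direct Fourier lift of the same kind, and the five-operator formula for $\widetilde Q_3^{(1)}$ has the $f_L(s)$ factor because $Q_{L,j}=Qf_L(\sqrt{-\Delta})$ sits on a leg which is \emph{not} multiplied by $w_1$ and therefore produces $\widetilde a_s^\dagger$ (built from $\theta$) rather than $a_s^\dagger$.

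The only real bookkeeping subtlety, and the main source of potential sign or convention errors, is to track whether a given mode function is coupled to $\theta$ or to $\chi_\Lambda$: the condensate operators $a_0,a_0^\dagger$ and the $\widetilde a_k$ are built from $\theta$, whereas the factors of $\chi_\Lambda$ appearing in $w_j$ naturally absorb into the definition of $a_k,a_k^\dagger$. Apart from this distinction, the verification amounts to routine Fourier and Parseval computations, which is presumably why the authors omit it.
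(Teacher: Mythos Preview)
Your proposal is correct and matches the paper's approach exactly: the paper itself omits the proof entirely, stating that ``the proof is a standard calculation and will be omitted,'' and your sketch supplies precisely that standard calculation (Fourier-expand $W_j$, identify the $P$ legs with $a_0,a_0^\dagger$ via $P=\ell^{-3}|\theta\rangle\langle\theta|$, and recognize the $Q\chi_\Lambda e^{ikx}$ and $Q\theta e^{ikx}$ legs as $a_k$ and $\widetilde a_k$ respectively). Your remark that the only real bookkeeping point is the $\theta$-versus-$\chi_\Lambda$ distinction---and hence $\widetilde a_s^\dagger$ rather than $a_s^\dagger$ on the $Q_L$ leg in the last identity---is exactly the content the authors are leaving to the reader.
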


\begin{proposition}\label{prop:Hamilton2ndQuant}
Assume that $\widetilde{\Psi}$ satisfies \eqref{eq:LocalizedNPlus} and \eqref{eq:EnergyBound} and that the parameters satisfy
\eqref{con:KMKellKH}, \eqref{con:eTdK}, and \eqref{con:KellKLd}.
Then, in 2nd quantization the operator ${\mathcal H}_{\Lambda}(\rmu)$ defined in \eqref{eq:Def_HB} satisfies
\begin{align}
\langle \widetilde{\Psi}, {\mathcal H}_{\Lambda}(\rmu) \widetilde{\Psi} \rangle 
\geq \langle \widetilde{\Psi}, {\mathcal H}_{\Lambda}^{\rm 2nd}(\rmu) \widetilde{\Psi} \rangle
- C \rmu^2 a \ell^3   \left( (K_{\ell} K_L)^{1-M} + \frac{R^2}{\ell^2} \right) , 
\end{align}
where
\begin{align}\label{eq:HLambdaSecond}
{\mathcal H}_{\Lambda}^{\rm 2nd} &=
(2\pi)^{-3} \ell^3 \int \tau(k) a_k^{\dagger} a_k\,dk + \frac{b}{2 \ell^2} n_{+} + \varepsilon_T\frac{b}{2 d^2 \ell^2} n_{+}^{H} \nonumber \\
&\quad+ \frac{1}{2} \ell^{-3} a_0^{\dagger} a_0^{\dagger} a_0 a_0 
\Big(\widehat{g}(0) + \widehat{g\omega}(0)\Big)
  - \rmu   \widehat{g}(0) a_0^{\dagger} a_0 \nonumber \\
  &\quad +\Big( (\ell^{-3} a_0^{\dagger} a_0 - \rmu) \widehat{W}_1(0) (2\pi)^{-3} \int \widehat{\chi}_{\Lambda}(k) a_k^{\dagger} a_0 \,dk + h.c. \Big) \nonumber \\
  &\quad +\Big( \ell^{-3} a_0^{\dagger} a_0 \widehat{W \omega}_1(0)(2\pi)^{-3} \int \widehat{\chi}_{\Lambda}(k) a_k^{\dagger} a_0 \,dk + h.c. \Big) \nonumber \\
  &\quad +
  (2\pi)^{-3} \int \left(\widehat{W}_1(k) + \widehat{W_1 \omega}(k)\right) a_0^{\dagger} a_k^{\dagger} a_k a_0 
  + \frac{1}{2} \widehat{W}_1(k) \left( a_0^{\dagger} a_0^{\dagger} a_k a_{-k} + a_k^{\dagger} a_{-k}^{\dagger} a_0 a_0 \right) \,dk \nonumber \\
  & \quad + \Big((\ell^{-3} a_0^{\dagger} a_0 -\rmu) \widehat{W_1}(0) + \ell^{-3} a_0^{\dagger} a_0\widehat{W_1 \omega}(0)\Big) (2\pi)^{-3} \ell^{-3} \int a_k^{\dagger} a_k\,dk  \nonumber\\
  &\quad + \widetilde{Q}_3,
\end{align}
where
\begin{align}\label{eq:3Qtilde}
\widetilde{Q}_3 := \ell^3 (2\pi)^{-6} \iint_{\{ k \in P_H\}}
f_L(s)
\widehat{W}_1(k) (a_0^{\dagger} \widetilde{a}_s^{\dagger}  a_{s-k} a_k
+ a_k^{\dagger}  a^{\dagger}_{s-k} \widetilde{a}_s a_0).
\end{align}
\end{proposition}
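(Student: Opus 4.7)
The plan is to process \eqref{eq:Def_HB} in three layers: the kinetic energy, the decomposed potential, and the localized $3Q$ correction. I would first split the two-body potential using Lemma~\ref{lm:potsplit} and discard $\tfrac34$ of the positive ${\mathcal Q}_4^{\rm ren}$, the reserved $\tfrac14$ feeding into Lemma~\ref{lem:Q3-splitting1} to replace ${\mathcal Q}_3^{\rm ren}$ by the low-momentum truncation $\widetilde Q_3^{(1)}$ at the cost of the gap piece $\tfrac{b}{100}\bigl(\ell^{-2}n_+ + \varepsilon_T(d\ell)^{-2}n_+^H\bigr)$ and an error of size $C\rmu^2 a \ell^3\bigl((K_\ell K_L)^{1-M}+R^2/\ell^2\bigr)$. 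For the kinetic term I would apply Lemma~\ref{lem:LocKinEn}: discard the positive Neumann-Laplacian contribution, second-quantize the ${\mathcal T}'$ piece by Lemma~\ref{lem:2ndQuand} to produce the integral $\int\tau(k)a_k^\dagger a_k\,dk$, and split each gap $b\ell^{-2}n_+$ and $b\varepsilon_T(d\ell)^{-2}n_+^H$ in half, keeping $\tfrac{b}{2}$ of each for \eqref{eq:HLambdaSecond} and reserving the rest as an absorption budget for the step $\widetilde Q_3^{(1)}\to\widetilde Q_3$ below.

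For the block ${\mathcal Q}_0^{\rm ren}+{\mathcal Q}_1^{\rm ren}+{\mathcal Q}_2^{\rm ren}-\rmu\sum_i\int w_1(x_i,y)\,dy$, I would use Lemma~\ref{lem:QDecompSimpler} to express everything in terms of convolutions $W_1*\chi_\Lambda$ and $(W_1\omega)*\chi_\Lambda$, then invoke Lemma~\ref{lem:Convolution} to replace each by its constant approximation $\widehat W_1(0)\chi_\Lambda$ and $\widehat{W_1\omega}(0)\chi_\Lambda$ with a relative error of size $R^2/\ell^2$. Second quantization is carried out term by term by Lemma~\ref{lem:2ndQuand}, and the identity $\widehat W_2=\widehat W_1+\widehat{W_1\omega}$ puts the ${\mathcal Q}_2^{\rm ren}$ kernel into the exact shape of lines~5--6 of \eqref{eq:HLambdaSecond}. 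The trailing errors of type $C(\rmu+\rho_0)a(R/\ell)^2 n_+$ coming from \eqref{eq:Q2n0} and from the convolution replacement are controlled via the a priori bounds $n\leq C\rmu\ell^3$ and $n_+\leq{\mathcal M}\ll\rmu\ell^3$ (using \eqref{con:M<<n} and Theorem~\ref{thm:aprioribounds}), and contribute to the claimed $C\rmu^2 a\ell^3\,R^2/\ell^2$.

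The hard step is the passage from $\widetilde Q_3^{(1)}$, which under Lemma~\ref{lem:2ndQuand} becomes an integral of $f_L(s)\widehat W_1(k)\,a_0^\dagger\widetilde a_s^\dagger a_{s-k}a_k$ over all $(s,k)$, to $\widetilde Q_3$ in which $k$ is restricted to $P_H$. Using $|\widehat W_1(k)|\leq Ca$ and the operator Cauchy--Schwarz
\[
\pm\bigl(a_0^\dagger\widetilde a_s^\dagger a_{s-k}a_k + h.c.\bigr)\leq \varepsilon\, a_0^\dagger\widetilde a_s^\dagger\widetilde a_s a_0 + \varepsilon^{-1}a_k^\dagger a_{s-k}^\dagger a_{s-k}a_k,
\]
the contribution from $k\notin P_H$ is dominated, after $(s,k)$-integration (via $\int\widetilde a_s^\dagger\widetilde a_s\,ds\leq C\ell^{-3}n_+$ and $n_+\leq{\mathcal M}$), by a linear combination of $a\varepsilon(K_H^{-1}a^{-1})^3 n_0 n_+$ and $a\varepsilon^{-1}\ell^{-3}{\mathcal M} n_+$. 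Inserting $K_H=\KH(\rmu a^3)^{-5/12}$, ${\mathcal M}=K_{\mathcal M}(\rmu a^3)^{-1/4}$, $\ell=K_\ell(\rmu a)^{-1/2}$ and balancing $\varepsilon$, both pieces fit inside the reserved $\tfrac{b}{2\ell^2}n_+$ gap precisely when $K_{\mathcal M}K_\ell^4\ll\KH^3$, i.e.\ by condition \eqref{con:KMKellKH}.

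The principal obstacle is exactly this final absorption: it is here that the $5/12$ exponent in $K_H$ and the size of the localization bound ${\mathcal M}$ must be arranged so that the momentum volume $(K_H^{-1}a^{-1})^3$ produced by the $3Q$ cutoff, combined with the count ${\mathcal M}$ on $n_+$, leaves a residual strictly below $\rmu^2 a\ell^3(\rmu a^3)^{1/2}$. Everywhere else the argument is a bookkeeping of the decompositions of Lemmas~\ref{lm:potsplit}--\ref{lem:2ndQuand} together with the convolution estimates of Lemma~\ref{lem:Convolution}, so the careful balance of powers of $\rmu a^3$ in this last step is what makes condition \eqref{con:KMKellKH} (and through it, the choice of $\KH$) the binding constraint for this proposition.
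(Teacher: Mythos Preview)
Your proposal is correct and follows essentially the same route as the paper: decompose via Lemma~\ref{lm:potsplit}, simplify ${\mathcal Q}_0^{\rm ren},{\mathcal Q}_1^{\rm ren},{\mathcal Q}_2^{\rm ren}$ through Lemma~\ref{lem:QDecompSimpler}, handle ${\mathcal Q}_3^{\rm ren}+{\mathcal Q}_4^{\rm ren}$ by Lemma~\ref{lem:Q3-splitting1}, then cut $\widetilde Q_3^{(1)}$ down to $\widetilde Q_3$ by the Cauchy--Schwarz argument (which is exactly Lemma~\ref{lem:Q3-splitting2} in the paper), and second-quantize via Lemma~\ref{lem:2ndQuand}. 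Two small clean-ups: the one-body term $-\rmu\sum_i\int w_1(x_i,y)\,dy$ is already distributed among ${\mathcal Q}_0^{\rm ren},{\mathcal Q}_1^{\rm ren},{\mathcal Q}_2^{\rm ren}$ by \eqref{eq:potsplit}, so it should not be listed separately; and the reference to Lemma~\ref{lem:LocKinEn} is only for the \emph{definition} of ${\mathcal T}$ in \eqref{eq:DefT} (you are not re-applying the sliding inequality here, just dropping the positive Neumann term and splitting the gap pieces).
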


\begin{proof}
Notice that \eqref{eq:apriorinn+} holds, using \eqref{eq:EnergyBound} and Theorem~\ref{thm:aprioribounds}.

We apply Lemma~\ref{lm:potsplit}. 
For the operators ${\mathcal Q}_0^{\rm ren}$ and ${\mathcal Q}_1^{\rm ren}$ we use the simplifications of Lemma~\ref{lem:QDecompSimpler} before making the explicit calculation of their 2nd quantifications.
For ${\mathcal Q}_2^{\rm ren}$ we also use the simplifications of Lemma~\ref{lem:QDecompSimpler}. The error term in \eqref{eq:Q2n0} is absorbed in the gap in the kinetic energy. This uses that $R \ll (\rmu a)^{-1/2}$ and the relation $n \approx \rmu \ell^3$ 
from \eqref{eq:apriorinn+}.

Finally we consider ${\mathcal Q}_3^{\rm ren}$ and ${\mathcal Q}_4^{\rm ren}$.
By Lemma~\ref{lem:Q3-splitting1} and the positivity of $v$ we have the lower bound
\eqref{eq:ReduceQ3}. What remains of ${\mathcal Q}_4^{\rm ren}$ will be discarded for a lower bound. The application of \eqref{eq:ReduceQ3} also costs a bit of the gap in the kinetic energy.
We have left to compare $\widetilde{Q}^{(1)}_3$ with $\widetilde{Q}_3$.
But that is the content of Lemma~\ref{lem:Q3-splitting2} below.
Notice that using \eqref{con:KMKellKH} the error term from \eqref{eq:Q3-1-reduce} can be absorbed in the gap in the kinetic energy.
This finishes the proof of Proposition~\ref{prop:Hamilton2ndQuant}.
\end{proof}

In the above proof we used the following localization of the $3Q$-term.

\begin{lemma}\label{lem:Q3-splitting2}
Assume that $\widetilde{\Psi}$ satisfies \eqref{eq:LocalizedNPlus} and \eqref{eq:EnergyBound}.
Let $\widetilde{Q}^{(1)}_3$ be as defined in Lemma~\ref{lem:Q3-splitting1} and
$\widetilde{Q}_3$ from \eqref{eq:3Qtilde}.

Then,
\begin{align}\label{eq:Q3-1-reduce}
\langle \widetilde{\Psi}, \widetilde{Q}^{(1)}_3 \widetilde{\Psi} \rangle
\geq \langle \widetilde{\Psi}, \widetilde{Q}_3 \widetilde{\Psi} \rangle - C a n \frac{n_{+}}{\ell^3} 
\KH^{-3/2} K_{{\mathcal M}}^{1/2} .
\end{align}
\end{lemma}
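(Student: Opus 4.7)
The plan is to compare the two operators mode-by-mode. Using the second quantization formula for $\sum_{i\neq j}P_iQ_{L,j}w_1(x_i,x_j)Q_jQ_i$ from Lemma~\ref{lem:2ndQuand}, $\widetilde Q_3$ is obtained from $\widetilde Q_3^{(1)}$ by restricting the $k$-integration to $P_H=\{|k|\geq K_H^{-1}a^{-1}\}$. Thus
\begin{align*}
\widetilde Q_3^{(1)}-\widetilde Q_3 \;=\; \ell^3(2\pi)^{-6}\iint_{|k|<K_H^{-1}a^{-1}} f_L(s)\,\widehat W_1(k)\,\bigl(a_0^\dagger\widetilde a_s^\dagger\,a_{s-k}a_k+\text{h.c.}\bigr)\,ds\,dk.
\end{align*}
I will estimate the matrix element pointwise by a Cauchy--Schwarz in the integration variables, pairing $a_0^\dagger\widetilde a_s^\dagger$ with $a_{s-k}a_k$:
\begin{align*}
\bigl|\langle\widetilde\Psi,(\widetilde Q_3^{(1)}-\widetilde Q_3)\widetilde\Psi\rangle\bigr|
\leq \ell^3(2\pi)^{-6}\,\mathcal A_1^{1/2}\,\mathcal A_2^{1/2},
\end{align*}
where $\mathcal A_1=\iint_{|k|<K_H^{-1}a^{-1}} f_L(s)\widehat W_1(k)\,\|a_0\widetilde a_s\widetilde\Psi\|^2\,ds\,dk$ and $\mathcal A_2$ is the analogous quantity with $\|a_{s-k}a_k\widetilde\Psi\|^2$.

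The first factor $\mathcal A_1$ factorizes: since $W_1\geq 0$ gives $\widehat W_1(k)\leq\widehat W_1(0)\leq Ca$, the small volume of the cut-out ball produces $\int_{|k|<K_H^{-1}a^{-1}}\widehat W_1(k)\,dk\leq CK_H^{-3}a^{-2}$; meanwhile $\int f_L(s)\widetilde a_s^\dagger\widetilde a_s\,ds\leq(2\pi)^3\ell^{-3}n_+$ and the fact that $a_0$ commutes with $n_+$ give $\int f_L(s)\langle a_0^\dagger\widetilde a_s^\dagger\widetilde a_sa_0\rangle ds\leq C\ell^{-3}n\langle n_+\rangle$ (using the a priori bound $n_0\leq n$). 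Hence $\mathcal A_1\leq CK_H^{-3}a^{-2}\ell^{-3}n\langle n_+\rangle$. For $\mathcal A_2$, a shift $u=s-k$ yields $\int f_L(s)a_{s-k}^\dagger a_{s-k}\,ds\leq\int a_u^\dagger a_u\,du\leq C\ell^{-3}n_+$ uniformly in $k$, the last step being a Parseval argument on the single-particle operator $\int|Q(e^{iux}\chi_\Lambda)\rangle\langle Q(e^{iux}\chi_\Lambda)|\,du$. Since $[n_+,a_k]=-a_k$, on the support of $\widetilde\Psi$ we get $a_k^\dagger n_+a_k=(n_+-1)a_k^\dagger a_k\leq\mathcal M\,a_k^\dagger a_k$ by \eqref{eq:LocalizedNPlus}. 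Combined with $\int_{|k|<K_H^{-1}a^{-1}}\widehat W_1(k)a_k^\dagger a_k\,dk\leq Ca\ell^{-3}n_+$ (again via Parseval), this gives $\mathcal A_2\leq Ca\mathcal M\ell^{-6}\langle n_+\rangle$.

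Multiplying the square roots and including the prefactor $\ell^3(2\pi)^{-6}$ produces the bound $CK_H^{-3/2}a^{-1/2}\sqrt{n\mathcal M}\,\langle n_+\rangle\,\ell^{-3/2}$. Substituting $K_H^{-1}=\KH^{-1}(\rmu a^3)^{5/12}a^{-1}$ and $\mathcal M=K_{\mathcal M}(\rmu a^3)^{-1/4}$, together with the a priori relation $n\sim\rmu\ell^3$ from Theorem~\ref{thm:aprioribounds}, one checks that this simplifies precisely to $Can\ell^{-3}\langle n_+\rangle\,\KH^{-3/2}K_{\mathcal M}^{1/2}$, as claimed. The main obstacle is arranging the Cauchy--Schwarz so that the two square roots balance: the peculiar exponent $5/12$ in the definition of $P_H$ is chosen exactly so that the factor $K_H^{-3/2}\sim\KH^{-3/2}(\rmu a^3)^{5/8}$ from half of the volume bound combines with $\sqrt{\mathcal M}\sim K_{\mathcal M}^{1/2}(\rmu a^3)^{-1/8}$ to yield the clean exponent $\KH^{-3/2}K_{\mathcal M}^{1/2}$; a different split would either not be absorbable by the kinetic-energy gap (cf.\ the use of \eqref{con:KMKellKH} immediately afterwards) or would waste the $n_+\leq\mathcal M$ restriction.
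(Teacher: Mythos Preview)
Your argument is correct and follows essentially the same route as the paper. Both proofs estimate the contribution of the region $\{|k|<K_H^{-1}a^{-1}\}$ by a Cauchy--Schwarz pairing of $a_0\widetilde a_s$ against $a_{s-k}a_k$; the paper does this as an operator inequality with a free parameter $\varepsilon$ that is then optimised to $\varepsilon=\big(\mathcal M K_H^3 a^3/(n\ell^3)\big)^{1/2}$, whereas you apply Cauchy--Schwarz directly in the $(s,k)$-integration, which automatically produces the geometric mean. The two are equivalent up to constants.

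One cosmetic point: in your definitions of $\mathcal A_1,\mathcal A_2$ you should write $|\widehat W_1(k)|$ rather than $\widehat W_1(k)$, since $\widehat W_1$ need not be nonnegative (the paper in fact remarks on this). This is harmless for the estimate because you immediately bound $|\widehat W_1(k)|\leq \widehat W_1(0)\leq Ca$ anyway. Your Parseval step $\int a_u^\dagger a_u\,du=(2\pi)^3\ell^{-3}\sum_j Q_j\chi_\Lambda^2 Q_j\leq C\ell^{-3}n_+$ is correct and slightly more explicit than the paper, which suppresses this computation.
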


\begin{proof}
Notice that \eqref{eq:apriorinn+} holds, using \eqref{eq:EnergyBound} and Theorem~\ref{thm:aprioribounds}.

In second quantization we have
\begin{align}
\widetilde{Q}^{(1)}_3
= 
\ell^3 (2\pi)^{-6} \iint
f_L(s)
\widehat{W}_1(k) (a_0^{\dagger} \widetilde{a}_s^{\dagger}  a_{s-k} a_k
+ a_k^{\dagger}  a^{\dagger}_{s-k} \widetilde{a}_s a_0) \,dk\,ds,
\end{align}
so we have to estimate the part of the integral where $k \notin P_H$.
Let $\varepsilon >0$. Then,
\begin{align}\label{eq:EstimQ3-reduce}
\ell^3 (2\pi)^{-6} &\iint_{\{ |k| \leq K_H^{-1} a^{-1} \}}
f_L(s)
\widehat{W}_1(k) (a_0^{\dagger} \widetilde{a}_s^{\dagger}  a_{s-k} a_k
+ a_k^{\dagger}  a^{\dagger}_{s-k} \widetilde{a}_s a_0) \nonumber \\
&\geq
- Ca \ell^3 (2\pi)^{-6} \iint_{\{ |k| \leq K_H^{-1} a^{-1} \}}
f_L(s) \left(
\varepsilon  \widetilde{a}_s^{\dagger} a_0^{\dagger} a_0 \widetilde{a}_s  +
\varepsilon^{-1} a_k^{\dagger} a_{s-k}^{\dagger} a_{s-k} a_k
\right) \nonumber \\
&\geq
- C a n \frac{n_{+}}{\ell^3} \left( \varepsilon \ell^3 (K_H a)^{-3} + \varepsilon^{-1} \frac{{\mathcal M}}{n}\right).
\end{align}
Notice that we have not assumed that $\widehat{W}_1(k)$ has a sign and that the Cauchy-Schwarz inequality in \eqref{eq:EstimQ3-reduce} is valid for $\widehat{W}_1(k)$ of variable sign.

We choose $\varepsilon = \left( \frac{{\mathcal M} K_H^3 a^3}{n \ell^3} \right)^{1/2}$.
Using the relation $n \approx \rmu \ell^3$ 
from \eqref{eq:apriorinn+}
the error term in $(\cdot)$ becomes of magnitude
$
\sqrt{\frac{{\mathcal M}}{\rmu a^3 K_H^3}} = \KH^{-3/2} K_{{\mathcal M}}^{1/2} 
$.
\end{proof}

It will also be useful to notice the following representation in terms of the operators $\widetilde{a}_k$.
\begin{lemma}
We have the identities
\begin{align}
\left((2\pi)^{-6} \ell^6 \iint \widetilde{a}_k^{\dagger} \widehat{\chi^2}\left((k-k')\ell\right) \widetilde{a}_{k'}
\right)_{N} = \sum_{j=1}^N Q_j \chi_{\Lambda}^2(x_j) Q_j,
\end{align}
and
\begin{align}\label{eq:RepInTermsOfatilde}
\left((2\pi)^{-6} \ell^6 \iint  f_L(k)f_L(k')\widetilde{a}_k^{\dagger} \widehat{\chi^2}\left((k-k')\ell\right) \widetilde{a}_{k'}
\right)_{N} = \sum_{j=1}^N Q_{L,j} \chi_{\Lambda}^2(x_j) Q_{L,j}.
\end{align}

\end{lemma}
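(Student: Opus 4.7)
The plan is to treat both identities as a standard computation converting a second-quantized bilinear in $\widetilde{a},\widetilde{a}^\dagger$ into the first-quantized form of a one-body operator acting on the $N$-particle sector. The two identities differ only by the insertion of the momentum cutoffs $f_L(k)f_L(k')$, so I would give the argument for the unlocalized case and then indicate the obvious modification for the localized one.

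First, I would use the standard identity $a^\dagger(f)\,a(g)|_{N}=\sum_{j=1}^N|f\rangle\langle g|^{(j)}$ on the symmetric $N$-particle sector together with $\widetilde{a}_k=\ell^{-3/2}a(Q(e^{ikx}\theta))$ to rewrite the LHS as $\sum_{j=1}^N O^{(j)}$, where
\[
O=(2\pi)^{-6}\ell^{3}\iint \widehat{\chi^{2}}((k-k')\ell)\,|Q(e^{ikx}\theta)\rangle\langle Q(e^{ik'x}\theta)|\,dk\,dk'.
\]
It then suffices to identify $O$ with $Q\chi_\Lambda^{2}Q$ (respectively with the Hermitian sandwich $Q_L\chi_\Lambda^{2}Q_L^{*}$ in the cutoff version). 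Note that the Hermiticity of the LHS forces the RHS to be interpreted as this symmetric sandwich even though the statement writes $Q_{L,j}$ on both sides.

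Next I would compute the integral kernel of $O$ by rewriting $\widehat{\chi^{2}}((k-k')\ell)=\ell^{-3}\int\chi_\Lambda^{2}(z)e^{-i(k-k')z}\,dz$, inserting this representation, and exchanging the order of integration so that the $k$ and $k'$ integrals decouple via the factors $e^{-ikz}$ and $e^{ik'z}$. The key elementary calculation is
\[
\int e^{-ikz}\,Q(e^{ik\cdot}\theta)(x)\,dk=(2\pi)^{3}(Q\delta_z)(x)=(2\pi)^{3}Q(x,z),
\]
obtained by splitting $Q(e^{ik\cdot}\theta)=\theta\,e^{ik\cdot}-\ell^{-3}\langle\theta,e^{ik\cdot}\theta\rangle\theta$ and recognising both resulting $k$-integrals as Dirac delta functions. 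Putting the two halves together yields
\[
O(x,y)=\int Q(x,z)\,\chi_\Lambda^{2}(z)\,\overline{Q(y,z)}\,dz,
\]
which equals $(Q\chi_\Lambda^{2}Q)(x,y)$ since $Q$ has real self-adjoint kernel so that $\overline{Q(y,z)}=Q(y,z)=Q(z,y)$. This proves the first identity.

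For the second identity I would repeat the argument with the extra factors $f_L(k)f_L(k')$ in the integrand. The only change is that the one-variable calculation above becomes $\int f_L(k)e^{-ikz}Q(e^{ik\cdot}\theta)(x)\,dk=(2\pi)^{3}Qf_L(\sqrt{-\Delta})\delta_z(x)=(2\pi)^{3}Q_L(x,z)$, because $f_L(\sqrt{-\Delta})\delta_z$ is the kernel of $f_L(\sqrt{-\Delta})$ evaluated at $(\cdot,z)$. The kernel of $O_L$ is then $\int Q_L(x,z)\chi_\Lambda^{2}(z)Q_L(y,z)\,dz$; using that $Q_L$ has real (but non-symmetric) kernel, i.e.\ $Q_L(y,z)=Q_L^{*}(z,y)$, this matches the kernel of $Q_L\chi_\Lambda^{2}Q_L^{*}$. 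There is no conceptual obstacle in the proof; the only care needed is in bookkeeping the $(2\pi)$-factors, powers of $\ell$, and the antilinearity of $a(\cdot)$ when passing through adjoints.
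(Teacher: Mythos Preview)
Your argument is correct and is exactly the standard computation one would expect; the paper itself states the lemma without proof, treating it as a routine second-quantization identity (cf.\ the remark preceding Lemma~\ref{lem:2ndQuand} that ``the proof is a standard calculation and will be omitted''). Your bookkeeping of the $(2\pi)$ and $\ell$ powers is right, and the key step $\int e^{-ikz}\,Q(e^{ik\cdot}\theta)(x)\,dk=(2\pi)^3 Q(x,z)$ is a clean way to decouple the $k,k'$ integrals.

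Your observation about the second identity is also correct and worth recording: the computation genuinely produces $Q_L\,\chi_\Lambda^2\,Q_L^{*}$ rather than $Q_L\,\chi_\Lambda^2\,Q_L$, since $Q_L=Qf_L(\sqrt{-\Delta})$ is not self-adjoint. The paper's notation here is slightly imprecise; the Hermiticity of the left-hand side and the subsequent use of the identity (where the expression is compared with $\sum_j Q_j\chi_\Lambda^2 Q_j$ and then with ${\mathcal Q}_2^{\rm ex}$) make clear that $Q_L\,\chi_\Lambda^2\,Q_L^{*}$ is the intended object.
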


\subsection{$c$-number substitution}
It is convenient to apply the technique of $c$-number substitution as described in \cite{LSYc}.

Let $\Psi \in {\mathcal F}(L^2(\Lambda))$.
We can think of $L^2(\Lambda) = \Ran(P) \oplus \Ran(Q)$, with $\Ran(P)$ being, of course, spanned by the constant vector $\theta$ (defined in \eqref{eq:Theta}).
This leads to the splitting ${\mathcal F}(L^2(\Lambda)) = {\mathcal F}(\Ran(P)) \otimes {\mathcal F}(\Ran(Q))$. We let $\Omega$ denote the vacuum vector in ${\mathcal F}(L^2(\Lambda))$.

For $z \in {\mathbb C}$ we define 
\begin{align}
|z \rangle:= \exp(-\frac{|z|^2}{2} - z a_0^{\dagger}) \Omega.
\end{align}
Given $z$ and $\Psi$ we can define 
\begin{align}
\Phi(z) := \langle z | \Psi \rangle \in {\mathcal F}(\Ran(Q)),
\end{align}
where the inner product is 
considered as a partial inner product induced by the representation ${\mathcal F}(L^2(\Lambda)) = {\mathcal F}(\Ran(P)) \otimes {\mathcal F}(\Ran(Q))$.

It is a simple calculation that
\begin{align}
1 = \int_{{\mathbb C}} |z \rangle \langle z | \,d^2z, \qquad 
\text{ and } \qquad 
a_0  |z \rangle = z  |z \rangle.
\end{align}

\begin{theorem}\label{thm:Kafz}
Define
\begin{align}
\rho_z := |z|^2 \ell^{-3},
\end{align}
and
\begin{align}\label{eq:Kz}
{\mathcal K}(z) 
&=
 \frac{b}{2 \ell^2} n_{+} + \varepsilon_T\frac{b}{2 d^2 \ell^2} n_{+}^{H} 
 +  
 \frac{1}{2} \rho_z^2 \ell^{3} 
\Big(\widehat{g}(0) + \widehat{g\omega}(0)\Big)
  - \rmu   \widehat{g}(0) \rho_z \ell^3 \nonumber \\
  &\quad +
  (2\pi)^{-3} \ell^3 \int \left( \tau(k) + \rho_z \widehat{W}_1(k) \right) a_k^{\dagger} a_k
  + \frac{1}{2} \rho_z \widehat{W}_1(k) \left(  a_k a_{-k} + a_k^{\dagger} a_{-k}^{\dagger}  \right) \,dk \nonumber \\
  & \quad + (\rho_z -\rmu) \widehat{W_1}(0)  (2\pi)^{-3} \ell^{3} \int a_k^{\dagger} a_k\,dk  \nonumber\\
  &\quad + {\mathcal Q}_1(z)+  {\mathcal Q}_1^{\rm ex}(z)+
  {\mathcal Q}_2^{\rm ex}(z) + {\mathcal Q}_3(z),
\end{align}
with
\begin{align}\label{eq:DefQ1z}
 {\mathcal Q}_1(z)&:= 
  \Big( (\rho_z - \rmu) \widehat{W}_1(0) (2\pi)^{-3} \int \widehat{\chi}_{\Lambda}(k) a_k^{\dagger} z \,dk + h.c. \Big),  \\
  \label{eq:DefQ1zex}
 {\mathcal Q}_1^{\rm ex}(z) &:= \Big( \rho_z \widehat{W_1 \omega}(0)(2\pi)^{-3} \int \widehat{\chi}_{\Lambda}(k) a_k^{\dagger} z \,dk + h.c. \Big),\\
  \label{eq:kroelleQ3z}
 {\mathcal Q}_3(z)&:= \ell^3 (2\pi)^{-6} z \iint_{\{ k \in P_H\}}
f_L(s)
\widehat{W}_1(k) ( \widetilde{a}_s^{\dagger}  a_{s-k} a_k
+ a_k^{\dagger}  a^{\dagger}_{s-k} \widetilde{a}_s),
\end{align}
and
\begin{align}
{\mathcal Q}_2^{\rm ex} = {\mathcal Q}_2^{\rm ex}(z) :=
(2\pi)^{-3} \rho_z \ell^3  \int \big(\widehat{W_1 \omega}(k) + \widehat{W_1 \omega}(0)\big)  a_k^{\dagger} a_k .
\end{align}

Assume that $\widetilde{\Psi}$ satisfies \eqref{eq:LocalizedNPlus}.

Then,
\begin{align}\label{eq:IntroK(z)}
\langle \widetilde{\Psi}, {\mathcal H}_{\Lambda}^{\rm 2nd}(\rmu) \widetilde{\Psi} \rangle
\geq
\inf_{z \in {\mathbb R}_{+} } \inf_{\Phi} \langle \Phi, {\mathcal K(z)} \Phi \rangle
- C \rmu a,
\end{align}
where the second infimum is over all normalized $\Phi  \in {\mathcal F}(\Ran(Q))$ with 
\begin{align}\label{eq:LocalizedAftercNumber}
\Phi =  1_{[0,{\mathcal M}]}(n_{+}) \Phi.
\end{align}
\end{theorem}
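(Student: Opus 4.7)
The plan is to apply the $c$-number substitution of Lieb, Seiringer and Yngvason \cite{LSYc}. On the factor ${\mathcal F}(\Ran P)$ of the Fock space we use the coherent-state resolution of identity $\int_{\C}|z\rangle\langle z|\,d^{2}z/\pi=1$ to write $\widetilde\Psi=\int|z\rangle\otimes\Phi(z)\,d^{2}z/\pi$ with $\Phi(z):=\langle z|\widetilde\Psi\rangle\in{\mathcal F}(\Ran Q)$ and $\|\widetilde\Psi\|^{2}=\int\|\Phi(z)\|^{2}\,d^{2}z/\pi=1$. Since ${\mathcal H}_{\Lambda}^{\rm 2nd}$ is Wick-ordered in $a_{0},a_{0}^{\dagger}$, the naive substitution $a_{0}\to z$, $a_{0}^{\dagger}\to\bar z$ in \eqref{eq:HLambdaSecond} defines an operator ${\mathcal K}_{\rm naive}(z)$ on ${\mathcal F}(\Ran Q)$ which will coincide with ${\mathcal K}(z)$ of \eqref{eq:Kz} after the reduction to $z\in\R_{+}$ described below.

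The core identity, obtained from $a_{0}|z\rangle=z|z\rangle$ and a direct computation in occupation-number states, is
$$\int\bar z^{m}z^{n}\,\langle\Phi(z),A\Phi(z)\rangle\,\frac{d^{2}z}{\pi}=\langle\widetilde\Psi,\,a_{0}^{n}(a_{0}^{\dagger})^{m}A\,\widetilde\Psi\rangle,$$
valid for any operator $A$ acting on ${\mathcal F}(\Ran Q)$. Normal-ordering $a_{0}^{n}(a_{0}^{\dagger})^{m}=(a_{0}^{\dagger})^{m}a_{0}^{n}+\text{lower-degree monomials}$ yields
$$\langle\widetilde\Psi,{\mathcal H}_{\Lambda}^{\rm 2nd}\widetilde\Psi\rangle=\int\langle\Phi(z),{\mathcal K}_{\rm naive}(z)\Phi(z)\rangle\,\frac{d^{2}z}{\pi}+{\mathcal E},$$
and the task is to show $|{\mathcal E}|\leq C\rmu a$. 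The correction vanishes identically whenever $\min(m,n)=0$, so only the quartic $(a_{0}^{\dagger})^{2}a_{0}^{2}$, the diagonal $a_{0}^{\dagger}a_{0}$ pieces (including those dressed with $a_{k}^{\dagger}a_{k}$), and the cubic $a_{0}^{\dagger}a_{0}^{2}$/$(a_{0}^{\dagger})^{2}a_{0}$ pieces coming from $\widetilde Q_{3}$ and the $Q_{1}$/$Q_{1}^{\rm ex}$ expressions produce any error. For each such monomial the a priori bounds of Theorem~\ref{thm:aprioribounds} suffice: the quartic contribution is of order $\ell^{-3}a\langle n_{0}+1\rangle\lesssim\ell^{-3}a\cdot\rmu\ell^{3}=\rmu a$ using $n\lesssim\rmu\ell^{3}$; the $(\widehat W_{1}(k)+\widehat{W_{1}\omega}(k))a_{0}^{\dagger}a_{k}^{\dagger}a_{k}a_{0}$ pieces contribute $(2\pi)^{-3}\int(\widehat W_{1}+\widehat{W_{1}\omega})\langle a_{k}^{\dagger}a_{k}\rangle\,dk\lesssim a\cdot n_{+}/\ell^{3}\lesssim a{\mathcal M}/\ell^{3}\lesssim\rmu a$ by $n_{+}\leq{\mathcal M}$ and \eqref{con:M<<n}; and the cubic pieces are bounded by Cauchy--Schwarz combined with the same a priori estimates. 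Summing yields $|{\mathcal E}|\leq C\rmu a$.

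The final step reduces ${\mathcal K}_{\rm naive}(z)$ to ${\mathcal K}(z)$ with $z\in\R_{+}$ via $U(1)$ gauge invariance. Conjugation by $U_{\theta}:=e^{i\theta n_{+}}$ acts on ${\mathcal F}(\Ran Q)$ as $U_{\theta}a_{k}U_{\theta}^{-1}=e^{-i\theta}a_{k}$ and likewise for $\widetilde a_{k}$, so the charge-neutral ${\mathcal K}_{\rm naive}(z)$ satisfies $U_{\theta}\,{\mathcal K}_{\rm naive}(z)\,U_{\theta}^{-1}={\mathcal K}_{\rm naive}(e^{i\theta}z)$. Choosing $\theta=-\arg z$ and replacing $\Phi(z)$ by $U_{-\arg z}\Phi(z)$ preserves both the norm and the cut-off $1_{[0,{\mathcal M}]}(n_{+})$; for the new argument $|z|\in\R_{+}$ one has $\bar z=z$, and inspection of \eqref{eq:HLambdaSecond} shows ${\mathcal K}_{\rm naive}(|z|)={\mathcal K}(|z|)$ as defined in \eqref{eq:Kz}. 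The constraint $\Phi(z)=1_{[0,{\mathcal M}]}(n_{+})\Phi(z)$ is inherited from $\widetilde\Psi=1_{[0,{\mathcal M}]}(n_{+})\widetilde\Psi$ because $1_{[0,{\mathcal M}]}(n_{+})$ acts only on ${\mathcal F}(\Ran Q)$ and therefore commutes with the partial inner product $\langle z|\cdot\rangle$. Applying the pointwise lower bound $\langle\Phi(z),{\mathcal K}(|z|)\Phi(z)\rangle\geq\|\Phi(z)\|^{2}\,\inf_{r\in\R_{+}}\inf_{\Phi'}\langle\Phi',{\mathcal K}(r)\Phi'\rangle$ over normalized $\Phi'$ satisfying $\Phi'=1_{[0,{\mathcal M}]}(n_{+})\Phi'$, and integrating, produces \eqref{eq:IntroK(z)}. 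The main obstacle is the term-by-term verification that every commutator correction indeed stays within the $C\rmu a$ budget; this is where the a priori bounds of Theorem~\ref{thm:aprioribounds} and the localization $n_{+}\leq{\mathcal M}$ with ${\mathcal M}\ll\rmu\ell^{3}$ ensured by \eqref{con:M<<n} are used essentially.
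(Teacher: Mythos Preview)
Your proposal is correct and follows essentially the same route as the paper: coherent-state resolution of identity on the condensate mode, identification of the anti-Wick (upper) symbol, term-by-term control of the normal-ordering corrections by the a~priori bounds of Theorem~\ref{thm:aprioribounds} together with $n_{+}\le{\mathcal M}\ll\rmu\ell^{3}$, and a final phase rotation to reduce to $z\in\R_{+}$.

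One small slip: you list $\widetilde Q_{3}$ among the sources of cubic-in-$a_{0}$ corrections, but in \eqref{eq:3Qtilde} the condensate operators appear only linearly ($a_{0}^{\dagger}$ or $a_{0}$ once), so $\min(m,n)=0$ and $\widetilde Q_{3}$ contributes no commutator error at all. The genuine cubic pieces come only from the $Q_{1}$/$Q_{1}^{\rm ex}$ lines of \eqref{eq:HLambdaSecond}, where $\ell^{-3}a_{0}^{\dagger}a_{0}$ multiplies $\int\widehat{\chi}_{\Lambda}(k)a_{k}^{\dagger}a_{0}\,dk$; the paper handles these exactly as you sketch, by Cauchy--Schwarz and the bound $\sqrt{n_{0}n_{+}}\lesssim\sqrt{\rmu\ell^{3}\,{\mathcal M}}\ll\rmu\ell^{3}$. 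This is a harmless over-inclusion, not a gap.
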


\begin{proof}
Notice that \eqref{eq:apriorinn+} holds, using \eqref{eq:EnergyBound} and Theorem~\ref{thm:aprioribounds}.

We define $\widetilde{\mathcal K}(z)$ to be the operator ${\mathcal H}_{\Lambda}^{\rm 2nd}$ defined in \eqref{eq:HLambdaSecond} above, but where the following substitutions have been performed:
\begin{align}\label{eq:subst-z}
&a_0^{\dagger} a_0^{\dagger} a_0 a_0 \mapsto |z|^4 - 4 |z|^2 + 2, \nonumber \\
&a_0^{\dagger} a_0 a_0   \mapsto |z|^2 z - 2 z, \qquad
a_0 a_0^{\dagger} a_0^{\dagger}    \mapsto |z|^2 \overline{z} , \nonumber \\
&a_0 a_0 \mapsto z^2,\qquad a_0^{\dagger} a_0^{\dagger} \mapsto \overline{z}^2,\qquad
a_0^{\dagger} a_0 \mapsto |z|^2 - 1, \nonumber \\
&a_0 \mapsto z,\qquad a_0^{\dagger} \mapsto \overline{z}.
\end{align}
Then, we will prove that
\begin{align}\label{eq:Csubst}
\langle \widetilde{\Psi}, {\mathcal H}_{\Lambda}^{\rm 2nd} \widetilde{\Psi} \rangle 
&= \Re \int \langle \Phi(z), \widetilde{\mathcal K}(z)   \Phi(z) \rangle \,d^2z 
=  \Re \int \langle \widetilde{\Phi}(z), \widetilde{\mathcal K}(z)   \widetilde{\Phi}(z) \rangle n^2(z) \,d^2z,
\end{align}
where $n(z) = \| \Phi(z) \|_{{\mathcal F}(\Ran(Q))}$ and $ \widetilde{\Phi}(z)  =  \Phi(z)/n(z)$.

To obtain \eqref{eq:Csubst}  we write all polynomials in $a_0, a_0^{\dagger}$ in anti-Wick ordering, for example $a_0^{\dagger} a_0 = a_0 a_0^{\dagger} - 1$. Therefore,
\begin{align}
\langle \Psi, a_0^{\dagger} a_0 \Psi \rangle = \int \langle a_0^{\dagger} \Psi |z \rangle \langle z | a_{0}^{\dagger} \Psi \rangle - \langle \Psi| z \rangle \langle z | \Psi \rangle \,d^2 z = 
\int (|z|^2-1) \langle \Phi(z) | \Phi(z) \rangle \,d^2 z .
\end{align}
Performing this type of calculation for each term in $ {\mathcal H}_{\Lambda}^{\rm 2nd}$ yields \eqref{eq:Csubst}.

Suppose that $\widetilde{\Psi} \in {\mathcal F}_{\rm s}(L^2(B))$ is such that 
\begin{align}
\widetilde{\Psi} = 1_{[0,{\mathcal M}]}(n_{+}) \widetilde{\Psi}.
\end{align}
Notice that this relation only involves the part of $ \widetilde{\Psi} \in {\mathcal F}(\Ran(Q))$.
Therefore, we also have for all $z \in {\mathbb C}$,
\begin{align}\label{eq:ControlNPlusRemains}
\widetilde{\Phi}(z)  = 1_{[0,{\mathcal M}]}(n_{+}) \widetilde{\Phi}(z).
\end{align}
with $\widetilde{\Phi}(z) = \langle z | \widetilde{\Psi} \rangle$ as above.

The next step of the proof is to remove the lower order terms coming from the substitutions in \eqref{eq:subst-z} above.

Let us first consider the negative term $-4|z|^2$ in the substitution of $a_0^{\dagger} a_0^{\dagger} a_0 a_0$.
By undoing the integrations leading to $\widetilde{\mathcal K}(z)$ for this term, we see that it contributes with
\begin{align}\label{eq:ErrorInZ4} 
\int \langle \Phi(z), 
-4  \frac{1}{2} |z|^2 \ell^{-3} 
\Big(\widehat{g}(0) + \widehat{g\omega}(0)\Big)\Phi(z) \rangle\,d^2z 
&\geq
-C a \ell^{-3} \langle \widetilde{\Psi}, a_0 a_0^{\dagger} \widetilde{\Psi} \rangle  \nonumber \\
&\geq -C a \ell^{-3} (n+1),
\end{align}
in agreement with the error term in \eqref{eq:IntroK(z)} (using that $n \approx \rmu \ell^3 \gg 1$).

We also estimate the term linear in $z$ coming from the substitution of $a_0^{\dagger} a_0 a_0$ in \eqref{eq:subst-z}. This substitution occurs twice, but we will only explicitly treat one of them, namely the term
\begin{align}
& \Re \int \Big\langle \Phi(z), -2 \ell^{-3} \widehat{W}_1(0) (2\pi)^{-3} \int \widehat{\chi}_{\Lambda}(k) a_k^{\dagger} z \,dk \Phi(z) \Big\rangle d^2 z \nonumber \\
&=
-2 \ell^{-3} \widehat{W}_1(0) (2\pi)^{-3}
\int \Big\langle \Phi(z),  \int \widehat{\chi}_{\Lambda}(k) (a_k^{\dagger} z + a_k \overline{z}) \,dk \Phi(z) \Big\rangle d^2 z \nonumber \\
&\geq -C a \ell^{-3} 
\int \Big\langle \Phi(z),  \int |\widehat{\chi}_{\Lambda}(k)| ( \varepsilon a_k^{\dagger} a_k + \varepsilon^{-1} |z|^2) \,dk \Phi(z) \Big\rangle d^2 z,
\end{align}
where $\varepsilon >0$ will be chosen in the end.
Notice that $|\widehat{\chi}_{\Lambda}(k)| = \ell^{3} |\widehat{\chi}(k\ell)|$ and that $\widehat{\chi} \in L^1({\mathbb R}^3)$ for $M \geq 4$.
Redoing the calculation in \eqref{eq:ErrorInZ4} we therefore find with $\varepsilon = \sqrt{ \langle \widetilde{\Psi}, n_{+} \widetilde{\Psi}\rangle }/\sqrt{n+1}$,
\begin{align}
\Re \int \Big\langle \Phi(z), -2 \ell^{-3} \widehat{W}_1(0) (2\pi)^{-3} \int \widehat{\chi}_{\Lambda}(k) a_k^{\dagger} z \,dk \Phi(z) \Big\rangle d^2 z
\geq -C a \ell^{-3} \sqrt{n+1} \sqrt{ \langle \widetilde{\Psi}, n_{+} \widetilde{\Psi}\rangle }.
\end{align}
This is also easily absorbed in the error term in \eqref{eq:IntroK(z)}.

The other error terms from the substitutions are \eqref{eq:subst-z} estimated in a similar manner and we will leave out the details.

Finally, we need to restrict to non-negative $z$.
Suppose $z = |z| e^{i\phi}$.
In the operator we can replace $a_{\pm k}$ by $e^{i\phi} a_{\pm k}$. In this way all occurences of $z$ will be replaced by $|z|$. 
Notice that this substitution will not affect the commutation relations. 
This finishes the proof.
\end{proof}

\section{First energy bounds}
\label{sec:rough}
In this section we will make a rough estimate on the energy. 
This rough estimate will be used to eliminate the values of $\rho_z$ that are far away from $\rmu$.

\begin{lemma}\label{lem:Roughbounds}
For any state $\Phi$ satisfying \eqref{eq:LocalizedAftercNumber} and assuming that ${\mathcal M} \leq C^{-1} \rmu \ell^3$ for some sufficiently large constant $C$, we have the bound

\begin{align}
\label{eq:RoughBelow}
\langle \Phi, {\mathcal K}(z) \Phi \rangle &\geq - \frac{\widehat{g}(0)}{2} \rmu^2 \ell^3 + \frac{\widehat{g}(0)}{2} (\rmu - \rho_z)^2 \ell^3 - a (\rho_z + \rmu)^{3/2} \rmu^{1/2} \ell^3 \delta_1 - \rho_z^2 a \ell^3 \delta_2
\nonumber  \\
&\quad 
 - C \rmu^2 a \ell^3 \frac{\rmu a^3}{K_{\ell}^6 (d s)^6} .
\end{align}
with 
\begin{align}
\delta_1 &:= C 
 \sqrt{\frac{{\mathcal M}}{\rmu \ell^3}}\Big(K_L^3 \KH^2 (\rmu a^3)^{2/3} {\mathcal M}
+
 K_L^3 K_{\ell}^3
\Big) , \nonumber \\
\delta_2 &:= C  \left(\frac{R^2}{\ell^2} + \frac{a}{d s \ell} \big(1 + \log\big( \frac{d s \ell}{a}\big)\big)
 \right).
\end{align}
\end{lemma}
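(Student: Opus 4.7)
The plan is to split ${\mathcal K}(z)$ into (i) a c-number constant, (ii) a Bogolubov-type quadratic form in $a_k,a_k^{\dagger}$, (iii) the linear/cubic perturbations $\mathcal Q_1(z)+\mathcal Q_1^{\mathrm{ex}}(z)+\mathcal Q_3(z)$, and (iv) manifestly positive operators (the $\tau$-kinetic piece, $\mathcal Q_2^{\mathrm{ex}}(z)\geq 0$, and the gap terms), and to combine (i) with the ground-state energy of (ii) to produce the two main terms of \eqref{eq:RoughBelow}. The perturbations in (iii) are then controlled via Cauchy--Schwarz against the reservoirs in (iv), using the a priori $n_+\leq\mathcal M$ from \eqref{eq:LocalizedAftercNumber}.

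For the c-number part, direct expansion gives
\begin{equation*}
\tfrac12\rho_z^2\ell^3(\widehat g(0)+\widehat{g\omega}(0))-\rmu\widehat g(0)\rho_z\ell^3
=-\tfrac12\widehat g(0)\rmu^2\ell^3+\tfrac12\widehat g(0)(\rmu-\rho_z)^2\ell^3+\tfrac12\widehat{g\omega}(0)\rho_z^2\ell^3,
\end{equation*}
where the last summand will be cancelled by the Bogolubov zero-point energy. Setting $A(k):=\tau(k)+\rho_z\widehat W_1(k)$, $B(k):=\rho_z\widehat W_1(k)$, and $\gamma(k):=\tfrac12(A+\sqrt{A^2-B^2})$, so that $A=\gamma+B^2/(4\gamma)$, the manifest positivity $(2\pi)^{-3}\ell^3\int\gamma^{-1}(\gamma a_k+\tfrac B2 a_{-k}^{\dagger})^{*}(\gamma a_k+\tfrac B2 a_{-k}^{\dagger})\,dk\geq 0$ expands to yield, after using $[a_{-k},a_{-k}^{\dagger}]\leq 1$ from \eqref{eq:Commutator},
\begin{equation*}
(2\pi)^{-3}\ell^3\int\bigl[A a_k^{\dagger}a_k+\tfrac{B}{2}(a_k a_{-k}+a_k^{\dagger}a_{-k}^{\dagger})\bigr]\,dk\geq -\tfrac12(2\pi)^{-3}\ell^3\int(A-\sqrt{A^2-B^2})\,dk.
\end{equation*}
On $|k|\geq(ds\ell)^{-1}$ the approximation $\tau(k)\approx k^2$ combined with $A-\sqrt{A^2-B^2}\approx B^2/(2k^2)$ and \eqref{eq:I2-integral}--\eqref{eq:W1-g} evaluates this constant as $-\tfrac12\rho_z^2\widehat{g\omega}(0)\ell^3+O(\rho_z^2 a\ell^3\cdot R^2/\ell^2)$, cancelling the leftover c-number piece and producing the $R^2/\ell^2$ part of $\delta_2$. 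On $|k|<(ds\ell)^{-1}$ the integrand is bounded by $B\leq C\rho_z a$; the delicate step is the neighborhood of $|k|=(2s\ell)^{-1}$ where $\tau$ vanishes, which, after borrowing a fraction of the gap $\frac{b}{2\ell^2}n_+$ as an effective kinetic regularization, produces the $\frac{a}{ds\ell}(1+\log(ds\ell/a))$ contribution to $\delta_2$.

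The perturbations are estimated by Cauchy--Schwarz against (iv). For $\mathcal Q_3(z)$, splitting $\pm(a_0^{\dagger}\widetilde a_s^{\dagger}a_{s-k}a_k+\mathrm{h.c.})\leq\varepsilon\,\widetilde a_s^{\dagger}a_0^{\dagger}a_0\widetilde a_s+\varepsilon^{-1}a_k^{\dagger}a_{s-k}^{\dagger}a_{s-k}a_k$ and optimizing $\varepsilon$, using the momentum supports $s\in P_L$, $k\in P_H$ and the bound $n_+\leq\mathcal M$, produces the $K_L^3\KH^2(\rmu a^3)^{2/3}\mathcal M$ factor in $\delta_1$. The analogous Cauchy--Schwarz for $\mathcal Q_1(z)$ and $\mathcal Q_1^{\mathrm{ex}}(z)$ against $|\widehat\chi_\Lambda|^2$ at the low-momentum scale yields the $K_L^3 K_\ell^3$ factor. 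Finally, $\mathcal Q_2^{\mathrm{ex}}(z)\geq 0$ (from $W_1\omega\geq 0$, so $\widehat{W_1\omega}(k)+\widehat{W_1\omega}(0)\geq 0$) is discarded, and the scalar $(\rho_z-\rmu)\widehat W_1(0)(2\pi)^{-3}\ell^3\int a_k^{\dagger}a_k\,dk$ is handled by Young's inequality, absorbing a quarter of the $\tfrac12\widehat g(0)(\rmu-\rho_z)^2\ell^3$ main term and leaving an error bounded via $n_+\leq\mathcal M$ and the kinetic gap, which together with subleading pieces from the low-$k$ Bogolubov region contributes the last error in \eqref{eq:RoughBelow}.

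The main obstacle is the sharp evaluation of the Bogolubov integrand $A-\sqrt{A^2-B^2}$ in the transition region $(2s\ell)^{-1}<|k|<(2ds\ell)^{-1}$: the near-singularity of $B^2/\tau(k)$ at $|k|=(2s\ell)^{-1}$ is what produces the $\log(ds\ell/a)$ factor in $\delta_2$, and extracting precisely this logarithm rather than a polynomial error requires careful use of the two-cutoff structure of $\tau(k)$ together with the gap $\frac{b}{2\ell^2}n_+$.
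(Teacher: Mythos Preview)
Your overall strategy---splitting $\mathcal K(z)$ into a c-number constant, a quadratic Bogolubov form, and perturbations $\mathcal Q_1+\mathcal Q_1^{\rm ex}+\mathcal Q_3$---mirrors the paper's. The main gap is in your treatment of $\mathcal Q_3(z)$. The Cauchy--Schwarz you write,
\[
\pm\,\widehat W_1(k)\bigl(z\,\widetilde a_s^{\dagger}a_{s-k}a_k+\mathrm{h.c.}\bigr)\leq|\widehat W_1(k)|\bigl(\varepsilon\,|z|^2\,\widetilde a_s^{\dagger}\widetilde a_s+\varepsilon^{-1}a_k^{\dagger}a_{s-k}^{\dagger}a_{s-k}a_k\bigr),
\]
does not close over the unbounded region $k\in P_H$: the $\varepsilon$-side forces you to integrate $\int_{P_H}|\widehat W_1(k)|\,dk$, which is in general divergent (under the $L^1$ assumption on $v$ one only has $\widehat W_1\in L^\infty$, not $L^1$). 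Any reshuffling of the weight $\widehat W_1$ between the two sides hits the same obstruction. The paper's remedy is to sacrifice a fraction $\varepsilon$ of the kinetic energy and \emph{complete the square} in $a_k$: with $\widetilde b_k:=a_k+\tfrac{2}{\varepsilon k^2}(2\pi)^{-3}\int f_L(s)\widehat W_1(k)\,z\,a_{s-k}^{\dagger}\widetilde a_s\,ds$, the combination $\tfrac{\varepsilon}{2}k^2 a_k^{\dagger}a_k$ plus the $3Q$ integrand rewrites as $\tfrac{\varepsilon}{2}k^2\widetilde b_k^{\dagger}\widetilde b_k$ minus a remainder carrying $\widehat W_1(k)^2/k^2$, which \emph{is} integrable. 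Both contributions to $\delta_1$ arise from this remainder: the $K_L^3\KH^2(\rmu a^3)^{2/3}\mathcal M$ term from its normal-ordered operator part and the $K_L^3 K_\ell^3$ term from the commutator $[a_{s'-k},a_{s-k}^{\dagger}]$. Your attribution of the latter to $\mathcal Q_1$ is off; the $\mathcal Q_1$ error is smaller and is merely absorbed into that term using $K_L,K_\ell\geq 1$.

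A secondary issue: your Bogolubov form has $A=\tau+\rho_z\widehat W_1=B$ wherever $\tau$ vanishes, so the bound degenerates to $-B$ there. ``Borrowing the gap $\tfrac{b}{2\ell^2}n_+$'' is not immediate, since $n_+$ is built from the $\widetilde a_k$ rather than the $a_k$ appearing in the quadratic form. The paper instead adds and subtracts $\rho_z\varepsilon^{-1/2}a\,(2\pi)^{-3}\ell^3\int a_k^{\dagger}a_k\,dk$: the addition shifts $A\mapsto\mathcal A_1=(1-\varepsilon)\tau+\rho_z\varepsilon^{-1/2}a$, guaranteeing $|\mathcal B_1|/\mathcal A_1\leq C\varepsilon^{1/2}$ uniformly and permitting a clean Taylor expansion of the square root; the subtraction costs only $C\varepsilon^{-1/2}\rho_z a\mathcal M$ via $n_+\leq\mathcal M$. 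The final optimization is $\varepsilon=\sqrt{\mathcal M/((\rmu+\rho_z)\ell^3)}$.
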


Before we give the proof of Lemma~\ref{lem:Roughbounds} we wil state its main consequence, Proposition~\ref{prop:FarAway} below.

Notice that by Section~\ref{sec:params} our choice of parameters ensures that $\delta_1 + \delta_2 \ll 1$.

\begin{proposition}\label{prop:FarAway}
Suppose that $\delta_1+\delta_2\leq \frac{1}{2}$.
Suppose furthermore, for some sufficiently large universal constant $C>0$, we have
\begin{align}\label{eq:Close-I}
|\rho_z - \rmu | \geq C \rmu \max\left( \Big(\delta_1 + \delta_2+ \frac{\rmu a^3}{K_{\ell}^6 (d s)^6}\big)^{\frac{1}{2}}, (\rmu a^3)^{\frac{1}{4}} \right).
\end{align}
Then, for any state $\Phi$ satisfying \eqref{eq:LocalizedAftercNumber}, we have
\begin{align}
\langle \Phi, {\mathcal K}(z) \Phi \rangle
\geq - \frac{\widehat{g}(0)}{2} \rmu^2 \ell^3+ 2 \rmu^2 a \ell^3 \frac{128}{15 \sqrt{\pi}} \sqrt{\rmu a^3}.
\end{align}
\end{proposition}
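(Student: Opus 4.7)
The plan is to apply Lemma~\ref{lem:Roughbounds} directly and show that, under the hypothesis \eqref{eq:Close-I}, the positive quadratic term $\frac{\widehat{g}(0)}{2}(\rmu-\rho_z)^2\ell^3 = 4\pi a(\rmu-\rho_z)^2\ell^3$ (using $\widehat{g}(0) = 8\pi a$) is large enough both to absorb the three error contributions $a(\rho_z+\rmu)^{3/2}\rmu^{1/2}\ell^3\delta_1 + \rho_z^2 a\ell^3 \delta_2 + C\rmu^2 a \ell^3\frac{\rmu a^3}{K_\ell^6(ds)^6}$ and to leave behind at least $2\rmu^2 a\ell^3\cdot\frac{128}{15\sqrt{\pi}}(\rmu a^3)^{1/2}$. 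I split the argument into two cases according to the relative size of $\rho_z$ and $\rmu$.

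\textbf{Case 1: $\rho_z \leq 2\rmu$.} Here $(\rho_z+\rmu)^{3/2}\rmu^{1/2}$ and $\rho_z^2$ are both bounded by $C\rmu^2$, so all three error terms in \eqref{eq:RoughBelow} are controlled by $C\rmu^2 a \ell^3\bigl(\delta_1+\delta_2+\frac{\rmu a^3}{K_\ell^6(ds)^6}\bigr)$. The first alternative inside the maximum in \eqref{eq:Close-I} yields $(\rho_z-\rmu)^2 \geq C^2\rmu^2\bigl(\delta_1+\delta_2+\frac{\rmu a^3}{K_\ell^6(ds)^6}\bigr)$, which, for the universal constant $C$ chosen large enough, allows one half of $4\pi a(\rho_z-\rmu)^2\ell^3$ to dominate these errors. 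The second alternative gives $(\rho_z-\rmu)^2 \geq C^2\rmu^2(\rmu a^3)^{1/2}$, so the remaining half of the quadratic term exceeds $2\rmu^2 a\ell^3\cdot\frac{128}{15\sqrt{\pi}}(\rmu a^3)^{1/2}$ once $C$ is taken sufficiently large relative to $\sqrt{128/(15\sqrt{\pi}\cdot 2\pi)}$.

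\textbf{Case 2: $\rho_z > 2\rmu$.} In this regime $\rmu \leq \rho_z/2$, and hence $(\rho_z+\rmu)^{3/2}\rmu^{1/2} \leq C\rho_z^2$ while $(\rho_z-\rmu)^2 \geq \rho_z^2/4$. The three error contributions are then bounded by $C\rho_z^2 a \ell^3(\delta_1+\delta_2) + C\rmu^2 a \ell^3\frac{\rmu a^3}{K_\ell^6(ds)^6}$; the first piece is absorbed by $4\pi a(\rho_z-\rmu)^2\ell^3 \geq \pi a\rho_z^2\ell^3$ using the hypothesis $\delta_1+\delta_2 \leq 1/2$. What remains of the positive term is at least $c\rho_z^2 a\ell^3 \geq 4c\rmu^2 a\ell^3$, which under the diluteness assumption $\rmu a^3 \ll 1$ easily dwarfs both the $\frac{\rmu a^3}{K_\ell^6(ds)^6}$ error and the target correction $\rmu^2 a\ell^3(\rmu a^3)^{1/2}$.

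I do not foresee a serious obstacle: the argument is essentially bookkeeping. The only point requiring care is that the universal constant $C$ in \eqref{eq:Close-I} must be chosen large enough to subsume simultaneously the implicit constants arising from the bounds $(\rho_z+\rmu)^{3/2}\rmu^{1/2} \leq C\rmu^2$ and $\rho_z^2 \leq 4\rmu^2$ used in Case~1, and to dominate the LHY coefficient $\frac{128}{15\sqrt{\pi}}$ when matching the two terms. Since the two alternatives in \eqref{eq:Close-I} correspond precisely to the absorption and the lower bound requirements, this choice of $C$ is straightforward.
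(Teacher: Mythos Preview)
Your proof is correct and follows essentially the same approach as the paper: apply Lemma~\ref{lem:Roughbounds} and show that under \eqref{eq:Close-I} the quadratic term $\frac{\widehat{g}(0)}{2}(\rmu-\rho_z)^2\ell^3$ dominates both the error terms and the LHY correction. The only organizational difference is that the paper handles all $\rho_z$ at once by using convexity of $t\mapsto t^{3/2}$ and $t\mapsto t^2$ to bound $a(\rho_z+\rmu)^{3/2}\rmu^{1/2}\delta_1+\rho_z^2 a\,\delta_2$ by $(\delta_1+\delta_2)\cdot\frac{\widehat{g}(0)}{2}(\rmu-\rho_z)^2 + C\rmu^2 a(\delta_1+\delta_2)$, whereas you achieve the same effect by splitting into the cases $\rho_z\le 2\rmu$ and $\rho_z>2\rmu$; both routes are equally valid and of comparable length.
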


\begin{proof}
Using the convexity of $t \mapsto t^{\sigma}$, for $\sigma \in \{3/2, 2\}$ and Jensen's inequality, \eqref{eq:RoughBelow} implies the bound
\begin{align}
\langle \Phi, {\mathcal K}(z) \Phi \rangle &\geq - \frac{\widehat{g}(0)}{2} \rmu^2 \ell^3 + \frac{\widehat{g}(0)}{2} (1 - \delta_1 -\delta_2)  (\rmu - \rho_z)^2 \ell^3 
 - C \rmu^2 a \ell^3 \Big(\delta_1 + \delta_2 + \frac{\rmu a^3}{K_{\ell}^6 (d s)^6}\Big) \nonumber \\
 &\geq
 - \frac{\widehat{g}(0)}{2} \rmu^2 \ell^3 + \frac{\widehat{g}(0)}{4}   (\rmu - \rho_z)^2 \ell^3 
 - C \rmu^2 a \ell^3 \Big(\delta_1 + \delta_2 + \frac{\rmu a^3}{K_{\ell}^6 (d s)^6}\Big).
\end{align}
If \eqref{eq:Close-I} is satisfied, then the term quadratic in $(\rmu - \rho_z)$ dominates both the error term above and the LHY correction. This finishes the proof of Proposition~\ref{prop:FarAway}.
\end{proof}

\begin{proof}[Proof of Lemma~\ref{lem:Roughbounds}]
Since, for any $\delta'>0$,
$z a_k^{\dagger} + \overline{z} a_k \leq \delta' |z|^2 + (\delta')^{-1} a_k^{\dagger} a_k$,
we find
\begin{align}
 \int \widehat{\chi_{\Lambda}}(k) (z a_k^{\dagger} + \overline{z} a_k)\,dk
 &\leq \delta'  |z|^2 \int | \widehat{\chi_{\Lambda}}(k) |\,dk + |\widehat{\chi_{\Lambda}}(0) |
(\delta')^{-1}  \int a_k^{\dagger} a_k\,dk\nonumber \\
 &\leq C ( \delta' |z|^2 + (\delta')^{-1} n_{+}).
\end{align}
Therefore, we easily get, setting $\delta' = \sqrt{ {\mathcal M}/(\rho_z \ell^3)}$ and using \eqref{eq:LocalizedAftercNumber} and the definitions in \eqref{eq:DefQ1z} and \eqref{eq:DefQ1zex},
\begin{align}
\langle \Phi, \big({\mathcal Q}_1(z) + {\mathcal Q}_1^{\rm ex}(z) \big) \Phi \rangle \geq
-C \ell^3 a \sqrt{\frac{{\mathcal M}|z|^2}{\ell^3}} ( |\rho_z-\rmu| + \rho_z) ,
\end{align}
in agreement with \eqref{eq:RoughBelow} (where we used that $K_L, K_{\ell} \geq1$)

Notice that quadratic terms of the form
$\ell^{3} \int \widehat{W}(k) a_k^{\dagger} a_k \,dk$ are easily estimated as
\begin{align}
\pm \langle \Phi, \ell^{3} \int \widehat{W}(k) a_k^{\dagger} a_k \,dk \Phi \rangle
\leq C a {\mathcal M}.
\end{align}
This allows us to estimate all the quadratic terms in ${\mathcal K}(z)$ except the kinetic energy and the off-diagonal quadratic terms and to absorb the corresponding terms in the error in \eqref{eq:RoughBelow} (using in particular that ${\mathcal M} \leq \rmu \ell^3$).

Therefore, to establish \eqref{eq:RoughBelow} we only have left to estimate the sum of the kinetic energy, ${\mathcal Q}_3(z)$ and the `off-diagonal' quadratic terms.
This we will do by first adding and subtracting an $n_{+}$ term, which is easily estimated as above.
We will prove the following 3 inequalities, where $\varepsilon <1/2 $ is a (small) parameter that we will optimize in the end(see \eqref{eq:ChoiceEpsilon}), and where $\Phi$ is a state satisfying \eqref{eq:LocalizedAftercNumber},
\begin{align}\label{eq:Rough0}
-  \left\langle \Phi, (2\pi)^{-3} \ell^3  \rho_z \varepsilon^{-1/2} a \int  a_k^{\dagger} a_k \,dk \,\Phi \right\rangle
 \geq - C \varepsilon^{-1/2} \ell^3 \rho_z a \frac{{\mathcal M}}{\ell^3},
\end{align}
\begin{align}\label{eq:Rough1}
\left\langle \Phi,  \left((2\pi)^{-3} \ell^3 \int \varepsilon \tau (k) a_k^{\dagger} a_k \,dk + {\mathcal Q}_3(z)\right) \, \Phi \right\rangle 
 \geq -\varepsilon^{-1} C \rho_z a \frac{{\mathcal M}}{\ell^3} \ell^3 \Big(K_L^3 K_H^2 K_{\ell}^3  \frac{{\mathcal M}a^3}{\ell^3}
+
 K_L^3 K_{\ell}^3
\Big),
\end{align}
and
\begin{align}\label{eq:Rough2}
(2\pi)^{-3} \ell^3 &\int \Big({\mathcal A}_1(k) a_k^{\dagger} a_k
 +\frac{1}{2}{\mathcal B}_1(k) \left( a_k^{\dagger} a_{-k}^{\dagger} + a_k a_{-k}\right) \Big) dk  \\
 &\geq
- \frac{1}{2} \rho_z^2 \ell^3 \widehat{g\omega}(0)  - 
 C \ell^3 \rho_z^2 a \left(\varepsilon + \frac{R^2}{\ell^2} + \frac{a}{d s \ell} \big(1 + \log\big( \frac{d s \ell}{a}\big)\big)
 \right)
 - C \rmu^2 a \ell^3 \frac{\rmu a^3}{K_{\ell}^6 (d s)^6}, \nonumber 
\end{align}
where we have introduced
\begin{align}
{\mathcal A}_1(k) := (1-\varepsilon)\tau(k) + \rho_z \varepsilon^{-1/2} a,\qquad
{\mathcal B}_1(k) :=  \rho_z \widehat{W}_1(k).
\end{align}
Notice that \eqref{eq:Rough0} is easy given the discussion above.

We proceed to prove \eqref{eq:Rough2}.  
We symmetrize the term in $k$ as
\begin{align}
(2\pi)^{-3} \ell^3 &\int \Big({\mathcal A}_1(k) a_k^{\dagger} a_k
 +\frac{1}{2}{\mathcal B}_1(k) a_k^{\dagger} a_{-k}^{\dagger} + \frac{1}{2}{\mathcal B}_1(k)a_k a_{-k}\Big) dk  \\
 &=
 \frac{1}{2} (2\pi)^{-3} \ell^3 \int \Big({\mathcal A}_1(k) a_k^{\dagger} a_k
 +{\mathcal A}_1(k) a_{-k}^{\dagger} a_{-k}
 +{\mathcal B}_1(k) a_k^{\dagger} a_{-k}^{\dagger} + {\mathcal B}_1(k)a_k a_{-k}\Big) dk. \nonumber
\end{align}
At this point we apply the `Bogolubov lemma', Lemma~\ref{eq:BogIneq}, to get
\begin{align}
{\mathcal A}_1(k) a_k^{\dagger} a_k
 +{\mathcal A}_1(k) a_{-k}^{\dagger} a_{-k}&
 +{\mathcal B}_1(k) a_k^{\dagger} a_{-k}^{\dagger} + {\mathcal B}_1(k)a_k a_{-k} \nonumber \\
& \geq
 - \left( {\mathcal A}_1(k)  - \sqrt{ {\mathcal A}_1(k)^2 - |{\mathcal B}_1(k) |^2}\right),
 \end{align}
where we have also used \eqref{eq:Commutator}.

Notice that (using \eqref{eq:W1-g}) $|{\mathcal B}_1(k) |/{\mathcal A}_1(k) \leq C \varepsilon^{1/2}$. 
Therefore, for $\varepsilon$ sufficiently small, a Taylor expansion gives
\begin{align}
- \left( {\mathcal A}_1(k)  - \sqrt{ {\mathcal A}_1(k)^2 - |{\mathcal B}_1(k) |^2}\right)
\geq -(\frac{1}{2} + C \varepsilon) \frac{|{\mathcal B}_1|^2}{{\mathcal A}_1}.
\end{align}
Below we will need the following estimate of an integral,
\begin{align}
\int_{\{|k| \geq (d s \ell)^{-1}\}}  \frac{|{\mathcal B}_1|^2}{2  \tau(k) }
&\leq \rho_z^2 \Big( \int \frac{\widehat{W}_1(k)^2}{2 k^2}
+ C \frac{a^2}{d s \ell}\int_{\{ (d s \ell)^{-1} \leq |k| \leq a^{-1}\}} |k|^{-3}
+ C \frac{a}{d s \ell} \int \frac{\widehat{W}_1(k)^2}{2 k^2} \Big) \nonumber \\
&\leq \rho_z^2 (1+ \frac{R^2}{\ell^2}) \widehat{g\omega}(0)
+
C \rho_z^2 a^2 (d s \ell)^{-1} (1 + \log( d s \ell/a)).
\end{align}
where we used that $0 \leq k^2 - \tau(k) \leq 2 |k| (d s \ell)^{-1}$ for $|k| \geq (d s \ell)^{-1}$ and we also used \eqref{eq:I2-integral}.

Inserting these considerations, we find
\begin{align}\label{eq:estimate}
&(2\pi)^{-3} \ell^3 \int {\mathcal A}_1(k) a_k^{\dagger} a_k
 +\frac{1}{2}{\mathcal B}_1(k) \left( a_k^{\dagger} a_{-k}^{\dagger} + a_k a_{-k}\right)dk \nonumber \\
 &\geq
 -(\frac{1}{2} + C\varepsilon)(2\pi)^{-3} \ell^3 \Big( \int_{|k| \leq (d s \ell)^{-1}} \varepsilon^{1/2} \frac{|{\mathcal B}_1|^2}{2  \rho_z a }+
 \int_{|k| \geq (d s \ell)^{-1}}  \frac{|{\mathcal B}_1|^2}{2 (1-\varepsilon) \tau(k) }\Big) \nonumber \\
&\geq
- \frac{1}{2} \rho_z^2 \ell^3 \widehat{g\omega}(0) 
- C \ell^3 \rho_z a\Big(  \varepsilon^{1/2}  (d s \ell)^{-3} + 
\rho_z ( \varepsilon + \frac{R^2}{\ell^2}) 
+ C \rho_z \frac{a}{d s \ell} \big(1 + \log\big( \frac{d s \ell}{a}\big)\big)
 \Big) \nonumber \\
 &\geq 
 - \frac{1}{2} \rho_z^2 \ell^3 \widehat{g\omega}(0)  - 
 C \ell^3 \rho_z^2 a \left(\varepsilon + \frac{R^2}{\ell^2} + \frac{a}{d s \ell} \big(1 + \log\big( \frac{d s \ell}{a}\big)\big)
 \right)
 - C a (d s)^{-6} \ell^{-3}.
\end{align}
This is easily seen to be consistent with \eqref{eq:Rough2} and
finishes the proof of \eqref{eq:Rough2}.

To prove \eqref{eq:Rough1} we use a similar approach. Notice that by definition \eqref{eq:kroelleQ3z}, ${\mathcal Q}_3(z)$ only lives in the high momentum region $P_H$.
For these momenta we have $\tau(k) \geq \frac{1}{2} k^2$. Therefore, dropping a part of the kinetic energy, it suffices to bound
\begin{align}
\ell^3 (2\pi)^{-3} \int_{\{ k \in P_H\}} \left(\frac{\varepsilon}{2} k^2 a_k^{\dagger} a_k +
(2\pi)^{-3} \int f_L(s)
\widehat{W}_1(k) (\overline{z} \widetilde{a}_s^{\dagger}  a_{s-k} a_k
+ a_k^{\dagger}  a^{\dagger}_{s-k} \widetilde{a}_s z) \,ds\right)\,dk.
\end{align}
We estimate, with $\widetilde{b}_k := a_k + 2(2\pi)^{-3} \int f_L(s)
\frac{ \widehat{W}_1(k)}{\varepsilon k^2} z a_{s-k}^{\dagger} \widetilde{a}_s   \,ds$,
\begin{align}\label{eq:RoughQ3_1}
&\ell^3 (2\pi)^{-3} \int_{\{ k \in P_H\}} \left(\frac{\varepsilon}{2} k^2 a_k^{\dagger} a_k +
(2\pi)^{-3} \int f_L(s)
\widehat{W}_1(k) (\overline{z} \widetilde{a}_s^{\dagger}  a_{s-k} a_k
+ a_k^{\dagger}  a^{\dagger}_{s-k} \widetilde{a}_s z) \,ds\right)\,dk\nonumber \\
&=\ell^3 (2\pi)^{-3} \int_{\{ k \in P_H\}}\left(
\frac{\varepsilon}{2} k^2 \widetilde{b}_k^{\dagger} \widetilde{b}_k
- 4 (2\pi)^{-6} 
\iint f_L(s) f_L(s') \frac{ \widehat{W}_1(k)^2}{\varepsilon k^2} |z|^2 \widetilde{a}_{s'}^{\dagger}  a_{s'-k} a_{s-k}^{\dagger} \widetilde{a}_s\right)dk \nonumber \\
&\geq- 4\varepsilon^{-1}
\ell^3 (2\pi)^{-9} \int_{\{ k \in P_H\}}  \frac{ \widehat{W}_1(k)^2}{ k^2} |z|^2
\iint f_L(s) f_L(s') \widetilde{a}_{s'}^{\dagger}  (a_{s-k}^{\dagger} a_{s'-k} +[a_{s'-k}, a_{s-k}^{\dagger}]) \widetilde{a}_s .
\end{align}
On the term without a commutator, we estimate $\widetilde{a}_{s'}^{\dagger}  a_{s-k}^{\dagger} a_{s'-k}\widetilde{a}_s$ by Cauchy-Schwarz and (since $k \in P_H$),
$\frac{ \widehat{W}_1(k)^2}{k^2} \leq C K_H^2 a^4$.
Therefore, for a $\Phi$ satisfying \eqref{eq:LocalizedAftercNumber}, we find
\begin{align}\label{eq:RoughQ3_2}
\langle \Phi, &
\ell^3 \int_{\{ k \in P_H\}}  \frac{ \widehat{W}_1(k)^2}{ k^2} |z|^2 \iint f_L(s) f_L(s') \widetilde{a}_{s'}^{\dagger}  a_{s-k}^{\dagger} a_{s'-k}  \widetilde{a}_s \Phi \rangle \nonumber \\
&\leq
C \rho_z  (\int_{\{|s| \leq 2 K_L \sqrt{\rmu a}\}}\,ds )K_H^2 a^4 {\mathcal M}^2\nonumber \\
&\leq
C \rho_z  a \ell^3 K_L^3  K_{\ell}^3 \frac{ a^3 K_H^2{\mathcal M}^2}{\ell^6}.
\end{align}
For the commutator term, we estimate (using \eqref{eq:Commutator_general} and the Cauchy-Schwarz inequality)
$$
\widetilde{a}_{s'}^{\dagger} [a_{s'-k}, a_{s-k}^{\dagger}]) \widetilde{a}_s \leq 2 \widetilde{a}_{s'}^{\dagger}\widetilde{a}_{s'} + 2 \widetilde{a}_s^{\dagger} \widetilde{a}_s
$$
and $\int \frac{ \widehat{W}_1(k)^2}{k^2} \leq Ca$.
This leads to (for a $\Phi$ satisfying \eqref{eq:LocalizedAftercNumber}),
\begin{align}\label{eq:RoughQ3_3}
\langle \Phi, &\ell^3 \int_{\{ k \in P_H\}}  \frac{ \widehat{W}_1(k)^2}{ k^2} |z|^2
\iint f_L(s) f_L(s') \widetilde{a}_{s'}^{\dagger} [a_{s'-k}, a_{s-k}^{\dagger}] \widetilde{a}_s \Phi \rangle \nonumber \\
&\leq 
C {\mathcal M} a |z|^2 \int_{\{|s| \leq 2 K_L \sqrt{\rmu a}\}}\,ds \nonumber \\
&\leq C a \rho_z \frac{\mathcal M}{\ell^3} \ell^3 K_L^3 K_{\ell}^3.
\end{align}
Combining the estimates \eqref{eq:RoughQ3_1}, \eqref{eq:RoughQ3_2} and \eqref{eq:RoughQ3_3} proves \eqref{eq:Rough1}.

We choose 
\begin{align}\label{eq:ChoiceEpsilon}
\varepsilon=\frac{{\mathcal M}^{1/2}}{\sqrt{(\rmu + \rho_z) \ell^3}}.
\end{align}
We will add the estimates of \eqref{eq:Rough0}, \eqref{eq:Rough1} and \eqref{eq:Rough2} with this choice of $\varepsilon$. Notice that since ${\mathcal M }
\leq \rmu \ell^3$ the contribution from \eqref{eq:Rough0} will be smaller than the terms appearing in the other estimates.
Therefore we get,
\begin{align}
\Big\langle \Phi,  &\left(\frac{1}{2} \rho_z^2 \ell^3 \widehat{g\omega}(0) + (2\pi)^{-3} \ell^3 \int  \tau (k) a_k^{\dagger} a_k \,dk + {\mathcal Q}_3(z)\right) \, \Phi \Big\rangle \nonumber \\
 &\geq
 - C \rho_z a \ell^3 \Big( 
 \frac{
 \sqrt{{\mathcal M}(\rmu+\rho_z)\ell^3}}{\ell^3}\Big(K_L^3 K_H^2 (\rmu a^3)^{3/2} {\mathcal M}
+
 K_L^3 K_{\ell}^3
\Big) + \rho_z \frac{R^2}{\ell^2}
 \Big).
\end{align}
This finishes the proof of \eqref{eq:RoughBelow}.
\end{proof}

\section{More precise energy estimates}
\label{sec:Precise}
From Proposition~\ref{prop:FarAway} above, we see that the energy is too high unless $\rho_z \approx \rmu$. In this section we will give precise energy bounds in the complementary regime. We will always assume that
\begin{align}\label{eq:Close}
|\rho_z - \rmu | \leq \rmu C \max\left( \Big(\delta_1 + \delta_2+ \frac{\rmu a^3}{K_{\ell}^6 (d s)^6}\big)^{\frac{1}{2}}, (\rmu a^3)^{\frac{1}{4}} \right),
\end{align}
with the notation from Proposition~\ref{prop:FarAway}.

We will need the condition that
\begin{align}\label{eq:NewConditionNew}
K_{\ell}^2 \max\left( \Big(\delta_1 + \delta_2+ \frac{\rmu a^3}{K_{\ell}^6 (d s)^6}\big)^{\frac{1}{2}}, (\rmu a^3)^{\frac{1}{4}} \right) \leq C^{-1},
\end{align}
for some sufficiently large universal constant.
This condition is satisfied by \eqref{con:rough1}, \eqref{con:rough2}, \eqref{con:sdKellKB}
and \eqref{con:KBKell}.

Notice, using \eqref{eq:Close} and \eqref{eq:NewConditionNew}, that
\begin{align}
\frac{|\rho_z-\rmu|}{\rmu} \leq C^{-1} K_{\ell}^{-2}.
\end{align}

For convenience of notation, we define the parameter $\lille$ to be the square of the ratio between $\sqrt{\rmu a}$ and the inner radius of $P_H$, i.e.
\begin{align}
\label{eq:DefLille}
\lille:= \frac{\rmu a}{K_H^{-2} a^{-2}} = (\rmu a^3)^{\frac{1}{6}}  \KH^2.
\end{align}
Using \eqref{con:KLKH}, we see that $\lille \ll 1$.

We define the quadratic Bogolubov Hamiltonian as follows,
\begin{align}\label{eq:BogHamPositive-1}
{\mathcal K}^{\rm Bog} =
\frac{1}{2} (2\pi)^{-3} \ell^3 \int \Big(& {\mathcal A}(k) (a_k^{\dagger} a_k + a^{\dagger}_{-k} a_{-k}) + {\mathcal B}(k) (a_k^{\dagger} a_{-k}^{\dagger} +  a_{k} a_{-k} )\nonumber \\
&+
{\mathcal C}(k) (a^{\dagger}_k + a^{\dagger}_{-k}+a_k + a_{-k}) \Big)
 \,dk ,
\end{align}
with
\begin{align}
{\mathcal A}(k) :=
\tau(k) + \rho_z \widehat{W_1}(k), \quad
{\mathcal B}(k) :=\rho_z \widehat{W_1}(k),\quad
{\mathcal C}(k):=  \ell^{-3} (\rho_z - \rmu) \widehat{W_1}(0) \widehat{\chi_{\Lambda}}(k) z\,.\quad
\end{align}
With this notation, we can rewrite/estimate ${\mathcal K}(z)$ from \eqref{eq:Kz} as follows,
\begin{align}\label{eq:SimplifyKz}
{\mathcal K}(z) 
& = {\mathcal K}^{\rm Bog} 
+ \frac{1}{2} \rho_z^2 \ell^{3} 
\Big(\widehat{g}(0) + \widehat{g\omega}(0)\Big)
  - \rmu   \widehat{g}(0) \rho_z \ell^3\nonumber \\
  &\quad  +  \frac{b}{2 \ell^2} n_{+} + \varepsilon_T\frac{b}{2 d^2 \ell^2} n_{+}^{H} 
  + (\rho_z - \rmu) \widehat{W}_1(0) (2\pi)^{-3} \ell^{3} \int a_k^{\dagger} a_k\,dk \nonumber \\
  &\quad
  + {\mathcal Q}_1^{\rm ex}(z)+
  {\mathcal Q}_2^{\rm ex}(z) + {\mathcal Q}_3(z) \nonumber \\
 & \geq
  - \frac{1}{2} \rmu^2 \ell^{3} \widehat{g}(0)
+ \frac{1}{2} \rho_z^2 \ell^{3} \widehat{g\omega}(0)
+ \frac{1}{2} (\rho_z-\rmu)^2 \ell^{3} \widehat{g}(0) + {\mathcal K}^{\rm Bog} \nonumber \\
  &\quad  +  \frac{b}{4 \ell^2} n_{+} + \varepsilon_T\frac{b}{2 d^2 \ell^2} n_{+}^{H} 
  + {\mathcal Q}_1^{\rm ex}(z)+
  {\mathcal Q}_2^{\rm ex}(z) + {\mathcal Q}_3(z).
\end{align}
Here we used \eqref{eq:NewConditionNew} to absorb a quadratic part in the gap.

\subsection{The Bogolubov Hamiltonian}
\begin{theorem}[Analysis of Bogolubov Hamiltonian]\label{thm:BogHamDiag}
Assume that $\Phi$ satisfies \eqref{eq:LocalizedAftercNumber} and that $\frac{1}{2} \rmu \leq \rho_z \leq 2\rmu$.
Let $\lille$ be the parameter defined in \eqref{eq:DefLille}.
Then,
\begin{align}\label{eq:BogHamDiag}
\langle \Phi, {\mathcal K}^{\rm Bog} \Phi \rangle &\geq
(2\pi)^{-3} \ell^3 \langle \Phi, \int {\mathcal D}_k b_k^{\dagger} b_{k}\,dk \, \Phi \rangle  \nonumber \\
&\quad -\frac{1}{2} (2\pi)^{-3} \ell^3 \int  \left( {\mathcal A}(k) - \sqrt{{\mathcal A}(k)^2 - {\mathcal B}(k)^2}\right) \,dk \nonumber\\
&\quad
- (\rho_z -\rmu)^2 \frac{\widehat{g}(0)}{2} \ell^3 \left(1+ C \frac{R^2}{a^2} (\rmu a^3)\right)
- C \rmu^2 a \ell^3 K_{{\mathcal M}} K_{\ell}^{-3/2} (K_{\ell}^2 \lille)^{\frac{M-5}{2}}.
\end{align}
Here
\begin{align}
{\mathcal D}_k := \frac{1}{2}\left( {\mathcal A}(k) + \sqrt{{\mathcal A}(k)^2 - {\mathcal B}(k)^2}\right),
\end{align}
and
\begin{align}\label{eq:Defb_k}
b_k :=  a_{k} + \alpha_k a_{-k}^{\dagger} + c_k,
\end{align}
with
\begin{align}
\alpha_k:= {\mathcal B}(k)^{-1} \left( {\mathcal A}(k) - \sqrt{{\mathcal A}(k)^2 - {\mathcal B}(k)^2}\right),
\end{align}
and
\begin{align}
c_k:=\begin{cases}\frac{2{\mathcal C}(k)}{{\mathcal A}(k) + {\mathcal B}(k) + \sqrt{{\mathcal A}(k)^2 - {\mathcal B}(k)^2}}, & |k| \leq \frac{1}{2}K_H^{-1} a^{-1},\\
0, & |k| > \frac{1}{2} K_H^{-1} a^{-1}.
\end{cases}
\end{align}
\end{theorem}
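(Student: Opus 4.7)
The plan is to implement a simultaneous Bogolubov transformation (the quadratic part via $\alpha_k$) and linear shift (the ${\mathcal C}(k)$ part via $c_k$), momentum-pair by momentum-pair. For canonical bosonic operators $a,\tilde a$ the choices of $\alpha_k$, $c_k$, ${\mathcal D}_k$ in the theorem make the following algebraic identity hold:
\[
{\mathcal A}(a^{\dagger}a+\tilde a^{\dagger}\tilde a)+{\mathcal B}(a^{\dagger}\tilde a^{\dagger}+a\tilde a)+{\mathcal C}(a^{\dagger}+\tilde a^{\dagger}+a+\tilde a)={\mathcal D}_k(b^{\dagger}b+\tilde b^{\dagger}\tilde b)-({\mathcal A}-\sqrt{{\mathcal A}^2-{\mathcal B}^2})-2{\mathcal D}_k|c_k|^2,
\]
with $b=a+\alpha_k\tilde a^{\dagger}+c_k$ and $\tilde b=\tilde a+\alpha_k a^{\dagger}+c_k$. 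The constant $-({\mathcal A}-\sqrt{{\mathcal A}^2-{\mathcal B}^2})$ comes from a single reordering $\tilde a\tilde a^{\dagger}=\tilde a^{\dagger}\tilde a+1$, and $-2{\mathcal D}_k|c_k|^2$ from expanding the shift. The first step is to invoke the abstract Bogolubov lemma of Appendix~\ref{sec:simplebog}: since the $a_k$ only satisfy the one-sided bound $[a_k,a_k^{\dagger}]\leq 1$ from \eqref{eq:Commutator} (and have nonzero off-diagonal commutators from \eqref{eq:Commutator_general}), the algebraic identity becomes an operator inequality ($\geq$) with the same constants. Symmetrizing the integral in \eqref{eq:BogHamPositive-1} under $k\mapsto-k$ and applying this pointwise gives $(2\pi)^{-3}\ell^3\int {\mathcal D}_k b_k^{\dagger}b_k\,dk-\tfrac{1}{2}(2\pi)^{-3}\ell^3\int({\mathcal A}-\sqrt{{\mathcal A}^2-{\mathcal B}^2})\,dk-(2\pi)^{-3}\ell^3\int_{|k|\leq\frac{1}{2}K_H^{-1}a^{-1}}{\mathcal D}_k|c_k|^2\,dk$.

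The second step is to evaluate the shift integral. Since ${\mathcal C}(k)$ contains the factor $\widehat{\chi_{\Lambda}}(k)=\ell^3\widehat{\chi}(k\ell)$ which concentrates on $|k|\lesssim\ell^{-1}\ll K_H^{-1}a^{-1}$, the cutoff is irrelevant to leading order. At $k=0$ one has $\tau(0)=0$ and ${\mathcal A}(0)={\mathcal B}(0)=\rho_z\widehat{W}_1(0)$, so $\sqrt{{\mathcal A}(0)^2-{\mathcal B}(0)^2}=0$ and ${\mathcal D}_0|c_0|^2=|{\mathcal C}(0)|^2/(2\rho_z\widehat{W}_1(0))$. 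Evaluating the integrand at its $k=0$ value, applying Parseval ($\int|\widehat{\chi_{\Lambda}}|^2=(2\pi)^3\ell^3$), and using \eqref{eq:I2-integral} to replace $\widehat{W}_1$ by $\widehat{g}$ with error $O(R^2/\ell^2)=O((R^2/a^2)\rmu a^3/K_\ell^2)$ yields the shift integral equal to $\tfrac{1}{2}(\rho_z-\rmu)^2\widehat{g}(0)\ell^3(1+O((R^2/a^2)\rmu a^3))$, producing the $(\rho_z-\rmu)^2$ term in \eqref{eq:BogHamDiag}.

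The main obstacle is controlling the linear ${\mathcal C}(k)(a_k^{\dagger}+a_{-k}^{\dagger}+a_k+a_{-k})$ terms for $|k|>\tfrac{1}{2}K_H^{-1}a^{-1}$, where $c_k=0$ and no shift is performed. I would treat these by reserving a small fraction $\eta$ of the quadratic form ${\mathcal D}_k b_k^{\dagger}b_k$ at each such $k$ for Cauchy--Schwarz absorption, producing an error of order
\[
\eta^{-1}\rho_z^2(\rho_z-\rmu)^2\widehat{W}_1(0)^2\ell^{-3}\int_{|k|>\frac{1}{2}K_H^{-1}a^{-1}}\frac{|\widehat{\chi_{\Lambda}}(k)|^2}{{\mathcal D}_k}\,dk.
\]
Using the $M$-regularity of $\chi$ (Appendix~\ref{sec:chiproperties}), $|\widehat{\chi}(p)|\lesssim|p|^{-M}$ for $|p|\geq 1$; at the cutoff scale $|k|\ell\geq K_\ell/\sqrt{\lille}$ by \eqref{eq:DefLille}; and ${\mathcal D}_k\gtrsim\tau(k)\gtrsim k^2$ at high momenta. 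Combining these, the integral is controlled by a high power of $(K_\ell^2\lille)^{-1}$. Using the a priori bound $n_+\leq{\mathcal M}$ from \eqref{eq:LocalizedAftercNumber} to control the quantity $\eta\int{\mathcal D}_k b_k^{\dagger}b_k$ that must remain non-absorbed, and optimizing in $\eta$, yields the stated error $C\rmu^2 a\ell^3K_{\mathcal M}K_\ell^{-3/2}(K_\ell^2\lille)^{(M-5)/2}$. A secondary technical point is to verify that the non-canonical commutator corrections from \eqref{eq:Commutator_general} enter only error terms of lower order and do not corrupt the leading Bogolubov constant.
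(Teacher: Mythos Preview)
Your overall strategy---invoke Theorem~\ref{thm:bogolubov-complete} pointwise in $k$ (using $[a_k,a_k^{\dagger}]\leq 1$ from \eqref{eq:Commutator}) and then estimate the shift constant---is exactly the paper's. Your step~2 is also essentially what the paper does: rather than ``evaluate at $k=0$'', the paper uses the uniform lower bound ${\mathcal A}(k)+{\mathcal B}(k)\geq 2\rho_z\widehat{W_1}(0)\bigl(1-C(\rmu a^3)R^2/a^2\bigr)$ (from $|\widehat{W_1}(k)-\widehat{W_1}(0)|\leq Ca(kR)^2$ and the fact that $\tau(k)$ dominates once $|k|\gtrsim\sqrt{\rmu a}$), then Parseval on $\widehat{\chi_\Lambda}$ and \eqref{eq:W1-g} to pass from $\widehat{W_1}(0)$ to $\widehat{g}(0)$.

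The gap is in your treatment of the high-$k$ linear ${\mathcal C}(k)$ terms. Your plan to reserve a fraction $\eta$ of $\int {\mathcal D}_k b_k^{\dagger}b_k$ and then ``use $n_+\leq{\mathcal M}$ to control $\eta\int{\mathcal D}_k b_k^{\dagger}b_k$'' cannot work: on the high-momentum set ${\mathcal D}_k\sim k^2$, so $\langle\Phi,\int_{|k|>\frac12 K_H^{-1}a^{-1}}{\mathcal D}_k b_k^{\dagger}b_k\,dk\,\Phi\rangle$ is a kinetic-energy-like quantity and is not bounded by any function of $n_+$ alone. Thus either you lose $\eta$ of the diagonal (which is needed in full in Theorem~\ref{thm:Control3Q}) or you cannot close the estimate. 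The paper sidesteps this by removing the high-$k$ linear terms \emph{before} applying Bogolubov: one uses the crude bound $a_k+a_k^{\dagger}\leq a_k^{\dagger}a_k+1$ directly, so that
\[
\tfrac{1}{2}(2\pi)^{-3}\ell^3\int_{|k|>\frac12 K_H^{-1}a^{-1}}{\mathcal C}(k)\bigl(a_k^{\dagger}+a_{-k}^{\dagger}+a_k+a_{-k}\bigr)\,dk
\geq -C|\rho_z-\rmu|\,\widehat{W_1}(0)\,|z|\,(n_++1)\,\epsilon(\chi),
\]
where $\epsilon(\chi):=\ell^{-3}\sup_{|k|>\frac12 K_H^{-1}a^{-1}}(1+(k\ell)^2)^2|\widehat{\chi_\Lambda}(k)|\leq C(K_\ell^{-2}\lille)^{\widetilde M-2}$ by Lemma~\ref{lem:DecaychiHat}. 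No borrowing from the diagonal is needed, and one then applies Theorem~\ref{thm:bogolubov-complete} to the remaining operator $\widetilde{\mathcal K}^{\rm Bog}$ with $\widetilde{\mathcal C}(k)$ truncated to $|k|\leq\frac12 K_H^{-1}a^{-1}$, obtaining the full $\int{\mathcal D}_k b_k^{\dagger}b_k$ on the right side.
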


\begin{proof}

To simplify later calculations we start by removing ${\mathcal C}(k)$ for $|k| > \frac{1}{2} K_H^{-1} a^{-1}$ from ${\mathcal K}^{\rm Bog}$, so we aim to prove
\begin{align}\label{eq:RemoveC's}
\frac{1}{2}& (2\pi)^{-3} \ell^3 \int_{\{|k| > \frac{1}{2} K_H^{-1} a^{-1}\}} {\mathcal C}(k) (a^{\dagger}_k + a^{\dagger}_{-k}+a_k + a_{-k}) \Big)
 \,dk 
 \geq - C \rmu^2 a \ell^3 (\rmu a^3)^{\frac{1}{2} + \frac{1}{6}}.
\end{align}
Obviously,
$$
a_k+ a_k^{\dagger} \leq a_k^{\dagger} a_k + 1.
$$
Therefore,
\begin{align}
\frac{1}{2}& (2\pi)^{-3} \ell^3 \int_{\{|k| > \frac{1}{2} K_H^{-1} a^{-1}\}} {\mathcal C}(k) (a^{\dagger}_k + a^{\dagger}_{-k}+a_k + a_{-k}) \Big)
 \,dk \nonumber \\
 &\geq
 - (2\pi)^{-3} |\rho_z -\rmu| \widehat{W_1}(0) |z|
 \int_{\{|k| > \frac{1}{2} K_H^{-1} a^{-1}\}} |\widehat{\chi_{\Lambda}}(k)| (a_k^{\dagger} a_k +1)\,dk \nonumber \\
 &\geq
 -C |\rho_z -\rmu| \widehat{W_1}(0) |z| ( n_{+} + 1) \epsilon(\chi),
\end{align}
where
\begin{align}
\epsilon(\chi) := \ell^{-3}  \sup_{\{|k| > \frac{1}{2} K_H^{-1} a^{-1}\}} (1+(k\ell)^2)^2  |\widehat{\chi_{\Lambda}}(k)|
\leq
C ( K_{\ell}^{-2} \lille )^{\widetilde{M}-2},
\end{align}
where we used Lemma~\ref{lem:DecaychiHat} to get the last estimate.
Estimating $n_{+}$ using \eqref{eq:LocalizedAftercNumber} and using \eqref{eq:Close} to control $|z|$, it is elementary to conclude \eqref{eq:RemoveC's} for this part.

By the estimate above, it suffices to consider
\begin{align}
\widetilde{\mathcal K}^{\rm Bog} :=
\frac{1}{2} (2\pi)^{-3} \ell^3 \int \Big( &{\mathcal A}(k) (a_k^{\dagger} a_k + a^{\dagger}_{-k} a_{-k}) + {\mathcal B}(k) (a_k^{\dagger} a_{-k}^{\dagger} +  a_{k} a_{-k} )\nonumber \\
&+
\widetilde{\mathcal C}(k) (a^{\dagger}_k + a^{\dagger}_{-k}+a_k + a_{-k}) \Big)
 \,dk ,
\end{align}
with ${\mathcal A}, {\mathcal B}$ from \eqref{eq:BogHamPositive-1} and
\begin{align}
\widetilde{\mathcal C}(k):=  \begin{cases} 0, & |k| \geq \frac{1}{2} K_{H}^{-1} a^{-1}, \\
\ell^{-3} (\rho_z - \rmu) \widehat{W_1}(0) \widehat{\chi_{\Lambda}}(k) z, & \text{ else}.
\end{cases}
\end{align}

With the notation from Theorem~\ref{thm:BogHamDiag} and using Theorem~\ref{thm:bogolubov-complete} combined with \eqref{eq:Commutator} we find
\begin{align}\label{eq:LowerBoundHBog}
\widetilde{\mathcal K}^{\rm Bog} 
&\geq
(2\pi)^{-3} \ell^3 \int {\mathcal D}_k b_k^{\dagger} b_{k}\,dk  \nonumber\\
&\quad-\frac{1}{2} (2\pi)^{-3} \ell^3 \int  \left( {\mathcal A}(k) - \sqrt{{\mathcal A}(k)^2 - {\mathcal B}(k)^2}\right) \,dk \nonumber \\
&\quad - (\rho_z - \rmu)^2 \widehat{W_1}(0)^2 z^2  (2\pi)^{-3} \ell^{-3} \int_{\{|k| \leq \frac{1}{2} K_{H}^{-1} a^{-1}\}} \frac{|{\widehat \chi}_{\Lambda}(k)|^2}{{\mathcal A}(k)+ {\mathcal B}(k)}.
\end{align}
It is elementary, using that $W_1$ is even, that 
\begin{align}
\left| \widehat{W}_1(k) - \widehat{W}_1(0) \right|
\leq C a (kR)^2.
\end{align}
Therefore we easily get the lower bound
\begin{align}
{\mathcal A}(k)+ {\mathcal B}(k) \geq 2 \rho_z \widehat{W_1(0)} (1 - C (\rmu a^3) \frac{R^2}{a^2}),
\end{align}
using that the kinetic energy is dominating, unless $|k| \leq C \sqrt{\rmu a}$.

Therefore, the last term in \eqref{eq:LowerBoundHBog} becomes controlled as
\begin{align}
&(\rho_z - \rmu)^2 \widehat{W_1}(0)^2 z^2  (2\pi)^{-3} \ell^{-3} \int_{\{|k| \leq \frac{1}{2} K_{H}^{-1} a^{-1}\}} \frac{|{\widehat \chi}_{\Lambda}(k)|^2}{{\mathcal A}(k)+ {\mathcal B}(k)} \nonumber \\
&\leq
(\rho_z - \rmu)^2 \frac{\widehat{W_1}(0)}{2} \ell^3 (1 + C (\rmu a^3) \frac{R^2}{a^2})\nonumber \\
&\leq
(\rho_z - \rmu)^2 \frac{\widehat{g}(0)}{2} \ell^3 (1 + C (\rmu a^3)\frac{R^2}{a^2}),
\end{align}
where we used that $\ell^{-2} \ll \rmu a$ to get the last estimate.

This finishes the proof of Theorem~\ref{thm:BogHamDiag}.
\end{proof}

\begin{remark}
We notice that following commutation relations (using the ones for the $a_k$'s \eqref{eq:Commutator_general} and that $\widehat{\chi}$ is even and real).
\begin{align}\label{eq:bComms1}
[ b_k, b_{k'}]
= (\alpha_k - \alpha_{k'}) \Big(  \widehat{\chi^2}((k+k')\ell) - \widehat{\chi}(k \ell) \widehat{\chi}(k'\ell) \Big).
\end{align}
Also,
\begin{align}\label{eq:bComms2}
[ b_k, b_{k'}^{\dagger}] =
(1-\alpha_k \alpha_{k'}) \Big(\widehat{\chi^2}((k-k')\ell) -  \widehat{\chi}(k \ell) \widehat{\chi}(k'\ell) \Big).
\end{align}
\end{remark}

\begin{lemma}\label{lem:BogIntegral}
Assume that \eqref{eq:Close} holds and that 
$\frac{9}{10} \rmu \leq \rho_z \leq \frac{11}{10} \rmu$.
We have the following estimate
\begin{align}\label{eq:BogIntegral}
&-\frac{1}{2} (2\pi)^{-3} \ell^3 \int  \left( {\mathcal A}(k) - \sqrt{{\mathcal A}(k)^2 - {\mathcal B}(k)^2}\right) \,dk \nonumber \\
&\geq - \frac{\widehat{g\omega}(0)}{2} \rho_z^2 \ell^3
+ 4\pi  \frac{128 }{15\sqrt{\pi}} \rho_z^2 a \sqrt{\rho_z a^3} \ell^3 
- C \epsilon(\rmu,\rho_z) \rho_z^2 a\sqrt{\rho_z a^3} \ell^3
- C \rho_z^2 \ell^3 \frac{R^2}{\ell^2} \widehat{g\omega}(0),
\end{align}
with
\begin{align}\label{eq:LebesgueError-Total}
\epsilon(\rmu,\rho_z) &= (\rmu a)^{\frac{1}{4}} \sqrt{R} + \varepsilon_T +
 (K_{\ell} s)^{-1} \left( 1 + \log(d^{-1}) + \log( \frac{K_{\ell} d s}{(\rmu a^3)^{1/2}}) \right) \nonumber \\
 &\quad 
+ \varepsilon_T (K_{\ell} d s)^{-1} \left( 1 +\log( \frac{K_{\ell} d s}{(\rmu a^3)^{1/2}}) \right).
\end{align}
\end{lemma}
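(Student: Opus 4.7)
Using the elementary identity
$$
\mathcal{A}(k)-\sqrt{\mathcal{A}(k)^2-\mathcal{B}(k)^2}=\frac{\mathcal{B}(k)^2}{2k^2}+\Bigl[\mathcal{A}(k)-\sqrt{\mathcal{A}(k)^2-\mathcal{B}(k)^2}-\frac{\mathcal{B}(k)^2}{2k^2}\Bigr],
$$
I would split the quantity on the left of \eqref{eq:BogIntegral} into $-\frac{\ell^3}{2(2\pi)^3}\int \frac{\mathcal{B}(k)^2}{2k^2}\,dk$ plus a remainder $J$. Since $\mathcal{B}(k)=\rho_z\widehat{W_1}(k)$, the first piece equals $-\frac{\widehat{g\omega}(0)}{2}\rho_z^2\ell^3$ up to an error of size $C(R/\ell)^2\widehat{g\omega}(0)\rho_z^2\ell^3$ directly by \eqref{eq:I2-integral}. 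This already accounts for the first and last terms on the right of \eqref{eq:BogIntegral}, and the remaining task is to prove
$$
J \;\geq\; 4\pi\frac{128}{15\sqrt\pi}\,\rho_z^2 a\sqrt{\rho_z a^3}\,\ell^3 \;-\; C\,\epsilon(\rmu,\rho_z)\,\rho_z^2 a\sqrt{\rho_z a^3}\,\ell^3.
$$

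The next step is to approximate $J$ by the clean Bogolubov integral
$$
J_0 := -\frac{\ell^3}{2(2\pi)^3}\int \Bigl[k^2+\rho_z\widehat g(0)-\sqrt{k^4+2\rho_z\widehat g(0)\,k^2}-\frac{(\rho_z\widehat g(0))^2}{2k^2}\Bigr]dk
$$
and to evaluate $J_0$ explicitly. Substituting $k=\sqrt{\rho_z\widehat g(0)}\,q$ and using $\widehat g(0)=8\pi a$, the classical identity
$$
\int_0^\infty q^2\Bigl[\sqrt{q^4+2q^2}-q^2-1+\frac{1}{2q^2}\Bigr]dq \;=\; \frac{8\sqrt 2}{15}
$$
(reduced to elementary primitives by a hyperbolic substitution) gives $J_0 = 4\pi\cdot\frac{128}{15\sqrt\pi}\,\rho_z^2 a\sqrt{\rho_z a^3}\,\ell^3$.

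To control $J-J_0$ I would systematically replace $\tau(k)$ by $k^2$ and $\widehat{W_1}(k)$ by $\widehat g(0)$ in the integrand, splitting the momentum integral into three ranges. (i) On $\{|k|\leq (2s\ell)^{-1}\}$, where $\tau=0$, the integrand of $J$ reduces to $\rho_z\widehat{W_1}(k)-\rho_z^2\widehat{W_1}(k)^2/(2k^2)$, and a direct bound on the ball feeds into the $(K_\ell s)^{-1}(1+\log(\cdots))$ term in $\epsilon$. (ii) Between the gap and the main LHY scale $|k|\sim\sqrt{\rho_z a}$, I would use the pointwise bound $0\leq k^2-\tau(k)\leq C\bigl(|k|(s\ell)^{-1}+\varepsilon_T|k|(ds\ell)^{-1}+\varepsilon_T k^2\bigr)$ and transfer this kinetic perturbation through the integrand by a mean-value estimate; the resulting integrals $\int|k|^{-3}dk$ on annuli of radius ratio $K_\ell ds/(\rmu a^3)^{1/2}$ are precisely what produces the logarithmic factors in $\epsilon(\rmu,\rho_z)$. (iii) The replacement $\widehat{W_1}(k)\mapsto\widehat g(0)$ on the main scale costs $Ca\bigl((Rk)^2+(R/\ell)^2\bigr)$ by Lemma~\ref{lem:Convolution} and yields the $(\rmu a)^{1/4}\sqrt R$ and $(R/\ell)^2\widehat{g\omega}(0)$ contributions once integrated against the LHY kernel. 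Adding the three contributions yields $J\geq J_0 - C\,\epsilon(\rmu,\rho_z)\rho_z^2 a\sqrt{\rho_z a^3}\ell^3$.

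The main obstacle is the low-momentum regime $|k|\sim(s\ell)^{-1}$, where $\sqrt{\mathcal{A}^2-\mathcal{B}^2}\sim|k|\sqrt{2\rho_z\widehat g(0)}$ is small so that the cancellation between $\mathcal{A}-\sqrt{\mathcal{A}^2-\mathcal{B}^2}$ and $\mathcal{B}^2/(2k^2)$ degrades. It is this near-cancellation that forces the logarithmic factors $\log(d^{-1})$ and $\log(K_\ell ds/(\rmu a^3)^{1/2})$ into $\epsilon(\rmu,\rho_z)$, and the delicate point is to verify that these logarithms are multiplied by genuinely small powers so that the total error stays subleading to the LHY correction under the parameter choices of Section~\ref{sec:params}.
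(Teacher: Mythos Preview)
Your proposal is correct and follows essentially the same approach as the paper: regularize by subtracting $\rho_z^2\widehat{W_1}(k)^2/(2k^2)$, identify that piece with $\widehat{g\omega}(0)$ via \eqref{eq:I2-integral}, rescale $k=\sqrt{\rho_z a}\,t$ to extract the explicit LHY integral, and control the discrepancy coming from $\tau\neq k^2$ and $\widehat{W_1}\neq\widehat g(0)$ by splitting momentum into ranges. The paper organizes the last step slightly differently---it first isolates an integral $I_1'$ that converges to the LHY constant (handled by a Taylor expansion of $\sqrt{1-x}$ for large $|t|$) and a separate integral $I_1''$ carrying all the $\tau$-vs-$k^2$ error, which it then splits into the three shells you describe---but the substance and the resulting error terms are the same.

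One small point worth flagging: to get the $(\rmu a)^{1/4}\sqrt R$ term from the replacement $\widehat{W_1}(k)\to\widehat g(0)$ you cannot simply use the quadratic bound $Ca(Rk)^2$ on the LHY scale, since integrating this against the tail $\sim t^{-4}$ of the Bogolubov integrand does not decay fast enough. The paper interpolates between $|\beta(t)-8\pi|\leq C$ and $|\beta(t)-8\pi|\leq C\sqrt{\rho_z a}\,R\,|t|$ to obtain $|\beta(t)-8\pi|\leq C(\rho_z a)^{1/4}\sqrt R\,|t|^{1/2}$, and it is this half-power that makes the large-$|t|$ integral converge with the stated error. You clearly have the right endpoint in mind, but make sure this interpolation step is explicit in your write-up.
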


\begin{proof}
We regularize the integral as
\begin{align}\label{LHYIntegral1}
&\int  {\mathcal A}(k) - \sqrt{{\mathcal A}(k)^2 - {\mathcal B}(k)^2} \,dk \nonumber \\
&=
\int  {\mathcal A}(k) - \sqrt{{\mathcal A}(k)^2 - {\mathcal B}(k)^2} - \rho_z^2 \frac{\widehat{W_1}(k)^2}{2 k^2}  \,dk
+
\rho_z^2 \int \frac{\widehat{W_1}(k)^2}{2 k^2}  \,dk.
\end{align}

The last integral is controlled by \eqref{eq:I2-integral} and contributes with the first and the last term in \eqref{eq:BogIntegral}.

In the regularized integral in \eqref{LHYIntegral1} we perform the change of variables $\sqrt{\rho_z a} \,t = k$.
In this way we get
\begin{align}
&\int  {\mathcal A}(k) - \sqrt{{\mathcal A}(k)^2 - {\mathcal B}(k)^2} - \rho_z^2 \frac{\widehat{W_1}(k)^2}{2 k^2}  \,dk
=
\rho_z^2 \sqrt{\rho_z a^3} a I_1,
\end{align}
with
\begin{align}
I_1 &= \int \alpha(t)  -
\sqrt{\alpha(t)^2 - \beta(t)^2} - \frac{\widehat{W_1}(\sqrt{\rho_z a} t)^2}{2 a^2 t^2}\,dt ,\nonumber \\
\alpha(t) &= \widetilde{\tau}(t) + a^{-1} \widehat{W_1}(\sqrt{\rho_z a} t), \nonumber \\
\beta(t) &=  a^{-1} \widehat{W_1}(\sqrt{\rho_z a} t),\nonumber\\
\widetilde{\tau}(t) &= (1-\varepsilon_T) \Big[ |t| - \frac{1}{2K_{\ell}s} (\frac{\rmu}{\rho_z})^{1/2}  \Big]_{+}^2
+ \varepsilon_T \Big[ |t| - \frac{1}{2K_{\ell}ds} (\frac{\rmu}{\rho_z})^{1/2} \Big]_{+}^2.
\end{align}
We will prove that $I_1 \approx - 64 \pi^4 \frac{128}{15\sqrt{\pi}}$.
For this we write $I_1$ as
\begin{align}
I_1 &= \int \alpha(t)  -
\sqrt{\alpha(t)^2 - \beta(t)^2} - \frac{\beta^2}{2t^2} \nonumber \\
&= \int \alpha(t)  - \frac{\beta^2}{2\alpha}-
\sqrt{\alpha(t)^2 - \beta(t)^2} + (\frac{\beta^2}{2\alpha}- \frac{\beta^2}{2t^2}) \nonumber \\
&=
\int \alpha(t)  - \frac{\beta^2}{2\alpha}-
\sqrt{\alpha(t)^2 - \beta(t)^2} + \frac{\beta^2}{2} \frac{t^2-\widetilde{\tau} - \beta}{t^2 \alpha} \nonumber \\
&= I_1' + I_1'',
\end{align}
with
\begin{align}\label{eq:SplitIntegrals}
I_1' &:= \int \alpha(t)  - \frac{\beta^2}{2\alpha}-
\sqrt{\alpha(t)^2 - \beta(t)^2} - \frac{\beta^3}{2 t^2 \alpha}, \nonumber\\
I_1''&:= \int \frac{\beta^2}{2} \frac{t^2-\widetilde{\tau}}{t^2 \alpha}.
\end{align}
It is not difficult to apply dominated convergence to the integral $I_1'$ to get
\begin{align}
I_1' \approx \int_{{\mathbb R}^3} t^2 + 8\pi - \frac{(8\pi)^2}{2 t^2}
- \sqrt{(t^2+8\pi)^2-(8\pi)^2} \,dt = -64 \pi^4 \frac{128}{15\sqrt{\pi}} .
\end{align}
More precisely, 
we will prove that
\begin{align}\label{eq:LebesgueError}
\left| I_1' - \int_{{\mathbb R}^3} t^2 + 8\pi - \frac{(8\pi)^2}{2 t^2}
- \sqrt{(t^2+8\pi)^2-(8\pi)^2} \,dt \right| \leq C 
 (\rmu a)^{\frac{1}{4}} \sqrt{R} + \varepsilon_T + (K_{\ell} s)^{-1}.
\end{align}
Notice that this is consistent with \eqref{eq:LebesgueError-Total}.

The part of both integrals where $|t| \leq 10 (K_{\ell } s)^{-1}$ is bounded by
$$
C (K_{\ell } s)^{-1},
$$
for sufficiently small $\rmu$ (using that $\rho_z \approx \rmu$).
This is in agreement with \eqref{eq:LebesgueError}.

For $|t| \geq 10 (K_{\ell } s)^{-1}$ we will use
\begin{align}\label{eq:tautilde-Est}
|\beta(t) - 8\pi| \leq C \sqrt{\rmu a} R |t|, \qquad
0 \leq t^2 - \widetilde{\tau}(t) \leq \varepsilon_T t^2 + \frac{1}{K_{\ell}s} (\frac{\rmu}{\rho_z})^{1/2} |t|.
\end{align}
Notice that it follows by interpolation that $|\beta(t) - 8\pi| \leq C (\rmu a)^{\frac{1}{4}} R^{\frac{1}{2}} |t|^{\frac{1}{2}}$ and also that
$\widetilde{\tau} \geq \frac{1}{2} t^2$ when $\varepsilon_T$ is sufficiently small (since $\frac{\rmu}{\rho_z}$ is close to $1$).

For $|t| \geq 100$ we use Taylor's formula with remainder (applied to $\sqrt{1-x}$) to write
\begin{align}\label{eq:Taylor}
&\int_{\{|t| \geq 10 (K_{\ell } s)^{-1}\}} \alpha(t)  - \frac{\beta^2}{2\alpha}-
\sqrt{\alpha(t)^2 - \beta(t)^2} - \frac{\beta^3}{2 t^2 \alpha}\,dt \nonumber \\
&\qquad -
\int_{\{|t| \geq 10 (K_{\ell } s)^{-1}\}}  t^2 + 8\pi - \frac{(8\pi)^2}{2 t^2}
- \sqrt{(t^2+8\pi)^2-(8\pi)^2} \,dt  \nonumber \\
&= 
\int_{\{ 10 (K_{\ell } s)^{-1} \leq |t| \leq 100\}} \Big( 
(\alpha - t^2 - 8\pi) - ( \frac{\beta^2}{2\alpha} - \frac{(8\pi)^2}{2 (t^2+8\pi)}) \nonumber \\
&\qquad \qquad \qquad 
- (\sqrt{\alpha(t)^2 - \beta(t)^2} -  \sqrt{(t^2+8\pi)^2-(8\pi)^2} )\Big) \,dt 
\nonumber \\
&\quad +\int_{\{|t| \geq 100\}} \int_0^1  f(\widetilde{\tau}, \beta, \sigma) -f(t^2, 8\pi , \sigma)  \,d\sigma\,dt \nonumber \\
&\quad - \int_{\{|t| \geq 10 (K_{\ell } s)^{-1}\}}   \frac{\beta^3}{2 t^2 \alpha} - \frac{ (8 \pi)^3}{2 t^2 (t^2 + 8 \pi)}\,dt,
\end{align}
with
\begin{align}
f(\tau, \beta, \sigma) := \frac{-\beta^4}{4} [ \tau^2 + 2 \beta \tau + (1-\sigma) \beta^2]^{-3/2} (1-\sigma).
\end{align}
The last integral in \eqref{eq:Taylor} is easily estimated, as
\begin{align}
\left| \int_{\{|t| \geq 10 (K_{\ell } s)^{-1}\}}   \frac{\beta^3}{2 t^2 \alpha} - \frac{ (8 \pi)^3}{2 t^2 (t^2 + 8 \pi)}\,dt \right| \leq
C \left( (\rmu a)^{\frac{1}{4}} \sqrt{R} + \varepsilon_T + (K_{\ell} s)^{-1}\right),
\end{align}
in agreement with \eqref{eq:LebesgueError}.

For the Taylor expansion part in \eqref{eq:Taylor}, we use that
$\widetilde{\tau}^2 + 2 \beta \widetilde{\tau} + (1-\sigma) \beta \geq \frac{1}{4} t^4$, when $|t| \geq 100$. Therefore,
\begin{align}
  \Big| f(\widetilde{\tau}, \beta, \sigma) &-f(t^2, 8\pi , \sigma) \Big|  \nonumber\\
&\leq
C |\beta^4 - (8\pi)^4| t^{-6} \nonumber \\
& \quad+ C t^2 \big| [ \widetilde{\tau}^2 + 2 \beta \widetilde{\tau} + (1-\sigma) \beta^2]^{-2} - 
[ t^4 + 16\pi t^2 + (1-\sigma) (8\pi)^2]^{-2} \Big| \nonumber \\
& \quad + C t^{-8}
\frac{ (\widetilde{\tau}^2 + 2 \beta \widetilde{\tau} + (1-\sigma) \beta^2) - (t^4 + 16\pi t^2 + (1-\sigma) (8\pi)^2)}{
\sqrt{\widetilde{\tau}^2 + 2 \beta \widetilde{\tau} + (1-\sigma) \beta^2} + \sqrt{t^4 + 16\pi t^2 + (1-\sigma) (8\pi)^2}}.
\end{align}
Now the integrals can easily be estimated to get an error consistent with \eqref{eq:LebesgueError}.

Finally, we consider the integral over $\{ 10 (K_{\ell } s)^{-1} \leq |t| \leq 100\}$ in \eqref{eq:Taylor}. Here one may estimate term by term and use the finiteness of the domain of integration.
Therefore, this part is also consistent with \eqref{eq:LebesgueError}, which finishes the proof of \eqref{eq:LebesgueError}.

The integral $I_1''$ from \eqref{eq:SplitIntegrals} is split in $3$ parts. For $|t| \leq 10 (K_{\ell}s)^{-1}$, we have $0 \leq t^2 - \widetilde{\tau}(t) \leq t^2$. Therefore,
\begin{align}
\left|  \int_{\{ |t| \leq 10 (K_{\ell}s)^{-1}\} } \frac{\beta^2}{2} \frac{t^2-\widetilde{\tau}}{t^2 \alpha} \right|
\leq C (K_{\ell}s)^{-1},
\end{align}
which is consistent with \eqref{eq:LebesgueError-Total}.

For $10 (K_{\ell} s)^{-1} \leq |t| \leq 10 (K_{\ell} d s)^{-1} $, we have \eqref{eq:tautilde-Est} above.
Therefore,
\begin{align}
\left|  \int_{\{ 10 (K_{\ell} s)^{-1} \leq |t| \leq 10 (K_{\ell} d s)^{-1}\} } \frac{\beta^2}{2} \frac{t^2-\widetilde{\tau}}{t^2 \alpha} \right|
\leq
C \varepsilon_T (K_{\ell} d s)^{-1} + C (K_{\ell} s)^{-1} \log(d^{-1}),
\end{align}
in agreement with \eqref{eq:LebesgueError-Total}.

Finally the case $|t| \geq 10 (K_{\ell} d s)^{-1}$.
Here, $0 \leq t^2 - \widetilde{\tau}(t) \leq C |t| ( (K_{\ell} s)^{-1} + \varepsilon_T (K_{\ell} d s)^{-1})$ and $\alpha \geq \frac{1}{2} t^2$.
Therefore, 
\begin{align}\label{eq:Logarithm}
&\left|  \int_{\{ |t| \geq 10 (K_{\ell} d s)^{-1} \} } \frac{\beta^2}{2} \frac{t^2-\widetilde{\tau}}{t^2 \alpha} \right| \nonumber \\
&\leq
C ( (K_{\ell} s)^{-1} + \varepsilon_T (K_{\ell} d s)^{-1}) \int_{\{ 10 (K_{\ell} d s)^{-1}\leq |t| \leq (\rho_z a^3)^{-1/2}\}} |t|^{-3} \nonumber \\
&\quad + C ( (K_{\ell} s)^{-1} + \varepsilon_T (K_{\ell} d s)^{-1}) (\rho_z a^3)^{1/2} a^{-2} 
 \int\frac{\widehat{W}_1(\sqrt{\rho_z a} t)^2}{t^2} \nonumber \\
 &\leq C ( (K_{\ell} s)^{-1} + \varepsilon_T (K_{\ell} d s)^{-1}) \left( \log\big( \frac{K_{\ell} d s }{(\rmu a^3)^{1/2}}\big) + 1 \right). 
\end{align}
Since this estimate is also in agreement with \eqref{eq:LebesgueError-Total}, 
this finishes the proof of Lemma~\ref{lem:BogIntegral}.
\end{proof}

\subsection{The control of ${\mathcal Q}_3(z)$}
\label{subsec:Q3}
The quadratic Hamiltonian $(2\pi)^{-3} \ell^3 \int {\mathcal D}_k b_k^{\dagger} b_{k}\,dk$ from \eqref{eq:BogHamDiag} turns out to control the $3Q$-term ${\mathcal Q}_3(z)$ from \eqref{eq:kroelleQ3z}.
This we summarize as follows

\begin{theorem}\label{thm:Control3Q}
Assume that $\Phi$ satisfies \eqref{eq:LocalizedAftercNumber}.
Assume furthermore that \eqref{eq:Close} and \eqref{eq:AssumptionK_R} are satisfied.
Let $\lille$ be as defined in \eqref{eq:DefLille}.
We will furthermore assume \eqref{con:eTdK}, \eqref{con:KBellKM}, \eqref{con:KLKH}, \eqref{con:rough1}, \eqref{con:KellKM}, and \eqref{con:KellKH}.

Then,
\begin{align}\label{eq:Q3Cancels}
\Big\langle &
\Phi, (2\pi)^{-3} \ell^3 \int {\mathcal D}_k b_k^{\dagger} b_{k}\,dk
+
{\mathcal Q}_3(z) + {\mathcal Q}_2^{\rm ex}
+
\rho_z z \widehat{W_1 \omega}(0)
(2\pi)^{-3} \int \widehat{\chi_{\Lambda}^2}(s)  ( \widetilde{a}_s^{\dagger} + \widetilde{a}_s)\,ds \nonumber \\
&
+ \frac{b}{50} (\frac{1}{\ell^2} n_{+} + \frac{\varepsilon_{T}}{(d \ell)^2} n_{+}^H)
\Big) \Phi \Big\rangle \nonumber \\
&\geq
-C \rmu^2 a \ell^3 \Big[
\sqrt{\frac{{\mathcal M}}{|z|^2}} \Big(
\KH^{-1} (\rmu a^3)^{\frac{5}{12}} + (K_{\ell} K_{L})^{-M}
+ K_{\ell}^{3} K_{L}^{3} ( K_{\ell}^{-2} \lille )^{\frac{M-1}{2}}
\Big)  
\nonumber \\
&\quad\quad\quad\quad\quad\quad 
+  \lille^4 \frac{a}{\ell}
+  \sqrt{\rmu a^3} \left( K_{\ell}^{-3} d^{-12}\delta^2 
\left( K_{\ell}^{-2} \delta\right)^{M-1}\right) \Big].
\end{align}
\end{theorem}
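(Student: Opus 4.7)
The strategy is to complete a square against the positive Bogolubov operator $(2\pi)^{-3}\ell^3\int{\mathcal D}_k b_k^\dagger b_k\,dk$ so as to absorb the cubic $3Q$-term ${\mathcal Q}_3(z)$. On $P_H$ one has $c_k=0$ and $b_k=a_k+\alpha_k a_{-k}^\dagger$, with $|\alpha_k|\lesssim \widehat W_1(k)/{\mathcal D}_k\lesssim\lille^2$. For $k\in P_H$ define
\[
\widetilde b_k := b_k + \frac{z\,\widehat W_1(k)}{2(2\pi)^3\,{\mathcal D}_k}\int f_L(s)\,a_{s-k}^\dagger\widetilde a_s\,ds,
\]
together with its natural companion (shifting $b_k^\dagger$) designed to produce the Hermitian conjugate of ${\mathcal Q}_3(z)$. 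The shift makes sense since ${\mathcal D}_k\geq c k^2$ on $P_H$ and $\int\widehat W_1(k)^2/(2k^2)\,dk\lesssim a$ by \eqref{eq:I2-integral}. From $\int_{P_H}{\mathcal D}_k\widetilde b_k^\dagger\widetilde b_k\,dk\geq0$, expansion of the square yields the desired inequality up to three classes of residuals: the quartic piece $\int{\mathcal D}_k S_k^\dagger S_k\,dk$, the cross-corrections from the $\alpha_k a_{-k}$ part of $b_k^\dagger$, and all the lower-order pieces produced by normal-ordering.

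\textbf{Matching the ${\mathcal Q}_2^{\rm ex}$ and linear terms on the left-hand side.} The $\alpha_k a_{-k}$-cross-correction yields $\alpha_k a_{-k}a_{s-k}^\dagger\widetilde a_s$; writing $a_{-k}a_{s-k}^\dagger=a_{s-k}^\dagger a_{-k}+[a_{-k},a_{s-k}^\dagger]$ splits this into a genuinely cubic piece (bounded below) and a quadratic piece. Similarly, inside $\int{\mathcal D}_k S_k^\dagger S_k\,dk$ the normal-ordering $a_{s-k}^\dagger a_{s'-k}$ isolates a commutator-driven quadratic piece. Integrating both quadratic pieces in $k\in P_H$ against the weights $\alpha_k\widehat W_1(k)/{\mathcal D}_k$ and $\widehat W_1(k)^2/{\mathcal D}_k$ respectively, and tracking the $\widehat W_1$ versus $\widehat g$ identities from \eqref{eq:defW_12}, one finds that, up to errors of relative size $R^2/\ell^2$ (controlled via \eqref{eq:AssumptionK_R} and \eqref{eq:I2-integral}), the total quadratic residual cancels $-{\mathcal Q}_2^{\rm ex}(z)$ and reconstitutes precisely the linear operator $\rho_z z\,\widehat{W_1\omega}(0)(2\pi)^{-3}\int\widehat{\chi_\Lambda^2}(s)(\widetilde a_s^\dagger+\widetilde a_s)\,ds$ present on the left of \eqref{eq:Q3Cancels}. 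The $\widehat{\chi_\Lambda^2}$ kernel (rather than $\widehat{\chi_\Lambda}$) is forced by the $\widehat{\chi^2}$-kernel appearing in the commutator \eqref{eq:Commutator_general}.

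\textbf{Bounding the surviving cubic and quartic remainders.} The cubic piece of the $\alpha_k$-correction is controlled by Cauchy--Schwarz using $|\alpha_k|\lesssim\lille^2$ on $P_H$ and $\int_{P_H}\widehat W_1(k)^2/{\mathcal D}_k\,dk\lesssim a$, yielding the $\lille^4 a/\ell$ error. The normal-ordered quartic piece has the schematic form
\[
\iint f_L(s)f_L(s')\widetilde a_{s'}^\dagger\widetilde a_s\!\int_{P_H}\!\frac{|z|^2\widehat W_1(k)^2}{4(2\pi)^6{\mathcal D}_k}\,a_{s-k}^\dagger a_{s'-k}\,dk\,ds\,ds'.
\]
Cauchy--Schwarz against the a priori bounds $n_+,n_+^H\leq{\mathcal M}$ from \eqref{eq:LocalizedAftercNumber}, the support measure of $f_L$ equal to $\sim K_L^3(\rmu a)^{3/2}\ell^3$, and the lower edge of $P_H$ (which gives $\int_{P_H}\widehat W_1(k)^2/k^4\,dk\lesssim a^2\cdot K_H/(a^{-1})$ by \eqref{eq:KHtilde-cutoff}) produces the $\sqrt{{\mathcal M}/|z|^2}\,\KH^{-1}(\rmu a^3)^{5/12}$ term. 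The $(K_\ell K_L)^{-M}$ error is the boundary contribution from the smooth cutoff $f_L$, handled exactly as in Lemma~\ref{lem:Q3-splitting1} by a Taylor expansion with remainder. The remaining $K_\ell^3 K_L^3(K_\ell^{-2}\lille)^{(M-1)/2}$ and $(\rmu a^3)^{1/2}K_\ell^{-3}d^{-12}\lille^2(K_\ell^{-2}\lille)^{M-1}$ errors collect the decay tails of the non-$\delta$ part $\widehat{\chi^2}(\cdot\ell)-\widehat\chi(\cdot\ell)\overline{\widehat\chi(\cdot\ell)}$ of \eqref{eq:Commutator_general}, estimated via Lemma~\ref{lem:DecaychiHat}. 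Any remaining quadratic residual of schematic type $\rmu a\cdot n_+$ or $\rmu a\cdot n_+^H$ is absorbed in the gaps $(b/50)\ell^{-2}n_+$ and $(b/50)\varepsilon_T(d\ell)^{-2}n_+^H$, which is possible by \eqref{con:eTdK}; the parameter conditions \eqref{con:KBellKM}, \eqref{con:KLKH}, \eqref{con:rough1}, \eqref{con:KellKM} and \eqref{con:KellKH} then ensure that the total error lies at or below the LHY scale $\rmu^2 a\ell^3\sqrt{\rmu a^3}$.

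\textbf{Main obstacle.} The difficulty is not any single estimate but the tight algebraic bookkeeping: each normal-ordering step produces residuals (cubic, quadratic, linear, scalar) that must either exactly reproduce one of the explicit operators on the left of \eqref{eq:Q3Cancels}---the ${\mathcal Q}_2^{\rm ex}(z)$ with \emph{both} $\widehat{W_1\omega}(k)$ and $\widehat{W_1\omega}(0)$ coefficients as chosen in \eqref{eq:Kz}, or the $\widehat{\chi_\Lambda^2}$-weighted linear operator---or be absorbable in the $\ell^{-2}n_+$ and $\varepsilon_T(d\ell)^{-2}n_+^H$ gaps. This matching dictates the specific form of ${\mathcal Q}_2^{\rm ex}$ and the specific linear term included in the statement, and rests essentially on the fact that $\widehat{W_1\omega}=\widehat{W_1}-\widehat g$ up to controllable $R^2/\ell^2$ errors. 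That the commutator \eqref{eq:Commutator_general} is a non-trivial $\widehat{\chi^2}$-kernel (rather than a delta function) produces the most delicate tail errors, whose decay is extracted from Lemma~\ref{lem:DecaychiHat} using the regularity class $C^{M-1}$ of $\chi$ chosen in Section~\ref{sec:params}.
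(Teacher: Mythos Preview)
Your overall strategy---complete a square against the diagonalized Bogolubov operator so that the cross term reproduces ${\mathcal Q}_3(z)$, then identify the commutator residuals with the explicit ${\mathcal Q}_2^{\rm ex}$ and linear terms on the left of \eqref{eq:Q3Cancels}---is the same as the paper's. The paper, however, organizes the calculation differently: it first inverts the Bogolubov map, writing $a_k=(1-\alpha_k^2)^{-1}(b_k-\alpha_k b_{-k}^\dagger)$, so that ${\mathcal Q}_3(z)$ splits into four pieces ${\mathcal Q}_3^{(1)}+\cdots+{\mathcal Q}_3^{(4)}$ according to the number of $\alpha$'s. The square is then completed with a shift built from $b_{s-k}^\dagger\widetilde a_s$ (not $a_{s-k}^\dagger\widetilde a_s$), which keeps the bookkeeping cleaner.

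The substantive gap in your sketch is the treatment of the normal-ordered quartic residual. You propose to bound
\[
\ell^3\int_{P_H}\frac{|z|^2\widehat W_1(k)^2}{{\mathcal D}_k}\iint f_L(s)f_L(s')\,\widetilde a_{s'}^\dagger\, a_{s-k}^\dagger a_{s'-k}\,\widetilde a_s
\]
directly by Cauchy--Schwarz and $n_+\leq{\mathcal M}$, claiming this yields the $\sqrt{{\mathcal M}/|z|^2}\,\KH^{-1}(\rmu a^3)^{5/12}$ error. This does not work: the operator $a_{s-k}^\dagger a_{s'-k}$ (or $b_{s-k}^\dagger b_{s'-k}$) is not a priori bounded by ${\mathcal M}$, and the $k$-integral $\int_{P_H}\widehat W_1^2/{\mathcal D}_k$ is of order $a$, not something carrying a $\KH^{-1}$ gain. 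The paper instead \emph{reserves} a $\lille^2$-fraction of the positive Bogolubov term $(2\pi)^{-3}\ell^3\int_{P_H}{\mathcal D}_k b_k^\dagger b_k$ precisely to re-absorb this quartic (and the cubic $\alpha b^\dagger b$ pieces you call ``genuinely cubic''); the smallness condition making this work is \eqref{con:KellKM}. Your square completion consumes the full Bogolubov term, leaving nothing for this re-absorption.

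Two smaller corrections. First, on $P_H$ one has $|\alpha_k|\lesssim\lille$, not $\lille^2$ (see \eqref{eq:alphaInPH}); consequently the cubic $\alpha_k$-correction is not dispatched by a single Cauchy--Schwarz and is another place where re-absorption into $\lille^2\int{\mathcal D}_k b_k^\dagger b_k$ is needed. Second, the $\sqrt{{\mathcal M}/|z|^2}\,\KH^{-1}(\rmu a^3)^{5/12}$ error actually arises from the approximation $\int_{P_H}\widehat W_1(k)\alpha_k\,dk\approx\rho_z\widehat{W_1\omega}(0)$ (Lemma~\ref{lem:EstimateIntegral}) used in cancelling the linear term, and the $\lille^4 a/\ell$ error comes from the $O(\alpha^2)$ correction to the commutator $[b_{s-k},b_{-k}^\dagger]$ in that same cancellation---not from the sources you name.
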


\begin{proof}[Proof of Theorem~\ref{thm:Control3Q}]

Notice that 
\begin{align}\label{eq:BoverA}
|{\mathcal B}(k)/{\mathcal A}(k)| \leq C \lille , \qquad \forall |k| \geq \frac{1}{2} \KH^{-1} (\rmu a^3)^{5/12} a^{-1}.
\end{align}
In particular, $|{\mathcal B}(k)/{\mathcal A}(k)| \leq \frac{1}{2}$, for $\rmu$ sufficiently small.

This implies, by expansion of the square root that for all $|k| \geq \frac{1}{2} \KH^{-1} (\rmu a^3)^{5/12} a^{-1}$,
\begin{align}\label{eq:alphaInPH}
| \alpha_k| = | {\mathcal B}(k)|^{-1} \left( {\mathcal A}(k) - \sqrt{{\mathcal A}(k)^2 - {\mathcal B}(k)^2}\right)| \leq C \lille.
\end{align}
In particular, \eqref{eq:BoverA} and \eqref{eq:alphaInPH} are valid for $k=k'-s$, when $s \in P_L$ and $k' \in P_H$.

For later convenience, we reformulate the first-order operator in\eqref{eq:Q3Cancels} in terms of the $\widetilde{a}_s$. We get
\begin{align}
&-\rho_z z \widehat{W_1 \omega}(0)
(2\pi)^{-3} \int \widehat{\chi_{\Lambda}}(s)  ( a_s^{\dagger} +  a_s)\,ds \nonumber \\
&=
-\rho_z z \widehat{W_1 \omega}(0)
(2\pi)^{-3} \int \widehat{\chi_{\Lambda}^2}(s)  ( \widetilde{a}_s^{\dagger} + \widetilde{a}_s)\,ds \nonumber \\
&= -\rho_z z \widehat{W_1 \omega}(0)
(2\pi)^{-3}\ell^3  \int \widehat{\chi^2}(s\ell)  ( \widetilde{a}_s^{\dagger} + \widetilde{a}_s)\,ds.
\end{align}

We start by rewriting ${\mathcal Q}_3(z)$ in terms of the $b_k$'s defined in \eqref{eq:Defb_k}.
Notice that $c_k, c_{s-k}=0$ if $k \in P_H$ and $s\in P_L$.
We find the basic relation (we will freely use that all involved functions are symmetric, e.g. $\alpha_k = \alpha_{-k}$)
\begin{align}
a_{s-k} &= \frac{1}{1-\alpha_{s-k}^2} \Big(b_{s-k} - \alpha_{s-k} b_{k-s}^{\dagger}
\Big), &
a_k &= \frac{1}{1-\alpha_k^2} \Big(b_k - \alpha_k b_{-k}^{\dagger}
\Big).
\end{align}
Therefore,
\begin{align}\label{eq:asInTermsOfbs}
&a_{s-k} a_k  \\
&= \frac{1}{1-\alpha_k^2}  \frac{1}{1-\alpha_{s-k}^2}\Big(
b_{s-k} b_k  - \alpha_k b_{-k}^{\dagger} b_{s-k} - \alpha_{s-k} b_{k-s}^{\dagger} b_k
+ \alpha_k \alpha_{s-k} b_{k-s}^{\dagger}  b_{-k}^{\dagger} 
-\alpha_{k} [b_{s-k}, b_{-k}^{\dagger}]
\Big) .\nonumber
\end{align}

We will decompose ${\mathcal Q}_3(z)$ according to the different terms in \eqref{eq:asInTermsOfbs}, i.e.
\begin{align}
{\mathcal Q}_3(z) = {\mathcal Q}_3^{(1)}(z) + {\mathcal Q}_3^{(2)}(z) + {\mathcal Q}_3^{(3)}(z) + {\mathcal Q}_3^{(4)}(z), 
\end{align}
where
\begin{align}
{\mathcal Q}_3^{(1)}(z) &:= z \ell^3 (2\pi)^{-6} \iint_{\{ k \in P_H\}}
\frac{f_L(s)
\widehat{W}_1(k)}{(1-\alpha_k^2)(1-\alpha_{s-k}^2)} \left(\widetilde{a}_s^{\dagger}  
b_{s-k} b_k + \alpha_k \alpha_{s-k} \widetilde{a}_s^{\dagger}   b_{k-s}^{\dagger} b_{-k}^{\dagger} + h.c.
 \right), \nonumber \\
{\mathcal Q}_3^{(2)}(z) &:= - z \ell^3 (2\pi)^{-6} \iint_{\{ k \in P_H\}}
\frac{f_L(s)
\widehat{W}_1(k)\alpha_k}{(1-\alpha_k^2)(1-\alpha_{s-k}^2)} \left(\widetilde{a}_s^{\dagger}  b_{-k}^{\dagger} b_{s-k}
+ b_{s-k}^{\dagger}  b_{-k} \widetilde{a}_s \right), \nonumber \\
{\mathcal Q}_3^{(3)}(z) &:= - z \ell^3 (2\pi)^{-6} \iint_{\{ k \in P_H\}}
\frac{f_L(s)
\widehat{W}_1(k)\alpha_{s-k}}{(1-\alpha_k^2)(1-\alpha_{s-k}^2)} \left(\widetilde{a}_s^{\dagger}  b_{k-s}^{\dagger} b_{k}
+ b_{k}^{\dagger}  b_{k-s} \widetilde{a}_s \right), \nonumber \\
\intertext{and}
{\mathcal Q}_3^{(4)}(z) &:= (2\pi)^{-6} z \ell^3 \iint_{k \in P_H} f_L(s) \widehat{W}_1(k) \frac{-\alpha_k}{(1-\alpha_k^2)(1-\alpha_{s-k}^2)}[b_{s-k},b_{-k}^{\dagger}] (\widetilde{a}_s^{\dagger} + \widetilde{a}_s).
\end{align}
The different ${\mathcal Q}_3^{(j)}(z)$'s will be estimated individually. The result of this is summarized in Lemma~\ref{lem:EstimatesOnQ3-js}.
Theorem~\ref{thm:Control3Q} follows by adding the estimates of Lemma~\ref{lem:EstimatesOnQ3-js}.
We have used that the $K$'s are larger than $1$ and \eqref{con:KBellKM} to simplify the total remainder.
This finishes the proof.
\end{proof}

\begin{lemma}\label{lem:EstimatesOnQ3-js}
Let $\lille$ be as defined in \eqref{eq:DefLille}.
Assume that $\Phi$ satisfies \eqref{eq:LocalizedAftercNumber}.
Assume furthermore that \eqref{eq:AssumptionK_R}, \eqref{eq:Close}, \eqref{con:KLKH}, \eqref{con:rough1}, \eqref{con:KellKM}, \eqref{con:eTdK} and  \eqref{con:KellKH} are satisfied.
Then,
\begin{align}
\label{eq:EstimateQ3Tilde1} 
\Big\langle \Phi, &\Big( {\mathcal Q}_3^{(1)}(z)
+ (1-\lille^2) (2\pi)^{-3} \ell^3 \int_{\{|k| \geq \frac{1}{2} K_H^{-1} a^{-1}\}} {\mathcal D}_k b_{k}^{\dagger} b_k 
+ {\mathcal Q}_2^{\rm ex}
+ \frac{b}{100} (\frac{1}{\ell^2} n_{+} + \frac{\varepsilon_{T}}{(d \ell)^2} n_{+}^H)
\Big) \Phi \Big\rangle \nonumber \\
&\geq  -C \rmu^2 a \ell^3 \lille K_{\ell}^{-3/2} (\rmu a^3)^{\frac{1}{4}}  
({\mathcal M} + (K_{\ell}^3 K_L^3)) 
\left( K_{\ell}^{-2} \lille \right)^{\frac{M-1}{2}} \nonumber \\
&\quad  - C \rmu^2 a \ell^3 \sqrt{\rmu a^3} \left( K_{\ell}^{-3} d^{-12}\delta^2 
\left( K_{\ell}^{-2} \KH^2 (\rmu a^3)^{\frac{1}{6}}\right)^{M-1}\right), \\
\label{eq:EstimateQ3Tilde2+3}
\Big\langle \Phi,& \Big( {\mathcal Q}_3^{(2)}(z) + {\mathcal Q}_3^{(3)}(z)
+ \lille^2 (2\pi)^{-3} \ell^3 \int_{\{|k| \geq \frac{1}{2} K_H^{-1} a^{-1}\}} {\mathcal D}_k b_{k}^{\dagger} b_k
\Big) \Phi \Big\rangle 
\nonumber \\ &
\geq - C \rmu^2 a \ell^3 
( K_{\ell}^{-2}\lille )^{\widetilde{M}} K_L^3 K_{\ell}^{3/2} (\rmu a^3)^{\frac{1}{4}},\\
\label{eq:EstimateQ3Tilde4}
\Big\langle \Phi, &\Big( {\mathcal Q}_3^{(4)}(z) + \rho_z z \widehat{W_1 \omega}(0)
(2\pi)^{-3} \int \widehat{\chi_{\Lambda}^2}(s)  ( \widetilde{a}_s^{\dagger} + \widetilde{a}_s)\,ds 
+  \frac{1}{100} \frac{b}{\ell^2} n_{+}
\Big) \Phi \Big\rangle \nonumber \\
&\geq
-C \rho_z^2 a \ell^3 \sqrt{\frac{{\mathcal M}}{|z|^2}} \Big(
\KH^{-1} (\rmu a^3)^{\frac{5}{12}} + (K_{\ell} K_{L})^{-M}
+ K_{\ell}^{3/2} K_{L}^{3/2} ( K_{\ell}^{-2} \lille )^{\frac{M-1}{2}}
\Big)  - C \rmu^2 a \ell^3 \lille^4 \frac{a}{\ell}.
\end{align}

\end{lemma}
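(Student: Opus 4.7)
The plan is to estimate each of the four operators $\mathcal{Q}_3^{(j)}(z)$ individually using weighted Cauchy--Schwarz bounds, absorbing the resulting positive quadratic forms either into the diagonalized Bogolubov kinetic term $(2\pi)^{-3}\ell^3\int \mathcal{D}_k b_k^\dagger b_k\,dk$ (which is generous because $k \in P_H$ forces $\mathcal{D}_k \gtrsim k^2 \gtrsim K_H^{-2}a^{-2}$), into the Neumann gaps $\tfrac{b}{\ell^2}n_+$ and $\tfrac{\varepsilon_T b}{(d\ell)^2}n_+^H$, or into $\mathcal{Q}_2^{\rm ex}$. Throughout, I would use the a priori bound $n_+\leq \mathcal{M}$ from \eqref{eq:LocalizedAftercNumber}, the smallness $|\alpha_k|\leq C\lille$ on $P_H$ from \eqref{eq:alphaInPH}, and the b-algebra commutators \eqref{eq:bComms1}--\eqref{eq:bComms2}.

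For \eqref{eq:EstimateQ3Tilde1}, the leading piece of $\mathcal{Q}_3^{(1)}$ has the form $z\,\widetilde{a}_s^\dagger b_{s-k} b_k+\text{h.c.}$ with $k\in P_H$ and $s\in P_L$ (so $|s-k|\gtrsim (d\ell)^{-1}$). A weighted Cauchy--Schwarz
\begin{equation*}
\pm(z\,\widetilde{a}_s^\dagger b_{s-k} b_k+\text{h.c.})\leq \eta_k \mathcal{D}_k\, b_k^\dagger b_k + \eta_k^{-1}\frac{|z|^2|\widehat{W}_1(k)|^2}{\mathcal{D}_k}\,b_{s-k}^\dagger \widetilde{a}_s\widetilde{a}_s^\dagger b_{s-k}
\end{equation*}
allows the first term, after integration in $s$, to eat a fraction of $(1-\lille^2)\int_{P_H}\mathcal{D}_k b_k^\dagger b_k$; for the second term, I bound $\widetilde{a}_s\widetilde{a}_s^\dagger \leq C(n_++1)$ and control $b_{s-k}^\dagger b_{s-k}$ by $a_{s-k}^\dagger a_{s-k}$ plus lower order (using $|\alpha_{s-k}|\leq C$), which is absorbed into $\mathcal{Q}_2^{\rm ex}$ and the $n_+^H$ gap once we use $\mathcal{D}_k\gtrsim k^2$ on $P_H$, the mass integral $\int \widehat{W}_1^2/(2k^2)\approx \widehat{g\omega}(0)$ from \eqref{eq:I2-integral}, and $n_+\leq\mathcal{M}$. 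The companion piece $\alpha_k\alpha_{s-k}\widetilde{a}_s^\dagger b_{k-s}^\dagger b_{-k}^\dagger$ carries the double smallness $|\alpha_k\alpha_{s-k}|\leq C\lille^2$, so is treated the same way with room to spare. The final remainder term involving $(K_\ell^{-2}\KH^2(\rmu a^3)^{1/6})^{M-1}$ originates from integrating by parts $M-1$ times against $\widehat{\chi}$ to control the off-diagonal $\widetilde{a}_s$--$b_{s-k}$ pairing, using the decay of $\widehat\chi$ from Lemma~\ref{lem:DecaychiHat}.

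For \eqref{eq:EstimateQ3Tilde2+3}, both $\mathcal{Q}_3^{(2)}$ and $\mathcal{Q}_3^{(3)}$ carry one extra factor $\alpha$, so the same Cauchy--Schwarz as above gives quadratic forms prefactored by $\alpha_k^2$ or $\alpha_{s-k}^2 \lesssim \lille^2$. Choosing the splitting weight $\eta \sim \lille^{-2}$, the first (kinetic-type) term fits inside the $\lille^2 \int_{P_H}\mathcal{D}_k b_k^\dagger b_k$ reserve reserved on the LHS, and the $\alpha^2\lesssim\lille^2$ exactly cancels the $\eta^{-1}$ blow-up in the second term, which is then controlled by $n_+\leq \mathcal{M}$. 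The remainder factors $K_L^3K_\ell^{3/2}(K_\ell^{-2}\lille)^{\widetilde M}$ again come from the volume of the $P_L$-integration and the $\widehat\chi$-decay used to kill the off-diagonal part of the commutator \eqref{eq:bComms2}.

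The main obstacle is \eqref{eq:EstimateQ3Tilde4}, which requires a genuine cancellation and not merely a Cauchy--Schwarz absorption. Using \eqref{eq:bComms2} we have $[b_{s-k},b_{-k}^\dagger]=(1-\alpha_{s-k}\alpha_k)(\widehat{\chi^2}(s\ell)-\widehat{\chi}((s-k)\ell)\widehat{\chi}(k\ell))$, and the leading contribution to $\mathcal{Q}_3^{(4)}$ is
\begin{equation*}
-(2\pi)^{-6}z\ell^3\, \widehat{\chi^2}(s\ell)f_L(s)\int_{k\in P_H}\frac{\widehat{W}_1(k)\alpha_k}{(1-\alpha_k^2)(1-\alpha_{s-k}^2)}\,dk\,(\widetilde{a}_s^\dagger+\widetilde{a}_s).
\end{equation*}
For $k\in P_H$ the kinetic energy dominates so $\alpha_k\approx \mathcal{B}(k)/(2\mathcal{A}(k))\approx \rho_z\widehat{W}_1(k)/(2k^2)$, whence the inner integral is asymptotically $\int \widehat{W}_1(k)^2/(2k^2)\,dk \approx \widehat{g\omega}(0)\approx \widehat{W_1\omega}(0)$ via \eqref{es:scatteringFourier} and \eqref{eq:I2-integral}. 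This leading contribution exactly cancels the explicit first-order term $\rho_z z\widehat{W_1\omega}(0)(2\pi)^{-3}\int\widehat{\chi_\Lambda^2}(s)(\widetilde{a}_s^\dagger+\widetilde{a}_s)\,ds$ included on the LHS of \eqref{eq:EstimateQ3Tilde4}. The four error terms in the stated bound then account precisely for: (i) restricting $k$ to $P_H$ (error $\KH^{-1}(\rmu a^3)^{5/12}$ from the tail integral); (ii) the cutoff $f_L(s)$ and restriction to $|s|$ small, controlled by $(K_\ell K_L)^{-M}$ via repeated integration by parts against $\widehat\chi$; (iii) the off-diagonal $\widehat{\chi}((s-k)\ell)\widehat{\chi}(k\ell)$ piece of the commutator, giving $K_\ell^{3/2}K_L^{3/2}(K_\ell^{-2}\lille)^{(M-1)/2}$; and (iv) the next-order correction $1-\alpha_k\alpha_{s-k}\neq 1$ and $\alpha_k-\widehat{W}_1/(2k^2)$, of size $\lille^4 a/\ell$. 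The residual first-order-in-$\widetilde{a}_s$ operator left after cancellation is finally disposed of by Cauchy--Schwarz against $n_+\leq \mathcal{M}$ and the reserved gap $\tfrac{b}{100\ell^2}n_+$, producing the $\sqrt{\mathcal{M}/|z|^2}$ prefactor.
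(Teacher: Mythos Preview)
Your treatment of \eqref{eq:EstimateQ3Tilde4} follows the paper: the commutator in $\mathcal{Q}_3^{(4)}$ is evaluated, the $k$-integral $\int_{P_H}\widehat{W}_1(k)\alpha_k\,dk$ is compared to $\rho_z\widehat{W_1\omega}(0)$ via Lemma~\ref{lem:EstimateIntegral}, and the cancellation with the linear term produces the four stated errors. For \eqref{eq:EstimateQ3Tilde2+3} your sketch is vague but in the right direction; the paper commutes $\widetilde{a}_s^\dagger$ past $b_{-k}^\dagger$ first, so that in the Cauchy--Schwarz output $b_{-k}^\dagger\widetilde{a}_s^\dagger\widetilde{a}_s b_{-k}$ the outer $b_{-k}$ are $s$-independent and one may use $\int f_L(s)\widetilde{a}_s^\dagger\widetilde{a}_s\,ds\leq(2\pi)^3\ell^{-3}n_+$ inside the sandwich before absorbing both pieces into the $\lille^2$-reserve.

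There is, however, a genuine gap in your handling of \eqref{eq:EstimateQ3Tilde1}. A pointwise-in-$s$ Cauchy--Schwarz forces $\eta_k\lesssim(\int f_L)^{-1}\sim|P_L|^{-1}$, because the $b_k^\dagger b_k$ piece is $s$-independent and acquires a factor $\int f_L(s)\,ds$ before being absorbed into $(1-\lille^2)(2\pi)^{-3}\ell^3\int_{P_H}\mathcal{D}_kb_k^\dagger b_k$. The second piece then carries $\eta_k^{-1}\gtrsim|P_L|$, and here \emph{both} $\widetilde{a}_s$ and $b_{s-k}$ depend on $s$, so you cannot integrate one of them away inside the sandwich; your bound $\widetilde{a}_s\widetilde{a}_s^\dagger\leq n_++1$ costs a further factor $\mathcal{M}$. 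Neither $\mathcal{Q}_2^{\rm ex}$ (whose coefficient $\sim 2\rho_z\widehat{W_1\omega}(0)$ has no room for an extra $\ell^3|P_L|\,\mathcal{M}$) nor the gap terms can absorb the result with an error of the stated size.

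The paper avoids this by completing the square in $b_k$ \emph{after} the $s$-integration: with
\[
c_k:=b_k+\frac{z(2\pi)^{-3}}{(1-2\lille^2)\mathcal{D}_k}\int f_L(s)\,\frac{\widehat{W}_1(k)}{(1-\alpha_k^2)(1-\alpha_{s-k}^2)}\bigl(b_{s-k}^\dagger\widetilde{a}_s+\alpha_k\alpha_{s-k}\widetilde{a}_{-s}b_{s-k}^\dagger\bigr)\,ds
\]
one discards $(1-2\lille^2)\mathcal{D}_kc_k^\dagger c_k\geq0$ and obtains a remainder $T_2(k)$ carrying the \emph{double} integral $\iint f_L(s)f_L(s')\,\widetilde{a}_{s'}^\dagger b_{s'-k}\,b_{s-k}^\dagger\widetilde{a}_s\,ds\,ds'$. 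The commutator $[b_{s'-k},b_{s-k}^\dagger]\approx\widehat{\chi^2}((s-s')\ell)$ then merges the two low-momentum operators via \eqref{eq:RepInTermsOfatilde} into the one-body operator $\sum_jQ_{L,j}\chi_\Lambda^2Q_{L,j}$ with \emph{no} $|P_L|$ or $\mathcal{M}$ blow-up, and the prefactor $(2\pi)^{-3}\int_{P_H}\widehat{W}_1^2/\mathcal{D}_k\,dk\approx2\widehat{W_1\omega}(0)$ (from \eqref{eq:CompIntegrals-Final-2}) is exactly what $\mathcal{Q}_2^{\rm ex}$ supplies, after replacing $Q_L$ by $Q$ at the cost of an $n_+^H$-error absorbed into the $\varepsilon_T(d\ell)^{-2}n_+^H$ gap. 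The normal-ordered piece $T_{2,\rm op}'$ and the commutator leftovers $T_1,\,T_2''$ are then genuinely lower order and furnish the error terms in \eqref{eq:EstimateQ3Tilde1}. The point you are missing is that completing the square after integrating in $s$ (rather than applying Cauchy--Schwarz inside it) preserves the off-diagonal $(s,s')$-structure needed to reconstruct a one-body operator that matches $\mathcal{Q}_2^{\rm ex}$ with the correct coefficient.
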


\begin{proof}[Proof of Lemma~\ref{lem:EstimatesOnQ3-js}]
The proofs of \eqref{eq:EstimateQ3Tilde1}, \eqref{eq:EstimateQ3Tilde2+3} and \eqref{eq:EstimateQ3Tilde4} are each rather lengthy and will be carried out individually.

\begin{proof}[Proof of \eqref{eq:EstimateQ3Tilde4}]
Notice, using Lemma~\ref{lem:DecaychiHat} applied to $\chi^2$ that
\begin{align}
\left\| \widehat{\chi_{\Lambda}^2}(s) \Big(1-f_L(s)\Big) \right\|_{\infty} \leq C_0 \ell^3 (1+(K_{\ell} K_L)^2)^{-M},
\end{align}
with $C_0 = \int \big|(1-\Delta)^M \chi^2\big|$.
Therefore, 
by a simple application of the Cauchy-Schwarz inequality, we get for any state $\Phi$ satisfying \eqref{eq:LocalizedAftercNumber}
\begin{align}\label{eq:CSon1a-terms}
\Big| \langle \Phi, \int \widehat{\chi_{\Lambda}^2}(s)  ( \widetilde{a}_s^{\dagger} + \widetilde{a}_s)\,ds \, \Phi \rangle \Big|
&\leq C \sqrt{{\mathcal M}}, 
\intertext{and}
\Big| \langle \Phi, \int \widehat{\chi_{\Lambda}^2}(s) \Big(1-f_L(s)\Big) ( \widetilde{a}_s^{\dagger} + \widetilde{a}_s)\,ds \, \Phi \rangle \Big|
&\leq C \sqrt{{\mathcal M}} (K_{\ell} K_L)^{-M}.
\end{align}
Therefore, using Lemma~\ref{lem:EstimateIntegral} below
 to estimate the $k$-integral, we find
\begin{align}
\Big| \langle \Phi, z \Big( \rho_z  \widehat{W_1 \omega}(0) (2\pi)^{-3} & \int \widehat{\chi_{\Lambda}^2}(s)  ( \widetilde{a}_s^{\dagger} + \widetilde{a}_s)\,ds\, \nonumber \\
&\quad - (2\pi)^{-6} \iint_{k \in P_H}  \widehat{W}_1(k) \alpha_{k} \widehat{\chi_{\Lambda}^2}(s)  f_L(s) ( \widetilde{a}_s^{\dagger} + \widetilde{a}_s)\,ds \Big) \Phi \rangle \Big| \nonumber \\
&
\leq
C \rho_z^2 a \ell^3 \sqrt{\frac{{\mathcal M}}{|z|^2}} \Big( \KH^{-1} (\rmu a^3)^{\frac{5}{12}}
+  (K_{\ell} K_L)^{-M} \Big).
\end{align}
The estimate is in agreement with the error term in \eqref{eq:EstimateQ3Tilde4}.

What remains in order to prove \eqref{eq:EstimateQ3Tilde4} is to estimate a difference of two integrals over the same domain. Writing out the commutator using \eqref{eq:bComms2} we have to estimate
\begin{align}\label{eq:1Q-integral}
z (2\pi)^{-6} \ell^3 \iint_{k \in P_H} \widehat{W}_1(k) \alpha_k 
\widehat{\chi^2}(s\ell) f_L(s) \left( 1- \frac{1-\alpha_{s-k} \alpha_{-k}}{(1-\alpha_k^2)(1-\alpha_{s-k}^2)}\right) ( \widetilde{a}_s^{\dagger} + \widetilde{a}_s),
\end{align}
and
\begin{align}\label{eq:1Q-integral2}
z (2\pi)^{-6} \ell^3 \iint_{k \in P_H} \widehat{W}_1(k) \alpha_k f_L(s) 
\frac{1-\alpha_{s-k} \alpha_{-k}}{(1-\alpha_k^2)(1-\alpha_{s-k}^2)}
\widehat{\chi}(k\ell) \widehat{\chi}\big((k-s)\ell\big)
 ( \widetilde{a}_s^{\dagger} + \widetilde{a}_s).
\end{align}
To estimate \eqref{eq:1Q-integral} we use \eqref{eq:alphaInPH}, \eqref{eq:CompIntegrals-Final} and Cauchy-Schwarz to get 
\begin{align}
\eqref{eq:1Q-integral}
&\leq
C \rho_z a \lille^2 \ell^3 \int \widehat{\chi^2}(s\ell)  ( \varepsilon^{-1} +\varepsilon \widetilde{a}_s^{\dagger} \widetilde{a}_s) 
\leq
C \rho_z a \lille^2  ( \varepsilon^{-1}  + \varepsilon n_{+}). 
\end{align}
We choose $\varepsilon^{-1} = D \rho_z a \ell^2 \lille^2$,
for some sufficiently large constant $D$ to allow the $n_{+}$ term to be absorbed in the kinetic energy gap. 
Thereby, the magnitude of the error (the $\varepsilon^{-1}$-term) becomes (using \eqref{eq:Close})
\begin{align}
C \rmu^2 a \ell^3 \lille^4 \frac{a}{\ell},
\end{align}
which can clearly be absorbed in the error term in \eqref{eq:EstimateQ3Tilde4}.

In the second integral \eqref{eq:1Q-integral2} the terms $\widehat{\chi}(k \ell)$ are very small due to regularity of $\chi$ and the fact that $k \in P_H$. Therefore this integral is much smaller.
We easily get, for arbitrary $\varepsilon >0$,
\begin{align}
&\langle \Phi, z (2\pi)^{-6} \ell^3 \iint_{k \in P_H} \widehat{W}_1(k) \alpha_k f_L(s) 
\frac{1-\alpha_{s-k} \alpha_{-k}}{(1-\alpha_k^2)(1-\alpha_{s-k}^2)}
\widehat{\chi}(k\ell) \widehat{\chi}\big((k-s)\ell\big)
 ( \widetilde{a}_s^{\dagger} + \widetilde{a}_s) \Phi \rangle \nonumber \\
& \geq
 - C z \rmu a  \sup_{k \in P_H} |\widehat{\chi}(k\ell)| \ell^3 \langle \Phi, \int f_L(s)  ( \varepsilon \widetilde{a}_s^{\dagger} \widetilde{a}_s + \varepsilon^{-1} ) \Phi \rangle \nonumber \\
 &\geq - C \rmu^2 a \ell^3 \sqrt{\frac{{\mathcal M}}{|z|^2}} K_{\ell}^{3/2} K_{L}^{3/2} ( K_{\ell}^{-2} \lille )^{\widetilde{M}}
\end{align}
where we optimized in $\varepsilon$ and used Lemma~\ref{lem:DecaychiHat} to get the last estimate. This error term is clearly in agreement with \eqref{eq:EstimateQ3Tilde4}.
This finishes the proof of \eqref{eq:EstimateQ3Tilde4}.
\end{proof}

In the proof of \eqref{eq:EstimateQ3Tilde4} we used the following result.
\begin{lemma}\label{lem:EstimateIntegral}
Suppose \eqref{eq:Close} and \eqref{con:KLKH}.
We also need the following weaker version of \eqref{con:KLKH},
\begin{align}\label{eq:NewCondition}
(\rmu a^3)^{-\frac{1}{12}} \geq \frac{\KH}{ds K_{\ell}} (1 + \log(K_H)).
\end{align}

Then for sufficiently small values of $\rmu$ we have,
\begin{align}\label{eq:CompIntegrals-Final}
\left|  \rho_z \widehat{W_1 \omega}(0) - (2\pi)^{-3} \int_{k \in P_H}  \widehat{W}_1(k) \alpha_{k} \,dk \right|
\leq C \rho_z a    (\rmu a^3)^{\frac{5}{12}}\KH^{-1}  .
\end{align}
Furthermore,
\begin{align}\label{eq:CompIntegrals-Final-2}
\left|    \widehat{W_1 \omega}(0) - (2\pi)^{-3} \int_{k \in P_H}  \frac{\widehat{W}_1(k)^2}{ 2{\mathcal D}_k} \,dk \right|
\leq C  a  (\rmu a^3)^{\frac{5}{12}}\KH^{-1}.
\end{align}
\end{lemma}

\begin{proof}
Collecting the estimates below, we really get
\begin{align}\label{eq:CompIntegrals}
&\left|  \rho_z \widehat{W_1 \omega}(0) - (2\pi)^{-3} \int_{k \in P_H}  \widehat{W}_1(k) \alpha_{k} \,dk \right| \nonumber \\
&\qquad \leq C \rho_z a \Big( K_H^{-1} + (\rho_z a^3) K_H+R^2/\ell^2 + (\rho_z a^3)^2 K_H^3
+ \frac{a}{d s \ell} (1+\log K_H)\Big).
\end{align}
From this \eqref{eq:CompIntegrals-Final} follows upon using \eqref{con:KLKH}, \eqref{eq:NewCondition} and \eqref{eq:AssumptionK_R} to compare the magnitudes of the different terms.

We calculate,
\begin{align}
&\rho_z \widehat{W_1 \omega}(0) - (2\pi)^{-3} \int_{k \in P_H}  \widehat{W_1}(k) \alpha_{k} \,dk \nonumber \\
&=
(2\pi)^{-3} \int_{k \in P_H}  \widehat{W_1}(k) \Big( \rho_z \frac{\widehat{g}(k)}{2k^2} - \alpha_{k}\Big) \,dk 
+
(2\pi)^{-3} \int_{k \notin P_H}  \rho_z \widehat{W_1}(k) \frac{\widehat{g}(k)}{2k^2}\,dk.
\end{align}
We first estimate the last integral, 
\begin{align}
\Big| \int_{k \notin P_H}  \widehat{W_1}(k) \frac{\widehat{g}(k)}{2k^2}\,dk \Big|
\leq 
C a^2 \int_{\{ |k| \leq  K_H^{-1} a^{-1}\}} k^{-2} \,dk 
=
C a K_H^{-1}.
\end{align}
This is consistent with the error term in \eqref{eq:CompIntegrals}.

To continue, we write
\begin{align}
\widehat{W_1}(k) \alpha_k = \rho_{z}^{-1} {\mathcal A}(k) \Big( 1 - \sqrt{1- {\mathcal B}(k)^2/{\mathcal A}(k)^2}\Big).
\end{align}

Notice that $|{\mathcal B}(k)/{\mathcal A}(k)| \leq \frac{1}{2}$, for $\rmu$ sufficiently small using \eqref{eq:BoverA} and \eqref{con:KLKH}.
Therefore,
\begin{align}
\Big| \widehat{W_1}(k) \alpha_k - \frac{ \rho_z \widehat{W_1}(k)^2}{2 {\mathcal A}(k)}\Big|
\leq C \rho_z^3 \frac{ \widehat{W_1}(k)^4}{{\mathcal A}(k)^3} \leq C \rho_z^3 a^4 k^{-6},
\end{align}
where we used that ${\mathcal A}(k) \geq \frac{1}{2} k^2$ in $P_H$.
Upon integrating over $P_H$ we find a term of magnitude
\begin{align}\label{eq:Comparealpha-A}
\int_{P_H} \Big| \widehat{W_1}(k) \alpha_k - \frac{ \rho_z \widehat{W_1}(k)^2}{2 {\mathcal A}(k)}\Big| \leq C \rho_z a (\rho_z a^3)^2 K_H^3,
\end{align}
in agreement with \eqref{eq:CompIntegrals}.

Finally, we estimate, using $0 \leq k^2 - \tau(k) \leq 2|k| (d s \ell)^{-1}$ in $P_H$,
\begin{align}
\rho_z &\Big| \int_{k \in P_H } \widehat{W_1}(k) \Big( \frac{\widehat{g}(k)}{2k^2} - \frac{\widehat{W}_1(k)}{2{\mathcal A}(k)} \Big) \Big|\nonumber \\
&\leq  \rho_z \Big| \int_{k \in P_H } \widehat{W_1}(k)  \frac{\widehat{g}(k)-\widehat{W}_1(k)}{2k^2}\Big|+ \rho_z \Big|\int_{k \in P_H } \frac{\widehat{W_1}(k)^2}{2k^2}  \Big( 1 - \frac{k^2}{{\mathcal A}(k)} \Big)\Big| \nonumber \\
&\leq  \rho_z \Big| \int_{k \in P_H } \widehat{W_1}(k)  \frac{\widehat{g}(k)-\widehat{W}_1(k)}{2k^2}\Big|+ C \rho_z^2 a^3 \int_{k \in P_H } k^{-4}  \nonumber \\
&\quad +
C\rho_z (d s \ell)^{-1} \Big(\int_{\{K_H^{-1} \leq a |k| \leq 1 \}} a^2 |k|^{-3}+ a \int \frac{\widehat{W_1}(k)^2}{2k^2} \Big) \nonumber \\
&\leq
\rho_z a \frac{R^2}{\ell^2} + \rho_z a (\rho_z a^3) K_H
C\rho_z a^2(d s \ell)^{-1} (1+ \log(K_H)),
\end{align}
where the estimate of the first term follows from Cauchy-Schwarz and \eqref{eq:I2-integral-2}.
This finishes the proof of \eqref{eq:CompIntegrals-Final}.

The proof of \eqref{eq:CompIntegrals-Final-2} is similar. One can for instance use \eqref{eq:CompIntegrals-Final} and \eqref{eq:Comparealpha-A} and the fact that
$|1 - \frac{{\mathcal A}(k)}{{\mathcal D}_k}| \leq C \frac{\mathcal B(k)^2}{{\mathcal A}(k)^2} \leq C \rmu^2 a^2 k^{-4}$ in $P_H$. Then \eqref{eq:CompIntegrals-Final-2} follows.
\end{proof}

\begin{proof}[Proof of \eqref{eq:EstimateQ3Tilde2+3}]
The two operators $ {\mathcal Q}_3^{(2)}(z)$ and ${\mathcal Q}_3^{(3)}(z)$ are very similar and can be estimated in identical fashion, so we will only explicity consider the first.
We decompose 
\begin{align}
{\mathcal Q}_3^{(2)}(z) = I + II,
\end{align}
where
\begin{align}\label{eq:OneAndTwo}
I&:= - z \ell^3 (2\pi)^{-6} \iint_{\{ k \in P_H\}}
\frac{f_L(s)
\widehat{W}_1(k)\alpha_k}{(1-\alpha_k^2)(1-\alpha_{s-k}^2)} \left( b_{-k}^{\dagger} \widetilde{a}_s^{\dagger}  b_{s-k}
+ b_{s-k}^{\dagger} \widetilde{a}_s b_{-k} \right) , \nonumber \\
II&:= - z \ell^3 (2\pi)^{-6} \iint_{\{ k \in P_H\}}
\frac{f_L(s)
\widehat{W}_1(k)\alpha_k}{(1-\alpha_k^2)(1-\alpha_{s-k}^2)} \left([\widetilde{a}_s^{\dagger} , b_{-k}^{\dagger}] b_{s-k}
+ b_{s-k}^{\dagger}  [b_{-k}, \widetilde{a}_s] \right).
\end{align}
The second term $II$ will be very small, due to the smallness of the commutator (notice that $s$ and $k$ are `far apart' since $s \in P_L$ and $k \in P_H$).
So the main term is $I$, which we estimate using Cauchy-Schwarz and \eqref{eq:alphaInPH} as
\begin{align}
I \geq
-C \ell^3 z a \lille  \iint_{k \in P_H} f_L(s) \left( \varepsilon b_{-k}^{\dagger} \widetilde{a}_s^{\dagger} \widetilde{a}_s b_{-k} + \varepsilon^{-1} b_{s-k}^{\dagger} b_{s-k} \right).
\end{align}
We estimate $\int \widetilde{a}_s^{\dagger} \widetilde{a}_s \leq \ell^{-3} {\mathcal M}$. Upon choosing $\varepsilon = \sqrt{K_{\ell}^3 K_{L}^3/{\mathcal M}}$ 
and using an easy bound on ${\mathcal D}_k$, 
this leads to the estimate
\begin{align}\label{eq:LoseABitOfb's}
\langle \Phi, I  \Phi \rangle & \geq - Cz a \lille 
K_{\ell}^{3/2} K_{L}^{3/2} {\mathcal M}^{1/2}
\langle \Phi, \int_{\{|k| \geq \frac{1}{2} K_H^{-1} a^{-1}\}} b_{k}^{\dagger} b_k \Phi \rangle \nonumber \\
&\geq - C \lille^2 \ell^3 \Big( \frac{K_{\ell}^3 K_{L}^3  {\mathcal M}}{\rmu \ell^3} \Big)^{1/2}   \langle \Phi, \int_{\{|k| \geq \frac{1}{2} K_H^{-1} a^{-1}\}} {\mathcal D}_k b_{k}^{\dagger} b_k \Phi \rangle 
\end{align}
Notice that
\begin{align}
\frac{K_{\ell}^3 K_{L}^3  {\mathcal M}}{\rmu \ell^3} = K_L^3 K_{{\mathcal M}} (\rmu a^3)^{\frac{1}{4}} \ll 1,
\end{align}
using \eqref{con:KellKM}. Therefore, $I$ can be absorbed in the $\lille^2 (2\pi)^{-3} \ell^3 \int_{\{|k| \geq \frac{1}{2} K_H^{-1} a^{-1}\}} {\mathcal D}_k b_{k}^{\dagger} b_k$ term in \eqref{eq:EstimateQ3Tilde2+3}.

We now return to the term $II$ from \eqref{eq:OneAndTwo}.
This is easily estimated as
\begin{align}
II &\geq - 2 z \ell^3 (2\pi)^{-6} 
\sup | [\widetilde{a}_s^{\dagger} , b_{-k}^{\dagger}] |
\iint_{\{ k \in P_H\}}
f_L(s)
|\widehat{W}_1(k)\alpha_k| \left( b_{s-k}^{\dagger} b_{s-k}
+ 1\right)\nonumber \\
&\geq
-C z \Big(\sup_{|p|\geq \frac{1}{2} K_H^{-1} a^{-1}} \widehat{\chi}(p\ell)\Big) 
(K_L K_{\ell})^3 
\left(\rmu a
+ a \lille \int_{|k| \geq \frac{1}{2} K_H^{-1} a^{-1}} b_k^{\dagger} b_k \right)
\end{align}
The $b_k^{\dagger} b_k$ is easily absorbed in the $\lille^2 \ell^3 \int_{\{|k| \geq \frac{1}{2} K_H^{-1} a^{-1}\}} {\mathcal D}_k b_{k}^{\dagger} b_k$ term in \eqref{eq:EstimateQ3Tilde2+3}.
Therefore, using \eqref{eq:Close} and Lemma~\ref{lem:DecaychiHat}, $II$ contributes with an error term of order
\begin{align}
\rmu^2 a \ell^3 
( K_{\ell}^{-2} \lille )^{\widetilde{M}} K_L^3 K_{\ell}^{3/2} (\rmu a^3)^{\frac{1}{4}}
\end{align}
to \eqref{eq:EstimateQ3Tilde2+3}.

This finishes the proof of \eqref{eq:EstimateQ3Tilde2+3}.
\end{proof}

\begin{proof}[Proof of \eqref{eq:EstimateQ3Tilde1} ]
Finally, we estimate ${\mathcal Q}_3^{(1)}(z)$. We rewrite
\begin{align}
{\mathcal Q}_3^{(1)}(z) =
 z \ell^3 (2\pi)^{-6} \iint_{\{ k \in P_H\}}
\frac{f_L(s)
\widehat{W}_1(k)}{(1-\alpha_k^2)(1-\alpha_{s-k}^2)} \left(\widetilde{a}_s^{\dagger}  
b_{s-k} b_k + \alpha_k \alpha_{s-k} \widetilde{a}_{-s}^{\dagger}   b_{s-k}^{\dagger} b_{k}^{\dagger} + h.c.
 \right),
\end{align}
where we performed a change of variables in the second term to get the equality.

We combine this term with the diagonalized Bogolubov Hamiltonian.
We leave a $\lille^2$-part of this operator in order to control error terms appearing below.

Therefore, we consider
\begin{align}\label{eq:CentralEstimateOfQ31}
&(2\pi)^{-3} \ell^3 \int_{\{ k \in P_H\}} (1- 2\lille^2) {\mathcal D}_k b_k^{\dagger} b_{k}\,dk \nonumber \\
&\quad +
z \ell^3 (2\pi)^{-6} \iint_{\{ k \in P_H\}}
\frac{f_L(s)
\widehat{W}_1(k)}{(1-\alpha_k^2)(1-\alpha_{s-k}^2)} \left(\widetilde{a}_s^{\dagger}  
b_{s-k} b_k + \alpha_k \alpha_{s-k} \widetilde{a}_{-s}^{\dagger}   b_{s-k}^{\dagger} b_{k}^{\dagger} + h.c.
 \right) \nonumber \\
 &=
 (2\pi)^{-3} \ell^3 \int_{\{ k \in P_H\}} (1-2\lille^2){\mathcal D}_k c_k^{\dagger} c_k +
 T_1(k) +T_2(k) \nonumber \\
 &\geq (2\pi)^{-3} \ell^3 \int_{\{ k \in P_H\}} 
 T_1(k) +T_2(k).
 \end{align}
Here we have introduced the operators,
\begin{align}
c_k &:= b_k + z (2\pi)^{-3} \int \frac{f_L(s)
\widehat{W}_1(k)}{(1-2\lille^2){\mathcal D}_k(1-\alpha_k^2)(1-\alpha_{s-k}^2)} 
\left( b_{s-k}^{\dagger} \widetilde{a}_s + \alpha_k \alpha_{s-k} \widetilde{a}_{-s} b_{s-k}^{\dagger}\right)ds,\\
\label{eq:T1_def}
T_1(k)  &:= - z (2\pi)^{-3} \int \frac{f_L(s)
\widehat{W}_1(k) \alpha_k \alpha_{s-k} }{(1-\alpha_k^2)(1-\alpha_{s-k}^2)} \left( [b_k^{\dagger}, \widetilde{a}_{-s} b_{s-k}^{\dagger}] + h.c. \right)ds,
\intertext{and}
T_2(k) &:= -\frac{|z|^2 \widehat{W}_1(k)^2}{(1-2\lille^2){\mathcal D}_k(1-\alpha_k^2)^2} (2\pi)^{-6} \iint \frac{f_L(s)f_L(s')
}{(1-\alpha_{s-k}^2)(1-\alpha_{s'-k}^2)}  \nonumber \\
&\quad \times 
\left(\widetilde{a}_{s'}^{\dagger} b_{s'-k}      + \alpha_k \alpha_{s'-k} b_{s'-k} \widetilde{a}_{-s'}^{\dagger} \right)
\left( b_{s-k}^{\dagger} \widetilde{a}_s + \alpha_k \alpha_{s-k} \widetilde{a}_{-s} b_{s-k}^{\dagger}\right)\,ds\,ds'\nonumber \\
&\geq
-\left(1+ C \lille^2\right) |z|^2\frac{ \widehat{W}_1(k)^2}{{\mathcal D}_k} (2\pi)^{-6} \iint f_L(s)f_L(s')
 \nonumber \\
&\quad\times 
\left(\widetilde{a}_{s'}^{\dagger} b_{s'-k}      + \alpha_k \alpha_{s'-k} b_{s'-k} \widetilde{a}_{-s'}^{\dagger} \right)
\left( b_{s-k}^{\dagger} \widetilde{a}_s + \alpha_k \alpha_{s-k} \widetilde{a}_{-s} b_{s-k}^{\dagger}\right)\,ds\,ds',
\end{align}
where we used \eqref{eq:alphaInPH} to get the estimate on $T_2$.
Notice that
\begin{align}
\widetilde{a}_{s'}^{\dagger} b_{s'-k}      + \alpha_k \alpha_{s'-k} b_{s'-k} \widetilde{a}_{-s'}^{\dagger}
=
\left( \widetilde{a}_{s'}^{\dagger}      + \alpha_k \alpha_{s'-k} \widetilde{a}_{-s'}^{\dagger}\right) b_{s'-k} + \alpha_k \alpha_{s'-k} [b_{s'-k}, \widetilde{a}_{-s'}^{\dagger}].
\end{align}
The contribution from the commutator term is very small, both due to the factors of $\alpha$ and to the commutator,
since $k \in P_H$, $s'\in P_L$. 
Therefore, we estimate
\begin{align}
T_2(k) \geq (1+\varepsilon) T_2'(k) + (1+ \varepsilon^{-1}) T_2''(k),
\end{align}
where
\begin{align}
T_2'(k) &:= - \left(1+ C \lille^2\right) |z|^2 \frac{ \widehat{W}_1(k)^2}{{\mathcal D}_k} (2\pi)^{-6} \iint f_L(s)f_L(s')
 \nonumber \\
&\qquad \times 
\left( \widetilde{a}_{s'}^{\dagger}      + \alpha_k \alpha_{s'-k} \widetilde{a}_{-s'}^{\dagger}\right) b_{s'-k} b_{s-k}^{\dagger} 
\left( \widetilde{a}_s + \alpha_k \alpha_{s-k} \widetilde{a}_{-s} \right)\,ds\,ds' \nonumber \\
T_2''(k) &:=- \left(1+ C \lille^2\right) |z|^2 \frac{ \widehat{W}_1(k)^2}{{\mathcal D}_k} (2\pi)^{-6} \nonumber \\
&\qquad \times \iint f_L(s)f_L(s') |\alpha_k|^2 \alpha_{s'-k} \alpha_{s-k}
[b_{s'-k}, \widetilde{a}_{-s'}^{\dagger}] [ \widetilde{a}_{-s}, b_{s-k}^{\dagger}] 
\end{align}
For simplicity, we choose $\varepsilon = \lille^2$ and can therefore absorb the factor of $(1+\varepsilon)$ in $T_2'(k)$ by simply changing the value of $C$.
With this choice, we estimate using \eqref{eq:CompIntegrals-Final-2}, \eqref{con:KLKH} and \eqref{eq:alphaInPH},
\begin{align}
(2\pi)^{-3} \ell^3 \int_{k \in P_H} (1+ \varepsilon^{-1}) T_2''(k) \,dk
&\geq 
-C \rho_z a (K_{\ell} K_L)^6 \lille^2 \sup_{k\in P_H, s \in P_L}
| [ \widetilde{a}_{-s}, b_{s-k}^{\dagger}] |^2 \nonumber \\
&\geq
- C \rmu^2 a \ell^3 (\rmu a^3)^{\frac{1}{2}} K_{\ell}^3 K_L^6 \lille^2 \sup_{k\in P_H, s \in P_L}
| [ \widetilde{a}_{-s}, b_{s-k}^{\dagger}] |^2.
\end{align}
We continue to estimate the other part of $T_2(k)$.
\begin{align}\label{eq:T2prime}
T_2'(k) &:= - \left(1+ C \lille^2 \right) |z|^2 \frac{ \widehat{W}_1(k)^2}{{\mathcal D}_k} (2\pi)^{-6} \iint f_L(s)f_L(s')
 \nonumber \\
&\qquad \times 
\left( \widetilde{a}_{s'}^{\dagger}      + \alpha_k \alpha_{s'-k} \widetilde{a}_{-s'}^{\dagger}\right) b_{s'-k} b_{s-k}^{\dagger} 
\left( \widetilde{a}_s + \alpha_k \alpha_{s-k} \widetilde{a}_{-s} \right)\,ds\,ds' 
\nonumber \\
&=T_{2,{\rm comm}}'(k) + T_{2,{\rm op}}'(k),
\end{align}
with
\begin{align}
T_{2,{\rm comm}}'(k)&:=- \left(1+ C \lille^2 \right) |z|^2\frac{ \widehat{W}_1(k)^2}{{\mathcal D}_k} (2\pi)^{-6} \iint f_L(s)f_L(s')
 \nonumber \\
&\qquad \times 
\left( \widetilde{a}_{s'}^{\dagger}      + \alpha_k \alpha_{s'-k} \widetilde{a}_{-s'}^{\dagger}\right) [b_{s'-k} ,b_{s-k}^{\dagger} ]
\left( \widetilde{a}_s + \alpha_k \alpha_{s-k} \widetilde{a}_{-s} \right)\,ds\,ds' ,\nonumber \\
T_{2,{\rm op}}'(k)&:=-\left(1+ C \lille^2 \right) |z|^2
\frac{ \widehat{W}_1(k)^2}{{\mathcal D}_k} (2\pi)^{-6} \iint f_L(s)f_L(s')
 \nonumber \\
&\qquad \times 
\left( \widetilde{a}_{s'}^{\dagger}      + \alpha_k \alpha_{s'-k} \widetilde{a}_{-s'}^{\dagger}\right) b_{s-k}^{\dagger} b_{s'-k}
\left( \widetilde{a}_s + \alpha_k \alpha_{s-k} \widetilde{a}_{-s} \right)\,ds\,ds'.
\end{align}
We start by estimating the last term in \eqref{eq:T2prime}.
We introduce the notation
\begin{align}
{\mathcal C} := \sup_{s,s' \in P_L, k \in P_H } \left| [ \widetilde{a}_{s'}^{\dagger}      + \alpha_k \alpha_{s'-k} \widetilde{a}_{-s'}^{\dagger}, b_{s-k}^{\dagger}]\,\right| \leq 1,
\end{align}
In fact,
it follows from \eqref{eq:Commutator_general}, \eqref{eq:Defb_k}, \eqref{eq:alphaInPH}, and \eqref{eq:DecayPH} that
\begin{align}\label{eq:StoerrelseC}
{\mathcal C} \leq C \delta \left( K_{\ell}^{-2} \KH^2 (\rmu a^3)^{\frac{1}{6}}\right)^{\frac{M-1}{2}}.
\end{align}

To estimate the last term in \eqref{eq:T2prime} we first apply Cauchy-Schwarz, then commute the $\widetilde{a}$'s through the $b$'s and apply Cauchy-Schwarz to the commutator terms.
This yields,
\begin{align}
&\left\langle \Phi, \iint f_L(s)f_L(s')  \left( \widetilde{a}_{s'}^{\dagger}      + \alpha_k \alpha_{s'-k} \widetilde{a}_{-s'}^{\dagger}\right) b_{s-k}^{\dagger} b_{s'-k}
\left( \widetilde{a}_s + \alpha_k \alpha_{s-k} \widetilde{a}_{-s} \right) \Phi \right\rangle\nonumber \\
&\leq 
2 \iint f_L(s)f_L(s') \left\langle \Phi, b_{s-k}^{\dagger}
\left( \widetilde{a}_{s'}^{\dagger}      + \alpha_k \alpha_{s'-k} \widetilde{a}_{-s'}^{\dagger}\right) 
\left( \widetilde{a}_{s' }+ \alpha_k \alpha_{s'-k} \widetilde{a}_{-s'} \right)  b_{s-k} \Phi \right\rangle\nonumber \\
&\quad 
+ C {\mathcal C}   \iint f_L(s)f_L(s') \left\langle \Phi, \left( \varepsilon b_{s-k}^{\dagger}b_{s-k} + C \varepsilon^{-1} \widetilde{a}_{s' } \widetilde{a}_{s' }^{\dagger} + {\mathcal C}\right)\Phi \right\rangle \nonumber \\
&\leq
C (\ell^{-3} {\mathcal M}  + \varepsilon |P_L| {\mathcal C} ) \int f_L(s)  \langle \Phi, b_{s-k}^{\dagger}b_{s-k} \Phi \rangle \nonumber \\
&\quad + C \varepsilon^{-1} |P_L| {\mathcal C} (\ell^{-3} {\mathcal M} +  |P_L| )
+ C |P_L|^2 {\mathcal C}^2.
\end{align}
For simplicity, we choose $\varepsilon = \frac{{\mathcal M}}{\ell^3 |P_L| {\mathcal C}}$ and get
\begin{align}
&\left\langle \Phi, \iint f_L(s)f_L(s') \left( \widetilde{a}_{s'}^{\dagger}      + \alpha_k \alpha_{s'-k} \widetilde{a}_{-s'}^{\dagger}\right) b_{s-k}^{\dagger} b_{s'-k}
\left( \widetilde{a}_s + \alpha_k \alpha_{s-k} \widetilde{a}_{-s} \right) \Phi \right\rangle\nonumber \\
&\leq 
C \ell^{-3} {\mathcal M}  \left\langle \Phi,  \int f_L(s)  b_{s-k}^{\dagger}b_{s-k} \Phi \right \rangle+ C  |P_L|^2 {\mathcal C}^2 \Big(1 + \frac{\ell^3 |P_L|}{{\mathcal M}}\Big).
\end{align}
Therefore, using \eqref{eq:CompIntegrals-Final-2},
\begin{align}\label{eq:T2opprime}
\left\langle \Phi, (2\pi)^{-3} \ell^3 \int_{k \in P_H} T_{2,{\rm op}}'(k) \Phi \right\rangle
&\geq
- C \rmu \ell^3 \frac{a^2}{(\min_{k \in \frac{1}{2} P_H} {\mathcal D}_k)^2} {\mathcal M} |P_L| \langle \Phi \int_{\{q \in \frac{1}{2} P_H\}} {\mathcal D}_q b_q^{\dagger} b_q \Phi \rangle \nonumber \\
&\quad - \rmu \ell^6 a |P_L|^2 {\mathcal C}^2 \left( 1 + \frac{\ell^3 |P_L|}{{\mathcal M}}\right).
\end{align}
Notice that ${\mathcal D_k} \geq C^{-1} \KH^{-2} (\rmu a^3)^{\frac{5}{6}} a^{-2}$, for $k \in \frac{1}{2} P_H$.
Therefore, using \eqref{eq:DefLille} and \eqref{con:KellKM},
\begin{align}
\rmu \frac{a^2}{(\min_{k \in \frac{1}{2} P_H} {\mathcal D}_k)^2} {\mathcal M} |P_L| 
\leq \lille^2 K_L^3 K_{{\mathcal M}} (\rmu a^3)^{\frac{1}{4}} \ll \lille^2.
\end{align}
Therefore, the negative ${\mathcal D}_q b_q^{\dagger} b_q$-term in \eqref{eq:T2opprime} can be absorbed in a fraction of the similar (positive) term left out in \eqref{eq:CentralEstimateOfQ31} exactly for this purpose.

Notice that $\ell^3 |P_L| \leq  C (K_L K_{\ell})^3 = d^{-6}$ using \eqref{eq:KL_d}. Therefore, it follows from \eqref{con:rough1} that $\frac{\ell^3 |P_L|}{{\mathcal M}} \ll 1$.
So, using \eqref{eq:StoerrelseC} we can estimate the error term in \eqref{eq:T2opprime} as
\begin{align}
- \rmu \ell^6 a |P_L|^2 {\mathcal C}^2 \left( 1 + \frac{\ell^3 |P_L|}{{\mathcal M}}\right)
&\geq - C \rmu^2 a \ell^3 \sqrt{\rmu a^3} \left( K_{\ell}^{-3} d^{-12}\delta^2 
\left( K_{\ell}^{-2} \KH^2 (\rmu a^3)^{\frac{1}{6}}\right)^{M-1}\right),
\end{align}
This is clearly seen to agree with \eqref{eq:EstimateQ3Tilde1}.

We next consider the commutator term $T_{2,{\rm comm}}'(k)$ from \eqref{eq:T2prime}.

From \eqref{eq:bComms2} and using Lemma~\ref{lem:DecaychiHat}, we see that
\begin{align}\label{eq:bcommutator}
\left|  [b_{s'-k} ,b_{s-k}^{\dagger} ]  - \widehat{\chi^2}((s-s')\ell) \right| \leq 
C\lille^2 | \widehat{\chi^2}((s-s')\ell)| 
+ C ( K_{\ell}^{-2} \lille )^{\frac{M-1}{2}}.
\end{align}

Therefore, and using that $M \geq 5$,
\begin{align}\label{eq:EstimT2comm}
T_{2,{\rm comm}}'(k) &\geq
\big(1+C \lille^2 \big) |z|^2 \frac{ \widehat{W}_1(k)^2}{{\mathcal D}_k} (2\pi)^{-6} \iint f_L(s)f_L(s')
\widetilde{a}_{s'}^{\dagger} \widehat{\chi^2}((s-s')\ell) \widetilde{a}_s \nonumber \\
&\quad -
C  |z|^2
\frac{ \widehat{W}_1(k)^2}{{\mathcal D}_k}  \lille^2 |P_L| \ell^{-3} n_{+}.
\end{align}

Using \eqref{eq:RepInTermsOfatilde} and \eqref{eq:CompIntegrals-Final-2} we see that 
\begin{align}
-(2\pi)^{-3} &\ell^3 \int_{k \in P_H} \big(1+C \lille^2 \big) |z|^2 \frac{ \widehat{W}_1(k)^2}{{\mathcal D}_k} (2\pi)^{-6} \iint f_L(s)f_L(s')
\widetilde{a}_{s'}^{\dagger} \widehat{\chi^2}((s-s')\ell) \widetilde{a}_s \nonumber \\
&=
-\rho_z \big(1+C \lille^2 \big) \big((2\pi)^{-3} \int_{k \in P_H} \frac{ \widehat{W}_1(k)^2} {{\mathcal D}_k} \big)\sum_j Q_{L,j} \chi^2 Q_{L,j} \nonumber \\
&\geq
- 2 \rho_z \big(1+C \lille^2  \big) \widehat{W_1 \omega}(0) \sum_j Q_{L,j} \chi^2 Q_{L,j}.
\end{align}
Here we used \eqref{con:KLKH} to control the error from \eqref{eq:CompIntegrals-Final-2}.

We now notice that, for all $\varepsilon>0$,
\begin{align}
\sum_j Q_{L,j} \chi^2 Q_{L,j} \leq (1+\varepsilon) \sum_j Q_{L,j} \chi^2 Q_{L,j} +
C \varepsilon^{-1} n_{+}^H.
\end{align}
We notice that $\rmu a = (d K_{\ell})^2 \frac{1} {d^2 \ell^2}$. Therefore, choosing $\varepsilon$ proportional to $\varepsilon_T^{-1} (d K_{\ell})^2$, we find, using \eqref{eq:CompIntegrals-Final-2},
\begin{align}
-(2\pi)^{-3} &\ell^3 \int_{k \in P_H} \big(1+C \lille^2 \big) |z|^2 \frac{ \widehat{W}_1(k)^2}{{\mathcal D}_k} (2\pi)^{-6} \iint f_L(s)f_L(s')
\widetilde{a}_{s'}^{\dagger} \widehat{\chi^2}((s-s')\ell) \widetilde{a}_s \nonumber \\
&\geq
- 2 \rho_z \big(1+C \lille^2 + C \varepsilon_T^{-1} (dK_{\ell})^2 \big) \widehat{W_1 \omega}(0) \sum_j Q_j \chi^2 Q_j 
- \frac{1}{100} \frac{1}{(d\ell)^2} n_{+}^H.
\end{align}
Notice now, using \eqref{con:eTdK}, \eqref{con:KellKH}, \eqref{eq:DefLille} and \eqref{eq:Close}, that
\begin{align}
\rho_z a [  \lille^2 +\varepsilon_T^{-1} (dK_{\ell})^2]
\ll \ell^{-2}.
\end{align}
Therefore, the above error terms can be absorbed in the energy gap.

To estimate the error term in \eqref{eq:EstimT2comm} we integrate
\begin{align}
- (2\pi)^{-3} \ell^3 \int_{k \in P_H} C  |z|^2
\frac{ \widehat{W}_1(k)^2}{{\mathcal D}_k}  \lille^2 |P_L| \ell^{-3} n_{+} &\geq
-C \rmu a \lille^2 (\ell^3 |P_L|) n_{+}
\end{align}
Notice that by \eqref{con:KellKH} and \eqref{eq:KL_d} $\rmu a \lille^2 (\ell^3 |P_L|) \ll \ell^{-2}$, so this term can also be absorbed in the energy gap.

We now estimate the other commutator term, namely $T_1(k)$ from \eqref{eq:T1_def}.
We clearly have
\begin{align}
T_1(k)
&\geq - C z \lille \sup_{k\in P_H, s \in P_L } \left(|[b_k^{\dagger}, b_{s-k}^{\dagger}]| \right)
|\alpha_k \widehat{W}_1(k)| \int f_L(s) \left( \widetilde{a}_{-s}^{\dagger} \widetilde{a}_{-s} + 1\right)\,ds \nonumber \\
&\quad - C z \lille \sup_{k\in P_H, s \in P_L } \left(|[b_k^{\dagger}, \widetilde{a}_{-s}^{\dagger}]|\right) |\alpha_k \widehat{W}_1(k)| \int f_L(s)
\left( b_{s-k}^{\dagger} b_{s-k} + 1\right) \,ds.
\end{align}
Therefore,
\begin{align}\label{eq:EndEstimateT1}
&\ell^3 (2\pi)^{-3}  \int_{k \in P_H} T_1(k)\,dk \\
&\geq
- C z \lille \Big(\sup_{k\in P_H, s \in P_L } |[b_k^{\dagger}, b_{s-k}^{\dagger}]| \Big) \rho a (n_{+} + (K_{\ell} K_L)^3) \nonumber \\
&\quad  - C z \lille \Big(\sup_{k\in P_H, s \in P_L } |[b_k^{\dagger}, \widetilde{a}_{-s}^{\dagger}]|\Big)  \rho a  (K_{\ell} K_L)^3 \nonumber \\
&\quad -Cz \lille^2 \Big(\sup_{k\in P_H, s \in P_L } |[b_k^{\dagger}, \widetilde{a}_{-s}^{\dagger}]|\Big) \Big(\sup_{k\in P_H, s \in P_L } |D_{s-k}|^{-1} \Big)
 a 
 (K_{\ell} K_L)^3 \int_{\{ |k| \geq \frac{1}{2} K_H^{-1} a^{-1}\}} D_k b_k^{\dagger} b_k.
\end{align}
The last term in this inequality is easily seen to be estimated as
\begin{align}
\geq - \lille^2 \left\{ \lille \frac{1}{\sqrt{\rmu \ell^3}} K_{\ell}^3 K_L^3 \Big(\sup_{k\in P_H, s \in P_L } |[b_k^{\dagger}, \widetilde{a}_{-s}^{\dagger}]|\Big) \right\} \ell^3 \int_{\{ |k| \geq \frac{1}{2} K_H^{-1} a^{-1}\}} D_k b_k^{\dagger} b_k,
\end{align}
and using the properties of the commutator and Lemma~\ref{lem:DecaychiHat}, we see that this term can easily be absorbed in the extra $ \lille^2 \ell^3 \int_{\{ |k| \geq \frac{1}{2} K_H^{-1} a^{-1}\}} D_k b_k^{\dagger} b_k$ omitted in \eqref{eq:CentralEstimateOfQ31}.

The two remaining terms in \eqref{eq:EndEstimateT1} can be estimated (using, in particular, Lemma~\ref{lem:DecaychiHat} and \eqref{eq:alphaInPH}) as
\begin{align}
\geq - C \rmu^2 a \ell^3 \lille K_{\ell}^{-3/2} (\rmu a^3)^{\frac{1}{4}} ({\mathcal M} + (K_{\ell}^3 K_L^3)) 
\left( K_{\ell}^{-2} \lille \right)^{\frac{M-1}{2}}.
\end{align}

This finishes the proof of \eqref{eq:EstimateQ3Tilde1}
\end{proof}

Now we have established all three inequalities \eqref{eq:EstimateQ3Tilde1}, \eqref{eq:EstimateQ3Tilde2+3} and \eqref{eq:EstimateQ3Tilde4}.
This finishes the proof of Lemma~\ref{lem:EstimatesOnQ3-js}.
\end{proof}

\section{Proof of the main theorem}
In this section we will combine the results of the previous sections in order to prove Theorem~\ref{thm:LHY}.

\begin{proof}[Proof of Theorem~\ref{thm:LHY}]
As noted in Section~\ref{sec:Fock}, Theorem~\ref{thm:LHY} follows from 
Theorem~\ref{thm:LHY-Background}, which again---as observed in Section~\ref{subsec:LocHam}---follows from Theorem~\ref{thm:LHY-Box}.
We will use the concrete choice of parameters set down in \eqref{eq:Xparameter} and \eqref{eq:Xchoice} in Section~\ref{sec:params}.
Recall in particular the notation $X$ defined in \eqref{eq:Xchoice}.

To prove Theorem~\ref{thm:LHY-Box} let $\Psi \in {\mathcal
  F}_s(L^2(\Lambda))$ be a normalized $n$-particle trial state satisfying \eqref{eq:aprioriPsi}. Notice that if such a state does not exist, then there is nothing to prove. Using Lemma~\ref{lem:LocMatrices} there exists a normalized $n$-particle wave function $\widetilde{\Psi}\in {\mathcal F}_{\rm s}(L^2(\Lambda))$ satisfying \eqref{eq:LocalizedNPlus} and such that
\begin{align}\label{eq:LargeMatricesNew}
\langle\Psi,{\mathcal H}_\Lambda(\rmu)\Psi\rangle \geq
\langle\widetilde\Psi,{\mathcal H}_\Lambda(\rmu)\widetilde\Psi\rangle
- C X^2 {\mathcal R} \rmu^2 a \ell^3 (\rmu a^3)^{1/2}.
\end{align}
Notice that the error term in \eqref{eq:LargeMatricesNew} is consistent with the error term in Theorem~\ref{thm:LHY-Box}.

Using Proposition~\ref{prop:Hamilton2ndQuant} we find that our localized state $\widetilde\Psi$ satisfies 
\begin{align}
\langle \widetilde{\Psi}, {\mathcal H}_{\Lambda}(\rmu) \widetilde{\Psi} \rangle 
&\geq \langle \widetilde{\Psi}, {\mathcal H}_{\Lambda}^{\rm 2nd}(\rmu) \widetilde{\Psi} \rangle \nonumber \\
&\quad - C \rmu^2 a \ell^3  (\rmu a^3)^{1/2} \left( (\rmu a^3)^{\frac{348}{323}- \frac{1}{2}}+ X^3 (R a^{-1})^2 (\rmu a^3)^{1/2}  \right) , 
\end{align}
where the error is clearly consistent with the error term in Theorem~\ref{thm:LHY-Box}.

At this point, we can apply
Theorem~\ref{thm:Kafz} to get the lower bound
\begin{align}\label{eq:IntroK(z)-}
\langle \widetilde{\Psi}, {\mathcal H}_{\Lambda}^{\rm 2nd}(\rmu) \widetilde{\Psi} \rangle
\geq
\inf_{z \in {\mathbb R}_{+} } \inf_{\Phi} \langle \Phi, {\mathcal K(z)} \Phi \rangle
- C \rmu a,
\end{align}
where the second infimum is over all normalized $\Phi  \in {\mathcal F}(\Ran(Q))$ satisfying \eqref{eq:LocalizedAftercNumber}.

Since 
\begin{align}
\rmu a = \rmu^2 a \ell^3 \sqrt{\rmu a^3} K_{\ell}^{-3} = \rmu^2 a \ell^3 \sqrt{\rmu a^3} X^{\frac{9}{2}},
\end{align}
which is in agreement with the error term in Theorem~\ref{thm:LHY-Box},
this implies that we need to prove that
\begin{align}\label{eq:ToProve}
\inf_{\Phi} \langle \Phi, {\mathcal K(z)} \Phi \rangle &\geq
-4\pi \rmu^2 a \ell^3 +  4\pi \rmu^2 a  \ell^3 \frac{128}{15\sqrt{\pi}}(\rmu a^3)^{\frac{1}{2}} 
\nonumber \\
&\quad 
 - C \rmu^2 a \ell^3 (\rmu a^3)^{\frac{1}{2}} \left( \frac{R^2}{a^2} (\rmu a^3)^{\frac{1}{2}} +
X^{\frac{1}{5}} 
\right),
\end{align}
for all normalized $\Phi$ satisfying \eqref{eq:LocalizedAftercNumber}.

We will use that with our choice of parameters \eqref{eq:NewConditionNew} is satisfied.

If $\rho_z = |z|^2/\ell^3$ satisfies \eqref{eq:Close-I}, i.e. is `far away' from $\rmu$, then Proposition~\ref{prop:FarAway} provides a lower bound on $\langle \Phi, {\mathcal K(z)} \Phi \rangle$ which is larger than needed for \eqref{eq:ToProve} by a factor of $2$ on the LHY-term. Since \eqref{eq:NewConditionNew} is satisfied the assumptions of Proposition~\ref{prop:FarAway} are verified.

If $\rho_z$ satisfies the complementary inequality \eqref{eq:Close} and $\Phi$ satisfies \eqref{eq:LocalizedAftercNumber}, then by \eqref{eq:SimplifyKz} (using again that \eqref{eq:NewConditionNew} is satisfied) and Theorem~\ref{thm:BogHamDiag} combined with Lemma~\ref{lem:BogIntegral} we get
\begin{align}\label{eq:Combined1}
\langle \Phi, {\mathcal K(z)} \Phi \rangle 
&\geq - \frac{1}{2} \rmu^2 \ell^{3} \widehat{g}(0)
+ 4\pi \frac{128}{15\sqrt{\pi}} \rho_z a \sqrt{\rho_z a^3}\ell^3 \nonumber \\
&\quad +  \langle \Phi, \left(\frac{b}{4 \ell^2} n_{+} + \varepsilon_T\frac{b}{2 d^2 \ell^2} n_{+}^{H} 
  + {\mathcal Q}_1^{\rm ex}(z)+
  {\mathcal Q}_2^{\rm ex}(z) + {\mathcal Q}_3(z) \right) \Phi \rangle \nonumber \\
  &\quad + (2\pi)^{-3} \ell^3 \langle \Phi, \int {\mathcal D}_k b_k^{\dagger} b_{k}\,dk \, \Phi \rangle - {\mathcal E}_1,
\end{align}
where the error term ${\mathcal E}_1$ satisfies
\begin{align}
{\mathcal E}_1 &\leq C \rmu^2 a \ell^3 (\rmu a^3)^{\frac{1}{2}} \left( \frac{R^2}{a^2} (\rmu a^3)^{\frac{1}{2}} +
X^{\frac{1}{5}} + (\rmu a^3)^{\frac{1}{4}} (R a^{-1})^{\frac{1}{2}}
\right).
\end{align}
Here the error term in $X^{\frac{1}{5}}$ comes from the $\varepsilon(\rmu, \rho_z)$ in Lemma~\ref{lem:BogIntegral}. Notice that this error is compatible with \eqref{eq:ToProve} using Young's inequality.

Now we can apply Theorem~\ref{thm:Control3Q} to obtain the inequality
\begin{align}\label{eq:Combined2}
&(2\pi)^{-3} \ell^3 \langle \Phi, \int {\mathcal D}_k b_k^{\dagger} b_{k}\,dk \, \Phi \rangle \nonumber \\
&+  \langle \Phi, \left(\frac{b}{4 \ell^2} n_{+} + \varepsilon_T\frac{b}{2 d^2 \ell^2} n_{+}^{H} 
  + {\mathcal Q}_1^{\rm ex}(z)+
  {\mathcal Q}_2^{\rm ex}(z) + {\mathcal Q}_3(z) \right) \Phi \rangle 
  \geq - {\mathcal E}_2,
\end{align}
with error term
\begin{align}
{\mathcal E}_2 
&\leq C \rmu^2 a \ell^3 \sqrt{\rmu a^3} (\rmu a^3)^{\frac{1}{6}} X^{-\frac{11}{12}}.
\end{align}
Here the dominant contribution to the error (with our choice of parameters) comes from 
the $\KH^{-1} (\rmu a^3)^{\frac{5}{12}}$-term. This error is clearly consistent with \eqref{eq:ToProve}.

Combining \eqref{eq:Combined1} and \eqref{eq:Combined2}, we get
\begin{align}
\langle \Phi, {\mathcal K(z)} \Phi \rangle 
&\geq - \frac{1}{2} \rmu^2 \ell^{3} \widehat{g}(0)
+ 4\pi \frac{128}{15\sqrt{\pi}} \rmu a \sqrt{\rmu a^3}\ell^3 \nonumber \\
&\quad - ( {\mathcal E}_1 + {\mathcal E}_2 + C \left|\rmu a \sqrt{\rmu a^3} - \rho_z a \sqrt{\rho_z a^3}\right|\ell^3).
\end{align}
This establishes \eqref{eq:ToProve} for $\rho_z$ satisfying \eqref{eq:Close}, since by \eqref{eq:Close}, \eqref{eq:NewConditionNew} and \eqref{eq:Xparameter} we have
\begin{align}
\left|\rmu a \sqrt{\rmu a^3} - \rho_z a \sqrt{\rho_z a^3}\right|\ell^3 \leq C \rmu a \sqrt{\rmu a^3} \ell^3 K_{\ell}^{-2} = C \rmu a \sqrt{\rmu a^3} \ell^3 X^3.
\end{align}
This finishes the proof of \eqref{eq:ToProve} and therefore of Theorem~\ref{thm:LHY-Box}, which in turn implies Theorem~\ref{thm:LHY-Background} and Theorem~\ref{thm:LHY}.
\end{proof}

\appendix

\section{Bogolubov method}\label{sec:simplebog}
In this section we recall a simple consequence of the Bogolubov method (see \cite[Theorem~6.3]{LS} and \cite{BS})

\begin{theorem}[Simple case of Bogolubov's method]\label{thm:bogolubov-complete}\hfill\\
Let $a_\pm$ be operators on a Hilbert space satisfying $[a_+,a_-]=0$
For $\cA>0$, 
${\mathcal B} \in {\mathbb R}$ satisfying either $|{\mathcal B}| < {\mathcal A}$ or ${\mathcal B} = {\mathcal A}$ and arbitrary 
$\kappa\in\C$, we have the operator identity 
\begin{align}\label{eq:BogIdentity}
&\cA(a^{*}_+ a_+ +a^{*}_{-} a_{-})+{\mathcal B} (a^*_+a^*_{-}+a_+a_{-})+
\kappa(a^*_+ +a_{-})+\overline{\kappa}(a_+ +a_{-}^{*})\nonumber \\
&=
{\mathcal D} (b_{+}^* b_{+} + b_{-}^* b_{-})
-\frac{1}{2}(\cA-\sqrt{\cA^2-{\mathcal B}^2})
([a_{+},a^*_{+}]+[a_{-},a^*_{-}])-\frac{2|\kappa|^2}{{\mathcal A} + {\mathcal B}},
\end{align}

where
\begin{align}
{\mathcal D} := \frac{1}{2} \left( {\mathcal A} + \sqrt{{\mathcal A}^2 - {\mathcal B}^2}\right),
\end{align}
and
\begin{align}\label{eq:bpm}
b_{+}:= a_{+} + \alpha a_{-}^* + \overline{c_0},\qquad 
b_{-}:=a_{-} + \alpha a_{+}^{*} + c_0,
\end{align}
with
\begin{align}
\alpha:= {\mathcal B}^{-1} \left( {\mathcal A} - \sqrt{{\mathcal A}^2 - {\mathcal B}^2}\right),\qquad
c_0:=\frac{2\overline{\kappa}}{{\mathcal A} + {\mathcal B} + \sqrt{{\mathcal A}^2 - {\mathcal B}^2}}.
\end{align}
In particular,
\begin{align}\label{eq:BogIneq}
		&\cA(a^{*}_+ a_+ +a^{*}_{-} a_{-})+\cB(a^*_+a^*_{-}+a_+a_{-})+
		\kappa(a^*_+ +a_{-}^*)+\overline{\kappa}(a_+ +a_{-})\nonumber \\
		&\geq-\frac{1}{2}(\cA-\sqrt{\cA^2-\cB^2})
		([a_{+},a^*_{+}]+[a_{-},a*_{-}])-\frac{2|\kappa|^2}{\cA+\cB}.
\end{align}
\end{theorem}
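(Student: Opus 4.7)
\medskip

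The plan is a direct algebraic verification: the operators $b_\pm$, $\alpha$, $c_0$ have been chosen precisely to make the expansion of $\mathcal{D}(b_+^* b_+ + b_-^* b_-)$ reproduce the quadratic form on the left-hand side, up to the commutator and constant corrections listed. I would begin by establishing four scalar identities, which are immediate from $\alpha = \mathcal{B}^{-1}(\mathcal{A}-s)$ with $s := \sqrt{\mathcal{A}^2-\mathcal{B}^2}$ and $\mathcal{D} = \tfrac12(\mathcal{A}+s)$:
\begin{align*}
\mathcal{D}(1+\alpha^2) &= \mathcal{A}, \qquad 2\mathcal{D}\alpha = \mathcal{B}, \qquad \mathcal{D}\alpha^2 = \tfrac12(\mathcal{A}-s),\\
\mathcal{D}(1+\alpha)c_0 &= \bar\kappa, \qquad 2\mathcal{D}|c_0|^2 = \tfrac{2|\kappa|^2}{\mathcal{A}+\mathcal{B}}.
\end{align*}
The first three are one-line computations (the key one being $(\mathcal{A}+s)(\mathcal{A}-s)=\mathcal{B}^2$), and for the fourth one uses the factorization $(\mathcal{A}+s)(\mathcal{A}+\mathcal{B}-s) = \mathcal{B}(\mathcal{A}+\mathcal{B}+s)$, which yields $\mathcal{D}(1+\alpha) = \tfrac12(\mathcal{A}+\mathcal{B}+s)$ and hence the stated values for $c_0$ and $|c_0|^2$ (using $4\mathcal{D}(\mathcal{A}+\mathcal{B})=(\mathcal{A}+\mathcal{B}+s)^2$). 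The degenerate case $\mathcal{B}=\mathcal{A}$ is handled by taking $\alpha\to 1$, $s\to 0$.

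Next I would expand $b_+^* b_+ + b_-^* b_-$ from the definitions \eqref{eq:bpm}. The only commutation rules needed are $[a_+,a_-]=0$ together with its adjoint $[a_+^*,a_-^*]=0$; the $(+,-)$ cross terms never come in a form where $a_+$ meets $a_-^*$, so no commutation assumption between those two operators is required. After multiplying out and collecting, one gets
$$b_+^* b_+ + b_-^* b_- = (1+\alpha^2)(a_+^* a_+ + a_-^* a_-) + \alpha^2\bigl([a_+,a_+^*]+[a_-,a_-^*]\bigr) + 2\alpha(a_+^* a_-^* + a_+ a_-) + L + 2|c_0|^2,$$
where $L$ collects the linear terms. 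A straightforward bookkeeping shows
$$L = (1+\alpha)\bigl(\bar c_0(a_+^* + a_-) + c_0 (a_-^* + a_+)\bigr).$$
Multiplying through by $\mathcal{D}$ and substituting the four scalar identities, the quadratic part becomes exactly $\mathcal{A}(a_+^* a_+ + a_-^* a_-) + \mathcal{B}(a_+^* a_-^* + a_+ a_-)$, the linear part becomes exactly $\kappa(a_+^* + a_-) + \bar\kappa(a_+ + a_-^*)$, while the commutator term contributes $\tfrac12(\mathcal{A}-s)([a_+,a_+^*]+[a_-,a_-^*])$ and the constant contributes $\tfrac{2|\kappa|^2}{\mathcal{A}+\mathcal{B}}$, both of which appear on the right-hand side of \eqref{eq:BogIdentity} with the correct sign when the identity is rearranged.

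The inequality \eqref{eq:BogIneq} is then an immediate consequence of \eqref{eq:BogIdentity}: since $\mathcal{D}\geq 0$ and $b_\pm^* b_\pm \geq 0$ as operators, dropping the nonnegative term $\mathcal{D}(b_+^* b_+ + b_-^* b_-)$ yields the stated lower bound. There is no real obstacle here — the computation is essentially forced once the form of $b_\pm$ is known — so the only risk is a sign or factor error in the bookkeeping for the linear and constant terms; I would double-check those by verifying the completion-of-squares identity $\mathcal{A}|\zeta|^2 + \mathcal{B}\,\mathrm{Re}(\zeta^2) + 2\,\mathrm{Re}(\kappa\bar\zeta) \geq -\tfrac{2|\kappa|^2}{\mathcal{A}+\mathcal{B}}$ in the scalar case $(a_\pm \leftrightarrow \zeta)$ as a sanity check before writing the operator version.
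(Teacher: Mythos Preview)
Your proof is correct and follows exactly the approach the paper takes: the paper's own proof simply states that the identity \eqref{eq:BogIdentity} is elementary and that \eqref{eq:BogIneq} follows by dropping the nonnegative term $\mathcal{D}(b_+^* b_+ + b_-^* b_-)$. You have supplied precisely the details the paper omits, and your scalar identities and operator expansion are all correct.
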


\begin{proof}
The identity \eqref{eq:BogIdentity} is elementary. From here the
inequality \eqref{eq:BogIneq} follows by dropping the positive
operator term ${\mathcal D} (b_{+}^* b_{+} + b_{-}^* b_{-})$.
\end{proof}

\section{Localization to small boxes}
\label{SmallBoxes}
\newcommand{\cst}{\text{(cst.)}}
The Hamiltonian ${\mathcal H}_B(\rmu)$ defined in \eqref{eq:Def_HB} (with $u=0$) is localized to the box 
$\Lambda:=\Lambda(0)=[-\ell/2,\ell/2]^3$. In order to arrive at the a priori bounds in Theorem~\ref{thm:aprioribounds} 
we will localize again to boxes with a length scale $\ell d\ll (\rho a)^{-1/2}$. 
The reason for this second localization is that we need a larger Neumann gap in order to absorb errors. 
We therefore introduce a new family of boxes (some of which will have a rectangular shape) given by 
\begin{equation}
B(u)=[-\ell/2,\ell/2]^3\cap \left(\ell d u +[-\ell d/2, \ell d /2]^3\right),\quad u\in \R^3.
\end{equation}
The functions that localize to these boxes are
\begin{equation}\label{eq:chiBdefinition}
        \chi_{B(u)}(x)=\chi\left(\frac{x}{\ell}\right)\chi\left(\frac{x}{d\ell}-u\right), \quad u\in\R^3,
\end{equation}
where $\chi$ is given in \eqref{eq:Def_chi}
in terms of the positive integer $M$. Observe that
\begin{equation}\label{eq:chiBintegral}
\iint\chi_{B(u)}(x)^2dx du= \ell^3.
\end{equation}

As usual we consider the projections 
$$
P_{B(u)}\phi=|B(u)|^{-1}\langle\one_{B(u)},\phi\rangle \one_{B(u)},\quad
Q_{B(u)}\phi=\one_{B(u)}\phi -P_{B(u)}\phi.
$$
In these small boxes we consider the Hamiltonian
\begin{equation}\label{eq:HBU}
  {\mathcal H}_{B(u)}(\rmu)=\sum_{i=1}^N\left({\mathcal T}_{B(u),i}-\rmu \int w_{1,B(u)}(x_i,y)dy\right)
  +\frac12\sum_{i\ne j} w_{B(u)}(x_i,x_j)
\end{equation}
where (omitting the index $u$)
\begin{equation}\label{eq:appTB}
  {\mathcal T}_{B}=\frac12\varepsilon_T(1+\pi^{-2})^{-1}(d\ell)^{-2}Q_B
  +Q_B\chi_B[\sqrt{-\Delta}-(ds\ell)^{-1}]_+^{-2}\chi_{B}Q_B
\end{equation}
and 
\newcommand{\WS}{W^{\rm s}}
\begin{equation}
  w_{B}(x,y)=\chi_B(x)\WS(x-y)\chi_B(y),
  \quad w_{1,B}(x,y)=\chi_B(x)\WS_1(x-y)\chi_B(y)
\end{equation}          
with (where the subscript s refers to small)
\begin{equation}
  \WS(x)=\frac{W(x)}{\chi*\chi(x/(d\ell))},\quad 
  \WS_1(x)=\frac{W_1(x)}{\chi*\chi(x/(d\ell))}.
\end{equation}
As in the large boxes we will also need 
\begin{equation}
\quad w_{2,B}(x,y)=\chi_B(x)\WS_2(x-y)\chi_B(y),\quad 
\WS_2(x)=\frac{W_2(x)}{\chi*\chi(x/(d\ell))}.
\end{equation}
Since $\omega\leq 1$ we have 
\begin{equation}\label{eq:WSestimate}
  \int \WS_2\leq 2\int \WS_1(x)\leq C a 
\end{equation}
We have by a Schwarz inequality that 
\begin{equation}\label{eq:iintw1}
  \iint w_{1,B}(x,y)dxdy\leq \iint\chi_B(x)^2\WS_1(x-y)dxdy\leq \cst a\int\chi_B^2\leq Ca|B|.
\end{equation}
Observe also that 
\begin{equation}\label{eq:appw1sliding}
  \iiint w_{1,B(u)}(x,y)dxdy du= \ell^{3}\int g =8\pi a\ell^{3}.
\end{equation}

It was proved in \cite{BS} Theorem 3.10 that the operator ${\mathcal H}_{\Lambda}(\rmu)$ defined in
\eqref{eq:Def_HB} and \eqref{eq:Def_HB2} can be bounded below by (we are for the lower bound 
ignoring the third term in ${\mathcal T}$ in \eqref{eq:DefT})
\begin{equation}\label{eq:BSsliding2}
  {\mathcal H}_{\Lambda}(\rmu)\geq \sum_{i=1}^N \frac{b}{2} Q_{\Lambda,i}\ell^{-2}+
  \int_{\R^3} {\mathcal H}_{B(u)}(\rmu) du,
\end{equation}
if
\begin{equation}\label{eq:BSlidingassp}
\varepsilon_T,s,ds^{-1},\text{ and } (s^{-2}+d^{-3})(sd)^{-2}s^M
\end{equation}
are smaller than some universal constant. Note that, if $\rmu a^3$ is
small enough, this is satisfied for our choices in
Section~\ref{sec:params}, in particular, due to \eqref{con:d5s}.

In the integral above the operators ${\mathcal H}_{B(u)}(\rmu)$
are, however, not unitarily equivalent. Depending on $u$ the boxes $B(u)$ can
be rather small and rectangular. We denote by
$\lambda_1(u)\leq\lambda_2(u)\leq\lambda_3(u)\leq d\ell$ the side
lengths of the boxes $B(u)$. To avoid boxes that are very small, i.e.,
where $\lambda_1(u)\leq \rmu^{-1/3}$ we will restrict the integral above to
$u$ such that 
$$
\|\ell d u\|_\infty\leq \frac\ell2(1+d)-\rmu^{-1/3}.
$$
Note that since the full integral would be over the set where 
$\|\ell d u\|_\infty\leq \frac\ell2(1+d)$ we see that the 
restriction implies that all boxes will satisfy $\lambda_1(u)\geq\rmu^{-1/3}$.

For the kinetic energy and the repulsive potential
this restriction will only give a further lower bound. For the chemical potential term we will use the following result. 
\begin{lemma}For all $x\in\Lambda$ we have the estimate
\begin{align}
-\rmu\iint &w_{1,B(u)}(x,y)dy du \nonumber \\
\geq &
-\rmu\int_{\|u\|_\infty-\frac12\left(\frac1d+1\right)\leq -(\ell d\rmu^{1/3})^{-1}}\int w_{1,B(u)}(x,y)dy du\nonumber\\
&-3\rmu\int_{-2(\ell d\rmu^{1/3})^{-1}\leq \|u\|_\infty-\frac12\left(\frac1d+1\right)\leq -(\ell d\rmu^{1/3})^{-1}}\int w_{1,B(u)}(x,y)dy du.
\end{align}
\end{lemma}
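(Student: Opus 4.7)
Since $v \geq 0$ and $\chi \geq 0$, the integrand $w_{1,B(u)}(x,y) \geq 0$, so the claimed lower bound on $-\rmu\iint w_{1,B(u)}(x,y)\,dy\,du$ is equivalent to the upper bound
\begin{equation*}
\int_{\mathcal{O}} F(x,u)\,du \;\leq\; 3\int_{\mathcal{T}} F(x,u)\,du,\qquad F(x,u) := \int w_{1,B(u)}(x,y)\,dy,
\end{equation*}
with
$\mathcal{O} := \{u : \|u\|_\infty - \tfrac12(\tfrac1d+1) \in (-\eta,0]\}$ (the outer shell of thin boxes) and
$\mathcal{T} := \{u : \|u\|_\infty - \tfrac12(\tfrac1d+1) \in [-2\eta,-\eta]\}$ (the adjacent inner shell of equal thickness), where $\eta := (\ell d \rmu^{1/3})^{-1}$.

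The approach is a translation/covering argument. For $j=1,2,3$ let $\tau_j(u) := u - \eta\,\sign(u_j)\,e_j$, which is a unit-Jacobian volume-preserving map translating $u$ by $\eta$ toward the origin along the $j$-th axis. Split $\mathcal{O}$ as a disjoint union $\mathcal{O} = \bigsqcup_{j=1}^3 \mathcal{O}_j$ where $\mathcal{O}_j := \{u \in \mathcal{O} : |u_j| = \|u\|_\infty\}$ (ties assigned to the smallest index, a null-set issue). On $\mathcal{O}_j$, the translation $\tau_j$ lowers the maximal coordinate $|u_j|$ by $\eta$ into the thin-inner range, while the other coordinates are unchanged and have smaller absolute value, so $\tau_j$ maps $\mathcal{O}_j$ into $\mathcal{T}$.

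The key pointwise step I would establish is
\begin{equation*}
F(x,u) \leq F(x,\tau_j(u))\qquad\text{for } u \in \mathcal{O}_j,
\end{equation*}
which encodes the intuition that sliding a thin box inward by $\eta$ cannot decrease the integral of a positive integrand. Since $\chi_{B(u)}(z) = \chi(z/\ell)\chi(z/(d\ell)-u)$, the translation $\tau_j$ affects only the second factor, shifting its argument away from the boundary of $\Lambda$. Using the explicit non-negative and symmetric form of $\chi$ from Appendix~\ref{sec:chiproperties}, together with $W_1^s\geq 0$ and the range $R \ll d\ell$ of $W_1^s$, the comparison reduces to a one-dimensional monotonicity statement for $\chi$ near the edge of its support. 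Combining by change of variables then gives
\begin{equation*}
\int_{\mathcal{O}} F(x,u)\,du \;=\; \sum_{j=1}^3 \int_{\mathcal{O}_j} F(x,u)\,du \;\leq\; \sum_{j=1}^3 \int_{\tau_j(\mathcal{O}_j)} F(x,u')\,du' \;\leq\; 3 \int_{\mathcal{T}} F(x,u')\,du',
\end{equation*}
from which the lemma follows upon multiplication by $-\rmu$.

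The main obstacle is the pointwise comparison of the localization factors under $\tau_j$: since $\chi$ is only $C^{M-1}$ and need not be globally monotone, the estimate must exploit the fact that near the boundary of $\Lambda$ the factor $\chi(z/\ell)$ prevents any loss, so translating the small box inward keeps its effective support inside $\Lambda$. This geometric feature, together with the convolution structure of $W_1^s$ over a range $R$ much smaller than $d\ell$, is where the detailed construction of $\chi$ in Appendix~\ref{sec:chiproperties} enters; the factor $3$ reflects the fact that each point of $\mathcal{T}$ can arise as $\tau_j(u)$ from at most three $u\in\mathcal{O}$, one for each coordinate direction.
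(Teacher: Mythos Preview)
There is a genuine gap in your covering argument: the claim that $\tau_j(\mathcal{O}_j)\subset\mathcal{T}$ is false. Writing $c:=\tfrac12(\tfrac1d+1)$, take $u$ with $|u_1|=c$, $|u_2|=c-\eta/2$, $|u_3|=0$; then $u\in\mathcal{O}_1$, but after $\tau_1$ one has $|u_1'|=c-\eta$ while $|u_2'|=c-\eta/2$ is unchanged, so $\|\tau_1(u)\|_\infty=c-\eta/2>c-\eta$ and $\tau_1(u)\in\mathcal{O}$, not $\mathcal{T}$. Your justification ``the other coordinates \ldots\ have smaller absolute value'' only gives $|u_i|\leq|u_j|$; it does not force $|u_i|\leq c-\eta$. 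A single coordinate translation cannot handle the edge and corner pieces of the shell $\mathcal{O}$ where two or three coordinates lie simultaneously in the outer range.

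The paper's route is close in spirit but avoids this trap. It first strips off the $u$-independent nonnegative factors $\chi(x/\ell)$, $\chi(y/\ell)$, $W_1^{\rm s}(x-y)$ and reduces to an inequality for $\chi(\tfrac{x}{\ell d}-u)\chi(\tfrac{y}{\ell d}-u)$ with $x,y\in\Lambda$. Since $\chi$ is a tensor product of one-dimensional symmetric-decreasing bumps and $x,y\in\Lambda$ forces each one-dimensional factor to be monotone in $|u_j|$ on $[\tfrac1{2d},c]$, one gets the coordinate-wise bound: for fixed $u_2,u_3$, the maximum over $u_1$ in the outer interval does not exceed the minimum over the adjacent inner interval (and likewise for $u_2,u_3$). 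This monotonicity is then applied repeatedly, once for each coordinate still in the outer range, which is exactly what is needed to push all offending coordinates inward. Note that the pointwise comparison you flag as the main obstacle is in fact routine once the product structure of $\chi$ is used; the covering geometry is where the care is needed.
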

\begin{proof} The estimate above follows if we can show that for all $x,y\in\Lambda$ we have 
\begin{align}
  \chi*\chi\left(\frac{x-y}{\ell d}\right)\leq& \int_{\|u\|_\infty-\frac12\left(\frac1d+1\right)\leq -(\ell d\rmu^{1/3})^{-1}}
  \chi\left(\frac{x}{\ell d}-u\right)\chi\left(\frac{y}{\ell d}-u\right)du\nonumber\\&
  +3\int_{-2(\ell d\rmu^{1/3})^{-1}\leq \|u\|_\infty-\frac12\left(\frac1d+1\right)\leq -(\ell d\rmu^{1/3})^{-1}}
  \chi\left(\frac{x}{\ell d}-u\right)\chi\left(\frac{y}{\ell d}-u\right)du.
\end{align}
We have 
\begin{align}
\chi*\chi\left(\frac{x-y}{\ell d}\right)-& \int_{\|u\|_\infty-\frac12\left(\frac1d+1\right)\leq -(\ell d\rmu^{1/3})^{-1}}
  \chi\left(\frac{x}{\ell d}-u\right)\chi\left(\frac{y}{\ell d}-u\right)du\nonumber\\=&
  \int_{\|u\|_\infty-\frac12\left(\frac1d+1\right)\geq-(\ell d\rmu^{1/3})^{-1}}
  \chi\left(\frac{x}{\ell d}-u\right)\chi\left(\frac{y}{\ell d}-u\right)du.
\end{align}
Since $x,y\in\Lambda$, the integral on the right is supported on
$\|u\|_\infty-\frac12\left(\frac1d+1\right)\leq 0$. Using the fact that $\rmu^{-1/3}< \ell d/2$ and that $\chi$
is a product of symmetric decreasing functions of the coordinates 
$u_1,u_2,u_3$ respectively, we may observe that for fixed $u_2,u_3$ we have
\begin{align}
  &\max_{\frac12\left(\frac1d+1\right)-(\ell d\rmu^{1/3})^{-1}\leq|u_1|\leq \frac12\left(\frac1d+1\right)}
  \chi\left(\frac{x}{\ell d}-u\right)\chi\left(\frac{y}{\ell d}-u\right)
  \leq\nonumber
\\ &\min_{\frac12\left(\frac1d+1\right)-2(\ell d\rmu^{1/3})^{-1}\leq|u_1|\leq \frac1{2d}+\frac12-(\ell d\rmu^{1/3})^{-1}}
  \chi\left(\frac{x}{\ell d}-u\right)\chi\left(\frac{y}{\ell d}-u\right).
\end{align}
Using this repeatedly (also with $u_1,u_2$ and $u_1,u_3$ fixed) gives the result in the lemma. 
\end{proof}
As a consequence of the lemma we find from \eqref{eq:BSsliding2}, if \eqref{eq:BSlidingassp} is satsifed, that
\begin{equation}\label{eq:appsliding}
  {\mathcal H}_{\Lambda}(\rmu)\geq 
  \frac{b}{2}\ell^{-2}\sum_{i=1}^NQ_{\Lambda,i}+\int_{\|\ell d u\|_\infty\leq \frac12 \ell (1+d)
    -\rmu^{-1/3}} {\mathcal H}_{B(u)}(m(u)\rmu) du,
\end{equation}
where $m(u)=1$ if 
$\|\ell d u\|_\infty\leq \frac12 \ell (1+d)
    -2\rmu^{-1/3}$
and $m(u)=4$ otherwise, i.e., for $u$ near the boundary. 

The goal in the rest of this section is to give a lower bound on the
ground state energy of the operators ${\mathcal H}_{B(u)}(m(u)\rmu)$ for
to conclude an a priori lower bound on the ground state
energy of ${\mathcal H}_{\Lambda}(\rmu)$. 
We may now assume that the shortest side length of $B(u)$ satisfies
$\lambda_1(u)\geq\rmu^{-1/3}$ and we will make use of the fact that the range $R$ of the potential 
satisfies $R\ll\rmu^{-1/3}$.
For simplicity we will often omit the parameter $u$. 
A main ingredient in getting a lower bound is to get a priori bounds on the operators 
\begin{equation}\label{eq:smallboxnn0n+}
n=\sum_{i=1}^N \one_{B,i},\quad n_0=\sum_{i=1}^NP_{B,i},\quad n_+=\sum_{i=1}^NQ_{B,i}.
\end{equation}
Note that the operator $n$ commutes with ${\mathcal H}_{B}$. We will not distinguish the operator $n$ 
from its value and talk about $n$-particle states. 

Applying the decomposition of the potential energy in Subsection~\ref{sec:potsplit} 
to the small boxes we arrive at the following lemma. 
\begin{lemma}\label{lm:appinteractionestimate} There is a constant $C>0$ such that on any small box $B$ we have 
\begin{align}\label{eq:SmallsimpleQs}
  -\rmu \sum_{i=1}^N \int w_{1,B}(x,y)\,dy+
   \frac{1}{2} \sum_{i\neq j}  w_B(x_i, x_j)
  \geq A_0+A_2-Ca (\rmu +n_0|B|^{-1})n_+  
\end{align}
where
\begin{align}
  A_0=&\frac{n_0(n_0-1)}{2|B|^2}\iint w_{2,B}(x,y)\,dx dy \nonumber\\
  &- \left(\rmu \frac{n_0}{|B|}+\frac14\left(\rmu
  -\frac{n_0+1}{|B|}\right)^2\right)\iint w_{1,B}(x,y)\,dx dy
  \label{eq:A0}
\end{align}
and 
\begin{equation}
  A_2= \frac{1}{2}\sum_{i\neq j} P_{B,i} P_{B,j} w_{1,B}(x_i,x_j) Q_{B,j} Q_{B,i} + h.c.
\end{equation} 
\end{lemma}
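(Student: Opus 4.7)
The plan is to apply Lemma~\ref{lm:potsplit} verbatim inside the small box $B$: replace $P,Q,w,w_1,w_2$ everywhere by $P_B,Q_B,w_B,w_{1,B},w_{2,B}$, so the left-hand side equals ${\mathcal Q}_0^{\rm ren,B}+{\mathcal Q}_1^{\rm ren,B}+{\mathcal Q}_2^{\rm ren,B}+{\mathcal Q}_3^{\rm ren,B}+{\mathcal Q}_4^{\rm ren,B}$. Since $v\geq0$ we retain only a fraction of the positive ${\mathcal Q}_4^{\rm ren,B}$ and discard the rest. The term ${\mathcal Q}_3^{\rm ren,B}$ is absorbed into $\tfrac14{\mathcal Q}_4^{\rm ren,B}$ using Corollary~\ref{cl:advQ3sch} with $Q'=Q_B$ and $\varepsilon$ of order $1$, at the cost of a contribution $-C n_0|B|^{-1}a\,n_+$ that forms one piece of the final $n_+$ error.

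Next, we invoke Lemma~\ref{lem:QDecompSimpler} inside $B$. The identity \eqref{eq:Q0n0} rewritten in the small box gives ${\mathcal Q}_0^{\rm ren,B}=\tfrac{n_0(n_0-1)}{2|B|^2}\iint w_{2,B}-\rmu\tfrac{n_0}{|B|}\iint w_{1,B}$, which is exactly the first two terms of $A_0$. The identity \eqref{eq:Q2n0} provides $A_2$ together with the positive exchange term $\sum_{i\ne j}P_{B,i}Q_{B,j}w_{2,B}P_{B,j}Q_{B,i}$ (kept as a positive reservoir), a positive diagonal piece $\big((\rho_0-\rmu)\int W_1^{\rm s}+\rho_0\int W_1^{\rm s}\omega\big)\sum_i Q_{B,i}\chi_B^2 Q_{B,i}$, and an error $-C(\rho_0+\rmu)a(R/(d\ell))^2 n_+$ that is absorbed into the claimed bound by the assumption $R\ll d\ell$ (which follows from \eqref{eq:AssumptionK_R} since $R\leq K_B^{1/2}(\rmu a^3)^{1/4}(\rmu a)^{-1/2}\ll d\ell$).

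The remaining task is to control ${\mathcal Q}_1^{\rm ren,B}$, which by \eqref{eq:Q1n0} is a self-adjoint sum of linear-in-$Q$ operators with coefficients $(\rho_0-\rmu)$ and $\rho_0$, and to extract the missing $-\tfrac14(\rmu-(n_0+1)/|B|)^2\iint w_{1,B}$. The $\rho_0$-piece is handled immediately by the Cauchy-Schwarz inequalities \eqref{eq:Q'app1}--\eqref{eq:Q'app} with $Q'=Q_B$ and a well-chosen $\varepsilon$, producing the second piece $-Ca\rho_0 n_+$ of the final $n_+$ error against the positive exchange reservoir in ${\mathcal Q}_2^{\rm ren,B}$. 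The $(\rho_0-\rmu)$-piece is handled by pairing it with a portion of the ``diagonal'' $PP$-piece of ${\mathcal Q}_0^{\rm ren,B}$ and with the positive $\sum_i Q_{B,i}\chi_B^2 Q_{B,i}$ extracted above: after writing $\int w_{1,B}=|B|\int W_1^{\rm s}+O((R/(d\ell))^2)$ (via Lemma~\ref{lem:Convolution}) the sum assembles into a perfect square of the form $\big\|(\rmu-(n_0+1)/|B|)\cdot(\text{const})+(\text{linear}_{Q\to P})\big\|^2$, which is discarded, leaving behind exactly $-\tfrac14(\rmu-(n_0+1)/|B|)^2\iint w_{1,B}$. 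The shift $n_0\to n_0+1$ is produced by normal-ordering the quadratic expression $n_0(n_0-1)/|B|^2-2\rmu n_0/|B|=(\rho_0-\rmu)^2+\text{(linear in }\rho_0)-\rmu^2$ so that the shift arises from the $-n_0/|B|^2$ term.

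The main obstacle is the final step: producing \emph{precisely} the form $-\tfrac14(\rmu-(n_0+1)/|B|)^2\iint w_{1,B}$, with the correct factor $\tfrac14$ and the correct $1/|B|$ shift. All the other steps are direct applications of lemmas already proved in Section~\ref{sec:potsplit}; only the bookkeeping is more delicate in the small box, because convolutions now live at scale $d\ell$, and one must verify that every $(R/(d\ell))^2$-error and every Cauchy-Schwarz remainder can be absorbed into $Ca(\rmu+n_0|B|^{-1})n_+$ under the hypotheses on $R$ from \eqref{eq:AssumptionK_R}.
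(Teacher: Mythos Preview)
Your overall plan—decompose via Lemma~\ref{lm:potsplit} in $B$, absorb $\mathcal Q_{3,B}^{\rm ren}$ into $\mathcal Q_{4,B}^{\rm ren}$ and discard the latter, then treat the $1Q$ and $2Q$ pieces—is the paper's. Two points need correcting.

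First, the $\rho_0$-piece of $\mathcal Q_{1,B}^{\rm ren}$ (the second line of \eqref{eq:Q1n0}) is not estimated on its own: it cancels \emph{exactly} against the residue of the $Q_3+Q_4$ step. If you use Corollary~\ref{cl:advQ3sch} with $Q'=Q_B$, observe that its left side already contains $\sum P_jQ_iw_{1,B}\omega\,P_iP_j+h.c.$, which \emph{is} that $\rho_0$-piece; once the corollary is applied there is nothing left to bound. Trying instead to control the $\rho_0$-piece via \eqref{eq:Q'app} would produce an $n_0^2|B|^{-1}a$ error (from the $n_0$ on the right of \eqref{eq:Q'app}), which is not of the required form $Ca(\rmu+n_0|B|^{-1})n_+$. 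In the paper the same cancellation is written out explicitly: the Cauchy--Schwarz on $\mathcal Q_{3,B}^{\rm ren}+\mathcal Q_{4,B}^{\rm ren}$ leaves behind $-\sum_{i\ne j}(P_{B,i}Q_{B,j}w_{1,B}\omega\,P_{B,j}P_{B,i}+h.c.)$, and this annihilates the second line of \eqref{eq:Q1n0}.

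Second—and this is the gap you yourself flag—the term $-\tfrac14\bigl(\rmu-(n_0+1)/|B|\bigr)^2\iint w_{1,B}$ does not come from assembling a perfect square out of pieces of $\mathcal Q_{0,B}^{\rm ren}$ and the $Q\chi_B^2Q$ part of \eqref{eq:Q2n0}; that reservoir carries the coefficient $(\rho_0-\rmu)\widehat{W_1^{\rm s}}(0)+\rho_0\widehat{W_1^{\rm s}\omega}(0)$, which is not positive in general. The paper's device is to factor the operator coefficient in the remaining $(\rho_0-\rmu)$-piece of $\mathcal Q_{1,B}^{\rm ren}$: using that $n_0$ intertwines $Q_iAP_i$ with a shift by one, one rewrites $|B|^{-1}(n_0-\rmu|B|)\sum_i Q_{B,i}\chi_B(W_1^{\rm s}\!*\chi_B)P_{B,i}+h.c.$ as
\[
|B|^{-1}\bigl(n_0^{1/2}+(\rmu|B|)^{1/2}\bigr)\,\Bigl[\textstyle\sum_i Q_{B,i}\chi_B(W_1^{\rm s}\!*\chi_B)P_{B,i}\Bigr]\,\bigl((n_0+1)^{1/2}-(\rmu|B|)^{1/2}\bigr)+h.c.
\]
and then applies Cauchy--Schwarz with weights $4$ and $\tfrac14$. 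The $Q\cdots Q$ half is bounded by $-Ca(\rho_0+\rmu)n_+$ as in \eqref{eq:BQwQleq}; the $P\cdots P$ half gives
\[
-\tfrac14|B|^{-2}\,n_0\bigl((n_0+1)^{1/2}-(\rmu|B|)^{1/2}\bigr)^2\iint w_{1,B}
\ \geq\ -\tfrac14\Bigl(\tfrac{n_0+1}{|B|}-\rmu\Bigr)^2\iint w_{1,B},
\]
which is exactly the third term of $A_0$. No portion of $\mathcal Q_{0,B}^{\rm ren}$ is borrowed, and the $+1$ shift is an operator-commutation effect, not a normal-ordering of $n_0(n_0-1)$.

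Finally, the paper does not route $\mathcal Q_{2,B}^{\rm ren}$ through \eqref{eq:Q2n0}: it simply absorbs the first three terms of $\mathcal Q_{2,B}^{\rm ren}$ directly into the $Ca(\rmu+\rho_0)n_+$ error and keeps the last as $A_2$, so no $(R/(d\ell))^2$ bookkeeping arises in this lemma.
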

\begin{proof}
We use the identity \eqref{eq:potsplit} which also holds in the small boxes with 
$P,Q$ and $w,w_{1},w_{2}$ replaced by $P_B,Q_B$ and $w_B,w_{1,B},w_{2,B}$ respectively. 
Let us denote the corresponding terms ${\mathcal Q}_{i,B}^{\rm ren}$, $i=0,\ldots,4$. Then
$$
{\mathcal Q}_{0,B}^{\rm ren}=\frac{n_0(n_0-1)}{2|B|^2}\iint w_{2,B}(x,y)\,dx dy - \rmu \frac{n_0}{|B|}\iint w_{1,B}(x,y)\,dx dy.
$$
As in the proof of Lemma~\ref{lm:potapp} we apply a Cauchy-Schwarz inequality---using the positivity of $w_B$---
to absorb ${\mathcal Q}_{3,B}^{\rm ren}$ in ${\mathcal Q}_{4,B}^{\rm ren}$.
This results in the following inequality,
\begin{align}\label{eq:Q3+Q4CS}
&\,{\mathcal Q}_{3,B}^{\rm ren}+{\mathcal Q}_{4,B}^{\rm ren} \nonumber
  \\ &\,\geq -C\sum_{i\neq j} P_{B,i} Q_{B,j} w_{1,B}(x_i,x_j) Q_{B,j} P_{B,i}
  - \sum_{i\neq j} \Big( P_{B,i} Q_{B,j} w_{1,B} \omega(x_i,x_j) P_{B,j} P_{B,i} +
  h.c.\Big) \nonumber 
  \\ &\,\quad - 2 \sum_{i \neq j} \Big( P_{B,i} Q_{B,j}
  w_{1,B} \omega P_{B,j} Q_{B,i} + h.c. \Big) \nonumber 
  \\ &\,\geq -C\sum_{i\neq
    j} P_{B,i} Q_{B,j} w_{1,B}(x_i,x_j) Q_{B,j} P_{B,i} - \sum_{i\neq j} \Big(
  P_{B,i} Q_{B,j} w_{1,B} \omega(x_i,x_j) P_{B,j} P_{B,i} + h.c.\Big)\nonumber\\
  &\,\geq - \sum_{i\neq j} \Big(
  P_{B,i} Q_{B,j} w_{1,B} \omega(x_i,x_j) P_{B,j} P_{B,i} + h.c.\Big)-Can_0|B|^{-1}n_+ ,
\end{align}
where we have used the pointwise inequality $0 \leq \omega \leq 1$,
and additional Cauchy-Schwartz inequality in the second inequality,
and
\begin{align}\label{eq:BQwQleq}
  \sum_{i\neq j} P_{B,i} Q_{B,j} w_{1,B}(x_i,x_j) Q_{B,j} P_{B,i} \leq
  C \| \chi_{B} \|_{\infty}^2 n_0|B|^{-1} n_{+}\int \WS_1 \leq Can_0|B|^{-1} n_{+},
\end{align}
which follows from 
$$
\int \chi_{B}(x) \WS_1(x-y) \chi_{B}(y)\,dy \leq \|\chi_{B}\|_{\infty}^2 \int \WS_1.
$$
Notice that if we rewrite ${\mathcal Q}_{B,1}^{\rm ren,1}$ as in \eqref{eq:Q1n0}
the first term on the right side of \eqref{eq:Q3+Q4CS} cancels the second line of \eqref{eq:Q1n0}.
The remaining part of ${\mathcal Q}_{B,1}^{\rm ren,1}$  we estimate as follows.
\begin{align}
 |B&|^{-1}(n_0-\rmu|B|)\sum_{i} Q_{B,i} \chi_B(x_i) \WS_1*\chi_B(x_i)
 P_{B,i} +
 h.c.\nonumber\\ =&|B|^{-1}({n_0}^{1/2}+(\rmu|B|)^{1/2})\sum_{i}
 Q_{B,i} \chi_B(x_i) \WS_1*\chi_B(x_i)
 P_{B,i}((n_0+1)^{1/2}-(\rmu|B|)^{1/2}) +
 h.c.\nonumber\\ \geq&-4|B|^{-1}\left(n_0^{1/2}+(\rmu|B|)^{1/2}\right)^2\sum_{i}
 Q_{B,i} \chi_B(x_i) \WS_1*\chi_B(x_i) Q_{B,i}\nonumber\\&
 -\frac14|B|^{-1}\left((n_0+1)^{1/2}-(\rmu|B|)^{1/2}\right)^2\sum_{i}
 P_{B,i} \chi_B(x_i) \WS_1*\chi_B(x_i) P_{B,i}.
\end{align}
The first term above we estimate similarly to the estimate in \eqref{eq:BQwQleq}. 
The last term above is equal to
\begin{align}
&-\frac14\frac{n_0}{|B|}\left((n_0+1)^{1/2}-(\rmu|B|)^{1/2}\right)^2\iint w_{1,B}(x,y)\,dx dy\nonumber\\
&\geq-\frac14\left(\frac{n_0+1}{|B|}-\rmu\right)^2\iint w_{1,B}(x,y)\,dx dy\nonumber
\end{align}
which together with ${\mathcal Q}^{\rm ren}_{0,B}$ give the $A_0$ term in the lemma. 

The first three terms in ${\mathcal Q}_{2,B}^{\rm ren}$ are absorbed into the last term in \eqref{eq:SmallsimpleQs} 
using again the same Cauchy-Schwartz as in the second inequality in \eqref{eq:Q3+Q4CS}.
Finally, the last terms in ${\mathcal Q}_{2,B}^{\rm ren}$ are exactly the terms collected in $A_2$.
\end{proof}

We express the term $A_2$ from the lemma in second quantization. Introducing the operators
$$
b_p^\dagger=|B|^{-1/2}a^\dagger(Q_B\chi_Be^{-ipx})a_0
$$
we can write 
$$
A_2=\frac12 (2\pi)^{-3}\int \widehat{\WS_1}(p)(b_p^\dagger b_{-p}^\dagger+b_{-p}b_p)dp.
$$
We shall control $A_2$ using Bogolubov's method. In order to do this we will add and subtract a term 
\begin{equation}\label{eq:A1}
  A_1= (2\pi)^{-3}K_{\rm s} a \int (b_p^\dagger b_{p}+b^\dagger_{-p}b_{-p})dp,
\end{equation}
with the constant $K_{\rm s}>0$ chosen appropriately. 
Note that we have 
\begin{equation}\label{eq:A1estimate}
  A_1\leq K_{\rm s} a \frac{n_0+1}{|B|} n_+\|\chi_B\|_\infty^2\leq CK_{\rm s} a \frac{n_0+1}{|B|} n_+.
\end{equation}
\begin{lemma}[Bogolubov's method in small boxes]\label{lm:appbogolubov} 
There exists a constant $C>0$ such that 
\begin{align}\label{eq:refinedaprioribog}
  \sum_{i=1}^NQ_{B,i}&\chi_{B,i}[\sqrt{-\Delta_i}-(ds\ell)^{-1}]_+^{-2}\chi_{B,i}Q_{B,i}+A_2
  \geq\nonumber\\&-\frac12(1+C(R/(d\ell))^2)(1+C(ds\ell)^{-1}a)\widehat{g\omega}(0)\frac{(n+1)n}{|B|^2}\int\chi_B^2\nonumber\\
  &-C\left(a^2(ds\ell)^{-1}\ln(ds\ell a^{-1})\frac{n+1}{|B|}
  +a^4(ds\ell)^{3}\left(\frac{n+1}{|B|}\right)^3 +a(ds\ell)^{-3}\right)\frac{n}{|B|}\int\chi_B^2\nonumber\\&
  -Ca\frac{n+1}{|B|}n_+.
\end{align}
Moreover, for all $\varepsilon>0$ there is a $C_\varepsilon>0$ such  that if 
\begin{equation} \label{eq:appbogassumptions}
  (R/d\ell)^2<C_\varepsilon^{-1},\quad a(ds\ell)^{-1}\ln(ds\ell a^{-1})<C_\varepsilon^{-1}
\end{equation}
then
\begin{align}\label{eq:simpleaprioribog}
  \sum_{i=1}^NQ_{B,i}\chi_{B,i}[\sqrt{-\Delta_i}-(ds\ell)^{-1}]_+^{-2}&\chi_{B,i}Q_{B,i}+A_2
  \geq-\frac12\left((1+\varepsilon)\widehat{g\omega}(0)+\varepsilon a\right)\frac{(n+1)n}{|B|^2}\int\chi_B^2\nonumber\\
  &-C_\varepsilon a(ds\ell)^{-3}\frac{n}{|B|}\int\chi_B^2
  -C_\varepsilon a \frac{n+1}{|B|}n_+.
\end{align}
\end{lemma}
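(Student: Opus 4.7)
The strategy is to apply the Bogolubov inequality from Theorem~\ref{thm:bogolubov-complete} to the quadratic form in the operators $b_p,b_{-p}$ that make up $A_2$. First I would add and subtract the positive operator $A_1$ from \eqref{eq:A1}. The subtracted $-A_1$ is absorbed directly using \eqref{eq:A1estimate} to produce the last error term $-Ca(n+1)|B|^{-1}n_+$ in \eqref{eq:refinedaprioribog}. It then suffices to give a Bogolubov-type lower bound for $T_{\rm kin}+A_1+A_2$, where $T_{\rm kin}=\sum_i Q_{B,i}\chi_{B,i}[\sqrt{-\Delta_i}-(ds\ell)^{-1}]_{+}^2\chi_{B,i}Q_{B,i}$.

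Writing $T_{\rm kin}$ in second quantization as $(2\pi)^{-3}\int T(p)\, c_p^\dagger c_p\,dp$ with $c_p^\dagger=a^\dagger(Q_B\chi_B e^{-ipx})$ and $T(p)=[|p|-(ds\ell)^{-1}]_{+}^2$, and using that $[a_0,c_p]=0$ together with $a_0 a_0^\dagger=n_0+1$ to get the identity $b_p^\dagger b_p=|B|^{-1}(n_0+1)c_p^\dagger c_p$, I would combine $T_{\rm kin}+A_1+A_2$ into the symmetric form
\begin{equation*}
\tfrac12(2\pi)^{-3}\!\int\!\Bigl[\mathcal{A}(p)\bigl(b_p^\dagger b_p+b_{-p}^\dagger b_{-p}\bigr)+\mathcal{B}(p)\bigl(b_p^\dagger b_{-p}^\dagger+b_{-p}b_p\bigr)\Bigr]dp,
\end{equation*}
with $\mathcal{A}(p)=|B|(n_0+1)^{-1}T(p)+2K_{\rm s}a$ and $\mathcal{B}(p)=\widehat{W^{\rm s}_1}(p)$. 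Since $[b_p,b_{-p}]=0$, choosing $K_{\rm s}$ larger than some universal constant ensures $\mathcal{A}\geq|\mathcal{B}|$ pointwise (because $|\mathcal{B}|\leq\int W^{\rm s}_1\leq Ca$), so Theorem~\ref{thm:bogolubov-complete} applies with $\kappa=0$. Using symmetry $p\leftrightarrow -p$ this yields
\begin{equation*}
T_{\rm kin}+A_1+A_2\ \geq\ -\tfrac12(2\pi)^{-3}\!\int\!\bigl(\mathcal{A}(p)-\sqrt{\mathcal{A}(p)^2-\mathcal{B}(p)^2}\bigr)[b_p,b_p^\dagger]\,dp.
\end{equation*}

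A direct computation gives $[b_p,b_p^\dagger]=|B|^{-1}\bigl(n_0\|Q_B\chi_B e^{-ipx}\|^2-c_p^\dagger c_p\bigr)\leq|B|^{-1}n_0\|Q_B\chi_B e^{-ipx}\|^2$, and the last norm equals $\int\chi_B^2-|B|^{-1}|\widehat{\chi_B}(p)|^2$. The dominant contribution comes from momenta where $\mathcal{B}^2\ll\mathcal{A}^2$; here the Taylor expansion $\mathcal{A}-\sqrt{\mathcal{A}^2-\mathcal{B}^2}=\mathcal{B}^2/(2\mathcal{A})+O(\mathcal{B}^4/\mathcal{A}^3)$ produces, after plugging in $\mathcal{A}$ and using the bound $T(p)\geq p^2(1-C(ds\ell)^{-1}/|p|)$ valid for $|p|\gtrsim(ds\ell)^{-1}$, an integrand $\approx(n_0+1)\widehat{W^{\rm s}_1}(p)^2/(2|B|p^2)$. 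Invoking \eqref{eq:I2-integral} to replace $(2\pi)^{-3}\int\widehat{W^{\rm s}_1}(p)^2/(2p^2)\,dp$ by $(1+C(R/d\ell)^2)\widehat{g\omega}(0)$, together with the correction $(1+Ca(ds\ell)^{-1})$ coming from $T(p)\neq p^2$ at large $|p|$, recovers the leading term $-\tfrac12(1+C(R/d\ell)^2)(1+Ca(ds\ell)^{-1})\widehat{g\omega}(0)n_0(n_0+1)|B|^{-2}\int\chi_B^2$ of \eqref{eq:refinedaprioribog} after using $n_0(n_0+1)\leq n(n+1)$. The intermediate momentum range $|p|\in[(ds\ell)^{-1},a^{-1}]$ produces the $a^2(ds\ell)^{-1}\ln(ds\ell/a)$ error, the low-momentum range $|p|\lesssim(ds\ell)^{-1}$ (where $T=0$ and the Taylor expansion fails) contributes $a(ds\ell)^{-3}$, and the next-order Taylor term $\sim\mathcal{B}^4/\mathcal{A}^3$ contributes the $a^4(ds\ell)^3(n/|B|)^3$ term. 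The simplified form \eqref{eq:simpleaprioribog} follows by absorbing the multiplicative factors $(1+C(R/d\ell)^2)$ and $(1+Ca(ds\ell)^{-1}\log)$ into the $(1+\varepsilon)$ prefactor under the assumption \eqref{eq:appbogassumptions}, and bounding the $a^2(ds\ell)^{-1}\log$ and $a^4(ds\ell)^3(n/|B|)^3$ errors by the $\varepsilon\,a\cdot n(n+1)|B|^{-2}\int\chi_B^2$ and $C_\varepsilon a(ds\ell)^{-3}(n/|B|)\int\chi_B^2$ terms respectively.

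The main obstacle is the bookkeeping of the integral of $(\mathcal{A}-\sqrt{\mathcal{A}^2-\mathcal{B}^2})[b_p,b_p^\dagger]$ across the three $|p|$-regimes (below $(ds\ell)^{-1}$ where $T$ vanishes, the intermediate logarithmic strip, and the large-momentum regime where Parseval and \eqref{eq:I2-integral} apply), without losing powers of $d\ell$ to the correction $|\widehat{\chi_B}(p)|^2$ in $\|Q_B\chi_B e^{-ipx}\|^2$. A secondary technical point is controlling the Taylor remainder uniformly so that the $a^4(ds\ell)^3(n/|B|)^3$ error has exactly the stated power.
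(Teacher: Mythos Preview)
Your overall strategy and your derivation of the refined bound \eqref{eq:refinedaprioribog} match the paper's, but there are two gaps.

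First, a technical point: your $\mathcal{A}(p)=|B|(n_0+1)^{-1}T(p)+2K_{\rm s}a$ is operator-valued, since $n_0$ does not commute with the off-diagonal part of the quadratic form (one computes $[n_0,b_p^\dagger]=-b_p^\dagger$, hence $[n_0,b_p^\dagger b_{-p}^\dagger]\neq0$). Theorem~\ref{thm:bogolubov-complete} is stated for scalar $\mathcal{A}$, and the identity \eqref{eq:BogIdentity} breaks down when $\mathcal{A}$ fails to commute with the $b$'s. The fix is exactly what the paper does: use $n_0\leq n$ (a scalar on $n$-particle sectors) to get $c_p^\dagger c_p\geq |B|(n+1)^{-1}b_p^\dagger b_p$, so that $\mathcal{A}(p)=\tfrac{|B|}{n+1}T(p)+2K_{\rm s}a$ is scalar and the Bogolubov lemma applies directly.

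Second, and more substantially, your derivation of \eqref{eq:simpleaprioribog} from \eqref{eq:refinedaprioribog} fails. Absorbing the quartic error $Ca^4(ds\ell)^{3}\bigl(\tfrac{n+1}{|B|}\bigr)^3\tfrac{n}{|B|}\int\chi_B^2$ into $C_\varepsilon a(ds\ell)^{-3}\tfrac{n}{|B|}\int\chi_B^2$ would require $a^3(ds\ell)^{6}\bigl(\tfrac{n+1}{|B|}\bigr)^3\leq C_\varepsilon$ uniformly in $n$, which is impossible since the lemma is an operator inequality valid for all $n$. The paper obtains \eqref{eq:simpleaprioribog} by a different mechanism: it allows $K_{\rm s}$ to be chosen large depending on $\varepsilon$. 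Since $|\mathcal{B}|/\mathcal{A}\leq C_0/(2K_{\rm s})$, a large $K_{\rm s}$ makes this ratio as small as desired, so the first-order Taylor bound $\mathcal{A}-\sqrt{\mathcal{A}^2-\mathcal{B}^2}\leq\tfrac12(1+\varepsilon/2)\,\mathcal{B}^2/\mathcal{A}$ holds with \emph{no} quartic remainder. The only price is that $A_1$ then contributes $C_\varepsilon a\tfrac{n+1}{|B|}n_+$ via \eqref{eq:A1estimate}, which is precisely the last term of \eqref{eq:simpleaprioribog}.
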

\begin{proof}
We add $A_1$ from \eqref{eq:A1} to the term we want to estimate. Using $n_0\leq n$ we may write
$$
\sum_{i=1}^N Q_{B,i}\chi_{B,i}[\sqrt{-\Delta_i}-(ds\ell)^{-1}]_+^{-2}\chi_{B,i}Q_{B,i}+A_1+A_2\geq
(2\pi)^{-3}\frac12\int h(p) dp
$$
where $h$ is the operator
$$
h(p)=\left(\frac{|B|}{n+1}[|p|-(ds\ell)^{-1}]^2_++2K_{\rm s} a \right)(b_p^\dagger b_{p}+b^\dagger_{-p}b_{-p})+
\widehat{\WS_1}(p)(b_p^\dagger b_{-p}^\dagger+b_{-p}b_p).
$$
We observe that 
$$
[b_p,b_p^\dagger]\leq n_0|B|^{-1}\int \chi_B^2\leq n|B|^{-1}\int \chi_B^2.
$$
We will now apply the simple case of Bogolubov's method in Theorem~\ref{thm:bogolubov-complete} with 
$$
{\mathcal A}(p)= \frac{|B|}{n+1}[|p|-(ds\ell)^{-1}]^2_++2K_{\rm s} a,\quad
{\mathcal B}(p)= \widehat{\WS_1}(p)
$$
We have by \eqref{eq:WSestimate} that 
$$
|{\mathcal B}(p)|= |\widehat{\WS_1}(p)|\leq \int\WS_1\leq C_0a .
$$ 
If we therefore choose $K_{\rm s}\geq C_0$ we see that $|{\mathcal B}|/{\mathcal A}\leq 1/2$ and 
we get the following lower bound from Theorem~\ref{thm:bogolubov-complete}. 
$$
h(p)\geq -\frac12\left({\mathcal A}(p)-\sqrt{{\mathcal A}(p)^2-{\mathcal B}(p)^2}\right)n_0|B|^{-1}\int \chi_B^2.
$$ 
Using that $|{\mathcal
  B}|/{\mathcal A}\leq 1/2 $ we have 
$$
h(p)\geq -C\frac{{\mathcal B}(p)^2}{{\mathcal A}(p)}n|B|^{-1}\int \chi_B^2.
$$
We use this for $|p|<2(ds\ell)^{-1}$ and we find
for the integral over $|p|<2(ds\ell)^{-1}$ 
\begin{align}
  \int_{|p|<2(ds\ell)^{-1}} \frac{{\mathcal B}(p)^2}{{\mathcal A}(p)}dp \leq& \frac{C_0^2}{K_{\rm s}}a\int_{|p|<2(ds\ell)^{-1}}1dp\leq
    \frac{C_0^2}{K_{\rm s}} (ds\ell)^{-3}.
\end{align}
For the simple bound \eqref{eq:simpleaprioribog} we may choose $K_s$ large depending on $\varepsilon$ to have 
$$
h(p)\geq -\frac12(1+\varepsilon/2)\frac{{\mathcal B}(p)^2}{{\mathcal A}(p)}n|B|^{-1}\int \chi_B^2
$$
and use this in the range $|p|>2(ds\ell)^{-1}$.
For the more refined bound \eqref{eq:refinedaprioribog}, in the range $|p|>2(ds\ell)^{-1}$, we use 
$$
h(p)\geq -\left(\frac12\frac{{\mathcal B}(p)^2}{{\mathcal A}(p)}+C\frac{{\mathcal B}(p)^4}{{\mathcal A}(p)^3}\right)
n|B|^{-1}\int \chi_B^2.
$$
For $|p|>2(ds\ell)^{-1}$ we have
$$
  \frac{{\mathcal B}(p)^2}{{\mathcal A}(p)}\leq\frac{n+1}{|B|}
  \frac{\widehat{\WS_1}(p)^2}{(|p|-(ds\ell)^{-1})^2}
  \leq \frac{\widehat{\WS_1}(p)^2}{p^2}(1+C(ds\ell)^{-1}|p|^{-1})\frac{n+1}{|B|}
$$
and hence by splitting the integral over the error in $|p|<a^{-1}$ and $|p|>a^{-1}$ we obtain
$$
\int_{|p|>2(ds\ell)^{-1}}\frac{{\mathcal B}(p)^2}{{\mathcal A}(p)}dp \leq(1+C(ds\ell)^{-1}a)\frac{n+1}{|B|}
\int_{\R^3} \frac{\widehat{\WS_1}(p)^2}{p^2}dp +Ca^2(ds\ell)^{-1}\frac{n+1}{|B|}\ln(ds\ell a^{-1}). 
$$
Finally, we use that 
\begin{align}
  \frac14(2\pi)^{-3}\int_{\R^3} \frac{\widehat{\WS_1}(p)^2}{p^2}dp 
  =&\frac14\iint\frac {\WS_1(x)\WS_1(y)}{4\pi|x-y|}dxdy\nonumber\\
  \leq&\frac14(1+C(R/(d\ell))^2)\iint\frac {g(x)g(y)}{4\pi|x-y|}dxdy\nonumber\\
  =&\frac14(1+C(R/(d\ell))^2)(2\pi)^{-3}\int_{\R^3} \frac{\widehat{g}(p)^2}{p^2}dp
  \nonumber\\
  =&\frac12(1+C(R/(d\ell))^2)\widehat{g\omega}(0).
\end{align}
Finally, to get \eqref{eq:refinedaprioribog} we estimate
\begin{align}
  \int_{|p|>2(ds\ell)^{-1}}\frac{{\mathcal B}(p)^4}{{\mathcal A}(p)^3}
  \leq \cst a^4\left(\frac{n+1}{|B|}\right)^3\int_{|p|>2(ds\ell)^{-1}}  |p|^{-6} dp
  =Ca^4\left(\frac{n+1}{|B|}\right)^3(ds\ell)^3.
\end{align}
Using the estimate \eqref{eq:A1estimate} on $A_1$ gives the
last term in \eqref{eq:refinedaprioribog}. 
\end{proof}
 
In order to use this lemma we will control the negative term quadratic
in $n$ in \eqref{eq:simpleaprioribog} in terms of the positive term
quadratic in $n_0$ in \eqref{eq:A0}. The difference between $n$ and $n_0$ will be absorbed in the Neumann gap 
of ${\mathcal T}_B$. It is, however, important to establish the result in the following lemma
\begin{lemma}\label{lm:w2>w1}
There is a constant $C>0$ such that if the shortest side length $\lambda_1$ of the box 
$B$ satisfies $R\leq \frac12C^{-1/2}\min\{\lambda_1,\ell d\}$ then
\begin{equation}\label{eq:w1>}
  \iint w_{1,B}(x,y)dx dy \geq 8\pi a\left(1-C
  \Big(\frac{R}{\lambda_1}\Big)^2\right)\left(1-C \Big(\frac{R}{\ell
    d}\Big)^2\right)\int\chi_B^2.
\end{equation}
\begin{equation}\label{eq:w>g}
  \iint w_{2,B}(x,y)dx dy\geq \iint w_{1,B}(x,y)dx dy +\left(1-C
  \Big(\frac{R}{\lambda_1}\Big)^2\right)\left(1-C \Big(\frac{R}{\ell
    d}\Big)^2\right)\widehat{g\omega}(0)\int\chi_B^2.
\end{equation}
Moreover for any $0<\varepsilon<1/10$ we can find a $C_\varepsilon>0$ such that if 
$R\leq C_\varepsilon^{-1}\min\{\lambda_1,\ell d\}$ then
\begin{equation}\label{eq:wepsilong}
  \iint w_{2,B}(x,y)dx dy\geq \frac34\iint w_{1,B}(x,y)dx dy 
  +\left((1+\varepsilon)\widehat{g\omega}(0)+\varepsilon a\right)\int\chi_B^2.
\end{equation}
\end{lemma}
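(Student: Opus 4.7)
The plan is to reduce all three estimates to a single geometric inequality for the self-overlap
\[
\phi(u):=\int\chi_B(y+u)\chi_B(y)\,dy,\qquad u\in\R^3,
\]
namely
\[
\phi(u)\geq\phi(0)\bigl(1-C(|u|/\lambda_1)^2-C(|u|/(d\ell))^2\bigr),\qquad |u|\leq R,
\]
which, since $R\leq\frac{1}{2}C^{-1/2}\min(\lambda_1,d\ell)$, is equivalent up to adjusting the constant to the product form $\phi(u)\geq\phi(0)(1-C(|u|/\lambda_1)^2)(1-C(|u|/(d\ell))^2)$ appearing in the lemma. After substituting $u=x-y$, we have $\iint w_{j,B}(x,y)\,dx\,dy=\int W_j^s(u)\phi(u)\,du$, and the bounds $W_1^s\geq g$ and $W_2^s-W_1^s\geq g\omega$, both immediate from $\chi*\chi\leq 1$ with equality at the origin, combined with $\int g=8\pi a$, $\int g\omega=\widehat{g\omega}(0)$ and the support property $\supp g\cup\supp(g\omega)\subset B(0,R)$, immediately yield (1) and (2).

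To establish the bound on $\phi$, I would use the translation invariance $\int\chi_B(y+u)^2\,dy=\int\chi_B^2$ to obtain the polarization identity $\phi(u)=\int\chi_B^2-\tfrac{1}{2}\int|\chi_B(y+u)-\chi_B(y)|^2\,dy$, and then estimate by $\int|\chi_B(y+u)-\chi_B(y)|^2\leq|u|^2\int|\nabla\chi_B|^2$ via Cauchy--Schwarz applied to $\chi_B(y+u)-\chi_B(y)=\int_0^1 u\cdot\nabla\chi_B(y+tu)\,dt$. The essential step is to show
\[
\int|\nabla\chi_B|^2\leq C\bigl(\lambda_1^{-2}+(d\ell)^{-2}\bigr)\int\chi_B^2.
\]
Writing $\chi_B$ as the product of the two cutoffs $\chi(\cdot/\ell)$ and $\chi(\cdot/(d\ell)-u_B)$, the inner piece of $\nabla\chi_B$ contributes $(d\ell)^{-2}\int\chi_B^2$ straightforwardly. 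The subtle point is that in a direction where the box becomes thin (length $\lambda_1\ll d\ell$, as occurs near the boundary of $\Lambda$), both cutoffs vanish---one at each end of the thin interval---so $\chi_B$ rises like $\sim y_1/(d\ell)$ from one end and drops like $\sim (\lambda_1-y_1)/\ell$ at the other, with peak value $\sim\lambda_1/(d\ell)$. Consequently $\int\chi_B^2\sim\lambda_1^3/(d\ell)^2$ and $\int|\partial_1\chi_B|^2\sim\lambda_1/(d\ell)^2$, yielding the $\lambda_1^{-2}$ ratio in the thin direction.

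For (3), I would split $\iint w_{1,B}=\tfrac{3}{4}\iint w_{1,B}+\tfrac{1}{4}\iint w_{1,B}$ and apply (1) to the $\tfrac{1}{4}$ piece to obtain an extra $2\pi a\cdot(1-\text{small error})\int\chi_B^2$, then invoke (2) for the remaining $\iint w_{2,B}-\iint w_{1,B}$. Since $\widehat{g\omega}(0)\leq\widehat{g}(0)=8\pi a$, the excess on the right-hand side of (3) beyond $\widehat{g\omega}(0)\int\chi_B^2$ is at most $(8\pi+1)\varepsilon a\int\chi_B^2$, so choosing $C_\varepsilon$ large enough that the geometric error factors from (1) and (2) are each bounded by a sufficiently small multiple of $\varepsilon$ ensures the surplus $2\pi a$ absorbs both $\varepsilon\widehat{g\omega}(0)$ and $\varepsilon a$ for any $\varepsilon<1/10$.

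The main technical obstacle is the sharp geometric estimate for $\int|\partial_1\chi_B|^2/\int\chi_B^2$ for very thin rectangular boxes: the naive pointwise bound $|\partial_1\chi_B|\lesssim(d\ell)^{-1}$ must be upgraded to the integral bound $\lesssim\lambda_1^{-2}\int\chi_B^2$ by carefully exploiting the fact that the two cutoffs vanish at opposite ends of the thin interval, forcing $\chi_B$ to be significantly smaller than its pointwise maximum over most of the support. The interior case $\lambda_1=d\ell$ reduces to a standard convolution-Taylor estimate and presents no difficulty.
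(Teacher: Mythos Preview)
Your proposal is correct and arrives at the same estimates, but the mechanism for obtaining the quadratic loss in $R$ differs from the paper's. You reduce everything to the self-overlap $\phi(u)=\int\chi_B(\cdot+u)\chi_B$, then use the polarization identity $\phi(u)=\int\chi_B^2-\tfrac12\int|\chi_B(\cdot+u)-\chi_B|^2$ together with the \emph{integral} first-derivative bound $\int|\nabla\chi_B|^2\le C(\lambda_1^{-2}+(d\ell)^{-2})\int\chi_B^2$. The paper instead Taylor-expands $\chi_B(y+u)$ to second order, uses the spherical symmetry of $W_1^{\rm s}$ and $\omega W_1^{\rm s}$ to kill the first-order term, and invokes the \emph{pointwise} second-derivative bound \eqref{eq:chiBestimate}, namely $\|\partial_i\partial_j\chi_B\|_\infty\le C\lambda_1^{-2}|B|^{-1}\int\chi_B$, combined with $(\int\chi_B)^2\le|B|\int\chi_B^2$. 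Your integral gradient bound is not stated in the paper, but it follows from the same ingredients established in Appendix~\ref{sec:chiproperties} (concavity of $h_v$, the fact that $h_v\ge\tfrac14\|h_v\|_\infty$ on half its support, and $\|h_v'\|_\infty\le C\lambda_1^{-1}\|h_v\|_\infty$); your sketch of the thin-direction scaling is on the right track, though the actual profile vanishes to order $M$ at the ends rather than linearly, so the constants carry $M$-dependence. One small advantage of your route is that it does not rely on the radial symmetry of the potential to get the $|u|^2$ behaviour; the paper's second-order Taylor argument does. For part~\eqref{eq:wepsilong} both arguments are identical: split off $\tfrac14\iint w_{1,B}$, use the first two parts, and use $\omega\le 1$ (hence $\widehat{g\omega}(0)\le 8\pi a$) so that the surplus $2\pi a$ covers $(8\pi+1)\varepsilon a$ for $\varepsilon<1/10$ once $C_\varepsilon$ is taken large.
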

\begin{proof}
The estimate \eqref{eq:w>g} follows from
\begin{align}
  \iint w_{2,B}(x,y)dx dy&- \iint w_{1,B}(x,y)dx dy \nonumber \\
  & =\iint \omega(x-y)w_{1,B}(x,y)\nonumber\\ 
  &\geq \int \omega(x)W_1^{\rm s}(x)dx \left(\int\chi_B^2-CR^2\|\nabla^2\chi_B\|_\infty\int\chi_B\right)\nonumber\\
  & \geq \left(1-C (R\lambda_1^{-1})^2\right)\int \omega(x) W_1^{\rm s}(x)dx\int\chi_B^2\nonumber\\
  & \geq\left(1-C (R\lambda_1^{-1})^2\right)\left(1-C (R/(\ell d))^2\right)\left(\int g\omega\right)\int\chi_B^2,\nonumber
\end{align}
where we have used that $\omega W$ is spherically symmetric, that
$|B|^{-1}\left(\int\chi_B\right)^2\leq \int\chi_B^2$, and that 
\begin{equation}\label{eq:chiBestimate}
  \|\partial_i\partial_j\chi_B\|_\infty\leq C_M \lambda_1^{-2}|B|^{-1}\int\chi_B,
\end{equation}
which is a simple exercise (see Appendix~\ref{sec:chiproperties}).
The estimate \eqref{eq:w1>} follows in the same way without $\omega$ and using $\int g=8\pi a$. 
Finally, \eqref{eq:wepsilong} follows from $\omega\leq1$. 
\end{proof}

We are now ready to give the bound on the energy in the small boxes. 
\begin{theorem}[Lower bound on energy in small boxes]\label{thm:smallbox}
Assume $B$ is a box with shortest side length
$\lambda_1\geq\rmu^{-1/3}$.  There are universal constants $C, C'>1$
and $0<c<1/2$ such that for all $1\leq K_B\leq C'^{-1}(\rmu
a^3)^{-1/6}$ we have for the Hamiltonian defined in \eqref{eq:HBU}
restricted to $n$-particle states that
\begin{align}
  {\mathcal H}_B(\rmu)\geq& 
  \left(c\frac{(n|B|^{-1}-\rmu)^2}{1+\frac{n}{|B|\rmu}}-\frac12\rmu^2\right)\iint w_{1,B}(x,y)dxdy\nonumber\\
  &-C\rmu^2 a\left((R\lambda_1^{-1})^2+K_B^3({\rmu a^3})^{1/2}\right)\int\chi_B^2-C\rmu a,
\end{align}
if 
\begin{equation}\label{eq:appdellsassumption}
  C'(\rmu a^3)^{1/2}\leq \varepsilon_T^{1/2}  a(d\ell)^{-1}\leq a (ds\ell)^{-1}\ln(ds\ell a^{-1})\leq K_B(\rmu a^3)^{1/2},
\end{equation}
and
\begin{equation}\label{eq:appRassumption}
  R\leq K_B^{1/2}(\rmu a^3)^{1/4}(\rmu a)^{-1/2}.
\end{equation}
We are assuming that $\varepsilon_T, s, d\leq 1$. 
\end{theorem}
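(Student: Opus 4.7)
The plan is to combine the potential-energy decomposition of Lemma~\ref{lm:appinteractionestimate} with the small-box Bogolubov bound of Lemma~\ref{lm:appbogolubov} and the geometric comparisons of Lemma~\ref{lm:w2>w1}, using the Neumann gap $\tfrac12\varepsilon_T(1+\pi^{-2})^{-1}(d\ell)^{-2}Q_B$ built into $\mathcal T_B$ to absorb all remainders linear in $n_+$.

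First I apply Lemma~\ref{lm:appinteractionestimate} and combine the part of $\mathcal T_B$ above the Neumann gap with $A_2$ via the simple Bogolubov estimate \eqref{eq:simpleaprioribog} (with a small fixed $\varepsilon$). This yields
\begin{align*}
\mathcal{H}_B(\rmu) \geq A_0 &+ \tfrac12\varepsilon_T (1+\pi^{-2})^{-1}(d\ell)^{-2}n_+ - \tfrac12\bigl((1+\varepsilon)\widehat{g\omega}(0) + \varepsilon a\bigr)\tfrac{n(n+1)}{|B|^2}\textstyle\int\chi_B^2 \\
&- C_\varepsilon a(ds\ell)^{-3}\tfrac{n}{|B|}\textstyle\int\chi_B^2 - Ca(\rmu + n/|B|)n_+.
\end{align*}
The $a(ds\ell)^{-3}$ remainder is converted, via the last inequality in \eqref{eq:appdellsassumption}, into the $K_B^3(\rmu a^3)^{1/2}\rmu^2 a\int\chi_B^2$ error of the statement; the linear-in-$n_+$ terms are absorbed in the Neumann gap thanks to the first inequality in \eqref{eq:appdellsassumption}, modulo an a priori control $n/|B|\lesssim\rmu$ that must be secured within the argument itself.

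Next, inserting \eqref{eq:w>g} into the definition of $A_0$ and expanding via the identity $\tfrac12\rho_0^2 - \rmu\rho_0 - \tfrac14(\rmu-\rho_0)^2 = \tfrac14(\rho_0-\rmu)^2 - \tfrac12\rmu^2$ (with $\rho_0 = n_0/|B|$) gives
\begin{align*}
A_0 \geq \Bigl(\tfrac14(\rho_0-\rmu)^2 - \tfrac12\rmu^2\Bigr)\iint w_{1,B} + \tfrac{n_0(n_0-1)}{2|B|^2}\bigl(1 - C(R/\lambda_1)^2 - C(R/(d\ell))^2\bigr)\widehat{g\omega}(0)\!\int\chi_B^2 - C\rmu a,
\end{align*}
where the $-C\rmu a$ absorbs the $1/|B|$ shifts in $A_0$ (using $|B|\geq\rmu^{-1}$ from $\lambda_1\geq\rmu^{-1/3}$). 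The $\widehat{g\omega}(0)$ term cancels against the Bogolubov correction up to a residue of order $(nn_+/|B|^2)\widehat{g\omega}(0)\int\chi_B^2\sim\rmu a\,n_+$ (absorbable in the gap) and of order $\rmu a$ (contributing the $C\rmu a$ summand); the $(R/\lambda_1)^2$ loss reproduces one of the error terms in the conclusion after multiplication.

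The main obstacle is then converting $(\rho_0-\rmu)^2$ into the targeted $(n/|B|-\rmu)^2/(1+n/(|B|\rmu))$, since the naive Peter--Paul step $(\rho_0-\rmu)^2 \geq (1-t)(n/|B|-\rmu)^2 - (1/t-1)(n_+/|B|)^2$ costs $Can_+^2/|B|$, quadratic in $n_+$ and exceeding what the gap alone can linearly absorb. The damping factor $1/(1+n/(|B|\rmu))$ in the conclusion is precisely the trace of a case split: when $n\leq 2\rmu|B|$, a bootstrap using the gap and the crude bound $n_+\leq n\lesssim\rmu|B|$ linearizes the loss and restores the full $\tfrac c4(n/|B|-\rmu)^2\iint w_{1,B}$; when $n>2\rmu|B|$, the raw $\tfrac14\rho_0^2\iint w_{1,B}\sim an^2/|B|$ already dominates $\tfrac12\rmu^2\iint w_{1,B}$ and also the weakened target $c(n/|B|-\rmu)^2\iint w_{1,B}/(1+n/(|B|\rmu))\sim cn\rmu a$, so the inequality holds directly. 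Glueing the two regimes and collecting errors verifies the claimed envelope $\rmu^2 a((R/\lambda_1)^2 + K_B^3(\rmu a^3)^{1/2})\int\chi_B^2 + C\rmu a$; the range $K_B\leq C'^{-1}(\rmu a^3)^{-1/6}$ ensures that the Bogolubov loss $K_B(\rmu a^3)^{1/2}$ stays subleading to the leading $\rmu^2 a|B|$ scale.
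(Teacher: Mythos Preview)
Your proposal has a genuine gap in the large-$n$ regime, and the paper uses a quite different device there.

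In your case $n>2\rmu|B|$ you write ``the raw $\tfrac14\rho_0^2\iint w_{1,B}\sim an^2/|B|$'', but $\rho_0=n_0/|B|$, not $n/|B|$. Nothing prevents $n_+$ from being comparable to $n$, in which case $\rho_0$ is small and this positive term is absent. Even when $n_0\approx n$, the Bogolubov correction from \eqref{eq:simpleaprioribog} is $-\tfrac12\bigl((1+\varepsilon)\widehat{g\omega}(0)+\varepsilon a\bigr)\tfrac{n(n+1)}{|B|^2}\int\chi_B^2$, which grows like $an^2/|B|$ with the \emph{same} scale as your alleged positive term; and the remainder $-Ca\tfrac{n+1}{|B|}n_+$ from both Lemma~\ref{lm:appinteractionestimate} and \eqref{eq:simpleaprioribog} also scales like $an^2/|B|$ when $n_+\sim n$. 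None of these is absorbable in the Neumann gap, whose coefficient is only $\sim\rmu a$. So the inequality simply fails for large $n$ along your route: you would obtain a lower bound $\sim -an^2/|B|$, far below the claimed $\sim c\rmu n a$.

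The paper handles this by a different idea: it writes $n=mn'+n''$ with $n'\in[C_n\rmu|B|,(C_n+1)\rmu|B|)$ and $n''<n'$, drops the positive inter-group repulsion, and bounds the energy group by group. Within each group the particle number is $\lesssim\rmu|B|$, so all $n_+$-errors are $\lesssim\rmu a\,n_+$ (genuinely absorbable in the gap), and the Bogolubov analysis is accurate. For each of the $m$ large groups one shows $e_B(n',\rmu)\geq\tfrac{1}{18}\rmu\tfrac{n'}{|B|}\iint w_{1,B}\geq0$ via \eqref{eq:simpleaprioribog} and \eqref{eq:wepsilong}; for the single remainder group of $n''$ particles one uses the \emph{refined} bound \eqref{eq:refinedaprioribog} together with \eqref{eq:w>g} to obtain the precise $\widehat{g\omega}(0)$-cancellation. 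Summing produces exactly the damped quadratic $c(n/|B|-\rmu)^2/(1+n/(|B|\rmu))$.

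A secondary issue: even for $n\leq 2\rmu|B|$, pairing the precise \eqref{eq:w>g} in $A_0$ with the $\varepsilon$-imprecise \eqref{eq:simpleaprioribog} leaves a residue $\sim\varepsilon\rmu^2 a\int\chi_B^2$ with $\varepsilon$ a fixed constant, which exceeds the claimed error $C\rmu^2 a\bigl((R/\lambda_1)^2+K_B^3(\rmu a^3)^{1/2}\bigr)\int\chi_B^2$. You would need \eqref{eq:refinedaprioribog} here. This is repairable; the large-$n$ problem is not, without the particle-splitting trick.
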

Note that all the assumptions on $K_B$, $R$, $\varepsilon_T$, $s$, and
$d$ are satisfied with our choices in Section~\ref{sec:params}, if
$\rmu a^3$ is small enough. In particular, the assumption on $K_B$ is
a consequence of \eqref{con:KB}, \eqref{eq:appdellsassumption} follows
from \eqref{con:eTdK} and \eqref{con:sdKellKB}
and \eqref{eq:appRassumption} was given in \eqref{eq:AssumptionK_R}.
\begin{proof}
Note that \eqref{eq:appdellsassumption} implies that 
$$
 (\rmu a)^{-1/2}K_B^{-1}\leq \frac{sd\ell}{\ln(ds\ell a^{-1})}\leq d\ell\leq \frac{\varepsilon_T^{1/2}}{C'} (\rmu a)^{-1/2}.
$$ 
This, in particular, implies that
\begin{equation}\label{eq:appgapestimate}
  \sum_{i=1}^N\frac12\varepsilon_T(1+\pi^{-2})^{-1}(\ell d)^{-2}Q_{B,i}\geq C'^2(1+\pi^{-2})^{-1}\rmu a n_+.
\end{equation}
Moreover we see from \eqref{eq:appRassumption} that
$$ 
R\rmu^{1/3}\leq K_B^{1/2}(\rmu a^3)^{1/12},\quad R/(\ell
d)\leq K_B^{3/2}(\rmu a^3)^{1/4}.
$$ 

We now first choose $\varepsilon$ so small, e.g., to be $1/20$ so that
we can apply Lemma~\ref{lm:w2>w1}. Hence if $C'$ is large enough, we can, 
since $\lambda_1> \rmu^{-1/3}$, use both \eqref{eq:w>g} and
\eqref{eq:wepsilong} from Lemma~\ref{lm:w2>w1}. We choose the same
$\varepsilon$ in \eqref{eq:simpleaprioribog} and again by assuming
that $C'$ large enough we can ensure that
\eqref{eq:appbogassumptions} is satisfied.

We may of course assume that $n>0$ as the inequality we want to prove is
obviously satisfied if $n=0$ as the operator is $0$ whereas the lower bound is
negative in this case.
We choose a constant $C_n>2$ to be determined precisely below
to depend only on $C$ and $C_\varepsilon$ in
Lemma~\ref{lm:appbogolubov}. Observe that $\rmu|B|\geq1$. Hence we can
choose an integer $n'\geq C_n$ in the interval $[C_n\rmu|B|,(C_n+1)\rmu|B|)$ and
we may write $n=mn'+n''$ with $m, n',n''$ non-negative integers and 
$n''<n'<(C_n+1)\rmu|B|$. We will get a lower bound
on the energy if in the Hamiltonian we think of dividing the
particles in $m$ groups of $n'$ particles and one group of $n''$
particles ignoring the positive interaction between the groups. It
is not important that the Hamiltonian is no longer symmetric between
the particles as we are not considering it as an operator, but only
calculating its expectation value in a symmetric state. We
arrive at the conclusion that if we denote by $e_B(n,\rmu)$ the
ground state energy of ${\mathcal H}_B(\rmu)$ in an $n$-particle
state then
\begin{equation}
  e_B(n,\rmu)\geq me_B(n',\rmu)+e_B(n'',\rmu).
\end{equation}
We have that both $n'$ and $n''$ are less than $(C_n+1)\rmu|B|\leq 2C_n\rmu|B|$.  This means
that the last terms in \eqref{eq:SmallsimpleQs}, \eqref{eq:refinedaprioribog}, 
and \eqref{eq:simpleaprioribog} in both cases can be absorbed in the positive term from
\eqref{eq:appgapestimate} if we choose $C'\geq CC_n$.  Using
\eqref{eq:iintw1} we see that the same is also true
for the errors we get by replacing $n'_0$ and $n_0''$ by $n'$ and $n''$ respectively everywhere in $A_0$ in
\eqref{eq:A0}.  

In the case of the $m$ groups of $n'$ particles we will use Lemma
\ref{lm:appinteractionestimate} and \eqref{eq:simpleaprioribog} to arrive at 
\begin{align}
e_B(n', \rmu)\geq& \frac{{n'}^2}{2|B|^2}\iint w_{2,B}(x,y)\,dx dy 
  - \left(\rmu \frac{{n'}}{|B|}+\frac14\left(\rmu
  -\frac{{n'}}{|B|}\right)^2\right)\iint w_{1,B}(x,y)\,dx dy\nonumber \\
  &-\frac12\left((1+\varepsilon)\widehat{g\omega}(0)+\varepsilon a\right)\frac{{n'}^2}{|B|^2}\int\chi_B^2
  -C\rmu a {n'}|B|^{-1} \int\chi_B^2.\nonumber
\end{align}
where we have used that \eqref{eq:appdellsassumption} implies that $a(ds\ell)^{-3}\leq C\rmu a$.
We have also used that the error in replacing $n'+1$ by $n'$ in several terms can also be absorbed in the last term. 
Thus applying \eqref{eq:wepsilong} we arrive at 
\begin{align}
  e_B(n', \rmu)\geq \frac18 \left(\rmu
  -\frac{n'}{|B|}\right)^2\iint w_{1,B}(x,y)\,dx dy 
 -C\rmu a {n'}|B|^{-1} \int\chi_B^2.\nonumber
\end{align}
It follows, using \eqref{eq:w1>} that if we choose the constant $C_n$ large enough then 
\begin{align}
  e_B(n', \rmu)\geq \frac19 \left(\rmu
  -\frac{n'}{|B|}\right)^2\iint w_{1,B}(x,y)\,dx dy \geq 
  \frac1{18} \rmu\frac{n'}{|B|}\iint w_{1,B}(x,y)\,dx dy \geq 0\nonumber.
\end{align}
Hence 
\begin{equation}
  me_B(n', \rmu)\geq \frac1{18} \rmu\frac{m n'}{|B|}\iint w_{1,B}(x,y)\,dx dy 
  = \frac1{18} \rmu\frac{n-n''}{|B|}\iint w_{1,B}(x,y)\,dx dy 
\end{equation}

We turn to the group of $n''$ particles.  If we apply Lemma~\ref{lm:appinteractionestimate} 
and \eqref{eq:refinedaprioribog} we see that
since $n''\leq 2C_n\rmu|B|$ we have 
\begin{align}
e_B(n'', \rmu)\geq& \frac{n''^2}{2|B|^2}\iint w_{2,B}(x,y)\,dx dy 
  - \left(\rmu \frac{n''}{|B|}+\frac14\left(\rmu
  -\frac{n''}{|B|}\right)^2\right)\iint w_{1,B}(x,y)\,dx dy\nonumber \\
  &-\frac12\widehat{g\omega}(0)\frac{n''^2}{|B|^2}\int\chi_B^2
  -CC_n^4\rmu^2 aK_B^3({\rmu a^3})^{1/2}\int\chi_B^2-CC_n\rmu a.\nonumber
\end{align}
The last term comes from repeatedly replacing $n''+1$ by $n''$ in the
leading terms, which leads to an error $n''a |B|^{-2}\int\chi_B^2\leq
C n''|B|^{-1} a$. In the error terms we can for the same replacement
alternatively use that $1 \leq \rmu|B|$.

If we now apply the estimate \eqref{eq:w>g} in Lemma~\ref{lm:w2>w1} we find that
\begin{align}
e_B(n'', \rmu)\geq& \frac14 \left(\frac{n''}{|B|}-\rmu\right)^2 \iint w_{1,B}(x,y)\,dx dy 
  - \frac12 \rmu^2\iint w_{1,B}(x,y)\,dx dy \nonumber \\
  &-C\rmu^2 a\left((R\lambda_1^{-1})^2+K_B^3({\rmu a^3})^{1/2}\right)\int\chi_B^2-C\rmu a.
\end{align}
where we have now ignored the explicit dependence on $C_n$, which is after all now a chosen constant. 
 
We have arrived at the bound that
\begin{align}
  e_B(n,\rmu)\geq &
  \left(\frac14 \left(\frac{n''}{|B|}-\rmu\right)^2 +\frac1{18} \rmu\frac{n-n''}{|B|}\right)\iint w_{1,B}(x,y)\,dx dy 
  \nonumber\\
  &- \frac12 \rmu^2\iint w_{1,B}(x,y)\,dx dy \nonumber 
  -C\rmu a^2\left((R\lambda_1^{-1})^2+K_B^3({\rmu a^3})^{1/2}\right)\int\chi_B^2-C\rmu a.
\end{align}
This easily implies the result in the theorem. 
\end{proof}

We will now apply the small box estimate from the previous theorem to
get an a priori bound on the energy and on the number of particles $n$
and excited particles $n_+$ in the large box.

\begin{theorem}[A priori estimates in large box] \label{thm:aprioriHLambda}
Assume \eqref{eq:def_ell}, \eqref{eq:appdellsassumption}, \eqref{eq:appRassumption}. Then 
there is a constant $C>0$ such that if again $1\leq K_B\leq C'^{-1}(\rmu a^3)^{-1/6}$ and 
$\rmu a^3$ is smaller than some universal constant we have 
\begin{equation}\label{eq:aprioriHLambda}
{\mathcal H}_\Lambda(\rmu)\geq -4\pi \rmu^2 a\ell^3(1+CK_B^3(\rmu a^3)^{1/2}). 
\end{equation}
Moreover, if there exists a normalized $n$-particle 
$\Psi\in{\mathcal F}_{\rm s}(L^2(\Lambda))$ such that \eqref{eq:aprioriPsi} holds
for a $0<J\leq K_B^3$ then the a priori bounds \eqref{eq:apriorinn+} hold. 
\end{theorem}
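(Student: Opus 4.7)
The plan is to combine the sliding estimate \eqref{eq:appsliding} with the small-box lower bound of Theorem~\ref{thm:smallbox}, and then extract the a priori bounds by comparing the resulting positive terms with the hypothesis \eqref{eq:aprioriPsi}.

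First I insert the bound of Theorem~\ref{thm:smallbox} into each small-box Hamiltonian in \eqref{eq:appsliding}; its hypotheses \eqref{eq:appdellsassumption} and \eqref{eq:appRassumption} are precisely those of the present theorem. Integrating the leading term $-\tfrac12(m(u)\rmu)^2\iint w_{1,B(u)}$ in $u$ yields $-4\pi\rmu^2 a\ell^3$ via \eqref{eq:appw1sliding}, while the error $C\rmu^2 a K_B^3(\rmu a^3)^{1/2}\int \chi_B^2$ integrates via \eqref{eq:chiBintegral} to $-C\rmu^2 a\ell^3 K_B^3(\rmu a^3)^{1/2}$; the $(R/\lambda_1(u))^2$ piece is absorbed using $\lambda_1(u)\geq \rmu^{-1/3}$ and \eqref{eq:appRassumption}, which gives $R\rmu^{1/3}\leq K_B^{1/2}(\rmu a^3)^{1/12}$. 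The boundary region where $m(u)=4$ has measure of order $\ell^2\rmu^{-1/3}$, so the factor $m(u)^2=16$ is more than compensated. This establishes \eqref{eq:aprioriHLambda} as an operator inequality.

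For the a priori bounds, I will retain in the lower bound both the gap $\tfrac{b}{2\ell^2}n_+^\Lambda$ from \eqref{eq:appsliding} and the integrated positive quadratic
\begin{equation*}
\int c\,\frac{(n_u/|B(u)|-\rmu)^2}{1+n_u/(|B(u)|\rmu)}\,\iint w_{1,B(u)}(x,y)\,dx\,dy\,du
\end{equation*}
from Theorem~\ref{thm:smallbox}. Taking expectation in $\Psi$ and subtracting the upper bound in \eqref{eq:aprioriPsi}, the assumption $J\leq K_B^3$ forces the sum of these positive contributions to be at most $CK_B^3(\rmu a^3)^{1/2}\rmu^2 a\ell^3$. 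From the gap term alone, using $\ell^2=K_\ell^2(\rmu a)^{-1}$ from \eqref{eq:def_ell}, this immediately yields $\langle\Psi,n_+\Psi\rangle\leq CK_B^3 K_\ell^2\rmu\ell^3(\rmu a^3)^{1/2}$. For the particle number bound, the identity $\int \one_{B(u)}(x)\,du=\one_\Lambda(x)$ gives $\int n_u\,du=n$ and $\int |B(u)|\,du=\ell^3$; since the map $n\mapsto (n/|B|-\rmu)^2/(1+n/(|B|\rmu))$ is convex, Jensen's inequality allows me to replace the operator $n_u$ by its expectation $\langle n_u\rangle_\Psi$ pointwise in $u$, after which Cauchy-Schwarz gives
\begin{equation*}
(n-\rmu\ell^3)^2 = \left(\int (\langle n_u\rangle-\rmu|B(u)|)\,du\right)^2 \leq \ell^3\int \frac{(\langle n_u\rangle-\rmu|B(u)|)^2}{|B(u)|}\,du.
\end{equation*}
Combined with the lower bound $\iint w_{1,B(u)}\gtrsim a|B(u)|$ from \eqref{eq:w1>}, this translates the control of the integrated quadratic into $(n\ell^{-3}-\rmu)^2\leq CK_B^3\rmu^2(\rmu a^3)^{1/2}$, i.e., $|n\ell^{-3}-\rmu|\leq CK_B^{3/2}\rmu(\rmu a^3)^{1/4}$.

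The main obstacle is the denominator $1+n_u/(|B(u)|\rmu)$ in the quadratic, which degrades the bound when $\langle n_u\rangle\gg\rmu|B(u)|$. I plan to split the $u$-integration at $\langle n_u\rangle=2\rmu|B(u)|$: on the set where $\langle n_u\rangle\leq 2\rmu|B(u)|$ the denominator is at most $3$ and the Cauchy-Schwarz argument applies directly (restricted to this set); on its complement one has $(\langle n_u\rangle-\rmu|B|)^2/(1+\langle n_u\rangle/(\rmu|B|))\geq \rmu|B|\langle n_u\rangle/8$, so the positive term controls $\rmu a\int_{\{\langle n_u\rangle>2\rmu|B|\}}\langle n_u\rangle\,du$ and forces this integral to be at most $CK_B^3\rmu\ell^3(\rmu a^3)^{1/2}\ll\rmu\ell^3$ (using $K_B\leq C'^{-1}(\rmu a^3)^{-1/6}$). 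Combining the two regimes then yields the bound on $n$, with the factor $K_\ell$ in \eqref{eq:apriorinn+} providing extra slack to absorb the contributions from the bad regime and from the boundary boxes.
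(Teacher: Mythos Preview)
Your overall architecture is right --- insert Theorem~\ref{thm:smallbox} into the sliding estimate \eqref{eq:appsliding}, keep both the Neumann gap $\frac{b}{2\ell^2}n_+$ and the positive convex term in $n(u)$, and read off \eqref{eq:apriorinn+} from the hypothesis \eqref{eq:aprioriPsi}. But two of your error estimates are too crude to reach the claimed $K_B^3(\rmu a^3)^{1/2}$ precision.

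\textbf{The boundary set where $m(u)=4$.} Bounding it by its $x$--measure $\sim\ell^2\rmu^{-1/3}$ gives a relative error $\rmu^{-1/3}/\ell = K_\ell^{-1}(\rmu a^3)^{1/6}$, which is \emph{larger} than $K_B^3(\rmu a^3)^{1/2}$ for any $K_B$ in the allowed range $1\le K_B\le C'^{-1}(\rmu a^3)^{-1/6}$. The measure argument alone cannot work here; what saves you is that on these boundary boxes the shortest side satisfies $\lambda_1\le 2\rmu^{-1/3}$, so by the explicit bound \eqref{eq:chiBsupspecial} one has $\|\chi_{B(u)}\|_\infty\le C(\rmu^{-1/3}/\ell)^M$, whence $\int\chi_{B(u)}^2$ and thus $\iint w_{1,B(u)}$ carry an extra factor $(\rmu a^3)^{M/3}$, which is negligible for $M$ large.

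\textbf{The $(R/\lambda_1)^2$ term.} The uniform bound $\lambda_1\ge\rmu^{-1/3}$ likewise gives $(R/\lambda_1)^2\le K_B(\rmu a^3)^{1/6}$, which again is not dominated by $K_B^3(\rmu a^3)^{1/2}$ for small $K_B$. The paper instead uses $\int\chi_{B(u)}^2\le \|\chi_{B(u)}\|_\infty\int\chi_{B(u)}\le C(\lambda_1/(d\ell))^M\int\chi_{B(u)}$ from \eqref{eq:chiBsupgeneral}, so that $(R/\lambda_1)^2\int\chi_B^2\le CR^2(d\ell)^{-2}(\lambda_1/(d\ell))^{M-2}\int\chi_B$; integrating in $u$ then gives the much better bound $(R/(d\ell))^2\le K_B^3(\rmu a^3)^{1/2}$.

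Both gaps have the same fix: exploit the explicit smallness of $\chi_{B(u)}$ when $\lambda_1(u)$ is small. Your treatment of $n_+$ is fine, and your alternative route to the bound on $n$ (Jensen in the state, then Cauchy--Schwarz in $u$ with a split at $\langle n_u\rangle=2\rmu|B|$) is different from the paper's argument --- which applies Jensen in the $u$--variable with weight $\iint w_{1,B(u)}$, obtains a one--body potential $U$ with $P_\Lambda UP_\Lambda=\ell^{-3}P_\Lambda$, and then compares $\sum_iU(x_i)$ to $n_0\ell^{-3}$ using the $n_+$ bound --- but your approach can be made to work once the two error estimates above are repaired.
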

As explained just after Theorem~\ref{thm:smallbox} the
assumptions \eqref{eq:appdellsassumption}, \eqref{eq:appRassumption}, and
the assumption on $K_B$ are satisfied with our choices in
Section~\ref{sec:params}.
\begin{proof}
We use \eqref{eq:appsliding} together with the estimate in
Theorem~\ref{thm:smallbox}.  We will denote by $n(u), n_0(u)$, and
$n_+(u)$ the operators defined in \eqref{eq:smallboxnn0n+}. The
corresponding operators in the large box $\Lambda$ will be denoted $n,n_0$, and $n_+$.
On the set 
$$
{\mathcal I}=\Big\{u\in \Big[-\frac12(1+\frac1d),\frac12(1+\frac1d)\Big]^3\ \Big|\ 
\frac12 \ell(1+d)-2\rmu^{-1/3}\leq \|\ell du\|_\infty\leq \frac12 \ell(1+d)-\rmu^{-1/3}\Big\}
$$ 
we have that 
$\rmu$ is replaced by $4\rmu$. On this set we have according to \eqref{eq:chiBsupspecial} that 
$|\chi_{B(u)}(x)|\leq C(\rmu^{-1/3}/\ell)^M\leq C(\rmu a^3)^{M/6}$ with ($C$ depending on $M$) and therefore 
\begin{equation}\label{eq:Iintegral}
  \int_{\mathcal I} 
  \int\chi_{B(u)}(x)^2dx\, du\leq C(\rmu a^3)^{M/3}(\ell d)^3d^{-3}\leq C(\rmu a^3)^{M/3}\ell^3
\end{equation}
If we use Theorem~\ref{thm:smallbox} and \eqref{eq:iintw1} 
to get the
the rough estimate
$$
{\mathcal H}_B(4\rmu)\geq -C\rmu^2 a\int\chi_B^2-C\rmu a
$$
we obtain
\begin{equation}\label{eq:badset}
  \int_{\mathcal I} 
      {\mathcal H}_B(4\rmu)\geq -C\rmu^2 a(\rmu a^3)^{M/3}\ell^3 -C\rmu ad^{-3}.
\end{equation}
In order to apply the estimate in Theorem~\ref{thm:smallbox} over the remaining $u$ we need to control
\begin{align}
  \int (R\lambda_1(u)^{-1})^2\int\chi_{B(u)}^2(x)dxdu\leq& C R^2(d\ell)^{-2} \int (\lambda_1(u)/(d\ell))^{M-2}
  \int\chi_{B(u)}(x)dxdu\nonumber\\
  \leq& 
  C(R/(\ell d)^2) \ell^{3}\leq CK_B^3\ell^3,
\end{align}
where we have used \eqref{eq:chiBsupgeneral}, i.e., $\|\chi_B\|_\infty\leq C(\lambda_1/(d\ell))^M$ 
and $\iint\chi_{B(u)}(x)dx du=C\ell^3$.
If we combine this with \eqref{eq:badset} (with $M=8$), \eqref{eq:appsliding}, \eqref{eq:appw1sliding}, \eqref{eq:chiBintegral},
and the estimate in
Theorem~\ref{thm:smallbox} we arrive at the final a priori lower bound
\begin{align}
  \langle\Psi,{\mathcal H}_\Lambda(\rmu)\Psi\rangle\geq& 
  {\mathcal R}_\Psi+
  \langle\Psi,\frac{b}{2\ell^2}n_+\Psi\rangle 
    -4\pi\rmu^2 a\ell^3-CK_B^3(\rmu a^3)^{1/2}\ell^3-C\rmu a d^{-3}\nonumber\\
  \geq& {\mathcal R}_\Psi
  +\langle\Psi,\frac{b}{2\ell^2}n_+\Psi\rangle \nonumber-4\pi\rmu^2 a\ell^3\left(1+CK_B^3(\rmu a^3)^{1/2}\right),
\end{align}
where 
$$
0\leq {\mathcal R}_\Psi=\left\langle\Psi,\left(\int_{{\mathcal I}_-}  
F\left(\frac{n(u)}{|B(u)|}\right)\iint w_{1,B(u)}(x,y)dxdy du\right)\Psi\right\rangle
$$
with $F(t)=c\frac{(t-\rmu)^2}{1+t\rmu^{-1}}$ and 
$$
{\mathcal I}_-=\{u\in\R^3\  |\  \|\ell du\|_\infty\leq \frac12 \ell(1+d)-2\rmu^{-1/3}\}. 
$$
Since${\mathcal R}_\Psi$ and $n_+$ are non-negative this immediately gives \eqref{eq:aprioriHLambda} and 
\begin{equation}\label{eq:aprioriRn+}
  R_\Psi\leq C\rmu^2 a\ell^3K_B^3(\rmu
  a^3)^{1/2},\quad\text{and}\quad \langle\Psi,
  n_+\Psi\rangle\leq C\rmu\ell^3K_B^3K_\ell^2(\rmu
  a^3)^{1/2}
\end{equation}
for a normalized $n$-particle $\Psi$ satisfying \eqref{eq:aprioriPsi}.
It remains to establish the a priori bound on $n$ in \eqref{eq:apriorinn+}.

Using that the function $F$ is convex and denoting 
$$
{\mathcal C}=\int_{{\mathcal I}_-}\iint w_{1,B(u)}(x,y)dxdy du
$$
we obtain
\begin{equation}\label{eq:Rconvex}
  R_\Psi\geq {\mathcal C} F\left({\mathcal C} ^{-1}
  \left\langle\Psi, \left(\int_{{\mathcal I}_-}\frac{n(u)}{|B(u)|}\iint w_{1,B(u)}(x,y)dxdy du\right)
  \Psi\right\rangle\right)
\end{equation}

We have by \eqref{eq:appw1sliding} that
$$
 8\pi a\ell^3(1-C(\rmu a^3)^{M/3})\leq{\mathcal C}\leq 8\pi a\ell^3,
$$
where we used \eqref{eq:iintw1} and as in \eqref{eq:Iintegral} that 
$|\chi_{B(u)}(x)|\leq C(\rmu^{-1/3}/\ell)^M\leq C(\rmu a^3)^{M/6}$ for $u$ outside ${\mathcal I}_-$.

We may write  
$$
{\mathcal C}^{-1}\int_{{\mathcal I}_-}\frac{n(u)}{|B(u)|}\iint w_{1,B(u)}(x,y)dxdy du=\sum_{i=1}^N U(x_i)
$$
where
$$
U(z)={\mathcal C}^{-1}\int_{{\mathcal I}_-}|B(u)|^{-1}\one_{B(u)}(z)\iint w_{1,B(u)}(x,y)dxdy du.
$$
Using the form of $F$ and the a priori bound on ${\mathcal R}_\Psi$ in \eqref{eq:aprioriRn+} we see 
that 
\begin{equation}\label{eq:aprioriU}
\left|\langle\Psi,\sum_i U(x_i)\Psi\rangle-\rmu\right|\leq C\rmu K_B^{3/2}(\rmu a^3)^{1/4}
\end{equation}
Note that by \eqref{eq:iintw1} and $\int\one_{B(u)}du=\one_\Lambda $ we have that $U(z)\leq C\ell^{-3}$ and that 
$$
P_\Lambda UP_\Lambda= P_\Lambda |\Lambda|^{-1}\int_\Lambda U(z)dz= P_\Lambda \ell^{-3}.
$$
Using that for all $\varepsilon>0$  
\begin{align}
  (1-\varepsilon)\sum_{i=1}^N (P_\Lambda UP_\Lambda)_i-\varepsilon^{-1}
  \sum_{i=1}^N& (Q_\Lambda UQ_\Lambda)_i\leq \sum_{i=1}^N U(x_i)\leq\nonumber\\ 
  &(1+\varepsilon)\sum_{i=1}^N (P_\Lambda UP_\Lambda)_i+(1+\varepsilon^{-1})
  \sum_{i=1}^N (Q_\Lambda UQ_\Lambda)_i\nonumber
\end{align}
we see that 
$$
(1-\varepsilon)n_0\ell^{-3}-C\varepsilon^{-1}n_+\ell^{-3}\leq \sum_{i=1}^N U(x_i)
\leq (1+\varepsilon)n_0\ell^{-3}+(1+\varepsilon^{-1})Cn_+\ell^{-3}.
$$
Choosing $\varepsilon=K_B^{3/2}K_\ell(\rmu a^3)^{1/4}$ and using 
the a priori bounds on the expectation values of $n_+$ in \eqref{eq:aprioriRn+} and $U$
in \eqref{eq:aprioriU} we conclude the result in the theorem. 
\end{proof}

\section{The explicit localization function}\label{sec:chiproperties}

In this section we discuss the explicit choice of the localization function 
$\chi$ and its properties.
Define
$$
\zeta(y) = \begin{cases}
\cos(\pi y), & |y| \leq 1/2,\\ 0,& |y|>1/2,
\end{cases}
$$
and
\begin{align}
\label{eq:Def_chi}
\chi(x) = C_{\cM} \left(\zeta(x_1) \zeta(x_2) \zeta(x_3)\right)^{\cM}.
\end{align}
Here $\cM \in {\mathbb N}$ is to be chosen large enough and we explained the need to choice $M=\Mvalue$ in 
Section~\ref{sec:params}. The constant $C_{\cM}$ is chosen 
such that the normalization $\int\chi^2=1$ from \eqref{eq:chinormalization} holds.
We have $0\leq\chi \in C^{\cM-1}(\R^3)$.

\begin{lemma}\label{lem:DecaychiHat}
Let $\chi$ be the localization function from \eqref{eq:Def_chi}.
Let $\widetilde{M} = \max\{ n \in {\mathbb Z} | 2n \leq M\}$.
Then, for all $k \in {\mathbb R}^3$,
\begin{align}
|\widehat{\chi}(k)| \leq C_{\chi} (1+ |k|^2)^{-\widetilde{M} },
\end{align}
where
\begin{align}
C_{\chi} = \int \left| (1- \Delta)^{\widetilde{M} } \chi 
\right|
\end{align}
In particular, when $|k| \geq \frac{1}{2} \KH^{-1} (\rmu a^3)^{\frac{5}{12}} a^{-1}$, with the notation from \eqref{eq:KHtilde-cutoff}, we have
\begin{align}\label{eq:DecayPH}
|\widehat{\chi_{\Lambda}}(k )| =
\ell^3 |\widehat{\chi}(k \ell)|\leq C \ell^3 ( K_{\ell}^{-2} \KH^2 (\rmu a^3)^{\frac{1}{6}} )^{\widetilde{M}}.
\end{align}
\end{lemma}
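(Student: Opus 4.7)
The plan is to obtain the decay of $\widehat{\chi}$ by the standard device of pulling $(1-\Delta)^{\widetilde{M}}$ out of the Fourier transform, and then to specialize to the rescaled function $\chi_{\Lambda}$.

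First I would verify that $(1-\Delta)^{\widetilde{M}}\chi \in L^1(\R^3)$. Since $\zeta(y) = \cos(\pi y)$ for $|y|\leq 1/2$ and vanishes outside, the function $\zeta^M$ vanishes to order $M$ at $y=\pm 1/2$, so $\zeta^M \in C^{M-1}(\R)$ after extension by zero and its $M$-th derivative exists almost everywhere and is bounded with compact support. Consequently, for any multi-index $\alpha$ with $|\alpha|\leq M$, the derivative $\partial^\alpha \chi$ exists a.e., is bounded, and is compactly supported in $[-1/2,1/2]^3$. Since $\widetilde{M}$ is the largest integer with $2\widetilde{M}\leq M$, the operator $(1-\Delta)^{\widetilde{M}}$ applied to $\chi$ involves only derivatives of order at most $M$, so $(1-\Delta)^{\widetilde{M}}\chi \in L^\infty_c \subset L^1$, and its $L^1$-norm equals the constant $C_\chi$ in the statement.

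Next I would apply the standard Fourier identity. Using the intertwining $(1-\Delta_x)e^{-ik\cdot x} = (1+|k|^2)e^{-ik\cdot x}$ and integrating by parts $2\widetilde{M}$ times (all boundary terms vanish because $\chi$ together with its first $M-1$ derivatives is continuous with compact support in $[-1/2,1/2]^3$), one obtains
\begin{align}
(1+|k|^2)^{\widetilde{M}}\,\widehat{\chi}(k) = \int e^{-ik\cdot x}\,(1-\Delta)^{\widetilde{M}}\chi(x)\,dx.
\end{align}
The trivial estimate $|\widehat{f}(k)|\leq \|f\|_{L^1}$ then yields $|\widehat{\chi}(k)|\leq C_\chi(1+|k|^2)^{-\widetilde{M}}$.

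Finally, for the second claim I would use the dilation rule $\widehat{\chi_\Lambda}(k) = \ell^3 \widehat{\chi}(k\ell)$, which gives $|\widehat{\chi_\Lambda}(k)| \leq C_\chi \ell^3 (1+|k\ell|^2)^{-\widetilde{M}}$. For $|k|\geq \tfrac{1}{2}\KH^{-1}(\rmu a^3)^{5/12}a^{-1}$, recalling $\ell = K_\ell(\rmu a)^{-1/2}$, we have
\begin{align}
|k\ell|^2 \geq \tfrac{1}{4}K_\ell^2 \KH^{-2}(\rmu a^3)^{5/6}(\rmu a)^{-1}a^{-2}
= \tfrac{1}{4}K_\ell^2 \KH^{-2}(\rmu a^3)^{-1/6},
\end{align}
which under \eqref{con:KLKH} is much larger than $1$. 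Hence $(1+|k\ell|^2)^{-\widetilde{M}}\leq C(K_\ell^{-2}\KH^{2}(\rmu a^3)^{1/6})^{\widetilde{M}}$, as required. There is no real obstacle here; the only point requiring a little care is the regularity bookkeeping in the first step, where one must match $2\widetilde{M}\leq M$ against the $C^{M-1}$ regularity of $\chi$ so that all boundary terms in the iterated integration by parts vanish.
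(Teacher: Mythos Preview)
Your proof is correct and is exactly the standard elementary argument the paper has in mind; the paper itself simply writes ``The proof of Lemma~\ref{lem:DecaychiHat} is elementary and will be omitted.'' Your regularity bookkeeping ($2\widetilde{M}\leq M$ matched against $\chi\in C^{M-1}$), the integration-by-parts identity, and the arithmetic in the rescaled estimate are all in order.
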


The proof of Lemma~\ref{lem:DecaychiHat} is elementary and will be omitted.

The explicit choice of $\chi$ is important when we analyze the behavior of the small box localization function. 
Recall that according to \eqref{eq:chiBdefinition} and the explicit choice of $\chi$ we may write
$
\chi_B(x)=C_M^2F(x)^M
$
where $F(x)=h_{u_1}(x_1)h_{u_2}(x_2)h_{u_3}(x_3)$ and 
$$
h_v(t)=\zeta\left(\frac{t}{\ell}\right)\zeta\left(\frac{t}{\ell d}-v\right).
$$
If we denote by $\lambda_1$ the shortest side length in the box $B$ we see by estimating one of the 
$\zeta$ factors of scale $d\ell$ and using that it must vanish at one of the sides that    
\begin{equation}\label{eq:chiBsupgeneral}
        \chi_B(x)\leq C C_M^2 (\lambda_1/(d\ell))^M.
\end{equation}
If the shortest side length $\lambda_1$ of the box $B$ satisfies that $\lambda_1<d\ell$ we can improve this slightly to 
\begin{equation}\label{eq:chiBsupspecial}
        \chi_B(x)\leq C C_M^2 (\lambda_1/\ell)^M.
\end{equation}
This follows by estimating a $\zeta$ factor of scale $\ell$ and using that {\it it} vanishes at one of the sides.

In this rest of this short appendix 
we will briefly sketch how to get the estimate \eqref{eq:chiBestimate} on $\chi_B$. 
Recall that according to \eqref{eq:chiBdefinition} and the explicit choice of $\chi$ we may write
$
\chi_B(x)=C_M^2F(x)^M
$
where $F(x)=h_{u_1}(x_1)h_{u_2}(x_2)h_{u_3}(x_3)$ and 
$$
h_v(t)=\zeta\left(\frac{t}{\ell}\right)\zeta\left(\frac{t}{\ell d}-v\right)
$$

Our first claim is that 
$$
\|\chi_B\|_\infty\leq C'_M|B|^{-1}\int\chi_B
$$
It is enough to show this for the function $h_v(t)^M$.  
Since $\zeta$ is concave on its support we have that if $h_v$ is supported on $[a,b]$ and takes its maximum in $c$
then
$$
h_v(t)\geq \|h_v\|_\infty \min\left\{\frac{(t-a)^2}{(c-a)^2},\frac{(t-b)^2}{(c-b)^2}\right\}.
$$ 
In particular, $h_v$ is bigger than $\frac14\|h_v\|_\infty$ on half the interval. The claim follows from this. 

Our second claim is that 
$$
\max_i\|\partial_i\chi_B\|_\infty\leq C'_M\lambda_1^{-1}\|\chi_B\|_\infty,
\quad \max_{i,j}\|\partial_i\partial_j\chi_B\|_\infty\leq C'_M\lambda_1^{-2}\|\chi_B\|_\infty.
$$
It is easy to see that it is enough to show these properties
for $h_v$, i.e., that 
$$
\|h'_v\|_\infty\leq C'(b-a)^{-1}\|h_v\|_\infty,\qquad \|h''_v\|_\infty\leq C'(b-a)^{-2}\|h_v\|_\infty.
$$ 
In the case when $(b-a)<\ell d$ we have that one factor in $h_v$
vanishes at one end point and the other factor vanishes at the other
endpoint. It is then easy to see that $\|h_v'\|_\infty\leq C(b-a)/(d\ell^2)$, 
$\|h_v''\|_\infty\leq C(\ell^{-2}+(\ell d)^{-2})\|h_v\|_\infty+C(\ell^2 d)^{-1}$, and 
$\|h_v\|_\infty\geq c(b-a)^2(d\ell^2)^{-1}$. In case $b-a=\ell d$. Both endpoints occur when the second $\zeta$ factor
in $h_v$ vanish. Without loss of generality we may consider $v>0$ and 
let $D=|\ell(1/2-dv)|$ denote the distance from the middle of the support of 
$h_v$, i.e., $ldv$ to the right endpoint of the support of the first $\zeta$ factor, i.e., $\ell/2$. Then 
$\ell d/2\leq D\leq \ell/2$ and 
$$
\|h_v'\|\leq C'(\ell^2d)^{-1}D,\quad \|h_v''\|\leq C'(\ell^{-2}+(\ell d)^{-2})\|h_v\|_\infty+C'(\ell^2 d)^{-1},\quad
\|h_v\|_\infty\geq cD/\ell. 
$$

\bibliographystyle{siam}
\bibliography{LHY-O}

\end{document}